\newcommand{\indep}{\raisebox{0.05em}{\rotatebox[origin=c]{90}{$\models$}}}
\newtheorem{proposition}{Proposition}[section]
\newtheorem{definition}{Definition}[section]
\newtheorem{remark}{Remark}[section]
\newtheorem{lemma}{Lemma}[section]
\newtheorem{theorem}{Theorem}[section]
\newtheorem{algorithm}{Algorithm}[section]
\newtheorem{corollary}{Corollary}[section]
\newtheorem{assumption}{Assumption}[section]
\newtheorem{example}{Example}[section]
\DeclareMathOperator*{\argmin}{\arg\!\min}
\DeclareMathOperator{\1}{\mathbbm 1}
\DeclareRobustCommand{\Ec}{\mathcal{E}}
\DeclareRobustCommand{\Hc}{\mathcal{H}}
\DeclareRobustCommand{\Rc}{\mathcal{R}}
\DeclareRobustCommand{\Wc}{\mathcal{W}}
\DeclareRobustCommand{\bE}{\boldsymbol{E}}
\DeclareRobustCommand{\bZ}{\boldsymbol{Z}}
\DeclareRobustCommand{\bA}{\boldsymbol{A}}
\DeclareRobustCommand{\bH}{\boldsymbol{H}}
\DeclareMathOperator{\bhrho}{\hat{\boldsymbol{\rho}}}
\DeclareRobustCommand{\bW}{\boldsymbol{W}}
\DeclareRobustCommand{\bX}{\boldsymbol{X}}
\DeclareRobustCommand{\bU}{\boldsymbol{U}}
\DeclareRobustCommand{\bS}{\boldsymbol{\Sigma}}
\DeclareRobustCommand{\bV}{\boldsymbol{V}}
\DeclareRobustCommand{\bhX}{\hat{\boldsymbol{X}}}
\DeclareRobustCommand{\bhG}{\hat{\boldsymbol{G}}}
\DeclareRobustCommand{\bhA}{\hat{\boldsymbol{A}}}
\DeclareRobustCommand{\bhU}{\hat{\boldsymbol{U}}}
\DeclareRobustCommand{\bhV}{\hat{\boldsymbol{V}}}
\DeclareRobustCommand{\bhS}{\hat{\boldsymbol{\Sigma}}}
\DeclareRobustCommand{\bB}{\boldsymbol{B}}
\DeclareRobustCommand{\bI}{\boldsymbol{I}}
\DeclareRobustCommand{\bG}{\boldsymbol{G}}
\DeclareRobustCommand{\bhM}{\hat{\boldsymbol{M}}}
\DeclareRobustCommand{\btM}{\tilde{\boldsymbol{M}}}
\newcommand{\tgamma}{\tilde{\gamma}}
\newcommand{\talpha}{\tilde{\alpha}}
\newcommand{\hbeta}{\hat{\beta}}
\newcommand{\hgamma}{\hat{\gamma}}
\newcommand{\halpha}{\hat{\alpha}}
\newcommand{\heta}{\hat{\eta}}
\newcommand{\htheta}{\hat{\theta}}
\newcommand{\hsigma}{\hat{\sigma}}
\newcommand{\hpsi}{\hat{\psi}}
\DeclareRobustCommand{\bhG}{\hat{\boldsymbol{G}}}
\DeclareRobustCommand{\bM}{\boldsymbol{M}}
\newcommand{\E}{\mathbb{E}}
\newcommand{\R}{\mathbb{R}}
\newcommand{\p}{\mathbb{P}}
\newcommand{\V}{\mathbb{V}}
\DeclareRobustCommand{\brho}{\boldsymbol{\rho}}
\DeclareRobustCommand{\norm}[1]{\left\| #1 \right\|}
\DeclareRobustCommand{\hrho}{\hat{\rho}}
\DeclareRobustCommand{\tbU}{\tilde{\boldsymbol{U}}}
\DeclareRobustCommand{\tbV}{\tilde{\boldsymbol{V}}}
\DeclareRobustCommand{\tbS}{\tilde{\boldsymbol{\Sigma}}}
\DeclareRobustCommand{\bVp}{\boldsymbol{V}^{\prime}}
\newcommand{\bgamma}{\boldsymbol{\gamma}}
\newcommand{\NA}{\texttt{NA}}
\newcommand{\tr}{\textsc{train}}
\newcommand{\err}{\textsc{ error}}
\newcommand{\te}{\textsc{test}}
\newcommand{\fil}{\textsc{fill}}
\newcommand{\treat}{\textsc{treat}}
\newcommand{\untreat}{\textsc{untreat}}
\newcommand{\tbeta}{\tilde{\beta}}
\newcommand{\homega}{\hat{\omega}}
\newcommand{\lr}{\textsc{(lr)}}
\newcommand{\row}{\textsc{row}}
\newcommand{\hs}{\hat{s}}
\newcommand{\bias}{\textsc{bias}}
\newcommand{\out}{\textsc{out}}
\newcommand{\poly}{\textsc{poly}}
\newcommand{\signal}{\textsc{signal}}
\newcommand{\noise}{\textsc{noise}}
\newenvironment{subsubfigure}[2][]{%
    \begin{subfigure}[#1]{#2}%
    \stepcounter{subsubfigure}%
}{%
    \addtocounter{subfigure}{-1}%
    \end{subfigure}%
}
\newcounter{subsubfigure}
\newcolumntype{L}[1]{>{\raggedright\let\newline\\\arraybackslash\hspace{0pt}}m{#1}}
\newcolumntype{C}[1]{>{\centering\let\newline\\\arraybackslash\hspace{0pt}}m{#1}}
\newcolumntype{R}[1]{>{\raggedleft\let\newline\\\arraybackslash\hspace{0pt}}m{#1}}
\let\originalleft\left
\let\originalright\right
\renewcommand{\left}{\mathopen{}\mathclose\bgroup\originalleft}
\renewcommand{\right}{\aftergroup\egroup\originalright}
\DeclareRobustCommand{\VAN}[2]{#1}  
\begin{document}
\doparttoc 
\faketableofcontents 

\def\spacingset#1{\renewcommand{\baselinestretch}%
{#1}\small\normalsize} \spacingset{1}


  \title{\bf Causal Inference with Corrupted Data: \\
Measurement Error, Missing Values, \\Discretization, and Differential Privacy}
  \author{
  Anish Agarwal \\
   Columbia University
  \and Rahul Singh\thanks{We thank Alberto Abadie, Isaiah Andrews, Joshua Angrist, David Autor, Abhijit Banerjee, David Bruns-Smith, Victor Chernozhukov, Avi Feller, Guido Imbens, Anna Mikusheva, Ismael Mourifie, Sendhil Mullainathan, Whitney Newey, Elizabeth Ogburn, James Robins, Devavrat Shah, Vasilis Syrgkanis, Panos Toulis, Suhas Vijaykumar, and many seminar participants for helpful comments, particularly the UC Berkeley community. We thank Leon Deng, Ajinkya Gundaria, William Liu, and Caleb Rollins for excellent research assistance. Rahul Singh thanks the Jerry Hausman Dissertation Fellowship. Part of this work was done while both authors visited the Simons Institute for the Theory of Computing. 
  }\\
  Harvard University}
  \date{Original draft: July 2021. This draft: February 2024.}
\maketitle
%
\vspace{-10pt}
\begin{abstract}
The US Census Bureau will deliberately corrupt data sets derived from the 2020 US Census, enhancing the privacy of respondents while potentially reducing the precision of economic analysis. To investigate whether this trade-off is inevitable, we formulate a semiparametric model of causal inference with high dimensional corrupted data. We propose a procedure for data cleaning, estimation, and inference with data cleaning-adjusted confidence intervals. We prove consistency and Gaussian approximation by finite sample arguments, with a rate of $n^{-1/2}$ for semiparametric estimands that degrades gracefully for nonparametric estimands. Our key assumption is that the true covariates are approximately low rank, which we interpret as approximate repeated measurements and empirically validate. Our analysis provides nonasymptotic theoretical contributions to matrix completion, statistical learning, and semiparametric statistics. Calibrated simulations verify the coverage of our data cleaning-adjusted confidence intervals and demonstrate the relevance of our results for Census-derived data.
\end{abstract}

\noindent%
{\it Keywords:} disclosure avoidance, heterogeneous treatment effect, principal component analysis

\noindent%
{\it JEL:} C81, C14, C38
\vfill




\newpage

\cleardoublepage
\pagenumbering{arabic}

\spacingset{1.5}

\section{Introduction}


The 2010 US Census inadvertently revealed too much information. In a simulated hack, researchers at the Census Bureau could re-identify between 52 and 179 million respondents from anonymous summary tables \cite{hawes2021}. To protect privacy, the Bureau will inject synthetic noise into summary tables of the 2020 Census 
and coarsen wage microdata in the Current Population Survey (CPS). 
Techniques like these, called privacy mechanisms in computer science, guarantee a property called \textit{differential privacy} via deliberate data corruption \cite{dwork2006calibrating}. Differential privacy is widely implemented in the technology sector, e.g. Apple iOS and Google Chrome data. Due to its recent adoption in the government sector, several economists have warned of a looming trade-off: the privacy of respondents versus the precision of economic analysis \cite{abowd2019economic,hotz2022balancing}. 

We study differential privacy and discretization as modern challenges for causal inference. Economic data continue to suffer from classical data corruptions such as missing values and measurement error. Therefore, we analyze a class of data corruptions that encompasses both modern and classical issues \textit{simultaneously}, while remaining agnostic about their relative magnitudes. Our research question is how (and when) it is possible to estimate typical causal parameters using high dimensional economic data that suffer from measurement error, missing values, discretization, and differential privacy mechanisms.
An answer requires nonasymptotic analysis because differential privacy is defined as a finite sample property.


We study a broad class of causal parameters, including semiparametric scalars such as the average treatment effect, the local average treatment effect, and the average elasticity, as well as nonparametric functions such as heterogeneous treatment effects, in a nonlinear and high dimensional setting. Our main contribution is a procedure for automatic data cleaning, causal estimation, and inference with confidence intervals that account for the bias and variance consequences of data cleaning. The procedure is simple. It essentially combines principal component analysis, ordinary least squares, and sample splitting in new ways.

Our key assumption is that the true covariates are approximately low rank, which we validate for US Census-derived data and interpret from a causal perspective. We argue that covariates collected from the Census include approximate repeated measurements---e.g. disability benefits, medical benefits, and unemployment benefits---which implies that they are approximately low rank. 
There are three key aspects of our contribution.

First, our simple procedure adapts to the \textit{type} and \textit{level} of data corruption. The same code works in a variety of settings, allowing for classical types such as measurement error and missing values as well as modern types such as discretization and differential privacy mechanisms, across variance levels. Crucially, the researcher does not need exact knowledge of the corruption distribution, e.g. its parametric form or covariance structure, and in this way we depart from the error-in-variable Lasso and Dantzig literatures; see Section~\ref{sec:related}.  
We depart from previous work on principal component regression 
by proposing new variants that involve ``implicit'' data cleaning---i.e. prediction on a test observation without cleaning it---and inference in nonlinear, heterogeneous causal models. We propose an error-in-variable balancing weight that adapts to the causal parameter of interest, which is a natural yet original solution for cross sectional data. Our theory of implicit data cleaning and our error-in-variable balancing weight appear to be new. The former is of independent interest. 

Second, our theoretical analysis allows the rate of data cleaning to be slower than the rate of causal inference, so an analyst can use matrix completion 
for automatic data cleaning of covariates. We extend the classic semiparametric framework, where the goal is to obtain $n^{-1/2}$ convergence for the causal parameter despite a slower rate of convergence for a nonparametric ``nuisance'' regression. Our goal is to obtain $n^{-1/2}$ convergence for the causal parameter despite a slower rate of convergence for high dimensional data cleaning, which is a ``nuisance'' task.
Since our data cleaning guarantees only hold on-average, we are unable to use previous semiparametric results; instead, we generalize semiparametric and nonparametric debiased machine learning theory to i.n.i.d. corrupted data, with new results on nominal and conservative variance estimation. 
Altogether, our framework translates slow, on-average data cleaning guarantees into fast causal estimation and inference guarantees. 

Third, our empirical results suggest that there exist scenarios in which the trade-off between privacy and precision can be overcome, and others in which it cannot. We replicate and extend \cite{autor2013china}'s seminal paper on the effect of import competition in US labor markets. To begin, we demonstrate the plausibility of our key assumption: Census data products contain many variables that are approximate repeated measurements. Next, we corrupt the data, injecting synthetic noise calibrated to the privacy level mandated for the 2020 US Census. We implement differential privacy and discretization in a way that belongs to our class of data corruptions, which can therefore be cleaned and adjusted for in the confidence interval. We recover the main results of \cite{autor2013china} without losing statistical precision. In this representative setting for economic research, it appears to be possible to achieve both privacy at the individual level and precision at the population level.

Section~\ref{sec:related} situates our contributions within the context of related work. Section~\ref{sec:model} formalizes our class of data corruptions and our key assumption. Section~\ref{sec:algo} proposes our procedure and demonstrates its performance in simulations. Section~\ref{sec:theory} theoretically justifies our procedure, and verifies the key assumption for nonlinear factor models. Section~\ref{sec:application} presents the semi-synthetic exercise and discusses limitations. Section~\ref{sec:conclusion} concludes. 

\section{Related work}\label{sec:related}

\textbf{Semiparametrics}.
We use two insights from classic \cite{hasminskii1979nonparametric,klaassen1987consistent,robinson1988root,bickel1993efficient,andrews1994asymptotics,newey1994asymptotic,robins1995semiparametric,ai2003efficient,van2006targeted,hahn2013asymptotic} and modern \cite{zheng2011cross,athey2018approximate,chernozhukov2018original,hirshberg2021augmented,chernozhukov2018learning,chernozhukov2021simple} semiparametric theory. First, a causal parameter typically has regression and balancing weight representations, and both appear in the semiparametrically efficient asymptotic variance. We directly build on this insight: an error-in-variable regression and an error-in-variable balancing weight appear in our data cleaning-adjusted confidence intervals. Second, sample splitting eliminates restrictive conditions on the data generating process and estimation procedure. We combine these two classic ideas with implicit data cleaning, which appears to be a new idea.

\textbf{Error-in-variable regression}. We provide a framework to repurpose error-in-variable regression estimators for downstream causal inference. Error-in-variable regression has a vast literature spanning econometrics, statistics, and computer science studying the model
\begin{equation}\label{eq:eiv}
    Y_i=\gamma_0(X_{i,\cdot})+\varepsilon_i,\quad 
    Z_{i,\cdot}=X_{i,\cdot}+H_{i,\cdot} \quad \text{($X_{i,\cdot}$ is the $i$-th row of matrix $\bX$ and so on)}
\end{equation}
where $(X_{i,\cdot},\varepsilon_i,H_{i,\cdot})$ are mutually independent and $(\varepsilon_i,H_{i,\cdot})$ are mean zero.  We consider a generalization of this setting with missingness, and we define our causal parameter as a scalar summary of nonlinear $\gamma_0$.
Methods in econometrics typically assume auxiliary information for identification: repeated measurements \cite{hausman1991identification,li1998nonparametric,schennach2004estimation}, instrumental variables \cite{schennach2007instrumental,hu2008instrumental}, and negative controls \cite{miao2018identifying,deaner2018nonparametric}. Similar in spirit to repeated measurements, we assume $\bX$ is approximately low rank.
%
Methods in statistics extend the Lasso and Dantzig selector to high dimensional error-in-variable regression \cite{loh2012high,rosenbaum2013improved,datta2017cocolasso}. However, these methods require linearity and exact sparsity of $\gamma_0$, as well as knowledge of the covariance of measurement error $H_{i,\cdot}$. By contrast, we assume $H_{i,\cdot}$ are subexponential; the analyst does not need to know the measurement error covariance, and therefore can be agnostic about the type and level of corruption. 
We propose new variants of principal component regression (PCR) for the error-in-variable regression and balancing weight. Previous work studies PCR for error-in-variable regression only, explicitly cleaning all observations \cite{stock2002forecasting,bai2006confidence,agarwal2020principal}. We develop a technique of implicit data cleaning that avoids mixing together signal and noise across observations, which aids with downstream statistical inference of nonlinear models. Moreover, our error-in-variable balancing weight for cross sectional data appears to be new.

\textbf{PCA for large factor models}. The initial step of PCR is PCA. A vast literature studies the identification, estimation, and inference of latent factors $(\lambda_i,\mu_j)$ in models of the form 
\begin{equation}\label{eq:factor}
    Z_{i,\cdot}=X_{i,\cdot}+H_{i,\cdot},\quad X_{ij}=\lambda_i^T\mu_j\quad \text{($X_{i,\cdot}$ is the $i$-th row of matrix $\bX$ and so on)}
\end{equation}
where $Z_{i,\cdot}$ is observed, the ambient dimension $dim(X_{i,\cdot})$ is high, and the latent dimension $dim(\lambda_i)$ is fixed \cite{bai2003inferential,bai2013principal}. Our interest in downstream causal inference allows us to bypass the issue of identifying latent factors, and to relax the linear factor model; instead, we require that the approximate rank of $\bX$ diverges more slowly than $dim(X_{i,\cdot})$. The nonlinear factor model $X_{ij}=g(\lambda_i,\mu_j)$, where $dim(\lambda_j)$ may slowly diverge and $g$ is smooth, is \textit{sufficient} but \textit{unnecessary} for our analysis. Like the factor model literature, we allow weak correlation and heteroscedastity of measurement errors within units. 

\textbf{Low rank causal models}. Whereas we study treatment effects, policy effects, and elasticities in cross sectional data, a rich literature studies treatment effects, in panel data, via a low rank factor model for potential outcomes \cite{athey2021matrix,bai2019matrix,xiong2019large,fernandez2020low, agarwal2020synthetic,feng2020causal}. 
By contrast, we study a more general class of causal parameters, in cross sectional data, when covariates are approximately low rank.
The only previous work to consider both measurement error and missingness in cross sectional treatment effects appears to be \cite{kallus2018causal}. The authors study average treatment effect and prove consistency, without inference, for a parametric linear model where the true covariates are low dimensional Gaussians and the measurement error distribution is correctly specified. 
By contrast, we study a broad class of semiparametric and nonparametric causal parameters and provide inference, with data cleaning-adjusted confidence intervals. We do not require exact distributional knowledge of (high dimensional) true covariates or measurement error.

\textbf{Privacy in econometrics}. Our research question complements others in a recent literature on private econometrics. One strand considers how to disclose estimates obtained with private data access \cite{dwork2009differential,smith2011privacy,komarova2020identification}.
 Another considers how to conduct estimation after linking public and private records, where privacy considerations constrain linkage \cite{komarova2018identification}. We ask a complementary question motivated by empirical economic research using public, Census-derived data products: how to conduct causal inference in the presence of simultaneous data corruptions, including canonical privacy mechanisms applied to the data before estimation.
\section{Model overview}\label{sec:model}


\textbf{Causal parameter}.
For readability, we focus on one causal parameter in the main text: the average treatment effect (ATE) with i.n.i.d. data 
$
\theta_0=\frac{1}{n}\sum_{i=1}^n\theta_i,$ where $\theta_i=\E[Y_i^{(1)}-Y_i^{(0)}].
$
Here, $Y_i^{(d)}$ is the potential outcome for unit $i$ under intervention $D=d$. $\theta_0$ is a sample average because different units may be drawn from different distributions---a challenging yet plausible scenario when data are corrupted. With i.i.d. data, $\theta_0$ simplifies to the familiar ATE. Appendix~\ref{sec:examples} considers a general class of semi- and nonparametric causal parameters e.g. the local average treatment effect, average elasticity, and heterogeneous treatment effects.

We denote the actual outcome by $Y_i\in \R$, the assigned treatment by $D_i\in\{0,1\}$, and the covariates that determine treatment assignment by $X_{i,\cdot}\in \R^p$. In order to express $\theta_0$ in terms of $(Y_i,D_i,X_{i,\cdot})$, we impose some additional structure on the problem. Generalizing a classic assumption in the literature on distribution shift, we assume that the conditional distributions $\p(Y_i|D_i,X_{i,\cdot})$ and $\p(D_i|X_{i,\cdot})$ are common across units; distribution shift is only in the marginal distributions of covariates $\p_i(X_{i,\cdot})$. 

Imposing these conditions as well as selection on $X_{i,\cdot}$, we recover two classic formulations of the treatment effect. The outcome formulation is in terms of the outcome mechanism $\gamma_0(D_i,X_{i,\cdot})=\E[Y_i|D_i,X_{i,\cdot}]$, also called the regression, which is common across units:
  $
        \theta_i=\E[\gamma_0(1,X_{i,\cdot})-\gamma_0(0,X_{i,\cdot})].
        $
The treatment formulation is in terms of the treatment mechanism $\E[D_i|X_{i,\cdot}]$, which is also common across units, and which appears in the denominator of the balancing weight $\alpha_0(D_i,X_{i,\cdot})=\frac{D_i}{\E[D_i|X_{i,\cdot}]}-\frac{1-D_i}{1-\E[D_i|X_{i,\cdot}]}$: here, 
        $
        \theta_i=\E[Y_i \cdot  \alpha_0(D_i,X_{i,\cdot})].
        $
Our estimation and analysis combine both classic formulations.

\textbf{Data corruption}. The crux of our problem is that we observe $(Y_i,D_i,Z_{i,\cdot})$ instead:
 \begin{equation}\label{eq:corruption}
     Y_i=\gamma_0(D_i,X_{i,\cdot})+\varepsilon_i,\quad 
      Z_{i,\cdot}=(X_{i,\cdot}+H_{i,\cdot})\odot \pi_{i,\cdot}.
 \end{equation}
 Though the outcome $Y_i$ is generated from treatment $D_i$ and true covariates $X_{i,\cdot}$, we do not observe $X_{i,\cdot}$; instead, we observe the corrupted covariates $Z_{i,\cdot}$, which are the true covariates $X_{i,\cdot}$ plus conditionally mean zero corruption $H_{i,\cdot}$, multiplied elementwise by an independent masking vector $\pi_{i,\cdot}\in\{\NA,1\}^p$. Our concise model~\eqref{eq:corruption} generalizes the models~\eqref{eq:eiv} and~\eqref{eq:factor}, and it encompasses all four types of corruption.
For example, to encode classical measurement error, let $Z_{i,\cdot}$ equal $X_{i,\cdot}$ plus a vector of Gaussian noise. To encode missing values, let $Z_{i,\cdot}=X_{i,\cdot} \odot \pi_{i,\cdot}$. In Appendix~\ref{sec:examples}, we accommodate corruption of the outcome $Y_i$ and treatment $D_i$, under restrictions.
 
 Discretization is a process by which a continuous vector $X_{i,\cdot}$ maps to a discrete vector  $Z_{i,\cdot}$, and our class encodes variants where $\E[Z_{i,\cdot}|X_{i,\cdot}]=X_{i,\cdot}$. For example, the covariate of interest may be a vector of probabilities $X_{i,\cdot}$, yet we observe actual occurrences $Z_{i,\cdot} \sim \text{Bernoulli}(X_{i,\cdot})$. Another example is randomized rounding, where continuous values are randomly rounded to nearby integers, e.g.  $
Z_{i,\cdot}=sign(X_{i,\cdot})\text{Poisson}(|X_{i,\cdot}|)$. Our class does not include deterministic rounding. Instead, it provides guidance on which types of rounding can be handled well in downstream causal inference. As such, it suggests alternative types of discretization for wage data in the CPS which are more favorable for economic research.

Differential privacy is a concept from computer science which means plausible deniability that any individual contributed their data to tabular summaries. The canonical mechanism that ensures differential privacy is to release $Z_{i,\cdot}$ equal to $X_{i,\cdot}$ plus a vector of Laplacian noise, calibrating the variance of the Laplacian to a priori bounds on the true values and other properties of the tabular summary statistics \cite{dwork2006calibrating}. Our framework allows for other canonical privacy mechanisms where $\E[Z_{i,\cdot}|X_{i,\cdot}]=X_{i,\cdot}$, e.g. discrete Gaussian, piece wise uniform, and bounded mechanisms. In the context of the Census, we consider adding Laplacian noise to data on aggregate units, which we formalize in Section~\ref{sec:application}. Injecting synthetic noise in this way helps to prevent the kind of attack simulated on the 2010 Census.

Across examples, $H_{i,\cdot}$ is subexponential, i.e. its tails are no worse than an exponential distribution's. So are compositions of various types of data corruption since the class of subexponential distributions is closed under addition. Therefore our class of data corruptions includes classical and modern issues \textit{simultaneously}. It allows us to address the trade-off between privacy and precision in the context of heteroscedastic measurement error---a major aspect of the problem often overlooked \cite{chetty2019practical,steed2022policy}.

What corruptions do we exclude? Our definition of $H_{i,\cdot}$ rules out nonseparable or endogeneous measurement error. Our definition of $\pi_{i,\cdot}$ rules out endogenous missingness, i.e. sample selection. We also rule out deterministic rounding, i.e. interval censoring. However, our class includes canonical privacy mechanisms as well as randomized rounding, itself a privacy mechanism. Therefore we study this class, which we formalize in Section~\ref{sec:theory}. Importantly, we allow heteroscedastic corruptions that are dependent within a unit.

\textbf{Key assumption: Approximate repeated measurements}. Our key assumption is that the true covariates are approximately low rank: the rank of the matrix $\bX \in\R^{n\times p}$ is approximately $r \ll (n,p)$. 
Among the $p$ covariates in the data set, there are approximately only $r$ latent types of covariates. For intuition, consider repeated measurements. In the classic repeated measurement model, we have two noisy measurements of one signal. In our model, we have $p$ noisy measurements $(Z_{i1},...,Z_{ip})$ that are approximately repeated measurements of only $r$ signals, where both $(r,p)$ grow with sample size $n$, yet $r\ll (n,p)$.


We place this assumption because it seems plausible in Census-derived data. Consider the commuting zone (CZ) level data set of \cite{autor2013china}. Each CZ is a local economy with a vector of covariates $X_{i,\cdot}\in \R^{30}$ if we use variables from the authors' preferred specification as well as additional variables from their appendix. The variables include average disability, unemployment, and medical benefits, which are not precisely repeated measurements but approximately so. We compute the singular value decomposition of $\bX$ then visualize its singular values, also called its principal components, in Figure~\ref{fig:scree2}. We see that only about five principal components are significantly positive; $r=5$ while $p=30$.

\begin{wrapfigure}{r}{0.5\textwidth}
    \vspace{-30pt}
    \begin{center}
        \includegraphics[width=0.48\textwidth]{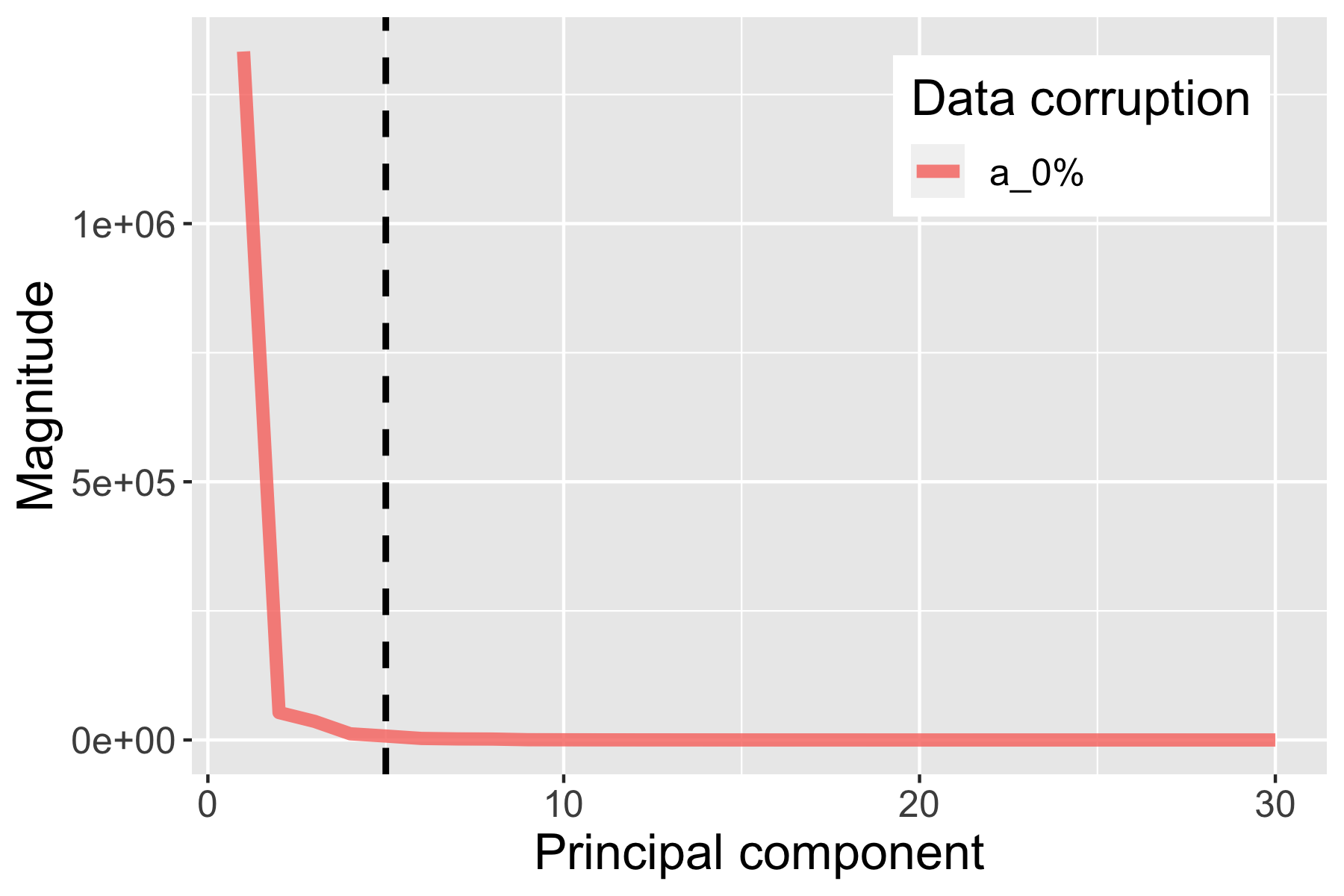}
    \end{center}
    \vspace{-26pt}
    \caption{\label{fig:scree2} Key assumption in Census data}
    \vspace{-20pt}
\end{wrapfigure}

Our key assumption admits a causal interpretation in the running example of ATE.
Consider the special case where the true covariates are exactly low rank, i.e. $r=rank(\bX)$. The singular value decomposition is $\bX=\bU\bS\bV^T$ where $\bU\in \R^{n\times r}$, $\bS\in \R^{r\times r}$, and $\bV\in \R^{p\times r}$. $\bV$ consists of $r$ vectors in $\R^p$, called the right singular vectors of $\bX$, which are also the eigenvectors of the empirical covariance $n^{-1}\bX^T \bX$. The span of these vectors is an $r$ dimensional subspace of $\R^p$, i.e. a low dimensional subset of a high dimensional ambient space. In this scenario, we assume that treatment assignment is determined by the subspace. More generally, when covariates are approximately low rank, $\bX=\bX^{\lr}+\bE^{\lr}$, where $\bX^{\lr}=\bU\bS\bV^T$ is a rank $r$ approximation to $\bX$, and $\bE^{\lr}$ is the approximation residual. We can either assume (i) selection is determined by $\bX^{\lr}$ only, i.e. the treatment assignment for unit $i$ depends on the \textit{projection} of $X_{i,\cdot}$ onto the subspace spanned by $\bV$; or (ii) selection is determined by both $\bX^{\lr}$ and $\bE^{\lr}$. To handle the latter, we keep track of $\Delta_E=\|\bE^{\lr}\|_{\max}$ in our theoretical analysis. Our analysis is robust to small violations of the exactly low rank assumption from statistical and causal perspectives.
\section{Data cleaning-adjusted confidence interval}\label{sec:algo}

We would like a procedure that estimates parameters in nonlinear, heterogeneous causal models as if data were uncorrupted, yet adjusts for data cleaning in the confidence interval. Moreover, we would like a procedure that does not require knowledge of the corruption covariance structure in advance, departing from previous work. If such a procedure were to exist, it would in some sense preempt the looming trade-off between privacy and precision.

\begin{wrapfigure}{r}{0.5\textwidth}
    \vspace{-30pt}
    \begin{center}
        \includegraphics[width=0.48\textwidth]{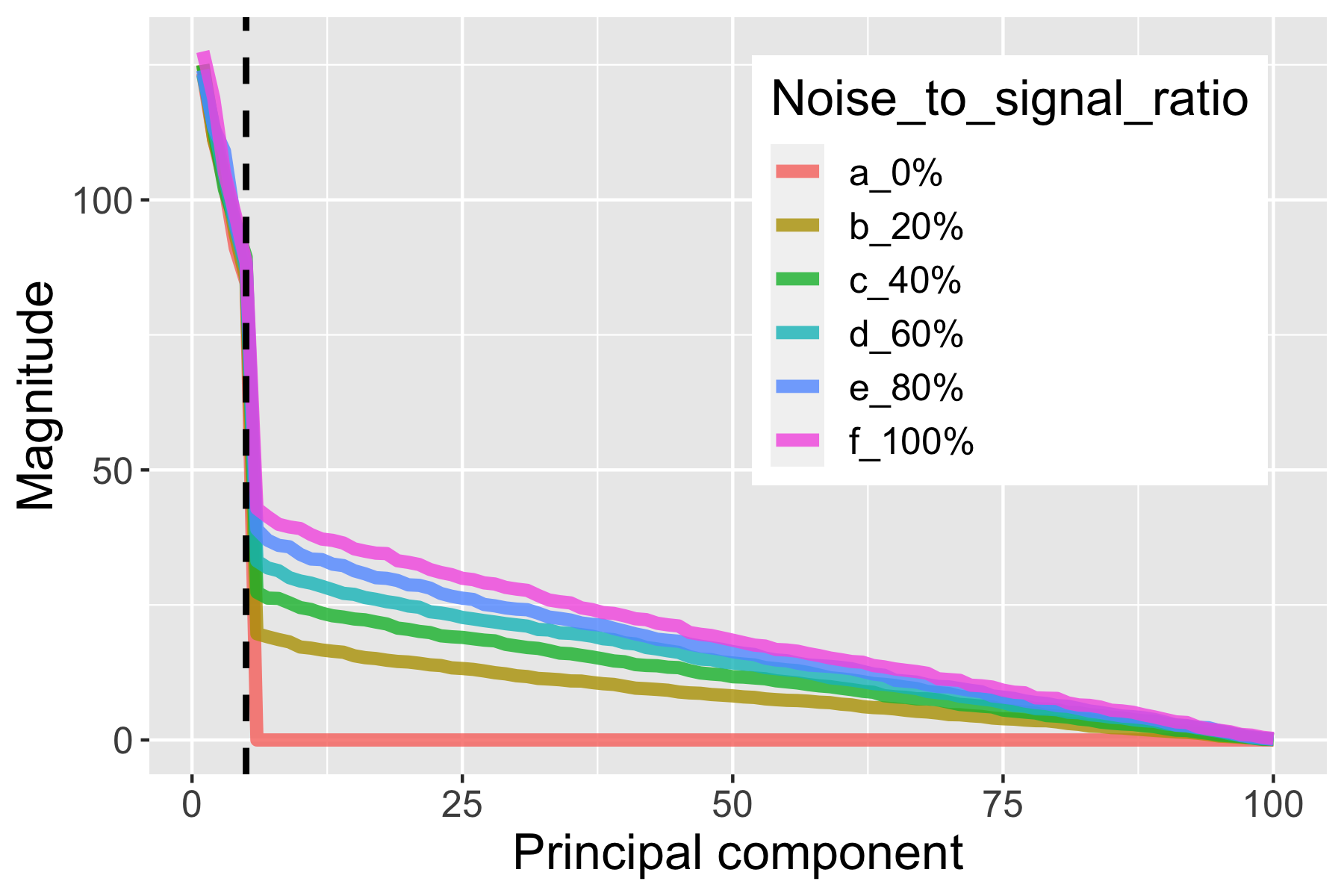}
    \end{center}
    \vspace{-26pt}
    \caption{\label{fig:spectrum} Key assumption in simulated data}
    \vspace{-20pt}
\end{wrapfigure}

\textbf{Why is inference hard}?
 We illustrate our procedure with an average treatment effect simulation. By construction, the treatment effect is $\theta_0=2.2$. We consider a data generating process (DGP) which satisfies our key assumption: one sample involves a matrix of covariates $\bX\in\R^{100\times 100}$ with rank $r=5$. See Appendix~\ref{sec:sim} for details and for similar results using alternative dimensions of $\bX$. The DGP has nonlinear outcome and treatment mechanisms. Figure~\ref{fig:spectrum} plots the principal components of true covariates $\bX$ in red. As expected, five principal components are nonzero and the rest are zero since $rank(\bX)=5$.

As a first pass, we implement ordinary least squares (OLS) of $Y_i$ on $(D_i,X_{i,\cdot})$. Running OLS on clean data 1000 times, the point estimates $\htheta$ (Figure~\ref{fig:OLS_est}) center around the true value $2.2$, and appear Gaussian. 
OLS works well in the absence of data corruption; there is nothing hidden in the DGP for clean data. We repeat this exercise introducing measurement error with variance that is 20\% of the variance of the covariates. Inversion of the empirical covariance matrix $n^{-1}\bZ^T\bZ$ becomes numerically unstable, manifesting in point estimates that are erratic (Figure~\ref{fig:OLS_corrupted}) and standard errors that are typically \NA's. Notably, data corruption flips the sign about a quarter of the time, a phenomenon we verify for 2SLS and for settings closer to \cite{autor2013china} in Appendix~\ref{sec:sim}. OLS is not well-suited to the combination of high dimensional covariates, (approximate) low rank, and measurement error. Indeed, any estimator that ignores covariate measurement error in a nonlinear, heterogeneous causal model suffers from bias of a complicated form \cite{battistin2014treatment}. 

\begin{figure}[ht]
\vspace{-20pt}
\begin{centering}
     \begin{subfigure}[b]{0.48\textwidth}
         \centering
         \includegraphics[width=\textwidth]{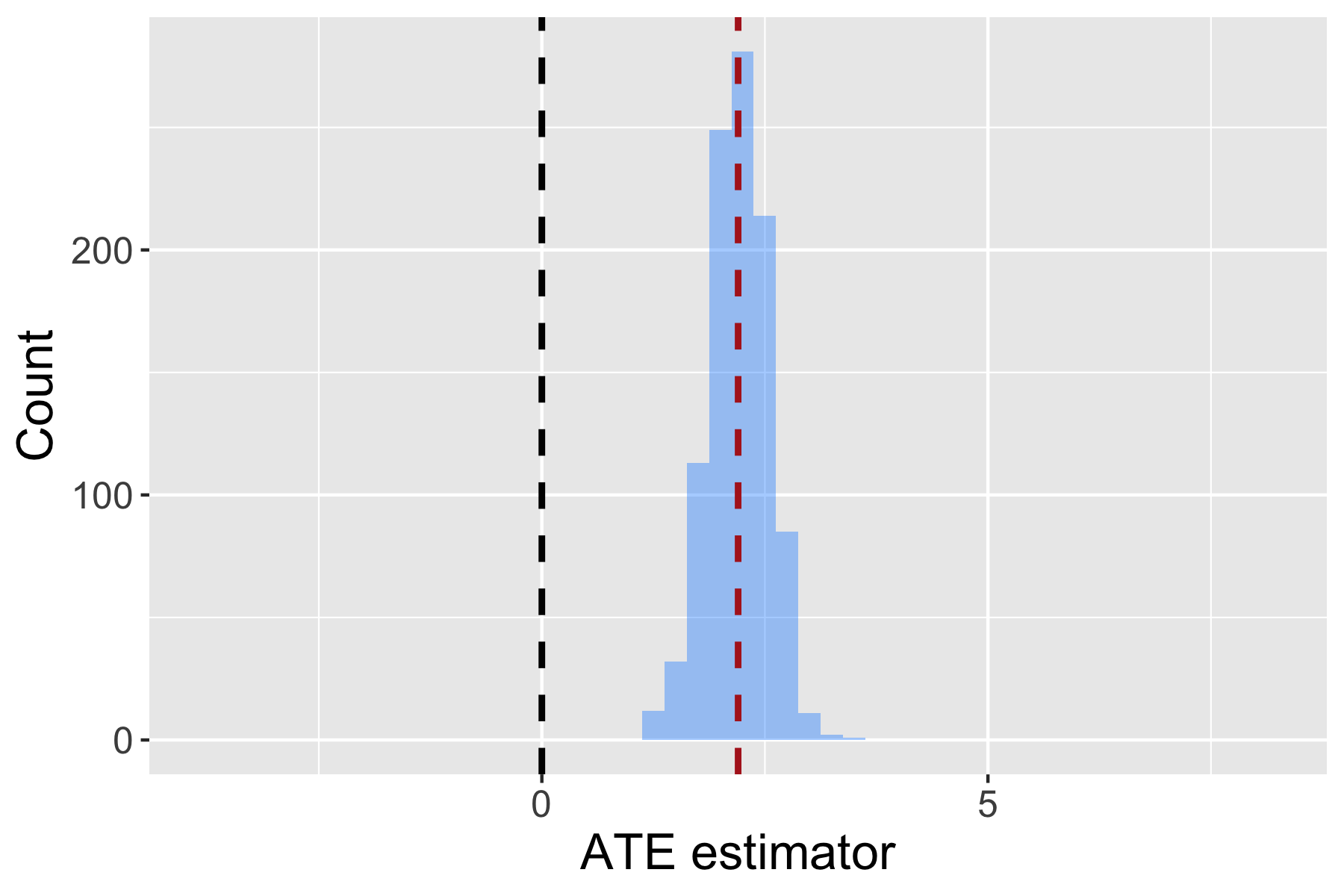}
         \vspace{-26pt}
         \caption{\label{fig:OLS_est} OLS succeeds on clean data}
     \end{subfigure}
     \hfill
     \begin{subfigure}[b]{0.48\textwidth}
         \centering
         \includegraphics[width=\textwidth]{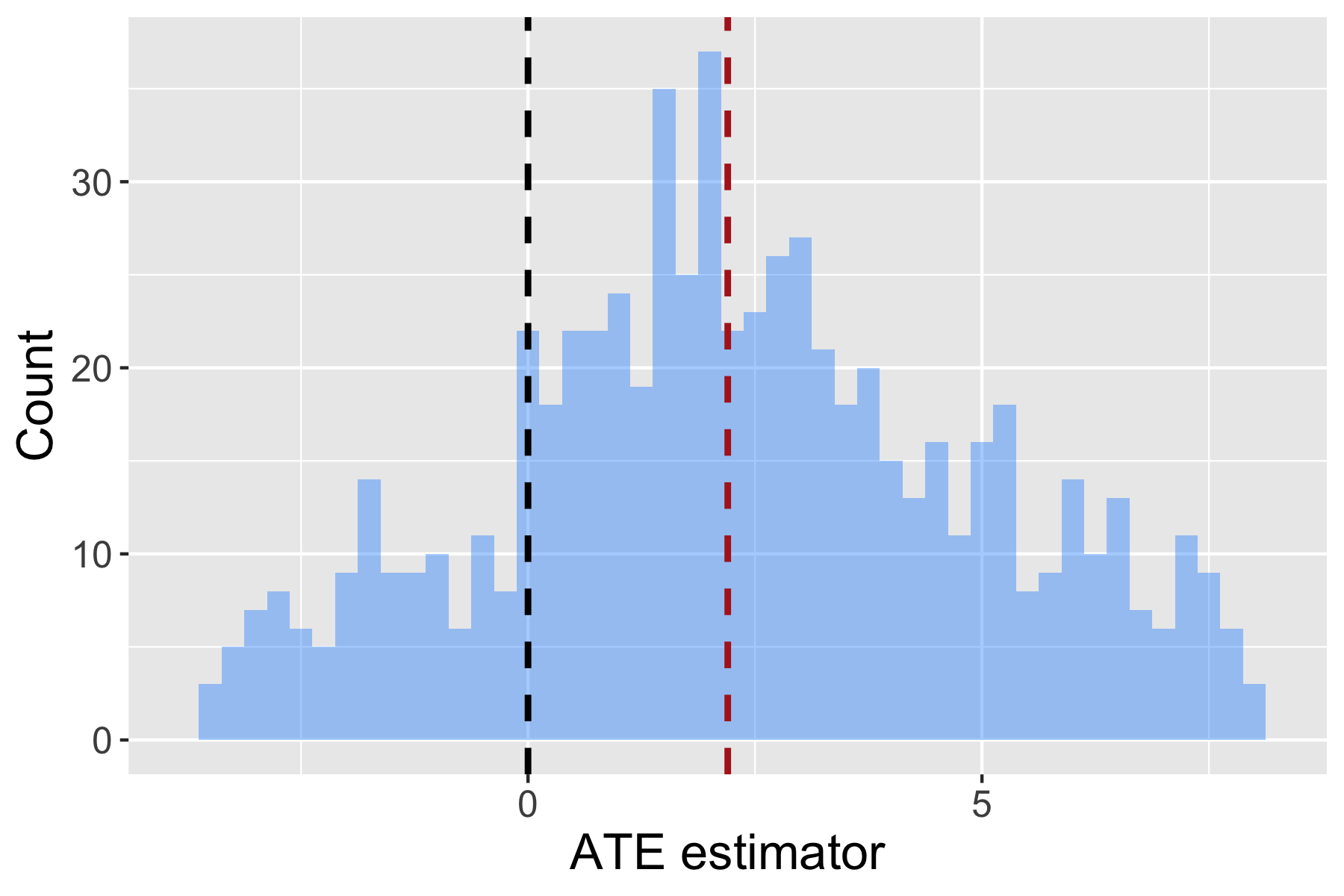}
         \vspace{-26pt}
         \caption{\label{fig:OLS_corrupted} OLS fails on corrupted data}
     \end{subfigure}
\par
\vspace{-10pt}
\caption{First pass --- OLS}\label{fig:OLS}
\vspace{-15pt}
\end{centering}
\end{figure}

Data corruption can derail causal inference, which motivates filling the \NA's, reigning in the extremes, and otherwise de-noising the values in $\bZ$ in hopes of recovering $\bX$. These are precisely the goals of matrix completion applied to the matrix $\bZ$ \cite{candes2009exact,candes2010power,keshavan2009matrix,hastie2015matrix,chatterjee2015matrix} . Our goal is to automate data cleaning via matrix completion, then to adjust for data cleaning in the confidence interval. To select an appropriate matrix completion method, we return to Figure~\ref{fig:spectrum} to visualize the principal components of the corrupted covariates $\bZ=\bX+\bH$ for various noise-to-signal ratios (i.e. the noise variance divided by the signal variance). The initial five principal components remain virtually unchanged, while the lower principal components are amplified; signal remains spectrally concentrated while noise is spectrally diffuse. Therefore a natural way to clean the covariates is to discard the lower principal components---in essence, to perform principal component analysis (PCA).\footnote{Alternative choices include canonical correlation analysis and partial least squares, which clean $\bZ$ using $Y$. We leave these directions to future research.}

Why is inference hard after data cleaning? Several challenges arise. First, data cleaning may mix together signal and noise across observations; yet, we wish to prove Gaussian approximation via a central limit theorem. Our solution is to break dependence via both sample splitting, which is a classic idea \cite{klaassen1987consistent}, and implicit data cleaning, which is a new idea. Second, if we turn to automated data cleaning, the best rates of convergence to the true matrix $\bX$ are slower than $n^{-1/2}$; yet, we wish to obtain a standard error of order $\hsigma n^{-1/2}$ for $\theta_0$. Our solution is to use a doubly robust estimating equation and to generalize double rate robustness \cite{chernozhukov2018original,van2018targeted,rotnitzky2021characterization}. The third issue is a theoretical one to which we will return in Section~\ref{sec:theory}: the best rates of matrix completion are not for recovering specific matrix entries but rather averages across matrix entries; yet, we wish to obtain downstream semiparametric inference. Our solution is to develop an algorithmic and analytic framework that forges a connection.

\textbf{Overview of the procedure}. Split the observations $(Y_i,D_i,Z_{i,\cdot})$ into equally sized \tr\, and \te\, sets, each with $m=n/2$ observations. Our procedure consists of four steps, which we state at a high level before filling in the details: (i) data cleaning: $\bhX$ using \tr; (ii) error-in-variable regression: $\hgamma$ using \tr; (iii) error-in-variable balancing: $\halpha$ using \tr; (iv) causal parameter: $\htheta \pm 1.96 \hsigma n^{-1/2}$ using \te. We opt for simplicity at each step, essentially combining PCA and OLS (albeit in new ways). We view these high level steps a template for more complex procedures in future work.

\textit{Step 1: Data cleaning}. The automated data cleaning procedure is extremely simple: fill in missing values as zeros, scale appropriately, then perform PCA. 

For any mathematical operation to be well defined, the $\NA$'s must be filled in somehow. To begin, we tally the likelihood of non-missingness for each covariate $j\in[p]$ in \tr:
$
\hrho_j=\max\left\{ \frac{1}{m} \sum_{i\in \tr} \1(Z_{ij} \neq \NA ), \frac{1}{m}\right\},$ and $\bhrho=diag(\hrho_1,...,\hrho_p)\in\R^{p\times p}.
$
Next, we fill in missing values with a \fil\, operator defined such that $\fil(Z_{ij})=\frac{Z_{ij}}{\hrho_j}$ if $Z_{ij} \neq \NA$ and $\fil(Z_{ij})=0$ if $ Z_{ij}= \NA$.
Let $\bZ^{\tr}$ be rows of $\bZ$ where $i\in \tr$. The \fil\, operator may act on $\bZ^{\tr}$ or $\bZ^{\te}$, but it always uses the likelihoods $\bhrho$ calculated from  $\bZ^{\tr}$.

\begin{proposition}[Filling with zeros is unbiased and simple]\label{prop:fill}
For $i\in\te$,
$$
\E[\fil(Z_{ij})|X_{ij},\tr]=X_{ij}\frac{\rho_j}{\hrho_j}.
$$
The alternative procedure of filling missing values with averages from \tr, denoted by $\bar{Z}_j^{\tr}$, gives  
$$
\E[\textsc{fill-as-means}(Z_{ij})|X_{ij},\tr]=X_{ij}\rho_j+ \bar{Z}_j^{\tr}(1-\rho_j).
$$
\end{proposition}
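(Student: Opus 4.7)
The statement is a one-shot expectation calculation once the \fil\, operator is written out carefully, so the main task is bookkeeping rather than clever manipulation. The ingredients I need are (i) the corruption model $Z_{ij}=(X_{ij}+H_{ij})\pi_{ij}$, under which $\1(Z_{ij}\neq\NA)=\1(\pi_{ij}=1)$; (ii) $\hrho_j$ is measurable with respect to $\tr$ and hence a constant under the conditioning $\E[\cdot\mid X^{\te}_{ij},\tr]$; and (iii) for $i\in\te$, the pair $(H^{\te}_{ij},\pi^{\te}_{ij})$ is drawn independently of $\tr$, with $\E[H^{\te}_{ij}]=0$ and $\p(\pi^{\te}_{ij}=1)=\rho_j$. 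Items (i)--(iii) are built into the corruption class defined in Section~\ref{sec:model}.

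\textbf{First identity.} Using (i), rewrite
\[
\fil(Z^{\te}_{ij}) \;=\; \frac{Z^{\te}_{ij}\,\1(\pi^{\te}_{ij}=1)}{\hrho_j} \;=\; \frac{(X^{\te}_{ij}+H^{\te}_{ij})\,\1(\pi^{\te}_{ij}=1)}{\hrho_j}.
\]
Take the conditional expectation given $(X^{\te}_{ij},\tr)$. By (ii) the factor $1/\hrho_j$ pulls out; by (iii) the cross term $\E[H^{\te}_{ij}\,\1(\pi^{\te}_{ij}=1)]$ vanishes (either by mean-zero noise and independence, or by the mean-zero-conditional-on-mask condition of the corruption class), while $\E[\1(\pi^{\te}_{ij}=1)]=\rho_j$. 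The result is $X^{\te}_{ij}\rho_j/\hrho_j$, which is the first claim.

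\textbf{Second identity.} Decompose the fill-as-means operator as
\[
\textsc{fill-as-means}(Z^{\te}_{ij}) \;=\; Z^{\te}_{ij}\,\1(\pi^{\te}_{ij}=1) \;+\; \bar{Z}^{\tr}_j\,\1(\pi^{\te}_{ij}=\NA).
\]
The first summand, handled exactly as above (but without the $1/\hrho_j$ rescaling), contributes $X^{\te}_{ij}\rho_j$. The second summand has $\bar{Z}^{\tr}_j$ measurable given $\tr$, so it factors out of the conditional expectation, leaving $\bar{Z}^{\tr}_j(1-\rho_j)$. Adding the two pieces yields the second claim.

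\textbf{Obstacle.} There is no serious technical obstacle; the only delicate point is verifying that the independence and mean-zero assumptions are compatible with the paper's formal corruption model (in particular, that noise, mask, and training fold are jointly independent in the appropriate sense). Those properties are standing assumptions of Section~\ref{sec:model}, so once invoked the computation is essentially two lines per identity.
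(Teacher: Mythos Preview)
Your proposal is correct and essentially matches the paper's proof. The only cosmetic difference is that the paper writes out the law of iterated expectations by explicitly conditioning on the two cases $\pi^{\te}_{ij}=1$ and $\pi^{\te}_{ij}=0$, whereas you encode the same case split via the indicator $\1(\pi^{\te}_{ij}=1)$ and linearity of expectation; the underlying computation and the assumptions invoked are identical.
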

\textsc{fill-as-means} gives a convex combination of the signal $X_{ij}$ and of the noisy average $\bar{Z}_j^{\tr}$. The noisy average introduces additional correlations that our procedure avoids.

After filling \tr, we project it onto its own principal subspace to calculate the cleaned training covariates $\bhX$:
$
\fil(\bZ^{\tr})=\bhU\bhS\bhV^T,$ and $\bhX=\bhU_k\bhS_k\bhV_k^T.
$
We truncate the SVD of $\fil(\bZ^{\tr})$ to include only the top $k$ principal components, where $k$ is a hyperparameter. 
Figure~\ref{fig:spectrum} suggests a choice of $k$. 
Below, we empirically verify that our results are robust to different choices of $k>r$. Future work may derive a data driven procedure $k=\hat{r}$ \cite{stock1998diffusion,bai2002determining,onatski2009formal}.
We preserve the ambient dimension $p$.

\textit{Step 2: Error-in-variable regression}. Our error-in-variable regression is also simple: after cleaning \tr, perform ordinary least squares (OLS) on \tr, then use this OLS coefficient on the filled \te\, for prediction. We only fill, and do not clean, the test set. 

We introduce nonlinearity into the regression to allow treatment effect heterogeneity, which is crucial for causal inference. Appendix~\ref{sec:dictionary} characterizes what nonlinearity is allowed. Here, we focus on the interacted dictionary
$
b(D_i,\hat{X}_{i,\cdot})=(D_i \hat{X}_{i,\cdot},(1-D_i) \hat{X}_{i,\cdot} ).
$ 
Then the OLS coefficient
$
\hbeta=\{(\hbeta^{\treat})^T, (\hbeta^{\untreat})^T\}^T$ equals $[\{b(D^{\tr},\bhX)\}^Tb(D^{\tr},\bhX)]^{\dagger}$ $[\{b(D^{\tr},\bhX)\}^T Y^{\tr}],
$
where $\dagger$ means pseudoinverse.

The subtlety is in how predictions are constructed from $\hbeta$. Out of sample prediction does \textit{not} involve cleaning the test set: for $i\in \te$,
$
\hgamma(D_i,Z_{i,\cdot})=b\{D_{i},\fil(Z_{i,\cdot})\}\hbeta. 
$

\begin{proposition}[Implicit data cleaning preserves independence]\label{prop:indep}
For $i\in \te$,
$
\hgamma(D_i,Z_{i,\cdot})=b(D_i,Z_{i,\cdot})\tbeta, $
where
$ \tbeta=\{(\bhrho^{-1}\hbeta^{\treat})^T, (\bhrho^{-1}\hbeta^{\untreat})^T\}^T$
and we replace $\NA$ with $0$ in $Z_{i,\cdot}$. Therefore for $(i,j)\in \te$,
$
\hgamma(D_i,Z_{i,\cdot}) \indep \hgamma(D_j,Z_{j,\cdot}) | \tr.
$
\end{proposition}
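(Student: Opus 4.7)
The proof has two distinct components: an algebraic rewriting of $\hgamma$ on the test set, and a conditional independence argument that follows immediately from the rewriting.

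For the first part, I would start by unpacking the $\fil$ operator. By its definition, $\fil(Z_{ij}) = Z_{ij}/\hrho_j$ when $Z_{ij}\neq \NA$ and $\fil(Z_{ij})=0$ when $Z_{ij}=\NA$. If we adopt the convention that $\NA$ is replaced by $0$ inside $Z_{i,\cdot}$, both cases collapse to the single matrix identity $\fil(Z_{i,\cdot}) = Z_{i,\cdot}\bhrho^{-1}$, since $\bhrho$ is the diagonal matrix of likelihoods $\hrho_j$. Plugging this into the interacted dictionary gives
$$
b\{D_i,\fil(Z_{i,\cdot})\} = \bigl(D_i\, Z_{i,\cdot}\bhrho^{-1},\; (1-D_i)\, Z_{i,\cdot}\bhrho^{-1}\bigr).
$$
Multiplying on the right by $\hbeta = (\hbeta^{\treat\,\top},\hbeta^{\untreat\,\top})^\top$ and collecting terms gives
$$
\hgamma(D_i,Z_{i,\cdot}) = D_i\, Z_{i,\cdot}\,\bhrho^{-1}\hbeta^{\treat} + (1-D_i)\, Z_{i,\cdot}\,\bhrho^{-1}\hbeta^{\untreat} = b(D_i,Z_{i,\cdot})\,\tbeta,
$$
where the $\bhrho^{-1}$ has been absorbed into the coefficient vector $\tbeta$ as stated. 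This is the first claim.

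For the second part, the key observation is that $\tbeta$ is constructed entirely from \tr: both $\hbeta$ (an OLS fit on the cleaned training set) and $\bhrho$ (training likelihoods) are measurable with respect to \tr. Therefore, after conditioning on \tr, $\tbeta$ is a constant. Hence $\hgamma(D_i,Z_{i,\cdot}) = b(D_i,Z_{i,\cdot})\tbeta$ is, given \tr, a deterministic function of $(D_i,Z_{i,\cdot})$ alone. Sample splitting ensures that the test observations are independent of \tr, and the test observations $\{(D_i,Z_{i,\cdot})\}_{i\in\te}$ are mutually independent (the i.n.i.d. framework of Section~\ref{sec:model} only drops identical distribution, not independence). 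Consequently, for $i\neq j$ in \te, $\hgamma(D_i,Z_{i,\cdot})$ and $\hgamma(D_j,Z_{j,\cdot})$ are conditionally independent given \tr, as deterministic functions of independent inputs.

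The argument is essentially immediate once the algebraic rewriting is done. The only substantive point---and what the proposition really emphasizes conceptually---is that because we never invoke the matrix-completion-based $\bhV_k$ on the test set (only the entrywise scalar operator $\fil$), we never introduce cross-test-observation couplings. The main obstacle, such as it is, is simply keeping track that $\bhrho^{-1}$ commutes through the dictionary block structure; had we chosen a dictionary with cross-products such as $Z_{ij}Z_{ik}$, the same rewriting would go through but with a more elaborate rescaling matrix. No analytic estimates are needed.
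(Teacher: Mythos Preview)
Your proof is correct and follows essentially the same approach as the paper: the algebraic rewriting $\fil(Z_{i,\cdot})=Z_{i,\cdot}\bhrho^{-1}$ followed by absorbing $\bhrho^{-1}$ into each block of $\hbeta$, and then the observation that $(\bhrho,\hbeta)$ are \tr-measurable so the only remaining randomness comes from the i.n.i.d.\ test observations. Your exposition is slightly more detailed than the paper's (which dispatches the proof in three displayed lines plus one sentence), but the content is identical.
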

Remarkably, post-multiplying $b(D_i,Z_{i,\cdot})$ by $\tbeta$ handles the measurement error, missingness, discretization, and differential privacy of $Z_{i,\cdot}$ while also producing high quality nonlinear predictions of $Y_i$. We call this phenomenon ``implicit'' data cleaning. Moreover, since $\tbeta$ is learned exclusively from $\tr$, it is deterministic conditional on $\tr$, so predictions for observations $(i,j)\in\te$ preserve their independence. This property of implicit data cleaning will be essential for our inferential theory.

Our new variant of PCR has broader use outside of causal inference. In online learning, a corrupted test observation $Z_{i,\cdot}$ does not need to be explicitly cleaned with respect to $\te$ or even $\tr$. Instead, it may be implicitly cleaned by post multiplying it with the coefficient $\tbeta$. For test observations, data cleaning and prediction can be combined into one step. 

\textit{Step 3: Error-in-variable balancing}. Our error-in-variable balancing weight generalizes our error-in-variable regression. It avoids the estimation and inversion of propensity scores, which may be numerically unstable in high dimensions. Pleasingly, it achieves exact balance for any finite sample size, in a sense that we formalize below. Moreover, it adapts to the causal parameter of interest, as we explain in Appendix~\ref{sec:riesz_proof}.

The only difference from the error-in-variable regression is that we replace the sufficient statistic $[\{b(D^{\tr},\bhX)\}^T Y^{\tr}] \in \R^{p'}$ with another sufficient statistic that we call the counterfactual moment $\hat{M}\in \R^{p'}$. The counterfactual moment resembles the expression $\theta_i=\E[\gamma_0(1,X_{i,\cdot})-\gamma_0(0,X_{i,\cdot})]$, and it is the \textit{only} aspect of the algorithm that changes for different causal parameters. Formally,
$
\heta=[\{b(D^{\tr},\bhX)\}^Tb(D^{\tr},\bhX)]^{\dagger}\hat{M}$ and $\hat{M}= [\{b(1,\bhX)\}^T-\{b(0,\bhX)\}^T] \1_m
$
where $\1_m \in\R^m$ is a vector of ones. As before, we do not clean the test set: for $i\in \te$,
$
\halpha(D_i,Z_{i,\cdot})=b\{D_{i},\fil(Z_{i,\cdot})\}\heta.
$

\begin{proposition}[The balancing weight exactly balances covariates]\label{prop:balance_ATE}
For any finite sample,
$$
\frac{1}{m}\sum_{i\in\tr}\hat{X}_{i,\cdot}=\frac{1}{m}\sum_{i\in\tr}D_i\hat{X}_{i,\cdot}\homega_i^{\tr}=\frac{1}{m}\sum_{i\in\tr}(1-D_i)\hat{X}_{i,\cdot} \homega_i^{\untreat},
$$
where $(\homega_i^{\treat},\homega_i^{\untreat})\in\R$ are balancing weights computed from $\heta=\{(\heta^{\treat})^T, (\heta^{\untreat})^T\}^T$ as
$
\homega_i^{\treat}=\hat{X}_{i,\cdot}\heta^{\treat}$ and $\homega_i^{\untreat}=-\hat{X}_{i,\cdot}\heta^{\untreat}.
$
\end{proposition}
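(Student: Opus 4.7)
The plan is to exploit the block-diagonal structure of the Gram matrix induced by the interacted dictionary, obtain explicit pseudoinverse expressions for $\heta^{\treat}$ and $\heta^{\untreat}$, and then recognize that the defining pseudoinverse relation is essentially the covariate-balance identity itself.

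First, I would write the design matrix in block form as $\bB := b(D^{\tr},\bhX) = (\bD \bhX,\, (\bI-\bD)\bhX)$, where $\bD=\mathrm{diag}(D_1,\ldots,D_m)$. Using $\bD^2=\bD$ and $\bD(\bI-\bD)=0$, the Gram matrix is block diagonal, $\bB^T \bB = \mathrm{diag}(\bhX^T\bD\bhX,\, \bhX^T(\bI-\bD)\bhX)$. Since $b(1,\bhX)=(\bhX,\mathbf{0})$ and $b(0,\bhX)=(\mathbf{0},\bhX)$, the counterfactual moment splits as $\hat{M}=(\bhX^T\1_m,\, -\bhX^T\1_m)^T$. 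The block-diagonal pseudoinverse then yields
$$
\heta^{\treat}=(\bhX^T\bD\bhX)^{\dagger}\bhX^T\1_m,\qquad \heta^{\untreat}=-(\bhX^T(\bI-\bD)\bhX)^{\dagger}\bhX^T\1_m.
$$

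Second, I would reduce the treated balance identity to a single matrix equation. Taking transposes,
$$
\sum_{i\in\tr} D_i\hat{X}_{i,\cdot}^T\,\homega_i^{\treat} = \sum_{i\in\tr} D_i\hat{X}_{i,\cdot}^T(\hat{X}_{i,\cdot}\heta^{\treat}) = \bhX^T\bD\bhX\,\heta^{\treat},
$$
while the left-hand side is $\sum_{i\in\tr}\hat{X}_{i,\cdot}^T = \bhX^T\1_m$. The claim thus becomes $\bhX^T\bD\bhX\,\heta^{\treat}=\bhX^T\1_m$. Plugging in the pseudoinverse formula, this is the identity $A A^{\dagger} v = v$ for $A=\bhX^T\bD\bhX$ and $v=\bhX^T\1_m$, which holds precisely when $v\in\col(A)$.

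Third, the main obstacle is verifying this column-space containment, and the key is the low-rank SVD structure $\bhX=\bhU_k\bhS_k\bhV_k^T$ produced by the data-cleaning step. Writing $\bhX^T\1_m = \bhV_k\bhS_k\bhU_k^T\1_m$ shows $v\in\col(\bhV_k)$, and $\bhX^T\bD\bhX=\bhV_k\bhS_k(\bhU_k^T\bD\bhU_k)\bhS_k\bhV_k^T$ shows $\col(A)\subseteq\col(\bhV_k)$. Under the mild overlap condition that the $k\times k$ Gram matrix $\bhU_k^T\bD\bhU_k$ has full rank---equivalently, the top $k$ left singular directions of $\bhX$ are spanned by treated units---one has $\col(A)=\col(\bhV_k)$ and hence $v\in\col(A)$. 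Consequently $\bhX^T\bD\bhX\,\heta^{\treat}=\bhX^T\1_m$, and transposing gives the treated balance identity.

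Finally, a symmetric argument on the untreated block, carrying the sign through $\heta^{\untreat}$, gives $\bhX^T(\bI-\bD)\bhX\,\heta^{\untreat}=-\bhX^T\1_m$, which rearranges to the untreated balance identity. The two identities together yield the claim; the only substantive ingredient is the SVD-based verification of the pseudoinverse normal equations in step three.
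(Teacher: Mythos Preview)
Your argument is correct and rests on the same underlying identity as the paper---the normal equation $\bhG\heta=\hat{M}$---but the packaging differs. The paper first proves a dictionary-agnostic balance statement (its appendix Proposition ``Finite sample balance'') directly from a one-line first-order-condition lemma asserting $\bhG\heta=\hat{M}^T$, and only then specializes to the interacted dictionary by reading off the two blocks. You instead block-diagonalize from the start, obtain closed forms for $\heta^{\treat}$ and $\heta^{\untreat}$, and reduce each balance identity to $AA^\dagger v=v$. The paper's route generalizes immediately to other functionals and dictionaries; yours is tailored to ATE but is self-contained and requires no auxiliary propositions.

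One point where your version is actually sharper: you make explicit that $AA^\dagger v=v$ needs $v\in\col(A)$, and you supply a sufficient overlap condition ($\bhU_k^T\bD\bhU_k$ full rank) via the SVD. The paper's FOC lemma states $\bhG\heta=\hat{M}^T$ ``deterministically,'' but with $\heta=\bhG^\dagger\hat{M}^T$ this equality is equivalent to $\hat{M}\in\row\{b(D^{\tr},\bhX)\}$, which is precisely the paper's Assumption~5.8 (row-space inclusion for $\hat{M}$). So the proposition as stated (``for any finite sample'') tacitly relies on that inclusion, and your proof surfaces this rather than hiding it behind the FOC.
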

Deterministically, the error-in-variable balancing weight exactly balances the full population, the treated subpopulation, and the untreated subpopulation with respect to their cleaned covariates. It is precisely the reweighting that would ensure comparability of treated and untreated groups in a stratified sampling design. We articulate a more general balancing property for generic causal parameters in Appendix~\ref{sec:riesz_proof}. We also clarify the sense in which the error-in-variable regression and balancing weight coincide on $\tr$ but not $\te$.

\textit{Step 4: Causal estimation and inference}. The final step uses the error-in-variable regression $\hgamma$ and error-in-variable balancing weight $\halpha$ learned from $\tr$, and evaluates them on $\te$\, according to the doubly robust estimating equation: for $i\in \te$,
$
\hpsi_i
=\hgamma(1,Z_{i,\cdot})-\hgamma(1,Z_{i,\cdot})+\halpha(D_i,Z_{i,\cdot})\{Y_i-\hgamma(D_i,Z_{i,\cdot})\}
$ is the empirical influence of that observation.
This process generates a vector $\hpsi\in\R^m$. Reversing the roles of $\tr$ and $\te$, we generate another such vector. Slightly abusing notation, we concatenate the two to obtain a vector $\hpsi \in \R^n$. We estimate the causal parameter as $\htheta=\textsc{mean}(\hpsi)$, its variance as $\hsigma^2=\textsc{var}(\hpsi)$, and its data cleaning-adjusted confidence interval as
$\textsc{CI}=\htheta \pm 1.96 \hsigma n^{-1/2}.
$

Our procedure deals with measurement error bias by cleaning the data. For the special case of ATE, the measurement error bias has a closed form solution in terms of the regression, propensity score, covariate density, and derivatives thereof \cite{battistin2014treatment}. We avoid estimation of the propensity scores, covariate density, and derivatives, which would be challenging in high dimensions. Instead, we simply combine PCA and OLS.

The way we impute missing values modifies multiple imputation \cite{rubin1976inference}. 
In multiple imputation, the analyst makes, say, two copies of the original data set, then imputes missing values (with some randomness so each imputation may be different). Estimates and standard errors from each copy are then averaged. Our procedure splits the sample into two folds: \tr\, and \te. We clean \tr\, and compute estimates and standard errors with \te, then reverse the roles and take the average. We opt for sample splitting, rather than copying, and we additionally consider measurement error.

\textbf{Adapting to the type and level of corruption}.
Next, we demonstrate that our four step procedure performs well in simulations with a broad variety of data corruptions. We run the same code in every setting; the procedure adapts to the \textit{type} and \textit{level} of data corruption, without prior knowledge of the corruption covariance structure.

To begin, we consider measurement error $Z_{i,\cdot}=X_{i,\cdot}+H_{i,\cdot}$, where $H_{i,\cdot}$ is Gaussian noise, in the average treatment effect simulation described above. Recall that $\theta_0=2.2$, $\bX\in\R^{100\times 100}$, and $r=5$. We implement our procedure on corrupted data 1000 times, collecting 1000 point estimates $\htheta$ and 1000 standard errors $\hsigma$. For a 20\% noise-to-signal ratio, we visualize the studentized point estimates $(\htheta-\theta_0)/\hsigma$ in Figure~\ref{fig:PCR_noise}. Qualitatively, the histogram closely resembles the standard normal density.

We quantify performance in coverage tables. In Table~\ref{fig:PCR_noise_table}, different rows correspond to different noise-to-signal ratios. Initially, we consider the oracle tuning of the PCA hyperparameter $k=r$. For each noise-to-signal ratio, we record the average point estimates, which are close to $\theta_0=2.2$. Next, we record the average standard errors, which adaptively increase in length to higher noise levels. Impressively, a 100\% noise-to-signal ratio setting corresponds to a confidence interval that is only about 10\% longer. These confidence intervals are the correct length, since about 950 of them include the true value $\theta_0=2.2$.

Table~\ref{fig:PCR_noise_table} revisits the issue of tuning the hyperparameter $k$. This time, we fix the noise-to-signal ratio to 20\%. Different rows correspond to different tunings: $k=r$, $k=r+2$, and $k=r+5$. Point estimates remain close to the true value $\theta_0=2.2$. The standard errors adaptively increase in length when $k$ deviates from $r$, though the length only increases about 10\%. The confidence intervals are again the correct length, attaining nominal coverage.

We repeat this exercise with other types of data corruption: missing values (Figure~\ref{fig:PCR_missing}), discretization (Figure~\ref{fig:PCR_discrete}), and differential privacy (Figure~\ref{fig:PCR_private}). For missing values, $Z_{i,\cdot}=X_{i,\cdot}\odot \pi_{i,\cdot}$ and we consider non-response of 10\%, 30\%, and 50\% of \textit{all covariate entries}. In Census Bureau surveys, key variables such as income are missing 40\% of the time. Fortunately, our procedure performs well even with this high level of missingness. For discretization, we consider randomized rounding $Z_{i,\cdot}=sign(X_{i,\cdot})\text{Poisson}(|X_{i,\cdot}|)$, which corresponds to a 33\% noise-to-signal ratio. Finally, for differential privacy, $Z_{i,\cdot}=X_{i,\cdot}+H_{i,\cdot}$ where $H_{i,\cdot}$ is Laplacian noise, and we obtain results that are nearly identical to measurement error. Across settings, our results are robust to hyperparameter tuning. 

\begin{figure}[H]
\vspace{-5pt}
\begin{centering}
\begin{subfigure}[c]{\textwidth}
    \setcounter{subfigure}{0}
    \setcounter{subsubfigure}{0}
    \begin{subsubfigure}[b]{0.48\textwidth}
        \centering
        \includegraphics[width=0.8\textwidth]{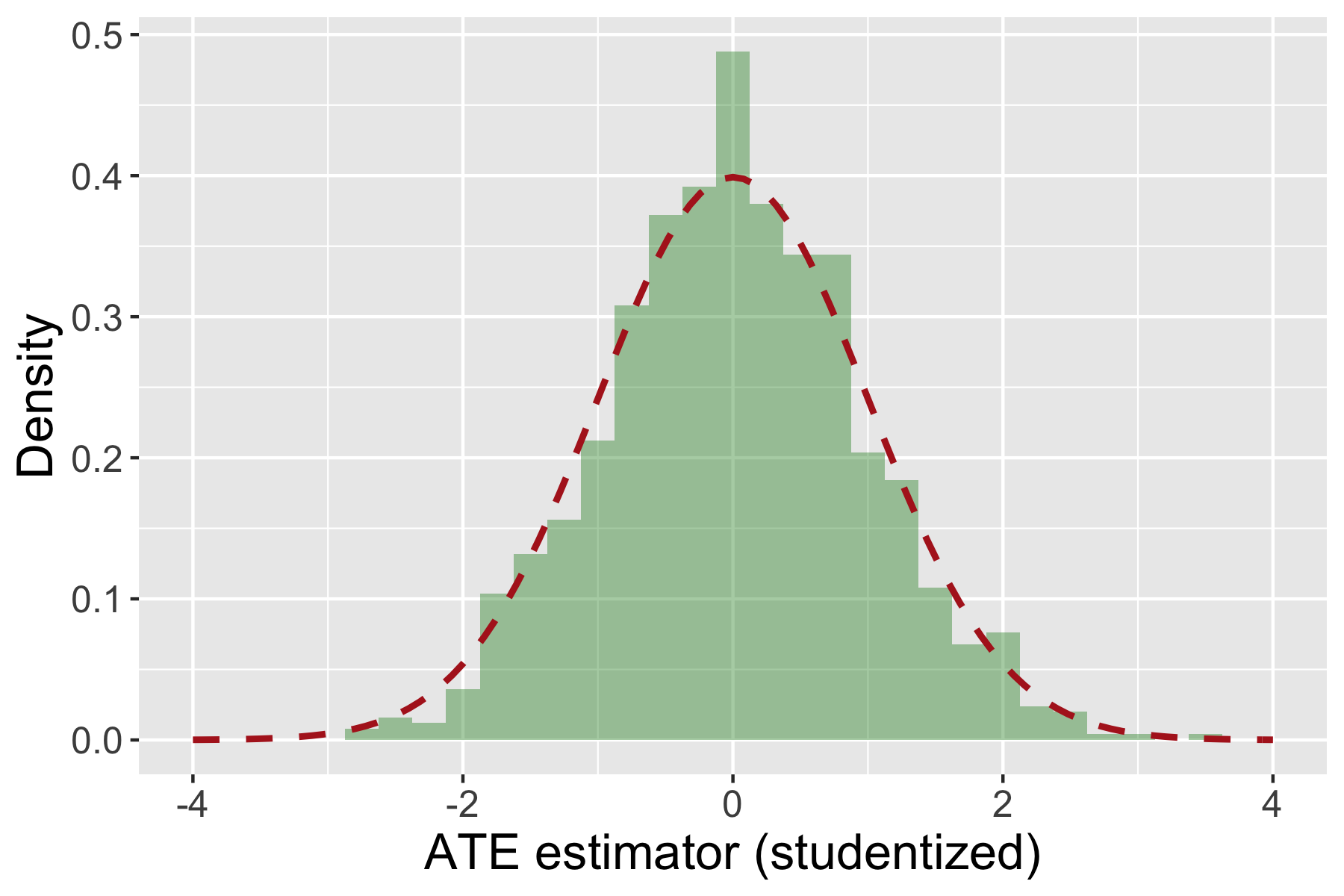}
        \vspace{-3pt}
        \caption{Measurement error inference}\label{fig:PCR_noise}
    \end{subsubfigure}
    \begin{subsubfigure}[b]{0.44\textwidth}
        \centering
        \resizebox{\textwidth}{!}{%
        \begin{tabular}{C{11ex} C{6ex} C{5ex} C{5ex} C{8ex} C{8ex}}
            \hline
            Meas. Err. & PCA & ATE & SE & 80\% CI & 95\% CI\tabularnewline
            \hline
            20\% & k=5 & 2.22 & 0.35 & 0.81 & 0.96\tabularnewline
            60\% & k=5 & 2.23 & 0.37 & 0.81 & 0.96\tabularnewline
            100\% & k=5 & 2.28 & 0.39 & 0.82 & 0.95\tabularnewline
            \hline
            \hline
            20\% & k=5 & 2.22 & 0.35 & 0.81 & 0.96\tabularnewline
            20\% & k=7 & 2.21 & 0.36 & 0.84 & 0.96\tabularnewline
            20\% & k=10 & 2.22 & 0.39 & 0.83 & 0.97\tabularnewline
            \hline
        \end{tabular}
        }
        \vspace{-3pt}
        \caption{Measurement error coverage}\label{fig:PCR_noise_table}
    \end{subsubfigure}
\end{subfigure}

\begin{subfigure}[c]{\textwidth}
    \setcounter{subfigure}{1}
    \setcounter{subsubfigure}{0}
    \begin{subsubfigure}[b]{0.48\textwidth}
        \centering
        \includegraphics[width=0.8\textwidth]{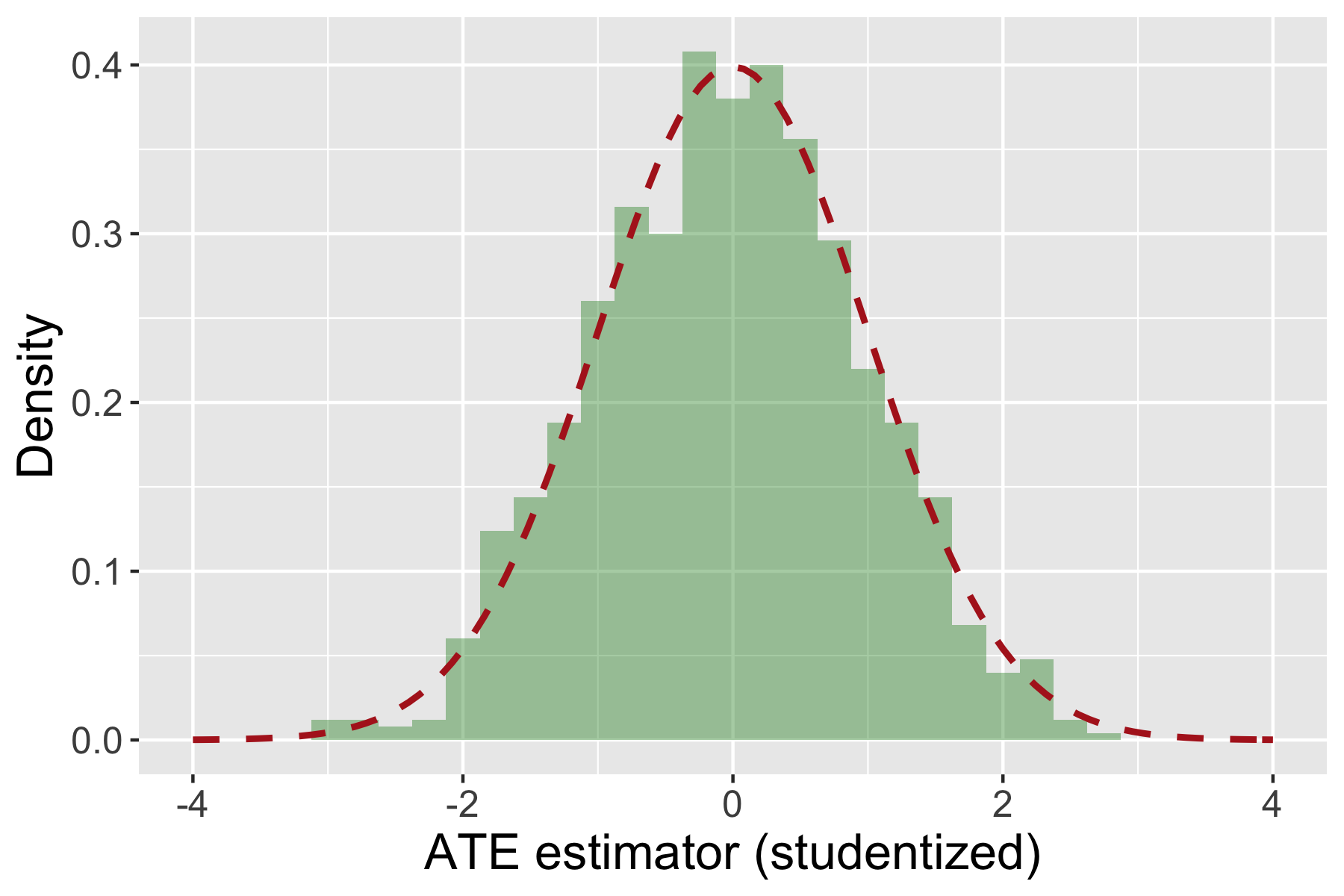}
        \vspace{-3pt}
        \caption{Missing values inference}\label{fig:PCR_missing}
    \end{subsubfigure}
    \begin{subsubfigure}[b]{0.44\textwidth}
        \centering
        \resizebox{\textwidth}{!}{%
        \begin{tabular}{C{11ex} C{6ex} C{5ex} C{5ex} C{8ex} C{8ex}}
            \hline
            Miss. Val. & PCA & ATE & SE & 80\% CI & 95\% CI\tabularnewline
            \hline
            10\% & k=5 & 2.20 & 0.35 & 0.81 & 0.96\tabularnewline
            30\% & k=5 & 2.24 & 0.37 & 0.81 & 0.94\tabularnewline
            50\% & k=5 & 2.35 & 0.41 & 0.79 & 0.94\tabularnewline
            \hline
            \hline
            10\% & k=5 & 2.20 & 0.35 & 0.81 & 0.96\tabularnewline
            10\% & k=7 & 2.19 & 0.37 & 0.81 & 0.95\tabularnewline
            10\% & k=10 & 2.19 & 0.42 & 0.82 & 0.96\tabularnewline
            \hline
        \end{tabular}
        }
        \vspace{-3pt}
        \caption{Missing values coverage}\label{fig:PCR_missing_table}
    \end{subsubfigure}
\end{subfigure}

\begin{subfigure}[c]{\textwidth}
    \setcounter{subfigure}{2}
    \setcounter{subsubfigure}{0}
    \begin{subsubfigure}[b]{0.48\textwidth}
        \centering
        \includegraphics[width=0.8\textwidth]{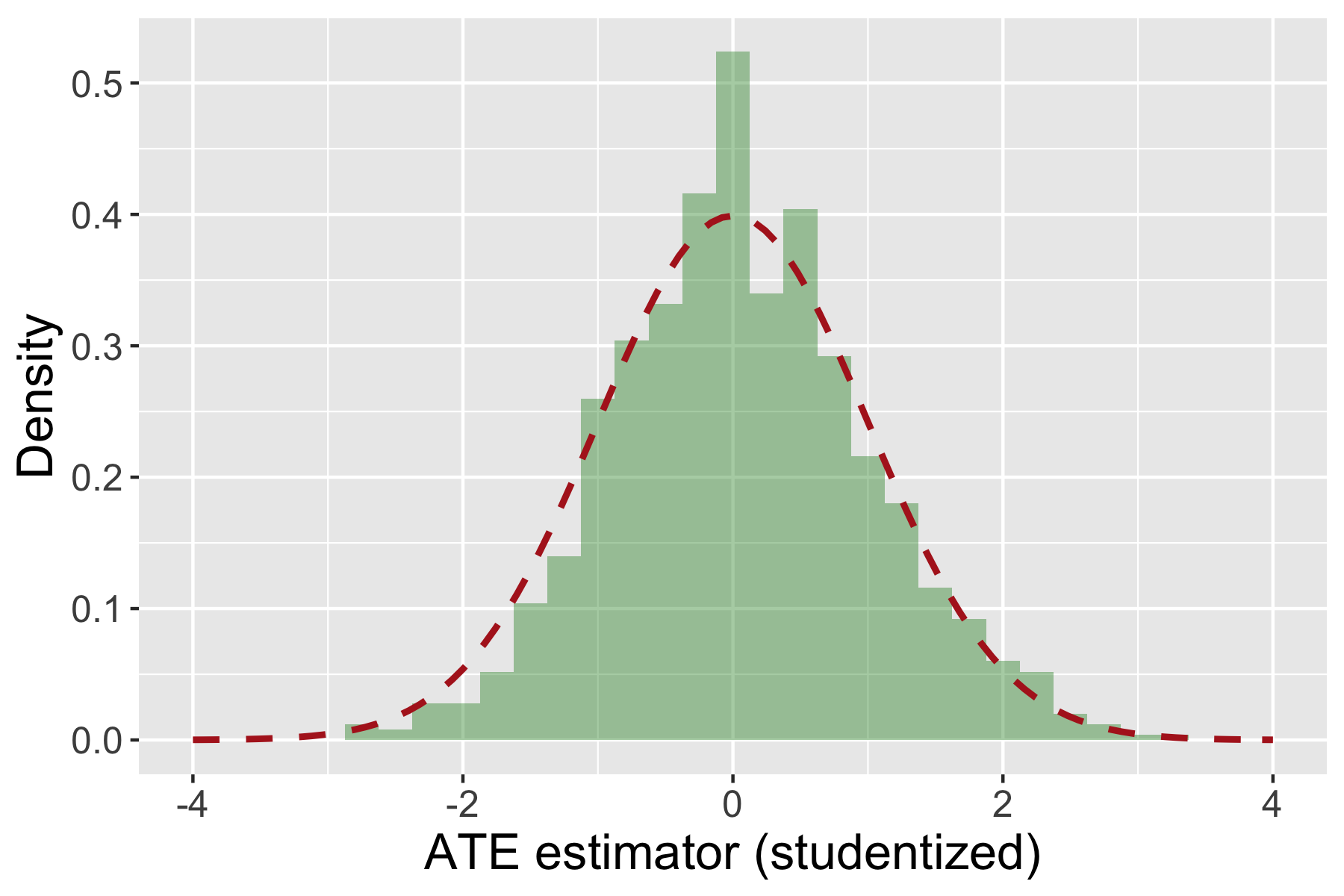}
        \vspace{-3pt}
        \caption{Discretization inference}\label{fig:PCR_discrete}
    \end{subsubfigure}
    \begin{subsubfigure}[b]{0.44\textwidth}
        \centering
        \resizebox{\textwidth}{!}{%
        \begin{tabular}{C{11ex} C{6ex} C{5ex} C{5ex} C{8ex} C{8ex}}
            \hline
            Discret. & PCA & ATE & SE & 80\% CI & 95\% CI\tabularnewline
            \hline
            33\% & k=5 & 2.23 & 0.36 & 0.81 & 0.96\tabularnewline
            33\% & k=7 & 2.23 & 0.37 & 0.80 & 0.95\tabularnewline
            33\% & k=10 & 2.23 & 0.41 & 0.81 & 0.95\tabularnewline
            \hline
        \end{tabular}
        }
        \vspace{12pt}
        \caption{Discretization coverage}\label{fig:PCR_discrete_table}
    \end{subsubfigure}
\end{subfigure}

\begin{subfigure}[c]{\textwidth}
    \setcounter{subfigure}{3}
    \setcounter{subsubfigure}{0}
    \begin{subsubfigure}[b]{0.48\textwidth}
        \centering
        \includegraphics[width=0.8\textwidth]{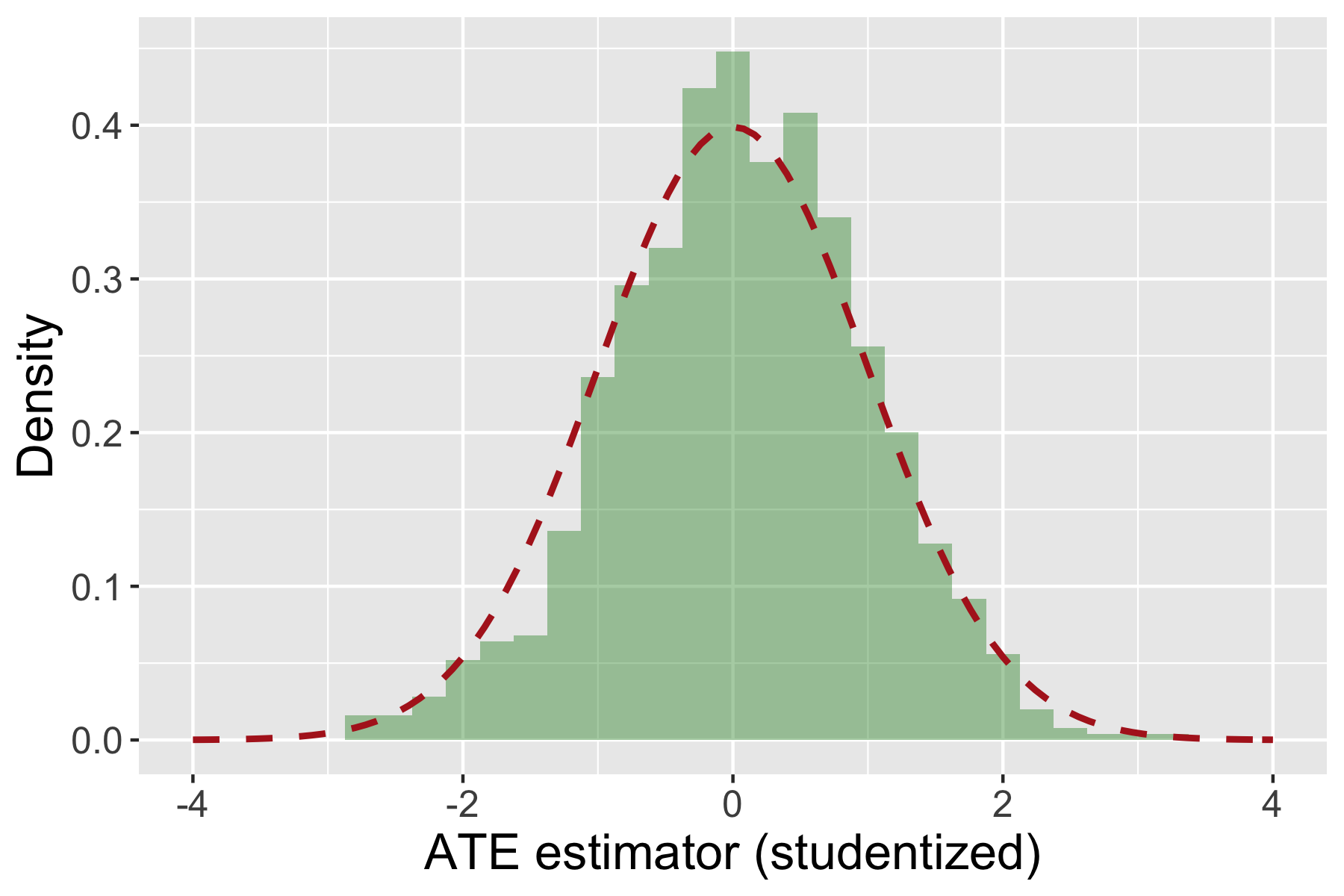}
        \vspace{-3pt}
        \caption{Differential privacy inference}\label{fig:PCR_private}
    \end{subsubfigure}
    \begin{subsubfigure}[b]{0.44\textwidth}
        \centering
        \resizebox{\textwidth}{!}{%
        \begin{tabular}{C{11ex} C{6ex} C{5ex} C{5ex} C{8ex} C{8ex}}
            \hline
            Diff. Priv. & PCA & ATE & SE & 80\% CI & 95\% CI\tabularnewline
            \hline
            20\% & k=5 & 2.19 & 0.35 & 0.84 & 0.97\tabularnewline
            60\% & k=5 & 2.23 & 0.37 & 0.81 & 0.96\tabularnewline
            100\% & k=5 & 2.29 & 0.39 & 0.81 & 0.95\tabularnewline
            \hline
            \hline
            20\% & k=5 & 2.19 & 0.35 & 0.84 & 0.97\tabularnewline
            20\% & k=7 & 2.20 & 0.36 & 0.84 & 0.97\tabularnewline
            20\% & k=10 & 2.19 & 0.39 & 0.86 & 0.97\tabularnewline
            \hline
        \end{tabular}
        }
        \vspace{-3pt}
        \caption{Differential privacy coverage}\label{fig:PCR_private_table}
    \end{subsubfigure}
\end{subfigure}
\vspace{-10pt}
\caption{Our approach adapts to the type and level of corruption.}\label{fig:PCR_all}
\vspace{-15pt}
\end{centering}
\end{figure}

\section{Finite sample analysis}\label{sec:theory}

In the previous section, we articulate three reasons why inference after data cleaning is hard. First, data cleaning mixes signal and noise across observations. We introduce implicit data cleaning as an algorithmic solution, yet we still need to provide a theory of implicit data cleaning: why is it okay to not clean the test covariates? Second, the best rates of data cleaning are slower than $n^{-1/2}$. We incorporate the doubly robust estimating equation in the hope of achieving double rate robustness, yet we still need to prove that it works: how is causal inference still possible with standard errors of order $\hsigma n^{-1/2}$? Third, data cleaning recovers averages across matrix entries. How can we translate guarantees about recovering averages into guarantees about the coverage of data cleaning-adjusted confidence intervals? In this section, we answer these three theoretical questions with finite sample analysis.

We prove four theorems, each corresponding to a step in the procedure:
(i) data cleaning: $\bhX$ converges to $\bX^{\tr}$;
(ii) error-in-variable regression: $\hgamma$ converges to $\gamma_0$;
(iii) error-in-variable balancing weight: $\halpha$ converges to $\alpha_0$;
(iv) causal parameter: $\p\{\theta_0\in (\htheta \pm 1.96 \hsigma n^{-1/2})\}$ converges to $0.95$.
We have already verified that our key assumption is reasonable in practice for US Census-derived data. In a corollary, we verify that it is reasonable in theory: it holds for a broad class of linear and nonlinear factor models.

\textbf{Step 1: Data cleaning}. For the data cleaning guarantee, we place four assumptions on the corrupted data. To lighten notation, we suppress indexing by \tr.

\begin{assumption}[Bounded signal]\label{assumption:bounded}
There exists an absolute constant $\bar{A}<\infty$ such that for all $i \in [m]$ and $j \in [p]$, $|X_{ij}|\leq \bar{A}$.
\end{assumption}
Bounded true values are standard in the matrix completion literature.

\begin{assumption}[Measurement error]\label{assumption:measurement}
Each row of measurement error $H_{i,\cdot}$ is conditionally mean zero and subexponential, i.e. $\E[H_{i,\cdot}|X_{i,\cdot}]=0$ and there exists $a\geq 1$ and $K_{a}<\infty$ such that $\|H_{i,\cdot}|X_{i,\cdot}\|_{\psi_{a}}\leq K_{a}$. 
Hence there exists $\kappa^2>0$ such that $\|\E[H_{i,\cdot}^TH_{i,\cdot}|X_{i,\cdot}]\|_{op}\leq \kappa^2$. 
We assume measurement error is independent across rows.
\end{assumption}
Measurement error may be \textit{dependent} within a given row. 
If each coordinate of $H_{i,\cdot} \in \R^p$ is independent, then $K_{a}$ and $\kappa^2$ are constants (i.e. they do not scale with $p$) \cite[Lemma 3.4.2]{vershynin2018high}. More generally, $(K_a, \kappa)$ quantify the level of dependence among the entries of $H_{i, \cdot}$ within a row. Our model allows for a great deal of heteroscedasticity. In particular, the results to follow are conditional on $\bX$, so the distribution of $H_{ij}$ may depend on $X_{ij}$ as long as it is conditionally mean zero and has tails no wider than those of an exponential distribution. Assumption~\ref{assumption:measurement} encompasses discretization and differential privacy. 

\begin{assumption}[Missing values]\label{assumption:missing}
Each $\pi_{ij}$ is 1 with probability $\rho_j$ and $\NA$ otherwise. 
Identifying $\NA$ with $0$, we assume there exists $\bar{K}<\infty$ such that $\|\pi_{i,\cdot}-(\rho_1,...,\rho_p)|X_{i,\cdot}\|_{\psi_2}\leq \bar{K}$. 
Missingness $\pi_{i,\cdot}$ is independent of $H_{i,\cdot}$ given $X_{i,\cdot}$, and independent across rows.
%
%
\end{assumption}

Our missingness model generalizes the standard missingness model in the PCR error-in-variable literature in two ways: (i) the missingness of one variable may depend on the missingness of another, and (ii) different variables may be missing with different probabilities. 
These additional degrees of flexibility are crucial for Census data, where non-responses for different variables are often correlated and where non-response rates of different variables can be vastly different. As with measurement error, missingness is independent across rows, but it may be \textit{dependent} within a given row. If each coordinate of $\pi_{i,\cdot} \in \R^p$ is independent, then $\bar{K}$ is constant. More generally, $\bar{K}$ quantifies the level of dependence among the entries of $\pi_{i, \cdot}$ within a row. Our model allows for different probabilities of missingness for different variables, which may depend on the signal in a weak sense: the proof is conditional on $\bX$, so the probability $\rho_j$ may depend on $X_{\cdot,j}$. We define the additional notation
$
 \rho_{\min} \coloneqq \min_{j\in [p]} \rho_j$ and $\boldsymbol{\rho}=diag(\rho_1,...,\rho_p)\in\R^{p\times p}.
$
\begin{assumption}[Concentrated signal]\label{assumption:well_balanced_spectrum}
Consider the approximation $\bX^{\lr}$ to $\bX$, with singular values $s_1,...,s_r$. Assume that $s_1,...,s_r \ge C\sqrt{\frac{mp}{r}}$, where $C$ is an absolute constant.
\end{assumption}

Assumption~\ref{assumption:well_balanced_spectrum} is analogous to incoherence-style conditions in econometrics 
and the notion of pervasiveness in matrix completion. 
Similar to a strong factor assumption,
it ensures that the explanatory power of $\bX^{\lr}$ dominates the explanatory power of various error terms. It requires signal to be spectrally concentrated. A natural setting in which Assumption \ref{assumption:well_balanced_spectrum} holds is when  $X^{\lr}_{ij}=\Theta(1)$ and $s_1,...s_r = \Theta(\tau)$.
Then, for absolute constants $C, C', C'' > 0$,
$
C\cdot r\cdot \tau^2 = \sum_{k}s_k^2=\|\bX^{\lr}\|^2_{Fr}= C' \cdot mp  
$
which implies $\tau = C'' \sqrt{\frac{mp}{r}}$. Future work may extend our results to different spectral assumptions on $\bX^{\lr}$.

\begin{remark}
We parametrize our rates by the quality of low rank approximation.
\end{remark}
Without loss of generality, $\bX=\bX^{\lr}+\bE^{\lr}$, where $\bX^{\lr}$ is a low rank approximation to $\bX$, and $\bE^{\lr}$ is the approximation residual. The two key quantities are $r=rank\{\bX^{\lr}\}$ and $\Delta_E=\|\bE^{\lr}\|_{\max}$. It is \textit{with} loss of generality that $r$ and $\Delta_E$ are simultaneously well behaved. Intuitively, as $r$ decreases, $\Delta_E$ increases (and vice-versa). Indeed, if $\bX^{\lr}=\bX$ then trivially $r \le (m, p)$ and $\Delta_E = 0$; if $\bX^{\lr} = 0$, then $r = 0$ but $\Delta_E = \bar{A}$. Our corollary shows that, under a nonlinear factor model, both $r$ and $\Delta_E$ behave well: $r\ll (m, p)$ and $\Delta_E\rightarrow0$. Until that corollary, we parameterize rates by $(r,\Delta_E)$, which may be non-unique.\footnote{Since $r$ may be non-unique, there may be multiple valid choices of the hyperparameter $k$.} 

\begin{theorem}[Finite sample data cleaning rate]\label{theorem:cov}
Suppose Assumptions~\ref{assumption:bounded},~\ref{assumption:measurement},~\ref{assumption:missing}, and~\ref{assumption:well_balanced_spectrum} hold, $k=r$, and $\rho_{\min} > \frac{23\log (mp)}{m}$. Then for an absolute constant $C > 0$,
$$
\frac{1}{m}\E\|\bhX-\bX\|^2_{2,\infty}\leq C_1
    \cdot \frac{r\ln^5(mp)}{\rho_{\min}^4} 
    \left(
    \frac{1}{m}
    +\frac{1}{p}
    +\Delta_E^2
    \right),
$$
where $C_1=C\cdot \bar{A}^4 (K_a + \bar{K})^2 (\kappa +K_a+ \bar{K})^2$.
\end{theorem}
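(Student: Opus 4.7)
My plan is to decompose the error and reduce the row-wise bound on $\bhX - \bX$ to (a) matrix concentration for the noise in $\fil(\bZ^{\tr})$, (b) subspace-perturbation theory for the truncated SVD, and (c) direct control of the low-rank residual. Write
\begin{equation*}
\bhX - \bX = (\bhX - \bX^{\lr}) + (\bX^{\lr} - \bX),
\end{equation*}
so that $\|\bX^{\lr} - \bX\|_{\max} = \Delta_E$ directly produces the $\Delta_E^2$ part of the right-hand side after row-wise and $1/m$ normalization. The bulk of the work is on $\bhX - \bX^{\lr}$, where both matrices have rank at most $r = k$.

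I would start by writing $\fil(\bZ^{\tr}) = \bhrho^{-1}\boldsymbol{\rho}\,\bX + \bN$, where $\bN$ collects the centered multiplicative-missingness noise and the additive sub-exponential measurement error, rescaled by $\bhrho^{-1}$. Since $\bhX$ is the best rank-$r$ approximation to $\fil(\bZ^{\tr})$ and $\bX^{\lr}$ is itself of rank $r$, Eckart--Young plus a triangle inequality yields $\|\bhX - \bX^{\lr}\|_{op} \le 2\,\|\fil(\bZ^{\tr}) - \bX^{\lr}\|_{op}$. I would bound this operator norm by splitting it into three pieces: (i) $\|\fil(\bZ^{\tr}) - \bhrho^{-1}\boldsymbol{\rho}\,\bX\|_{op}$, controlled by a matrix Bernstein inequality for independent rows with sub-exponential/sub-Gaussian entries under Assumptions~\ref{assumption:measurement} and~\ref{assumption:missing}; (ii) $\|(\bhrho^{-1}\boldsymbol{\rho} - \bI)\bX^{\lr}\|_{op}$, controlled by a uniform Chernoff bound on $|\hrho_j - \rho_j|$, where the assumption $\rho_{\min} > 23\log(mp)/m$ guarantees $\hrho_j \gtrsim \rho_j$ on a high-probability event so $\bhrho^{-1}$ is well defined; (iii) $\|\bhrho^{-1}\boldsymbol{\rho}\,\bE^{\lr}\|_{op}$, which is at most $\rho_{\min}^{-1}\sqrt{mp}\,\Delta_E$ by the bounded-signal Assumption~\ref{assumption:bounded}. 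Aggregating gives $\|\bhX - \bX^{\lr}\|_{op} = \tilde O(\rho_{\min}^{-1}(\sqrt{m}+\sqrt{p} + \sqrt{mp}\,\Delta_E))$ up to polylogs.

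Next I would convert this spectral-norm bound to a row-wise bound. Write $\bhX = \bhU_r \bhU_r^\top \fil(\bZ^{\tr})$ and $\bX^{\lr} = \bU_r \bS_r \bV_r^\top$. Assumption~\ref{assumption:well_balanced_spectrum}, which forces $s_r \gtrsim \sqrt{mp/r}$, is precisely the singular-gap condition needed so that Wedin's theorem aligns the two rank-$r$ subspaces. The row-wise comparison $\hat X_{i,\cdot} - X^{\lr}_{i,\cdot}$ can then be expanded as a telescoping sum over singular-vector perturbations plus a projection of the row of $\bN$ onto the perturbed subspace; the rank-$r$ structure costs a factor of $r$, while the $s_r^{-1}$ factor from Wedin cancels against the $\sqrt{mp}$ in the operator-norm bound to produce the relative-error rate $(r/\rho_{\min}^{2}) (1/m + 1/p + \Delta_E^2)$. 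Squaring, taking expectations (integrating out the tail of the high-probability event using Assumptions~\ref{assumption:measurement}--\ref{assumption:missing}), and the second power of $\rho_{\min}^{-1}$ coming from the inversion of $\bhrho$ in the implicit-cleaning step give the stated $\rho_{\min}^{-4}$ dependence. The polylog factor $\ln^5(mp)$ comes from repeatedly applying sub-exponential tail bounds over $mp$ entries together with union bounds in the $\bhrho$ estimation and the row-wise supremum.

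The main obstacle is the last step: going from an operator-norm bound to the $\|\cdot\|_{2,\infty}$ bound with the right dependence on $(r,m,p,\rho_{\min})$. Spectral perturbation theory only gives Frobenius/operator control directly; the row-wise bound is strictly stronger. I would either push through a careful singular-vector decomposition $\bhU_r\bhS_r\bhV_r^\top - \bU_r\bS_r\bV_r^\top = (\bhU_r - \bU_r)\bS_r\bV_r^\top + \bU_r(\bhS_r - \bS_r)\bV_r^\top + \bU_r\bS_r(\bhV_r - \bV_r)^\top$ with entrywise control of each factor, or invoke a leave-one-out argument in the spirit of the row-wise PCR bounds of~\cite{agarwal2021robustness,agarwal2020principal} cited above. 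The remaining steps are routine concentration and algebra; the challenge is preserving the $1/\rho_{\min}^4$ scaling and avoiding spurious factors of $p$.
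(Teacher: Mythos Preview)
Your high-level scaffolding (split off $\bE^{\lr}$, control $\|\fil(\bZ^{\tr})-\bX^{\lr}\|_{op}$ by matrix concentration, invoke Wedin via Assumption~\ref{assumption:well_balanced_spectrum}) matches the paper. But the step you flag as ``the main obstacle'' is indeed where your proposal has a real gap, and the paper resolves it differently from either of your two suggestions.

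First, a small slip: the $(2,\infty)$ norm here is $\max_{j}\|\hat X_{\cdot,j}-X_{\cdot,j}\|_2$, a maximum over \emph{columns}, not rows. More importantly, the paper never tries to upgrade an operator-norm bound to a $(2,\infty)$ bound after the fact. Instead it works column-by-column from the start, using the exact projection representation $\hat X_{\cdot,j}=\hrho_j^{-1}\,\bhU_r\bhU_r^\top Z_{\cdot,j}$. For each fixed $j$ one writes
\[
\hat X_{\cdot,j}-X_{\cdot,j}
=\bigl\{\hat X_{\cdot,j}-\bhU_r\bhU_r^\top X_{\cdot,j}\bigr\}
+\bigl\{\bhU_r\bhU_r^\top X_{\cdot,j}-X_{\cdot,j}\bigr\},
\]
and the two pieces are \emph{orthogonal} (one lies in $\mathrm{span}\{\hat u_1,\dots,\hat u_r\}$, the other in its complement), so their squared norms add. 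The second piece is handled by Wedin exactly as you outline. The first piece reduces to $\hrho_j^{-1}\bhU_r\bhU_r^\top(Z_{\cdot,j}-\rho_j X_{\cdot,j})$ plus a small $(\rho_j-\hrho_j)/\hrho_j$ correction. The crucial idea you are missing is how to bound $\|\bhU_r\bhU_r^\top(Z_{\cdot,j}-\rho_j X_{\cdot,j})\|_2$: replace $\bhU_r\bhU_r^\top$ by the \emph{true} projector $\bU\bU^\top$ (paying the Wedin cost times the full column noise norm, event $\Ec_2$), and then observe that $\|\bU\bU^\top(Z_{\cdot,j}-\rho_j X_{\cdot,j})\|_2^2$ is a sum of only $r$ squared sub-exponential coordinates, hence $\le r\cdot\Delta_H$ with high probability (this is the paper's event $\Ec_3$). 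That single observation---projecting the noise column onto the true rank-$r$ left singular space costs $r$, not $m$---is what produces the factor $r$ in the final bound without any leave-one-out machinery or entrywise singular-vector perturbation. Your telescoping SVD decomposition would instead require $\ell_\infty$ control of $\bhU_r-\bU_r$ or $\bhV_r-\bV_r$, which is strictly harder and typically needs incoherence assumptions the theorem does not make.
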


The norm of convergence is called the $(2,\infty)$ norm:
$
\frac{1}{m}\|\bhX-\bX\|^2_{2,\infty}=\max_{j\in [p]}\frac{1}{m}\|\hat{X}_{i,\cdot}-X_{\cdot,j}\|_2^2=\max_{j\in [p]} \frac{1}{m}\sum_{i=1}^m (\hat{X}_{ij}-X_{ij})^2
$
i.e. a maximum across columns and an average across rows. For any given variable $j\in[p]$, Theorem~\ref{theorem:cov} guarantees that data cleaning performs well on average across observations $i\in[m]$. Our rate requires both $m$ and $p$ to increase: more repeated measurements improve the quality of data cleaning. For the bound to be meaningful, $(r,\Delta_E)$ must be simultaneously well behaved, which is our key assumption. Recall that $(K_a,\kappa,\bar{K})$ quantify the level of corruption dependence within a row. As long as the dependence is weak, e.g. $(K_a,\kappa,\bar{K})$ scale as some power of $\ln(mp)$, this dependence in negligible. Our downstream results for the error-in-variable regression and balancing weight build on this data cleaning guarantee. Signal is spectrally concentrated, while noise is spectrally diffuse, so we can concentrate out the noise.

\textbf{Step 2: Error-in-variable regression}. 
We place three additional assumptions.
\begin{assumption}[Response noise]\label{assumption:noise}
We have $\E[\varepsilon_i | X_{i,\cdot}]=0$ and $\V[\varepsilon_i | X_{i,\cdot}]\leq \bar{\sigma}^2$. 
Response noise $\varepsilon_i$ is independent of $H_{i,\cdot}$ and $\pi_{i,\cdot}$ given $X_{i,\cdot}$, and independent across rows.
\end{assumption}
This condition permits measurement error and differential privacy of the outcome $Y_i$. 
Next we assume \tr\, and \te\, each contains a sufficient variety of observations. For a matrix $\bM \in \R^{m \times p}$, we define its row space as $\row(\bM)=span\{M_{i,\cdot}\}$. 

\begin{assumption}[Row space inclusion]\label{assumption:inclusion_row} $\row[b\{\bX^{\lr,\tr}\}] =\row[b\{\bX^{\lr,\te}\}]$.
\end{assumption}
This property permits $\bX^{\lr, \tr}\neq\bX^{\lr, \te}$, and also permits the two matrices to have different SVDs. In Appendix~\ref{sec:regression_proof}, we verify that Assumption~\ref{assumption:inclusion_row} holds with high probability under weak auxiliary conditions. Finally, we place a weak technical condition.

\begin{assumption}[Well conditioned estimators]\label{assumption:bounded_response_mod}
Let 
$\hs'_{k'}$ be the smallest non-zero singular value of $b(D^{\tr},\bhX)$. Assume that $\hs'_{k'}\gtrsim \frac{\bar{\varepsilon}}{\text{\normalfont polynomial}(m,p)}$ where $\E[\varepsilon_i^8]\leq \bar{\varepsilon}^8$.
\end{assumption}
For $(\hbeta,\heta)$ to be well conditioned, the singular value $\hat{s}'_{k'}$ should not be too small. In particular, it must be bounded below by an arbitrary negative power of $m$ and $p$. 

Before stating the result, we introduce a theoretical device $\beta^*$ as the coefficient of the best low rank nonlinear approximation to $\gamma_0$. In particular, we write
$
\gamma_0(D_i,X_{i,\cdot})=b(D_i,X^{\lr}_{i,\cdot})\beta^*+\phi_i^{\lr}
$
where $\phi_i^{\lr}$ is the approximation error.  It will be convenient to keep track of this approximation error by defining $\phi_i:=\gamma_0(D_i,X_{i,\cdot})-b(D_i,X_{i,\cdot})\beta^*$. There will be a trade-off: a richer dictionary $b$ leads to a smaller approximation error in terms of $\|\phi\|^2_2$, but more compounding of measurement error and missingness. 
The following result helps to characterize how the compounded data corruption magnifies $(\rho_{\min}^{-1},r,\Delta_E)$ but nothing else.

\begin{remark}
Our results hold for a broad class of dictionaries, with the dictionary-specific constant $C_b'$ and the concise notation $(\rho'_{\min},r',\Delta_E')$ in Theorems~\ref{theorem:fast_rate} and~\ref{theorem:fast_rate_RR}. Appendix~\ref{sec:dictionary} proves that
$$
C_b'\leq 2^{d_{\max}} \cdot \bar{A}^{2d_{\max}}_{\max}\cdot  \|\bhX\|_{\max}^{2d_{\max}},\quad
\frac{1}{\rho'_{\min}} \leq \frac{d_{\max}\bar{A}^{d_{\max}}}{\rho_{\min}},
$$
$$
r'\leq r^{d_{\max}},\quad\textrm{and}\quad
\Delta_E'\leq C \bar{A}^{d_{\max}} \cdot d_{\max} \Delta_E,
$$
where $d_{\max}$ is the degree of the polynomial dictionary. Appendix~\ref{sec:dictionary} articulates restrictions on the class of dictionaries. For the interacted dictionary, $d_{\max}=2$.
\end{remark}

\begin{remark}
Under further incoherence-style assumptions, we bound $\|\bhX\|_{\max}\leq C\sqrt{r}$ in Appendix~\ref{sec:cleaning_proof}. Alternatively, one can bound
$$
\|\bhX\|_{\max} \leq \|\bhX-\bX\|_{\max}+\|\bX\|_{\max}\leq \|\bhX-\bX\|^2_{2,\infty}+\bar{A}
$$
then appeal to Theorem~\ref{theorem:cov} with high probability. Doing so for $C_b'$ does not affect the powers of $(m,p)$ but does increase the complexity of the pre-factors.
\end{remark}

\begin{theorem}[Finite sample error-in-variable regression rate]\label{theorem:fast_rate}
Suppose that the conditions of Theorem~\ref{theorem:cov} hold, as well as Assumptions~\ref{assumption:noise},~\ref{assumption:inclusion_row}, and~\ref{assumption:bounded_response_mod}. 
If we have that \linebreak
$\rho'_{\min} \gg \tilde{C}\sqrt{r'}\ln^{\frac{3}{2}}(mp)\left\{\frac{1}{\sqrt{p}}\vee  \frac{1}{\sqrt{m}}\vee \Delta_E \right\}$,
where $\tilde{C}:= C  \bar{A} \Big(\kappa + \bar{K} + K_a \Big)$, then
\begin{multline*}
\Rc(\hgamma) 
\le 
 C_b'C_1C_2
    \cdot \bar{\sigma}^2 \cdot \frac{(r')^3\ln^{8}(mp)}{(\rho'_{\min})^6}\| \beta^* \|^2_1
    \left(
    \frac{1}{m}+\frac{p}{m^2}+\frac{1}{p}+\left(1+\frac{p}{m}\right)(\Delta'_E)^2+p(\Delta'_E)^4
    \right) \\
    + C_2
    \cdot
    \frac{(r')^2\ln^3(mp)}{(\rho'_{\min})^2}
    \Delta_{\phi}
    \left(1 + (\Delta'_E)^2\right),
\end{multline*}
where
$
\Delta_{\phi}=\frac{1}{m}\|\phi^{\tr} \|_2^2 \vee \frac{1}{m}\|\phi^{\te} \|_2^2,
$
$$
C_1=C\bar{A}^4(K_a + \bar{K})^2 (\kappa + \bar{K} + K_a)^2,
\quad\textrm{and}\quad
 C_2=C\cdot \bar{A}^4 (\kappa + \bar{K} + K_a)^2.
$$
\end{theorem}

\begin{corollary}[Simplified regression rate]\label{corollary:fast_rate}
Suppose the conditions of Theorem~\ref{theorem:fast_rate} hold. Further suppose $\gamma_0$ is exactly linear in signal, which is exactly low rank. Then
$$
\Rc(\hgamma) 
\le 
 C_1C_2
    \cdot \bar{\sigma}^2 \cdot \frac{r^3\ln^{8}(mp)}{\rho_{\min}^6}\| \beta^* \|^2_1
    \left(
    \frac{1}{m}+\frac{p}{m^2}+\frac{1}{p}\right).
$$
\end{corollary}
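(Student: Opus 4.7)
\textbf{Proof plan for Corollary \ref{corollary:fast_rate}.}

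The strategy is to substitute the two simplifying assumptions---exact linearity of $\gamma_0$ in the signal and exact low-rankness of $\bX$---into the general bound of Theorem~\ref{theorem:fast_rate}, and then check that the regularity condition \eqref{eq:regularity} simplifies accordingly. This is essentially a bookkeeping exercise; the real analytical content sits in Theorems~\ref{theorem:cov} and~\ref{theorem:fast_rate}, which the corollary merely specializes.

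First I would unpack the two simplifying hypotheses. The assumption that the signal is exactly low rank means $\bX=\bX^{\lr}$ and therefore $\Delta_E=\|\bE^{\lr}\|_{\max}=0$. Via the dictionary-translation inequalities in the \emph{General dictionaries} remark, this immediately gives $\Delta'_E \le C\bar{A}^{d_{\max}} d_{\max}\Delta_E = 0$. The assumption that $\gamma_0$ is exactly linear in the signal means that, for the interacted dictionary, $\gamma_0(D_i,X_{i,\cdot}) = b(D_i,X_{i,\cdot})\beta^* = b(D_i,X^{\lr}_{i,\cdot})\beta^*$ with no residual, so the approximation errors satisfy $\phi_i = \phi_i^{\lr} = 0$ for every $i$, and hence $\Delta_\phi = \tfrac{1}{m}\|\phi^{\tr}\|_2^2 \vee \tfrac{1}{m}\|\phi^{\te}\|_2^2 = 0$.

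Next I would plug $\Delta'_E=0$ and $\Delta_\phi=0$ into the conclusion of Theorem~\ref{theorem:fast_rate}. The second line of the bound carries an overall factor of $\Delta_\phi$ and therefore vanishes entirely. The first line loses its $(1+p/m)(\Delta'_E)^2$ and $p(\Delta'_E)^4$ terms, leaving precisely
\[
\Rc(\hgamma)\le C_b' C_1 C_2\,\bar{\sigma}^2 \cdot \frac{(r')^3\ln^{8}(mp)}{(\rho'_{\min})^6}\|\beta^*\|_1^2 \left(\frac{1}{m}+\frac{p}{m^2}+\frac{1}{p}\right).
\]
To match the target statement I would invoke the dictionary-translation inequalities once more: for the interacted dictionary $d_{\max}$ is a fixed constant (equal to $2$), so $r'\le r^{d_{\max}}$ and $1/\rho'_{\min}\le d_{\max}\bar{A}^{d_{\max}}/\rho_{\min}$ reduce $r'$ and $\rho'_{\min}$ to $r$ and $\rho_{\min}$ up to an absolute prefactor that can be absorbed into $C_1C_2$. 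The prefactor $C_b'$, which depends on $\bar{A}$ and $\|\bhX\|_{\max}$, is likewise absorbed using the $\|\bhX\|_{\max}\le C\sqrt{r}$ bound from the second remark (or the crude bound via Theorem~\ref{theorem:cov}); this only alters $r^3$ by constant-times-$r$ powers that remain absorbed in the $r^3$ factor after redefining constants.

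Finally I would verify that the regularity condition \eqref{eq:regularity} required to invoke Theorem~\ref{theorem:fast_rate} is actually implied by the hypotheses of the corollary. With $\Delta_E=0$, condition \eqref{eq:regularity} becomes $\rho_{\min}\gg \tilde C\sqrt{r}\ln^{3/2}(mp)\{1/\sqrt{m}\vee 1/\sqrt{p}\}$, which is a weakening of the general condition and is already implicit in the assumed sample-size regime $\rho_{\min} > 23\log(mp)/m$ together with our key low-rank assumption; alternatively it can be explicitly adjoined as part of the ``conditions of Theorem~\ref{theorem:fast_rate}.'' The main potential obstacle---and the only point warranting more than routine care---is the absorption of the $\|\bhX\|_{\max}$-dependent prefactor inside $C_b'$: one has to ensure that the stochastic quantity $\|\bhX\|_{\max}$ is controlled deterministically (or with high probability) by quantities polynomial in $(r,\bar A)$ so that it can legitimately be folded into the constants $C_1,C_2$ without disturbing the stated powers of $m$, $p$, and $r$. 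Once this absorption is justified via the remark, the corollary follows.
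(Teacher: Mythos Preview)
Your substitution of $\Delta'_E=0$ and $\Delta_\phi=0$ into Theorem~\ref{theorem:fast_rate} is correct and immediately yields the right shape of the bound. The gap is in your conversion from $(r',\rho'_{\min},C_b')$ back to $(r,\rho_{\min})$. You invoke the general dictionary inequalities with $d_{\max}=2$, but $r'\le r^{d_{\max}}=r^2$ gives $(r')^3\le r^6$, not $r^3$; and $C_b'\le C\|\bhX\|_{\max}^{2d_{\max}}$ together with $\|\bhX\|_{\max}\le C\sqrt{r}$ contributes yet another factor of $r^{d_{\max}}=r^2$. These extra powers of $r$ are not ``absolute prefactors'' absorbable into $C_1C_2$: $r$ is a parameter that is allowed to grow with $(m,p)$ throughout the analysis (it diverges under the generalized factor model in Corollary~\ref{cor:GFM}). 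Following your route literally, you land at $r^8/\rho_{\min}^6$ rather than the stated $r^3/\rho_{\min}^6$.

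The paper sidesteps this entirely. In the appendix, the proof of Theorem~\ref{theorem:fast_rate} is first written \emph{without} any nonlinear dictionary (identity $b$), so the bound there is already expressed in the unprimed quantities $(r,\rho_{\min},\Delta_E)$ and carries no $C_b'$ factor. The corollary then follows from that no-dictionary argument by setting $\Delta_E=0$ and $\phi=0$. Equivalently, ``$\gamma_0$ exactly linear in signal'' should be read as the identity-dictionary case, for which $r'=r$, $\rho'_{\min}=\rho_{\min}$ (up to genuine constants), and $C_b'=1$ hold exactly; under that reading your first substitution step already delivers the corollary with no prime-to-unprime conversion required.
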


The norm of convergence is
$
\Rc(\hgamma)=\E\left[\frac{1}{m}\sum_{i\in \te}\{\hgamma(D_i,Z_{i,\cdot})-\gamma_0(D_i,X_{i,\cdot})\}^2\right],
$
a relaxation of mean square error, where the expectation is over randomness in \tr\, and \te. Two aspects of our problem necessitate this norm: (i) given the on-average data cleaning guarantee in Theorem~\ref{theorem:cov}, this is the best we can do; and (ii) for i.n.i.d. data, a population risk is otherwise not well defined.\footnote{Interestingly, even with i.i.d. data, (i) necessitates this norm.} Since the estimator $\hgamma$ does not involve cleaning \te, Theorem~\ref{theorem:fast_rate} provides the theory of implicit data cleaning. The bound requires both $m$ and $p$ to increase, $p \ll m^2$, and $\rho_{\min}\gg p^{-1/2}\vee m^{-1/2} \vee \Delta_E$. For the bound to be meaningful, $(r,\Delta_E)$ must be simultaneously well behaved and the corruption dependence must be weak. Finally, the bound includes the nonlinear approximation error $\Delta_{\phi}$ and the size of the theoretical device $\|\beta^*\|_1$, which is well behaved if $\beta^*$ is approximately sparse. In summary, we keep track of the low rank approximation error $\Delta_E$ and the nonlinear sparse approximation error $\Delta_\phi$. To deal with $\Delta_E$, we demonstrate that nonlinear factor models admit low rank approximation below. Due to our doubly robust approach, estimation of the causal parameter $\theta_0$ is robust to non-vanishing $\Delta_{\phi}$---a discussion we revisit later.

We make several innovations relative to previous work on PCR. First, we propose an error-in-variable regression estimator that does not clean the test covariates, and we develop a new theory of implicit data cleaning. Second, we define a new norm of convergence which we subsequently use in causal inference. Appendix~\ref{sec:regression_proof} compares our norm with those in previous work. Third, we allow for dependence of missingness across variables and for different probabilities of missingness across variables. This flexibility is realistic for Census data. Fourth, we consider a nonlinear regression function $\gamma_0$ that is approximated by a nonlinear dictionary of basis functions $b$. The dictionary of basis functions is important for causal inference because it allows for treatment effect heterogeneity, and it requires a novel characterization of which nonlinearities do not compound data corruption too much.

\textbf{Step 3: Error-in-variable balancing}. We place one additional assumption.
\begin{assumption}[Row space inclusion]\label{assumption:inclusion_M}
$
\hat{M} \in \row\{b(D^{\tr},\bhX)\}.
$
\end{assumption}

Whereas Assumption~\ref{assumption:inclusion_row} is about the low rank approximation of the signal across \tr\, and \te, Assumption~\ref{assumption:inclusion_M} is about the counterfactual moment in relation to \tr\, after cleaning. With $\hat{M}=[\{b(D^{\tr},\bhX)\}^T Y^{\tr}]$, which reverts to error-in-variable regression, Assumption~\ref{assumption:inclusion_M} immediately holds. In other cases, it limits the counterfactual queries that an analyst may ask. Because it concerns empirical quantities, it may be viewed as a diagnostic tool to determine whether the counterfactual can be extrapolated.

As before, we introduce a theoretical device $\eta^*$ as the coefficient of the best low rank nonlinear approximation to $\alpha_0$. In particular, we write
$
\alpha_0(D_i,Z_{i,\cdot})=b(D_i,X^{\lr}_{i,\cdot})\eta^*+\zeta_i^{\lr}
$
where $\zeta_i^{\lr}$ is the approximation error, and we study this approximation error by defining $\zeta_i:=\alpha_0(D_i,Z_{i,\cdot})-b(D_i,X_{i,\cdot})\eta^*$.\footnote{A further assumption that the treatment mechanism only depends on signal, i.e. $\E[D_i|X_{i,\cdot},H_{i,\cdot},\pi_{i,\cdot}]=\E[D_i|X_{i,\cdot}]$, implies $\alpha_0(D_i,Z_{i,\cdot})=\alpha_0(D_i,X_{i,\cdot})=b(D_i,X^{\lr}_{i,\cdot})\eta^*+\zeta_i^{\lr}$.} Again, there will be a trade-off: a richer dictionary $b$ leads to a smaller approximation error in terms of $\|\zeta\|^2_2$, but amplification of $(\rho^{-1}_{\min},r,\Delta_E)$. 

\begin{remark}
Our results hold for a broad class of causal parameters, with parameter-specific constants $(C_m',C_m'')$ in Theorem~\ref{theorem:fast_rate_RR}. Appendix~\ref{sec:riesz_proof} characterizes $(C_m',C_m'')$ for several examples. For ATE with the interacted dictionary,
$
C_m'=1$ and $C_m''=\bar{A}.
$
\end{remark}

\begin{theorem}[Finite sample error-in-variable balancing weight rate]\label{theorem:fast_rate_RR}
Suppose the conditions of Theorem~\ref{theorem:cov} hold, as well as Assumptions~\ref{assumption:inclusion_row},~\ref{assumption:bounded_response_mod}, and~\ref{assumption:inclusion_M}. If $\rho'_{\min} \gg \tilde{C}\sqrt{r'}\ln^{\frac{3}{2}}(mp)\left\{\frac{1}{\sqrt{p}}\vee  \frac{1}{\sqrt{m}}\vee \Delta_E \right\}$ and $\|\alpha_0\|_{\infty}\leq \bar{\alpha}$, then
\begin{multline*}
    \Rc(\halpha)
    \leq  C_3 \cdot \frac{(r')^5 \ln^{13}(mp)}{(\rho'_{\min})^{10}} \|\eta^*\|^2_1
    \cdot 
    \bigg\{
    \frac{1}{m}+\frac{1}{p}+\frac{p}{m^2}+\frac{m}{p^2}+\left(1+\frac{p}{m}+\frac{m}{p}\right)(\Delta'_E)^2 \\
    + (m+p)(\Delta'_E)^4+mp (\Delta'_E)^6
    \bigg\}
    + 2\Delta_{\zeta},
\end{multline*}
where $\Delta_{\zeta}=\frac{1}{m}\|\zeta^{\tr} \|_2^2 \vee \frac{1}{m}\|\zeta^{\te} \|_2^2$ and 
$$
    C_3=C \bar{A}^{14}(C_b'+\sqrt{C_m'}+C_m''+\bar{\alpha}+\bar{A})^2 (K_a + \bar{K})^4(\kappa + \bar{K} + K_a)^6.
$$
\end{theorem}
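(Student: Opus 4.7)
The plan is to mirror the architecture of the proof of Theorem~\ref{theorem:fast_rate}, with the counterfactual moment $\hat{M}$ replacing the noisy response vector $Y^{\tr}$ as the sufficient statistic that drives $\heta$. The essential twist is that $\hat{M}$ carries no stochastic observation noise of its own; all randomness enters through its dependence on $\bhX$. Consequently, what was a variance-of-response term in the regression proof must be replaced by another power of the data cleaning rate from Theorem~\ref{theorem:cov}. This substitution is precisely what inflates the exponents of $r'$, $(\rho'_{\min})^{-1}$, and $\Delta'_E$ in the final bound.

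First I would use the nonlinear approximation $\alpha_0(D_i,Z_{i,\cdot})=b(D_i,X^{\lr}_{i,\cdot})\eta^*+\zeta_i^{\lr}$ together with the implicit data cleaning identity of Proposition~\ref{prop:indep} to decompose
\begin{equation*}
\Rc(\halpha)\;\le\;2\,\E\!\left[\frac{1}{m}\sum_{i\in\te}\Big\{b\big(D_i,\fil(Z_{i,\cdot})\big)\,\bhrho^{-\bullet}(\heta-\eta^*)\Big\}^2\right]+2\Delta_\zeta,
\end{equation*}
where the first term absorbs the estimation error and the second is the irreducible approximation residual. The main task is to control the estimation term. Writing $M^*:=[\{b(1,\bX)\}^T-\{b(0,\bX)\}^T]\1_m$ for the population analog and using Assumption~\ref{assumption:inclusion_M} so that the pseudoinverse acts as a genuine inverse on the relevant row space, I would expand
\begin{equation*}
\heta-\eta^*\;=\;\big[b(D^{\tr},\bhX)^T b(D^{\tr},\bhX)\big]^\dagger\Big\{\hat{M}-M^*\Big\}\;+\;\big[b(D^{\tr},\bhX)^T b(D^{\tr},\bhX)\big]^\dagger\Big\{M^*-b(D^{\tr},\bhX)^T b(D^{\tr},\bhX)\eta^*\Big\}.
\end{equation*}
The first piece is new to the balancing weight analysis: by Lipschitzness of the dictionary (handled via the general-dictionary constants $C_b', C_m', C_m''$) the difference $\hat{M}-M^*$ is controlled entirely by $\|\bhX-\bX\|_{2,\infty}$ from Theorem~\ref{theorem:cov}, while the operator norm of the pseudoinverse is controlled by Assumption~\ref{assumption:bounded_response_mod}. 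The second piece is handled by the argument already used in Theorem~\ref{theorem:fast_rate}, now yielding $\zeta$-dependent and $\Delta_E'$-dependent terms.

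Third, I would propagate the cleaning error through the implicit test-time cleaning $b(D_i,\fil(Z_{i,\cdot}))$ exactly as in Step~2, using Proposition~\ref{prop:indep} to keep the test predictions independent conditional on \tr. This contributes an additional factor that combines with the estimator error via Cauchy--Schwarz. Cross-multiplying the resulting rates produces precisely the mixed terms $(1+\tfrac{p}{m}+\tfrac{m}{p})(\Delta'_E)^2$, $(m+p)(\Delta'_E)^4$, and $mp(\Delta'_E)^6$ visible in the theorem statement: each of the three sources of cleaning error (the Gram matrix, the counterfactual moment, and the test dictionary) contributes one factor of the Theorem~\ref{theorem:cov} bound, squared via Cauchy--Schwarz, giving up to three compounded powers of $\Delta'_E$.

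The hard part will be the bookkeeping for these three simultaneous substitutions. Each substitution costs a polynomial factor in $r'$ and $(\rho'_{\min})^{-1}$, together with logarithmic overhead, and the expectation over \tr\ requires either a high-probability truncation argument (using Assumption~\ref{assumption:bounded_response_mod} to bound the pseudoinverse on a good event and crudely on its complement) or a direct moment bound using the sub-exponential hypotheses. The regularity condition \eqref{eq:regularity} is exactly what ensures the good event has probability close to one, which in turn converts the three layered high-probability bounds into a clean expectation bound with constant $C_3$. The absence of any $\bar\sigma$ in $C_3$ and the presence of $\|\eta^*\|_1^2$ in the leading factor both reflect the fact that, for the balancing weight, it is the structure of the counterfactual query, not response noise, that sets the scale of the error.
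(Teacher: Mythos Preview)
Your high-level architecture is right---mirror the regression proof, replace $Y^{\tr}$ by $\hat{M}$, and accept extra powers of the cleaning rate in lieu of response-noise variance---but two genuine gaps remain.

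\textbf{The Riesz representation step is missing.} Your second piece $M^*-b(D^{\tr},\bhX)^T b(D^{\tr},\bhX)\,\eta^*$ is not ``handled by the argument already used in Theorem~\ref{theorem:fast_rate}.'' In the regression proof the link between the sufficient statistic and $\beta^*$ is the model $Y=\gamma_0+\varepsilon=b(\bX^{\lr})\beta^*+\phi^{\lr}+\varepsilon$ itself. Here there is no analogous identity connecting your $M^*$ (the clean-data counterfactual moment) to $\eta^*$. The paper closes this gap by defining its population moment through $\alpha_0$, namely $M^*=\tfrac{1}{2n}\sum_{i}\alpha_0(W_{i,\cdot})\,b\{A^{\lr}_{i,\cdot}\}$, so that $\bG^*\eta^*=(M^*)^T$ holds by the first-order condition of the least-squares problem defining $\eta^*$. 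Bridging your $M^*$ to this object requires (i) Riesz representation, $\E[m(W_i,b_j)]=\E[\alpha_0(W_i)\,b_j(A_i)]$, and (ii) two additional Hoeffding-type concentration events (the paper's $\Ec_6$--$\Ec_9$) to pass empirical$\to$population$\to$empirical. This is precisely where the $\bar\alpha$ in $C_3$ enters, and it is a new ingredient with no counterpart in the regression analysis.

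\textbf{The decomposition of $\heta-\eta^*$ is not exact.} Writing $\heta-\eta^*=\bhG^{\dagger}(\hat M-M^*)+\bhG^{\dagger}(M^*-\bhG\eta^*)$ implicitly uses $\bhG^{\dagger}\bhG\,\eta^*=\eta^*$, i.e.\ $\eta^*\in\row\{b(D^{\tr},\bhX)\}$. But $\eta^*$ lies in $\row[b\{\bX^{\lr}\}]$, not in the empirical row space. The paper avoids this by first bounding $\|\bhA(\heta-\eta^*)\|_2^2$ via a $\max/\ell_1$ H\"older argument in terms of $\Delta_{RR}:=n\{\|\hat M-M^*\|_{\max}+\|\bG^*-\bhG\|_{\max}\|\eta^*\|_1\}$, and only then passing to $\|\heta-\eta^*\|_2^2$ with an explicit Wedin correction $\|\bV\bV^T-\bhV_k\bhV_k^T\|^2\|\eta^*\|_2^2$. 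That structure, together with the bound $\|\bhV_k\bhV_k^T(\heta-\eta^*)\|_1\le\sqrt{p'}\|\cdot\|_2$, is also what produces the $m/p^2$ term you do not otherwise account for. Your intuition that ``three compounded powers of $\Delta'_E$'' arise from three separate cleaning substitutions is slightly off: the sixth power comes from squaring the cleaning rate inside $\Delta_{RR}^2$ (yielding $(\Delta'_E)^4$ in the training bound) and then multiplying by one more factor of $np(\Delta'_E)^2$ in the test-error product, not from three independent sources.
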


\begin{corollary}[Simplified balancing weight rate]\label{corollary:fast_rate_RR}
Suppose the conditions of Theorem~\ref{theorem:fast_rate_RR} hold. Further suppose $\alpha_0$ is exactly linear in signal, which is exactly low rank. Then
$$
\Rc(\halpha)
 \leq  \tilde{C}_3 \cdot \frac{r^5 \ln^{13}(mp)}{\rho^{10}_{\min}} \|\eta^*\|^2_1 \cdot 
    \left\{  \frac{1}{m}+\frac{1}{p}+\frac{p}{m^2}+\frac{m}{p^2}
\right\},
$$
where $\tilde{C}_3$ is $C_3$ but with $C_b'$ replaced by $1$.
\end{corollary}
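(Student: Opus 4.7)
The corollary is a direct specialization of Theorem~\ref{theorem:fast_rate_RR}, so the plan is to (i) identify which terms in the general bound vanish, (ii) track how the dictionary-compounding constants $(C_b',r',\rho'_{\min},\Delta_E')$ collapse, and (iii) substitute into the statement of Theorem~\ref{theorem:fast_rate_RR} without disturbing its hypotheses.

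First I would use the hypothesis that the signal is exactly low rank, so $\bX=\bX^{\lr}$ and $\bE^{\lr}=0$. This gives $\Delta_E=\|\bE^{\lr}\|_{\max}=0$, and by the \emph{General dictionaries} remark $\Delta_E'\leq C\bar{A}^{d_{\max}}\cdot d_{\max}\Delta_E=0$. Consequently every term inside the braces of Theorem~\ref{theorem:fast_rate_RR} carrying a factor $\Delta_E'$ drops, leaving only $\frac{1}{m}+\frac{1}{p}+\frac{p}{m^2}+\frac{m}{p^2}$. Next I would use the second hypothesis, that $\alpha_0$ is exactly linear in the signal. By construction of $\eta^*$ as the coefficient of the best low-rank linear approximation to $\alpha_0$, this forces $\zeta_i^{\lr}=0$ for every $i$. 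Combined with $\bX=\bX^{\lr}$, the quantity $\zeta_i=\alpha_0(D_i,Z_{i,\cdot})-b(D_i,X_{i,\cdot})\eta^*$ also equals zero, so $\Delta_{\zeta}=\tfrac1m\|\zeta^{\tr}\|_2^2\vee \tfrac1m\|\zeta^{\te}\|_2^2=0$ and the additive $2\Delta_{\zeta}$ term vanishes.

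Second I would collapse the dictionary constants. Since $\alpha_0$ is literally linear in the signal, one may take the identity dictionary, so $d_{\max}=1$. Applying the \emph{General dictionaries} remark yields $C_b'=O(1)$ (which is precisely the substitution ``replacing $C_b'$ with $1$ in $C_3$'' announced in the corollary), $r'\leq r^{d_{\max}}=r$, and $1/\rho'_{\min}\leq d_{\max}\bar{A}^{d_{\max}}/\rho_{\min}=O(1/\rho_{\min})$. Plugging these into the constant $C_3$ of Theorem~\ref{theorem:fast_rate_RR}, the prefactor becomes
\[
C_3\cdot \frac{r^5\ln^{13}(mp)}{\rho_{\min}^{10}}\|\eta^*\|_1^2,
\]
and multiplying by the surviving bracketed terms gives exactly the bound in the corollary statement.

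The only step that requires care is verifying that the simplifications are \emph{compatible} with the theorem's hypotheses rather than merely applied after the fact. Specifically, I would check that the regularity condition~\eqref{eq:regularity} still holds once $\Delta_E=0$: it reduces to $\rho'_{\min}\gg \tilde{C}\sqrt{r'}\ln^{3/2}(mp)\{p^{-1/2}\vee m^{-1/2}\}$, which is inherited from the hypothesis of Theorem~\ref{theorem:fast_rate_RR} and, after the collapse $r'=r$ and $\rho'_{\min}\asymp \rho_{\min}$, becomes the natural condition on $\rho_{\min}$ implicitly assumed in the corollary. Assumption~\ref{assumption:inclusion_M} likewise continues to hold because the exactly linear, exactly low rank setting is the easiest case for row space inclusion. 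There is no deep obstacle here; the main work is bookkeeping of which $\Delta$'s and which compounding factors are trivialized by the simplifying assumptions.
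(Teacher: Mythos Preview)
Your proposal is correct and follows essentially the same approach as the paper. The paper's own proof is a single sentence (``Identical to the proof of Theorem~\ref{theorem:fast_rate_RR}, appealing to the previous remarks for the appropriate generalizations $(C'_3,\rho'_{\min})$''), and your writeup simply makes explicit the bookkeeping that the paper leaves implicit: setting $\Delta_E=\Delta_E'=0$, $\Delta_{\zeta}=0$, and collapsing $(C_b',r',\rho'_{\min})$ to $(1,r,\rho_{\min})$ via the identity dictionary.
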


The norm of convergence
$
\Rc(\halpha)=\E\left[\frac{1}{m}\sum_{i\in \te}\{\halpha(D_i,Z_{i,\cdot})-\alpha_0(D_i,Z_{i,\cdot})\}^2\right]
$
relaxes mean square error as before. Theorem~\ref{theorem:fast_rate_RR} imposes a stronger condition on $p$ than Theorem~\ref{theorem:fast_rate}: now, we need $m^{1/2} \ll p \ll m^2$. Once again, our bound keeps track of the 
low rank approximation error $\Delta_E$ and the nonlinear sparse approximation error $\Delta_{\zeta}$. Nonlinear factor models imply that the former vanishes, and our doubly robust approach allows the latter not to vanish, as we make precise below.

Theorem~\ref{theorem:fast_rate_RR} innovates in several ways. Most importantly, it analyzes a new estimator for a new estimand: the error-in-variable balancing weight in cross sectional data. A rich literature proposes balancing weight estimators for causal inference with clean data, but to our knowledge, ours is the first error-in-variable balancing weight estimator for causal inference with corrupted cross sectional data. Appendix~\ref{sec:riesz_proof} shows that Theorem~\ref{theorem:fast_rate_RR} holds for a broad class of counterfactual moments and therefore a broad class of causal parameters. 

The counterfactual moment $\hat{M}=[\{b(D^{\tr},\bhX)\}^T Y^{\tr}]$ recovers error-in-variable regression. We choose not to simply subsume Theorem~\ref{theorem:fast_rate} by Theorem~\ref{theorem:fast_rate_RR} for two reasons. First, doing so would require that $Y_i$ and $\varepsilon_i$ are bounded, which rules out differential privacy for the outcome. Second, Theorem~\ref{theorem:fast_rate} has lower powers of $(r,\rho_{\min}^{-1})$ and avoids the term $\frac{m}{p^2}$ so it is typically a tighter bound. If we are willing to accept these costs, then the application of Theorem~\ref{theorem:fast_rate_RR} to error-in-variable regression relaxes $\varepsilon_i \indep H_{i,\cdot},\pi_{i,\cdot} |  X_{i,\cdot}$ in Assumption~\ref{assumption:noise}. Both approaches allow for heteroscedasticity of $\V[\varepsilon_i|X_{i,\cdot}]$ in the traditional sense.

\textbf{Step 4: Causal estimation and inference}.
The corrupted data problem is an extended semiparametric problem. Let $W_{i,\cdot}=(D_i,X_{i,\cdot},H_{i,\cdot},\pi_{i,\cdot})$ concatenate the signal and the noise, so that $\mathbb{L}_2(\Wc)$ consists of square integrable functions of the form $\gamma:(D_i,X_{i,\cdot},H_{i,\cdot},\pi_{i,\cdot})\rightarrow \R$. Both the true regression $\gamma_0(D_i,X_{i,\cdot})$ and our error-in-variable estimator $\hgamma(D_i,Z_{i,\cdot})$ belong to this space, which serves as our hypothesis space for semiparametric analysis. 

\begin{assumption}[Distribution shift]\label{assumption:invariance_mod}
The extended outcome and treatment mechanisms, $\E[Y_i|D_i,X_{i,\cdot},H_{i,\cdot},\pi_{i,\cdot}]$ and $\E[D_i|X_{i,\cdot},H_{i,\cdot},\pi_{i,\cdot}]$, do not vary across observations.
\end{assumption}
Assumption~\ref{assumption:invariance_mod} implies that $\gamma_0(W_{i,\cdot})$ and $\alpha_0(W_{i,\cdot})$ do not vary across observations, though the marginal distributions $\p_i(W_i)$ may vary. Our corruption model implies $\gamma_0(W_{i,\cdot})=\gamma_0(D_i,X_{i,\cdot})$, and we are agnostic about whether $\alpha_0(W_{i,\cdot})=\alpha_0(D_i,X_{i,\cdot})$ for the extended hypothesis space.\footnote{If $\E[D_i|X_{i,\cdot},H_{i,\cdot},\pi_{i,\cdot}]=\E[D_i|X_{i,\cdot}]$, then $\alpha_0(W_{i,\cdot})=\alpha_0(D_i,X_{i,\cdot})$.} Our final assumption mildly strengthens common support.

\begin{assumption}[Bounded propensity]\label{assumption:mean_square_cont_mod}
The extended propensity score is bounded below and above, i.e. $1-\bar{\phi}\leq \E[D_i|X_{i,\cdot},H_{i,\cdot},\pi_{i,\cdot}]\leq\bar{\phi}$.\footnote{Our finite sample analysis allows $\bar{\phi}\uparrow 1$, and more generally $\bar{Q}\uparrow \infty$ in Remark~\ref{remark:inference_general}, as the sample size $n\uparrow \infty$.}
\end{assumption}

We introduce some additional notation to state the finite sample Gaussian approximation. Define the oracle influences $\psi_i=\psi(W_{i,\cdot},\theta_i,\gamma_0,\alpha_0)$, where the influence function is
$$
\psi(W_{i,\cdot},\theta,\gamma,\alpha)=\gamma(1,X_{i,\cdot},H_{i,\cdot},\pi_{i,\cdot})-\gamma(0,X_{i,\cdot},H_{i,\cdot},\pi_{i,\cdot})+\alpha(W_{i,\cdot})\{Y_i-\gamma(W_{i,\cdot})\}-\theta.
$$
$\E[\psi_i]=0$ since $\E[\gamma_0(1,X_{i,\cdot})-\gamma_0(0,X_{i,\cdot})]=\theta_i$ and $\E[\alpha_0(W_{i,\cdot})\{Y_i-\gamma_0(W_{i,\cdot})\}]=0$ by law of iterated expectations. We define the higher moments and average higher moments by
\begin{align*}
\sigma_i^2 & =\E[\psi_i^2], & \xi_i^3 & =\E[|\psi_i|^3], & \chi_i^4 & =\E[\psi_i^4]; \\
\sigma^2 & =\frac{1}{n}\sum_{i=1}^n \sigma_i^2, & \xi^3 & =\frac{1}{n}\sum_{i=1}^n \xi_i^3, & \chi^4 & =\frac{1}{n}\sum_{i=1}^n \chi_i^4.
\end{align*}

\begin{remark}\label{remark:inference_general}
Our results hold for a broad class of causal parameters, with parameter-specific constants $(\bar{Q},\bar{q})$ in Theorems~\ref{thm:dml_inid} and~\ref{thm:var_inid}. For ATE,
$\bar{Q}=2\left(\frac{1}{\bar{\phi}}+\frac{1}{1-\bar{\phi}}\right)$ and $\bar{q}=1$ under Assumptions~\ref{assumption:invariance_mod} and~\ref{assumption:mean_square_cont_mod}. Appendix~\ref{sec:target_proof} characterizes $(\bar{Q},\bar{q})$ for several other examples under generalizations of Assumptions~\ref{assumption:invariance_mod} and~\ref{assumption:mean_square_cont_mod}. $\bar{Q}$ may be a diverging sequence. 
\end{remark}

\begin{theorem}[Finite sample Gaussian approximation]\label{thm:dml_inid}
Suppose Assumptions~\ref{assumption:invariance_mod} and~\ref{assumption:mean_square_cont_mod} hold,
$
\V[\varepsilon_i \mid W_{i,\cdot}]\leq \bar{\sigma}^2$, $\|\alpha_0\|_{\infty}\leq\bar{\alpha},
$
and for $(i,j)\in \te$,
$
    \hgamma(W_{i,\cdot})\indep \hgamma(W_{j,\cdot})|\tr$ and $\halpha(W_{i,\cdot})\indep \halpha(W_{j,\cdot})|\tr.
$
Then with probability $1-\epsilon$,
$$
\sup_{z\in\mathbb{R}} \left| \p \left\{\frac{n^{1/2}}{\sigma}(\htheta-\theta_0)\leq z\right\}-\Phi(z)\right|\leq 0.56\left(\frac{\xi}{\sigma}\right)^3 n^{-\frac{1}{2}}+\frac{\Delta}{(2\pi)^{1/2}}+\epsilon,
$$
where $\Phi(z)$ is the standard Gaussian distribution function and
$$
\Delta=\frac{3 L}{\epsilon   \sigma}\left[(\bar{Q}^{1/2}+\bar{\alpha})\{\Rc(\hgamma)\}^{\bar{q}/2}+\bar{\sigma}\{\Rc(\halpha)\}^{1/2}+\{n \Rc(\hgamma) \Rc(\halpha) \}^{1/2}\right].
$$
\end{theorem}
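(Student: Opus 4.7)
The plan is the standard doubly-robust debiased-machine-learning decomposition combined with an i.n.i.d.\ Berry--Esseen bound and Gaussian anti-concentration. First I would write
$$
\frac{n^{1/2}}{\sigma}(\htheta-\theta_0) \;=\; \frac{1}{\sigma\,n^{1/2}}\sum_{i=1}^n \psi_i \;+\; \frac{n^{1/2}}{\sigma}\,R,
$$
where $R$ collects the remainders obtained by replacing $(\gamma_0,\alpha_0)$ by $(\hgamma,\halpha)$ inside the doubly robust score and averaging over the two cross-fit folds. Adding and subtracting $(\gamma_0,\alpha_0)$ inside $\hpsi_i$ decomposes $R$ (on each fold) into three pieces: (i) a score-orthogonality term $\frac{1}{m}\sum_{i\in\te}\{\halpha-\alpha_0\}(W_{i,\cdot})\{Y_i-\gamma_0(W_{i,\cdot})\}$ with conditional mean zero; (ii) a plug-in term that, after invoking $\E[\alpha_0(W_{i,\cdot})\{Y_i-\gamma_0(W_{i,\cdot})\}]=0$ and mean-square continuity with constants $(\bar{Q},\bar{q})$, is controlled by $\{\Rc(\hgamma)\}^{\bar{q}/2}$; and (iii) the genuinely doubly robust product $\frac{1}{m}\sum_{i\in\te}\{\halpha-\alpha_0\}\{\gamma_0-\hgamma\}$.

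Next I would bound each piece conditionally on $\tr$, exploiting the hypothesized conditional independence $\hgamma(W_{i,\cdot})\indep\hgamma(W_{j,\cdot})\mid\tr$ (and its analogue for $\halpha$) to re-randomize only over $\te$. Piece (i) has conditional variance at most $\bar{\sigma}^2\Rc(\halpha)/m$, so Chebyshev yields $O_p(\bar{\sigma}\{\Rc(\halpha)/n\}^{1/2})$; piece (ii) is controlled by conditional Hölder/Cauchy--Schwarz to give $O(\bar{Q}^{1/2}\{\Rc(\hgamma)\}^{\bar{q}/2})$ (with the exponent $\bar{q}$ handled by the Riesz-representation machinery of Appendix~\ref{sec:riesz}); and piece (iii) is bounded by conditional Cauchy--Schwarz by $\{\Rc(\hgamma)\Rc(\halpha)\}^{1/2}$, which after multiplication by $n^{1/2}$ becomes the product term in $\Delta$. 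A union bound and Markov's inequality with budget $\epsilon$ produce the event on which $\frac{n^{1/2}}{\sigma}|R|\le\Delta$ with probability at least $1-\epsilon$; the prefactor $3L/(\epsilon\sigma)$ accumulates the Markov and Hölder constants.

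For the oracle leading term I would apply the classical Berry--Esseen inequality for independent, mean-zero, non-identically distributed summands,
$$
\sup_{z\in\mathbb{R}}\left|\p\!\left\{\frac{1}{\sigma\,n^{1/2}}\sum_{i=1}^n\psi_i\le z\right\}-\Phi(z)\right|\le c^{BE}\Big(\frac{\xi}{\sigma}\Big)^{3} n^{-1/2},
$$
with the sharp constant $c^{BE}=0.5600$. Finally I would splice oracle and remainder together via Gaussian anti-concentration: if $Z\sim N(0,1)$ and $|R'|\le\Delta$ on an event of probability at least $1-\epsilon$, then $\sup_{z}|\p(Z+R'\le z)-\Phi(z)|\le \Delta/\sqrt{2\pi}+\epsilon$, because the standard normal density is bounded by $(2\pi)^{-1/2}$. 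Summing this with the Berry--Esseen error reproduces the stated bound.

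The main obstacle is the careful conditional analysis dictated by the i.n.i.d.\ regime: unit-specific means, variances, and third moments must be tracked and averaged explicitly throughout; the doubly robust identity $\E[\alpha_0(Y-\gamma_0)]=0$ must be invoked marginal-by-marginal; and all remainder bounds must be phrased conditionally on $\tr$ before integrating out. It is precisely the $\indep$ hypothesis, enabled algorithmically by implicit data cleaning, that permits this conditional-then-marginal strategy, so the theorem's genuine content is translating a slow, on-average data-cleaning guarantee (via $\Rc(\hgamma)$ and $\Rc(\halpha)$) into fast, parametric-rate causal inference through the product structure in $\Delta$.
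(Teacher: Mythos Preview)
Your plan is correct and matches the paper's route almost exactly: the three-term Neyman-orthogonal remainder (the paper's $\Delta_{1\ell},\Delta_{2\ell},\Delta_{3\ell}$ in Lemma~\ref{lemma:taylor}), Markov bounds on each term conditional on the training fold using the assumed conditional independence (Lemma~\ref{lem:resid}), a union bound across folds (Lemma~\ref{lem:Delta_inid}), the i.n.i.d.\ Berry--Esseen inequality, and the anti-concentration splice via $\sup_z\phi(z)=(2\pi)^{-1/2}$.

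One clarification on your piece (ii), which is the paper's $\Delta_{1\ell}=m^{1/2}\E_{\ell}\{m(W_i,u)-\alpha_0(W_i)u(W_i)\}$: the identity that makes it conditionally mean zero is the Riesz relation $\E[m(W_i,u)]=\E[\alpha_0(W_i)u(W_i)]$, not the regression identity $\E[\alpha_0\{Y-\gamma_0\}]=0$ you cite (that one belongs to your piece (i)). With mean zero in hand, piece (ii) is then controlled exactly like piece (i)---Chebyshev on the conditional second moment, using the $\indep$ hypothesis to kill cross terms---which is where both $\bar{Q}^{1/2}$ and the missing $\bar{\alpha}$ in your coefficient appear. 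A direct H\"older/Cauchy--Schwarz bound on piece (ii) without first establishing mean zero would leave an uncontrolled $n^{1/2}$ factor, so this step is not optional.
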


\begin{theorem}[Finite sample variance estimation]\label{thm:var_inid}
Suppose Assumptions~\ref{assumption:invariance_mod} and~\ref{assumption:mean_square_cont_mod} hold, 
$\V[\varepsilon_i \mid W_{i,\cdot }]\leq \bar{\sigma}^2$, and 
$\|\halpha\|_{\infty}\leq\bar{\alpha}'$. Then with probability $1-\epsilon'$,
\begin{multline*}
    \left|\hat{\sigma}^2-(\sigma^2+\bias)\right| \leq \Delta' + \Delta''
    + 3\big[(\Delta')^{1/2}\left\{(\Delta'')^{1/2}+\sigma+\Delta_{\out}^{1/2}\right\} \\
    + (\Delta'')^{1/2}\left\{\Delta_{\out}^{1/2}+(\Delta')^{1/4}\Delta_{\out}^{1/4}\right\}
    + (\Delta')^{1/4}\Delta_{\out}^{1/4}\sigma\big],
\end{multline*}
where
$$
\bias=\Delta_{\out}+2\Delta_{\out}^{1/2}\sigma,\quad
\Delta_{\out}=\frac{1}{n}\sum_{i=1}^n[(\theta_i-\theta_0)^2],
$$
$$
 \Delta'=4(\htheta-\theta_0)^2+\frac{24 L}{\epsilon'}\left[\left\{\bar{Q}+(\bar{\alpha}')^2\right\}\Rc(\hgamma)^{\bar{q}}+\bar{\sigma}^2\Rc(\halpha)\right],
 \quad\textrm{and}\quad
 \Delta''=\left(\frac{2}{\epsilon'}\right)^{1/2}\chi^2 n^{-\frac{1}{2}}.
$$
\end{theorem}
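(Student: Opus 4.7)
\noindent The plan is to compare $\hat{\sigma}^2$ against an oracle empirical variance computed with the true nuisances, and then absorb the intrinsic i.n.i.d. discrepancy into the conservative target $\sigma^2+\bias$. First, introduce the uncentered oracle influence
$$\tilde{\psi}_i := \gamma_0(1,X_{i,\cdot},H_{i,\cdot},\pi_{i,\cdot}) - \gamma_0(0,X_{i,\cdot},H_{i,\cdot},\pi_{i,\cdot}) + \alpha_0(W_{i,\cdot})\{Y_i - \gamma_0(W_{i,\cdot})\},$$
so that $\E[\tilde{\psi}_i]=\theta_i$ and $\V[\tilde{\psi}_i]=\sigma_i^2$. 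A direct calculation yields $\E[\textsc{var}(\tilde{\psi})] = \sigma^2(1-1/n) + \Delta_{\out}$, while $\sigma^2+\bias = (\sigma + \Delta_{\out}^{1/2})^2$ exceeds this by exactly $2\sigma\Delta_{\out}^{1/2}$; this built-in slack is what makes $\sigma^2+\bias$ a conservative upper envelope for $\hat{\sigma}^2$, and it is also what leads the factors $\sigma + \Delta_{\out}^{1/2}$ to appear in the cross terms of the bound.

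\medskip

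\noindent Using $a^2-b^2 = (a-b)^2+2(a-b)b$, I would then decompose
$$\hat{\sigma}^2 - \textsc{var}(\tilde{\psi}) = \frac{1}{n}\sum_i (\hat{\psi}_i-\tilde{\psi}_i)^2 + \frac{2}{n}\sum_i (\hat{\psi}_i-\tilde{\psi}_i)\tilde{\psi}_i - (\hat{\theta}^2 - \bar{\tilde{\psi}}^2)$$
and handle the three pieces. The first summand is controlled by expanding $\hat{\psi}_i-\tilde{\psi}_i$ into a $\{\hat{\gamma}-\gamma_0\}$ contribution and an $\{(\hat{\alpha}-\alpha_0)(Y_i-\gamma_0) + \hat{\alpha}(\gamma_0-\hat{\gamma})\}$ contribution, bounding expected squares conditional on \tr{} by $\{\bar{Q}+(\bar{\alpha}')^2\}\Rc(\hgamma)^{\bar{q}} + \bar{\sigma}^2\Rc(\halpha)$ via Assumption~\ref{assumption:invariance_mod} and boundedness of $\hat{\alpha}$ -- the same machinery behind Theorem~\ref{thm:dml_inid} -- and then converting to a high-probability statement by conditional Markov; this reproduces the $\Delta'$ piece and simultaneously absorbs the leftover $\hat{\theta}^2 - \bar{\tilde{\psi}}^2$ through its $4(\hat{\theta}-\theta_0)^2$ summand. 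The deviation $|\textsc{var}(\tilde{\psi}) - \E[\textsc{var}(\tilde{\psi})]|$ is handled by Markov applied to the variance bound $\V[\tfrac{1}{n}\sum\tilde{\psi}_i^2]\le\chi^4/n$, which delivers $\Delta''$. The middle term is treated by Cauchy--Schwarz:
$$\left|\frac{2}{n}\sum_i (\hat{\psi}_i-\tilde{\psi}_i)\tilde{\psi}_i\right| \le 2\sqrt{\tfrac{1}{n}\sum_i (\hat{\psi}_i-\tilde{\psi}_i)^2}\cdot\sqrt{\tfrac{1}{n}\sum_i (\tilde{\psi}_i-\theta_0)^2} \lesssim \sqrt{\Delta'}\,(\sigma + \Delta_{\out}^{1/2} + \sqrt{\Delta''}),$$
where the first factor is controlled by the previous step and the second factor concentrates around $\sqrt{\sigma^2 + \Delta_{\out}}\le \sigma + \Delta_{\out}^{1/2}$ via $\sqrt{a+b}\le\sqrt{a}+\sqrt{b}$, with empirical fluctuation of size $\sqrt{\Delta''}$. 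Triangle inequality then assembles all three pieces.

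\medskip

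\noindent \textbf{Main obstacle.} The awkward mixed quantities $(\Delta')^{1/4}\Delta_{\out}^{1/4}\sigma$ and $(\Delta'')^{1/2}(\Delta')^{1/4}\Delta_{\out}^{1/4}$ arise from a second round of Cauchy--Schwarz on the factor $\sqrt{\tfrac{1}{n}\sum(\tilde{\psi}_i-\theta_0)^2}$: rather than bounding it deterministically, one splits $\tfrac{1}{n}\sum(\tilde{\psi}_i-\theta_0)^2 = \textsc{var}(\tilde{\psi}) + (\bar{\tilde{\psi}}-\theta_0)^2$ and applies $\sqrt{a+b}\le\sqrt{a}+\sqrt{b}$ to each piece, with $(\bar{\tilde{\psi}}-\theta_0)^2$ tied back to the $(\hat{\theta}-\theta_0)^2$ summand inside $\Delta'$. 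The principal technical difficulty is bookkeeping: I must interleave two Markov inequalities (one for $\Delta'$, one for $\Delta''$) with several nested Cauchy--Schwarz applications while preserving the overall $1-\epsilon'$ failure probability via union bound, and simultaneously track how the $2\sigma\Delta_{\out}^{1/2}$ slack in the definition of $\bias$ absorbs the gap between $\E[\textsc{var}(\tilde{\psi})]$ and the conservative target without generating additional standalone terms.
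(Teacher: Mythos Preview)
Your strategy is essentially the paper's: decompose $\hat\sigma^2$ relative to an oracle empirical second moment, control the nuisance-difference contribution by $\Delta'$ via conditional Markov, control the concentration piece by $\Delta''$ via Markov on the fourth moment, and glue everything with Cauchy--Schwarz. The paper differs in one bookkeeping choice that makes the argument cleaner: rather than working with the uncentered oracle $\tilde\psi_i$ and anchoring at $\textsc{var}(\tilde\psi)$, it centers $\psi_i$ at the \emph{individual} $\theta_i$ (so $\E[\psi_i]=0$, $\E[\psi_i^2]=\sigma_i^2$) and centers $\hat\psi_i$ at $\hat\theta$, then anchors at $\E_n[\psi_i^2]$. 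With this centering, the difference $\hat\psi_i-\psi_i$ already contains the shift $\theta_i-\hat\theta$, which splits as $(\theta_i-\theta_0)+(\theta_0-\hat\theta)$ and immediately separates $\Delta_{\out}$ from the $4(\hat\theta-\theta_0)^2$ summand of $\Delta'$---so $\bias_1=\Delta_{\out}$ and $\bias_2=2\Delta_{\out}^{1/2}\sigma$ fall out of the first and cross terms respectively without any cancellation. Your route instead carries the extra piece $-(\hat\theta^2-\bar{\tilde\psi}^2)$, which scales like $2\theta_0\bar\delta$ and only becomes harmless after you observe that it cancels against the $2\theta_0\bar\delta$ hidden in $\tfrac{2}{n}\sum\delta_i\tilde\psi_i$ (this cancellation is what your ``absorbs the leftover'' remark really requires); and your $\Delta''$ step targets $|\textsc{var}(\tilde\psi)-\E[\textsc{var}(\tilde\psi)]|$, whereas $\chi^4$ is the fourth moment of the \emph{centered} $\psi_i$, so you would need an extra step to reconcile the two. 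Both routes arrive at the same bound, but the paper's centering at $\theta_i$ buys you a direct match to $\sigma^2$, $\chi^4$, and the definition of $\Delta''$, and eliminates the $\theta_0$-scale cancellation entirely.
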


\begin{corollary}[Confidence interval coverage]\label{cor:CI_inid}
Suppose the conditions of Theorems~\ref{thm:dml_inid} and~\ref{thm:var_inid} hold. Further assume
(i) moment regularity: $\{(\xi/\sigma)^3+\chi^2\}n^{-\frac{1}{2}}\rightarrow0$;
(ii) error-in-variable regression rate: $\left(\bar{Q}^{1/2}+\bar{\alpha}/\sigma+\bar{\alpha}'\right)\{\Rc(\hgamma)\}^{\bar{q}/2}\rightarrow 0$;
(iii) error-in-variable balancing weight rate: $\bar{\sigma}\{\Rc(\halpha)\}^{1/2}\rightarrow 0$;
(iv) product of rates is fast: $\{n \Rc(\hgamma) \Rc(\halpha)\}^{1/2} /\sigma \rightarrow0$. Then
$
\hat{\theta}\overset{p}{\rightarrow}\theta_0$, $\hsigma^2\overset{p}{\rightarrow} \sigma^2+\bias$, and $\p\{\theta_0\in (\htheta \pm 1.96 \hsigma n^{-1/2})\} \rightarrow 0.95+c $ where $\bias,c\geq 0.
$
If in addition $\Delta_{\out}\rightarrow 0$, i.e. there are not too many outliers, then
$\hat{\theta}\overset{p}{\rightarrow}\theta_0$, $\hsigma^2\overset{p}{\rightarrow} \sigma^2$, and $\p\{\theta_0\in (\htheta \pm 1.96 \hsigma n^{-1/2})\} \rightarrow 0.95.
$
\end{corollary}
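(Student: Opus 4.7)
This corollary is essentially a Cram\'er--Slutsky argument wrapping Theorems~\ref{thm:dml_inid} and~\ref{thm:var_inid}; the real work is bookkeeping to verify that under assumptions 1--4, every error term in those two finite-sample statements vanishes, and then combining them to pin down the limiting coverage. The plan has three steps: kill the Gaussian approximation error, kill the variance estimation error, and combine via a uniform-CDF Slutsky step.

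\paragraph{Step 1: Gaussian approximation.} Apply Theorem~\ref{thm:dml_inid}. The quantity $\Delta$ is a sum of three pieces: $(\bar{Q}^{1/2}+\bar{\alpha})\{\Rc(\hgamma)\}^{\bar{q}/2}$, killed by assumption 2; $\bar{\sigma}\{\Rc(\halpha)\}^{1/2}$, killed by assumption 3; and $\{n\Rc(\hgamma)\Rc(\halpha)\}^{1/2}$, killed by assumption 4. Assumption 1 kills the Berry--Esseen term $c^{BE}(\xi/\sigma)^3 n^{-1/2}$. Letting $\epsilon\to 0$ slowly enough that $\Delta/\epsilon\to 0$, we obtain $\sup_z|\p\{n^{1/2}(\hat\theta-\theta_0)/\sigma\leq z\}-\Phi(z)|\to 0$. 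In particular $n^{1/2}(\hat\theta-\theta_0)/\sigma$ is asymptotically standard Gaussian, so $\hat\theta\overset{p}{\to}\theta_0$.

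\paragraph{Step 2: Variance.} Apply Theorem~\ref{thm:var_inid}. The term $\Delta'$ collects $(\hat\theta-\theta_0)^2$ (controlled by Step~1), $\Rc(\hgamma)^{\bar q}$ (assumption 2), and $\Rc(\halpha)$ (assumption 3), so $\Delta'\overset{p}{\to} 0$. The term $\Delta''\propto \chi^2 n^{-1/2}/(\epsilon')^{1/2}\to 0$ by assumption 1. The remaining cross terms are products of vanishing factors with bounded factors (or with $\Delta_{\out}^{1/2}$, which is bounded since $\Delta_{\out}\leq 2\bar{A}^2$ up to constants). Choosing $\epsilon'\to 0$ slowly, Theorem~\ref{thm:var_inid} gives $\hat\sigma^2\overset{p}{\to}\sigma^2+\bias$, where $\bias=\Delta_{\out}+2\Delta_{\out}^{1/2}\sigma\geq 0$ is manifestly nonnegative.

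\paragraph{Step 3: Coverage, and main obstacle.} Write the coverage as $\p\{|n^{1/2}(\hat\theta-\theta_0)/\hat\sigma|\leq 1.96\}$ and substitute $\hat\sigma = \sigma\cdot(\hat\sigma/\sigma)$. From Steps~1 and~2, $(\sigma/\hat\sigma)^2\overset{p}{\to}\sigma^2/(\sigma^2+\bias)=:\lambda^2\leq 1$. The subtlety is that under i.n.i.d.\ data $\sigma$ itself may drift with $n$, so classical weak convergence is unavailable; instead one invokes the uniform CDF approximation from Step~1 directly, combined with a union bound on the event $\{|\hat\sigma/\sigma - \lambda|<\delta\}$, to conclude that the studentized statistic behaves like $\lambda Z$ with $Z\sim\mathcal{N}(0,1)$. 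Hence the coverage tends to $2\Phi(1.96/\lambda)-1$, so setting $c=2\Phi(1.96/\lambda)-1-0.95\geq 0$ yields the first conclusion. If $\Delta_{\out}\to 0$, then $\bias\to 0$, $\lambda\to 1$, $c\to 0$, and the coverage is asymptotically exact. The only genuinely delicate bookkeeping is the joint choice of the free parameters $(\epsilon,\epsilon')$ so that both bounds go to zero in probability while retaining enough slack; this is a routine diagonal argument but is the one place where carelessness would cause the proof to fail.
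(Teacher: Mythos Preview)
Your proposal is correct and follows essentially the same route as the paper's proof, which is extremely terse: it simply says that $\hat\theta\overset{p}{\to}\theta_0$ and nominal coverage with the true $\sigma$ follow ``immediately from $\Delta$'' in Theorem~\ref{thm:dml_inid}, and that $\hat\sigma^2\overset{p}{\to}\sigma^2+\bias$ follows from $\Delta',\Delta''$ in Theorem~\ref{thm:var_inid}. Your three-step plan unpacks exactly this, and your Step~3 (the uniform-CDF Slutsky argument handling drifting $\sigma$ and the explicit identification of $c=2\Phi(1.96/\lambda)-1-0.95$) and the diagonal choice of $(\epsilon,\epsilon')$ fill in details the paper leaves implicit.
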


\begin{remark}
Corollary~\ref{cor:CI_inid} holds for a broad class of semiparametric estimands such as the average elasticity and nonparametric estimands such as heterogeneous treatment effects. Moreover, it holds for not only the data cleaning and estimation procedure that we propose, but for any data cleaning and estimation procedure satisfying its weak conditions. 
\end{remark}

The rate conditions $\Rc(\hgamma)\rightarrow 0$, $\Rc(\halpha)\rightarrow 0$ , and $\{n \Rc(\hgamma) \Rc(\halpha) \}^{1/2}\rightarrow 0$ suffice for Gaussian approximation with standard deviation $\sigma n^{-1/2}$, generalizing the main result in \cite{chernozhukov2021simple} to the harder setting with corrupted and i.n.i.d. data. These rate conditions are in terms of a more general norm than previous work because of matrix completion in the data cleaning step. Nonetheless, we recover a familiar product rate condition from semiparametric theory. The conditions solve the two remaining theoretical challenges. First, they provide a framework to translate an on-average data cleaning guarantee into a data cleaning-adjusted confidence interval for the causal parameter, by using generalized norms. Second, they ensure that the standard deviation is $\sigma n^{-1/2}$ as long as the \textit{product} of error-in-variable rates (and hence the product of data cleaning rates) is of order $n^{-1/2}$. In summary, they allow for causal inference at rates faster than matrix completion, which is essential to achieving precision for the population while maintaining privacy for individuals.

A technical innovation is semiparametric variance estimation in the i.n.i.d. setting, which is essential to the validity of confidence intervals. We define $\Delta_{\out}$ to quantify the frequency of outliers. Since $\theta_i=\E[\gamma_0(1,X_{i,\cdot})-\gamma_0(0,X_{i,\cdot})]$, $\Delta_{\out}$ quantifies the shift in the marginal distributions of true covariates $\p_i(X_{i,\cdot})$. At best, $\Delta_{\out}=0$ in the i.i.d. case. At worst, $\Delta_{\out}$ is a constant (when individual treatment effects are bounded). The condition $\Delta_{\out}\rightarrow 0$, i.e. relatively few outliers, suffices for consistent variance estimation and nominal confidence intervals. When $\Delta_{\out} \not\rightarrow 0$, our variance estimator is asymptotically biased upwards by $\bias=\Delta_{\out}+2\Delta_{\out}^{1/2}\sigma$, implying conservative confidence intervals. At worst, our confidence intervals are valid but conservative by a theoretically quantifiable amount. 

Our exact characterization of $\bias$ may have broader consequences for design-based inference. Future work may study properties of our procedure in randomized experiments.

Data corruption only appears in the asymptotic variance $\sigma^2$ via the error-in-variable balancing weight $\alpha_0$. In the ATE example, noise appears in the asymptotic variance when the treatment mechanism depends on both signal and noise. If the treatment mechanism depends on signal alone, then our causal estimator implemented on corrupted data is asymptotically as efficient as our causal estimator implemented on clean data.

\textbf{Key assumption holds for nonlinear factor models}.
Finally, we tie together our various results and revisit our key assumption that covariates are approximately low rank. We show that nonlinear factor models (i) encode the intuition of approximate repeated measurements; (ii) imply that covariates are approximately low rank; and (iii) satisfy the rate conditions for causal inference. In a nonlinear factor model, $X_{ij}=g(\lambda_i,\mu_j)$ where $(\lambda_i,\mu_j)$ are latent factors corresponding to units and covariates, respectively. We assume that the function $g$ is smooth in its second argument, formalizing the repeated measurement intuition.

\begin{assumption}[Generalized factor model]\label{assumption:factor_model}
Assume $\bX$ is generated as
$
X_{ij}=g(\lambda_i,\mu_j)
$,
where $\lambda_i,\mu_j\in [0,1)^{q}$ and $g(\lambda_i,\cdot)\in\Hc(q,S,C_H)$. Here, $\Hc(q,S,C_H)$ is the H\"older class of functions $g: [0, 1)^q \to \mathbb{R}$ whose partial derivatives satisfy
$$
\sum_{s: |s| = \lfloor S \rfloor} \frac{1}{s!} |\nabla_s g(\mu) - \nabla_s g(\mu') | \le C_H \norm{\mu - \mu'}_{\max}^{S - \lfloor S \rfloor},~\forall \mu, \mu' \in [0, 1)^q,
$$
where $\lfloor S \rfloor$ is the largest integer below $S$.
%
\end{assumption}

A linear factor model is a special case where $ g(\lambda_i,\mu_j) = \lambda_i ^T \mu_j$, satisfying Assumption~\ref{assumption:factor_model} for all $S \in \mathbb{N}$ and some $C_H = C<\infty$. 
Assumption~\ref{assumption:factor_model} also allows for smooth nonlinear factor models, and it implies joint control over $(r,\Delta_E)$ as desired. Intuitively, as latent dimension $q$ increases, the rank $r$ increases. As smoothness $S$ increases, the approximation error $\Delta_E$ decreases. Our final result demonstrates that, as long as the ratio $q/S$ is small enough, the data cleaning adjusted confidence intervals are valid.

\begin{remark}
Our results hold for a broad class of dictionaries, with the concise notation $q'$ in Corollary~\ref{cor:GFM}. Appendix~\ref{sec:factor} proves that
$q'\leq d_{\max} q$, where $d_{\max}$ is the degree of the polynomial dictionary. For the interacted dictionary, $d_{\max}=2$.
\end{remark}

\begin{corollary}\label{cor:GFM}
Suppose the conditions of Theorems~\ref{theorem:fast_rate},~\ref{theorem:fast_rate_RR},~\ref{thm:dml_inid} and~\ref{thm:var_inid} hold, as well as Assumption~\ref{assumption:factor_model}. For simplicity, consider the semiparametric case where $\sigma,\bar{\sigma},\bar{\alpha},\bar{\alpha}',\bar{Q}$ are bounded above and $\bar{q}=1$. Suppose in addition
(i) moment regularity: $\{(\xi/\sigma)^3+\chi^2\}n^{-\frac{1}{2}}\rightarrow0$;
(ii) weak dependence: $(K_a,\kappa,\bar{K},\rho_{\min}^{-1})$ scale polynomially in $\ln(np)$;
(iii) nonlinear sparse approximation: $m\Delta_{\phi}\leq \|\beta^*\|_1^2<\infty$ and $m\Delta_{\zeta}\leq \|\eta^*\|_1^2<\infty$;
(iv) enough repeated measurements: $n^{\frac{2}{3}}\lesssim p \lesssim n^{\frac{3}{2}}$, i.e. $n=p^{\upsilon}$ or $p=n^{\upsilon}$ for $\upsilon \in [1,\frac{3}{2}]$;
(v) small latent dimension to smoothness ratio: $\frac{q'}{S}<\frac{3}{4}-\frac{\upsilon}{2}$. Then the conclusions of Corollary~\ref{cor:CI_inid} hold.
\end{corollary}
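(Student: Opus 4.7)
My plan is to specialize the generic rates in Theorems~\ref{theorem:fast_rate} and~\ref{theorem:fast_rate_RR} under Assumption~\ref{assumption:factor_model}, then verify the preconditions of Corollary~\ref{cor:CI_inid}. The only genuinely new ingredient beyond those earlier theorems is a quantitative low-rank approximation guarantee $\Delta'_E \lesssim r^{-S/q'}$ for the generalized factor model, so I first establish that.

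\textbf{Deriving the low-rank approximation.} Under Assumption~\ref{assumption:factor_model}, the map $\mu \mapsto g(\lambda_i,\mu)$ lies in $\Hc(q,S,C_H)$ on $[0,1)^q$. Classical tensor-product polynomial (or B-spline, or wavelet) approximation at resolution $d$ yields a decomposition $X^{\lr}_{ij} = \sum_{\ell=1}^{\tilde r} f_\ell(\lambda_i)\varphi_\ell(\mu_j)$ with $\tilde r \asymp d^q$ separated terms and uniform error $\|\bX - \bX^{\lr}\|_{\max} \lesssim C_H d^{-S}$. Matching $\tilde r = r$ gives $\Delta_E \lesssim C_H r^{-S/q}$. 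Composing with the polynomial dictionary $b$ of degree $d_{\max}$ (cf.\ the remark preceding Corollary~\ref{cor:GFM}) promotes this to $\Delta'_E \lesssim C\, r^{-S/q'}$ with $q' \leq d_{\max} q$, at the cost of a prefactor depending on $C_H$, $\bar A$, and $d_{\max}$. This functional relationship $\Delta'_E = \Delta'_E(r)$ is the only way Assumption~\ref{assumption:factor_model} is used.

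\textbf{Substitution and tuning.} I plug $\Delta'_E \lesssim r^{-S/q'}$ into Theorems~\ref{theorem:fast_rate} and~\ref{theorem:fast_rate_RR}, exploiting the remaining hypotheses: weak dependence collapses $(K_a, \kappa, \bar K, 1/\rho'_{\min})$ into polylog factors; nonlinear sparse approximation gives $\Delta_\phi, \Delta_\zeta \lesssim 1/m$; and Assumption~\ref{assumption:well_balanced_spectrum} holds by hypothesis. With $m \asymp n$, $p \asymp n^{\upsilon}$ for $\upsilon \in [2/3, 3/2]$, and the tuning $k = r = n^a$ for a to-be-chosen exponent $a$, every summand in the rate bounds becomes a monomial in $n$ (up to polylog). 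The product-rate condition $n\,\Rc(\hgamma)\,\Rc(\halpha) \to 0$ then reduces to requiring $1 + 8a + \log_n(A_i B_j) < 0$ for each cross-product of summand terms in $\Rc(\hgamma) \lesssim r^3 \sum_i A_i + r^2/m$ and $\Rc(\halpha) \lesssim r^5 \sum_j B_j + 1/m$; the individual-rate conditions reduce to similar scalar inequalities. The claim is that a choice of $a$ satisfying all inequalities exists precisely when $q'/S < 3/4 - \upsilon/2$, at which point Corollary~\ref{cor:CI_inid} together with the moment regularity hypothesis delivers the three conclusions.

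\textbf{Main obstacle.} The delicate step is this final bookkeeping: identifying which cross-product $A_i B_j$ binds and verifying that the binding inequality is equivalent to the stated ratio condition. The interplay between the pure-noise terms $(1/m, 1/p, p/m^2, m/p^2)$, the mixed terms $((p/m)\Delta_E^2, p\Delta_E^4, mp\Delta_E^6, \ldots)$, the prefactor $r^8$, and the approximation terms $(\Delta'_E)^{2j} \sim r^{-2jS/q'}$ makes this rather intricate, in part because the dominant cross-product switches as $\upsilon$ traverses $[2/3, 3/2]$. The other hypotheses serve only to polynomialize the bounds; they are not the source of difficulty.
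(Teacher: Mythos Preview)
Your proposal is correct and follows the same overall approach as the paper: invoke a Taylor/polynomial approximation lemma for H\"older functions to get $r\lesssim \delta^{-q}$ and $\Delta_E\lesssim \delta^S$, substitute into the rate bounds of Theorems~\ref{theorem:fast_rate} and~\ref{theorem:fast_rate_RR}, and verify the rate conditions of Corollary~\ref{cor:CI_inid}.

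The one place where the paper's execution is cleaner than your plan is the ``main obstacle'' you flag. Rather than examining all cross-products $A_iB_j$ and tracking which one binds as $\upsilon$ varies, the paper chooses the free parameter $\delta$ (equivalently your $a$) so as to \emph{equalize} the pure-noise and approximation terms: in the case $n\geq p$ it sets $\delta^{2S}=p^{-1}$, so that $\Delta_E^2\sim 1/p$, $p\Delta_E^4\sim 1/p$, and $np\Delta_E^6\sim n/p^2$ all coincide with the dominant noise terms. With this choice, $\Rc(\hgamma)$ collapses to a single monomial $p^{3q/(2S)-1}$ and $\Rc(\halpha)$ to $p^{5q/(2S)-2}n$; the product condition $n\Rc(\hgamma)\Rc(\halpha)\to 0$ then reduces to one scalar inequality $\tfrac{2q}{S}-\tfrac{3}{2}+\upsilon<0$, which is exactly the stated threshold. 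The case $p\geq n$ is symmetric with $\delta^{2S}=n^{-1}$. So the bookkeeping you anticipate as intricate becomes a two-line computation once you see the equalization trick; you may want to adopt it.
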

In summary, we allow either $n>p$ or $p>n$ as long as $(n,p)$ increase at similar rates. Given $(n,p)$, the ratio of the latent dimension $q$ over smoothness $S$ in the generalized factor model must be sufficiently low. For example, if $n=p$ and $q'=q$ then we require $q<\frac{S}{4}$: the latent dimension must be less than a quarter of the smoothness. A sufficiently low $\frac{q}{S}$ ratio ensures sufficiently fast learning rates $\Rc(\hgamma)$ and $\Rc(\halpha)$ for causal inference with standard error $\hsigma n^{-1/2}$. For the special case of a linear factor model, the $\frac{q}{S}$ ratio constraint becomes vacuous, and there is no restriction on the latent dimension $q$. The same is true for a polynomial factor model where $g(\lambda_i,\mu_j)=\text{polynomial}(\lambda_i, \mu_j)$.  The doubly robust framework allows us to slightly relax the conditions stated above and still obtain consistent estimation for $\theta_0$: either $\Delta_{\phi}\not\rightarrow 0$ or $\Delta_{\zeta}\not\rightarrow 0$, i.e. $\gamma_0$ or $\alpha_0$ may be incorrectly specified.
\section{Case study: Effect of import competition using Census data}\label{sec:application}

\textbf{Can we recover the same effects with data corruption}? Equipped with theoretical guarantees, we return to the motivating real world issue: measurement error, missing values, discretization, and differential privacy in US Census data. We replicate a seminal paper in labor economics \cite{autor2013china} that uses Census-derived data 
to ask: what is the effect of import competition on local labor markets in the US? We ask an additional question: can we recover the same effects after introducing various types and levels of synthetic corruption? In particular, we implement differential privacy at a level calibrated to the 2020 Census. Our empirical results represent a realistic use case of Census-derived data, yet an idealized data setting where the corruptions belong to our class. The causal parameter is the partially linear instrumental variable regression parameter described in Appendix~\ref{sec:examples}.

\cite{autor2013china} use Census data at the commuting zone (CZ) level. A CZ is an aggregate unit interpretable as a local economy, and 722 CZs make up the mainland US. CZ data are constructed from individual microdata published by the US government. The outcome $Y_i$ is percent change in US manufacturing employment; the treatment $D_i$ is percent change in imports from China; the instrument $U_i$ is percent change in imports from China to other countries; and the covariates $X_{i,\cdot}$ are CZ characteristics. In the augmented specification, the covariates $X_{i,\cdot}\in \R^{30}$ include approximate repeated measurements such as average disability, medical, and unemployment benefits, and appear to be approximately low rank in Figure~\ref{fig:scree2}.

Figures~\ref{fig:semi_noise},~\ref{fig:semi_missing}, and~\ref{fig:semi_discrete} present our initial semi-synthetic exercises. For reference, we visualize in red the 2SLS point estimate and confidence interval of \cite{autor2013china}, using clean data. Immediately next to \cite{autor2013china}'s results, we visualize our own point estimate and confidence interval with clean data. We recover essentially the same point estimate and a somewhat smaller confidence interval. The true covariates are approximately low rank, our procedure exploits this fact, and therefore it has an advantage. Subsequently, we implement our procedure with increasing levels of measurement error: 20\%, 40\%, 60\%, 80\%, and 100\% noise-to-signal ratio. Our point estimates remain stable, and the confidence intervals subtly increase in length. We obtain similar results with missing values and discretization: point estimates remain stable and the confidence intervals adaptively increase in length for higher noise-to-signal ratios, similar to Figure~\ref{fig:PCR_all}. Appendix~\ref{sec:sim} confirms similar results when standardizing the true covariates before the semi-synthetic exercises.

\begin{figure}[H]
\begin{centering}
    \begin{subfigure}[b]{0.48\textwidth}
        \centering
        \includegraphics[width=0.8\textwidth]{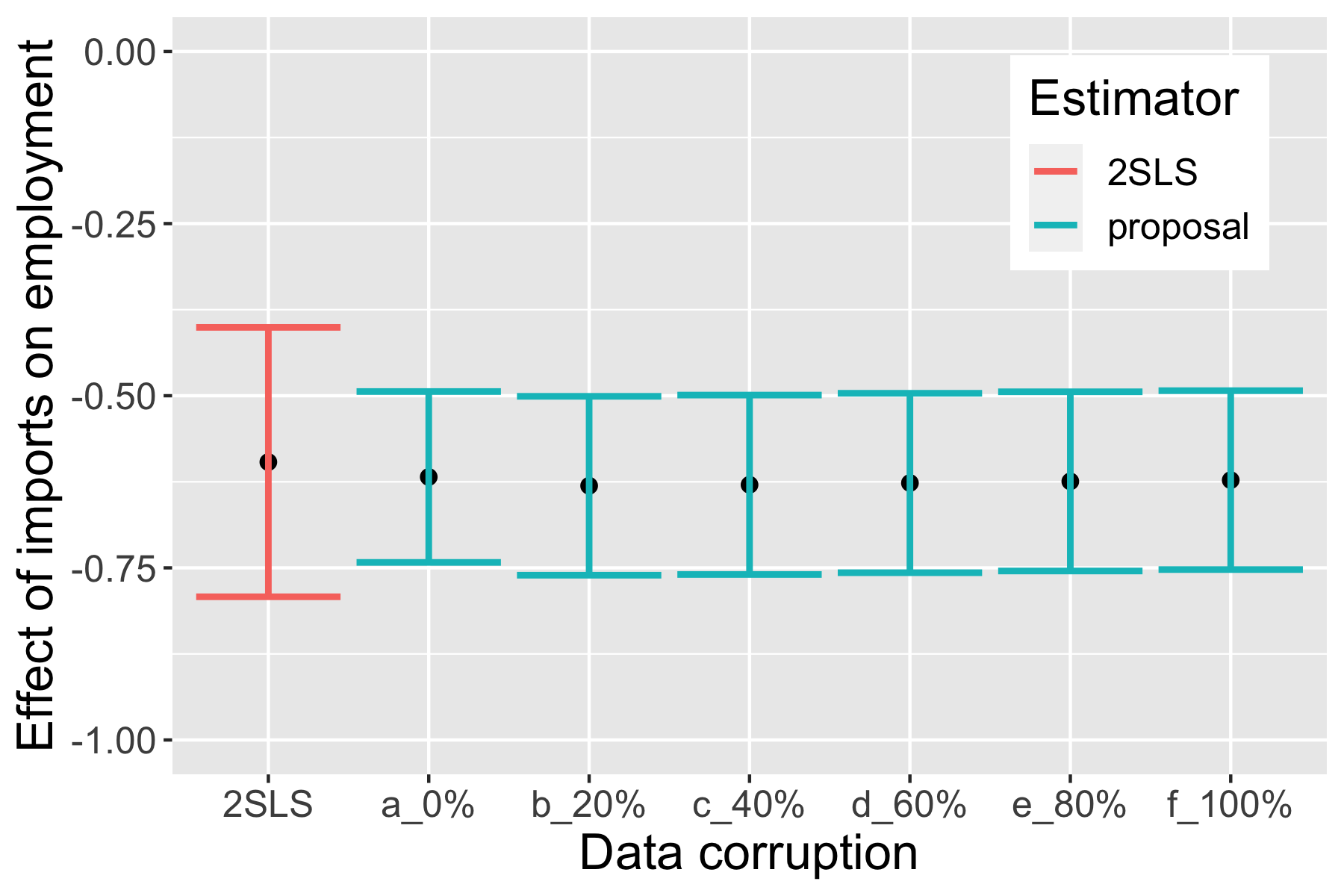}
        \vspace{-3pt}
        \caption{\label{fig:semi_noise} Measurement error}
    \end{subfigure}
    \begin{subfigure}[b]{0.48\textwidth}
        \centering
        \includegraphics[width=0.8\textwidth]{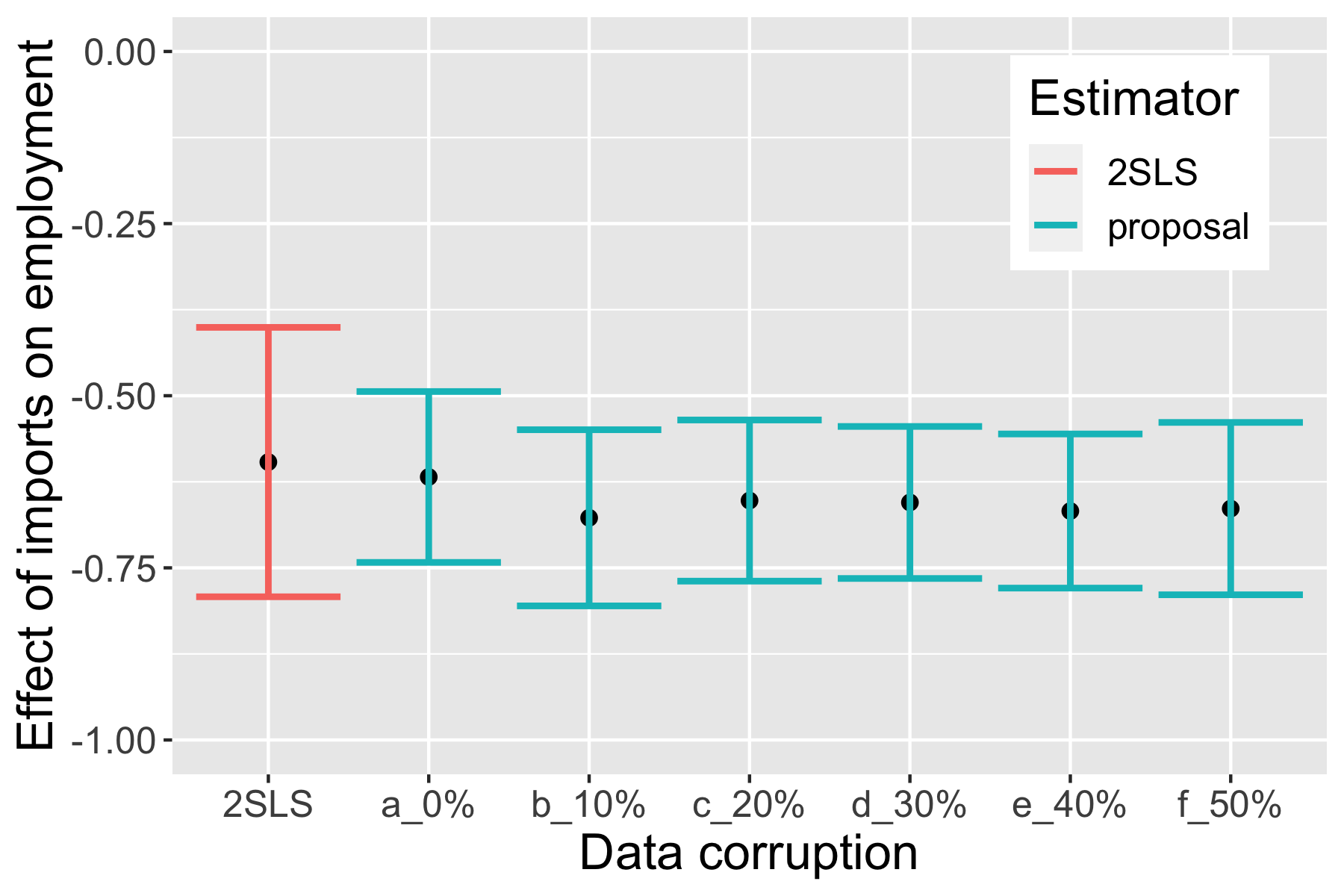}
        \vspace{-3pt}
        \caption{\label{fig:semi_missing} Missing values}
    \end{subfigure}
    
    \begin{subfigure}[b]{0.48\textwidth}
        \centering
        \includegraphics[width=0.8\textwidth]{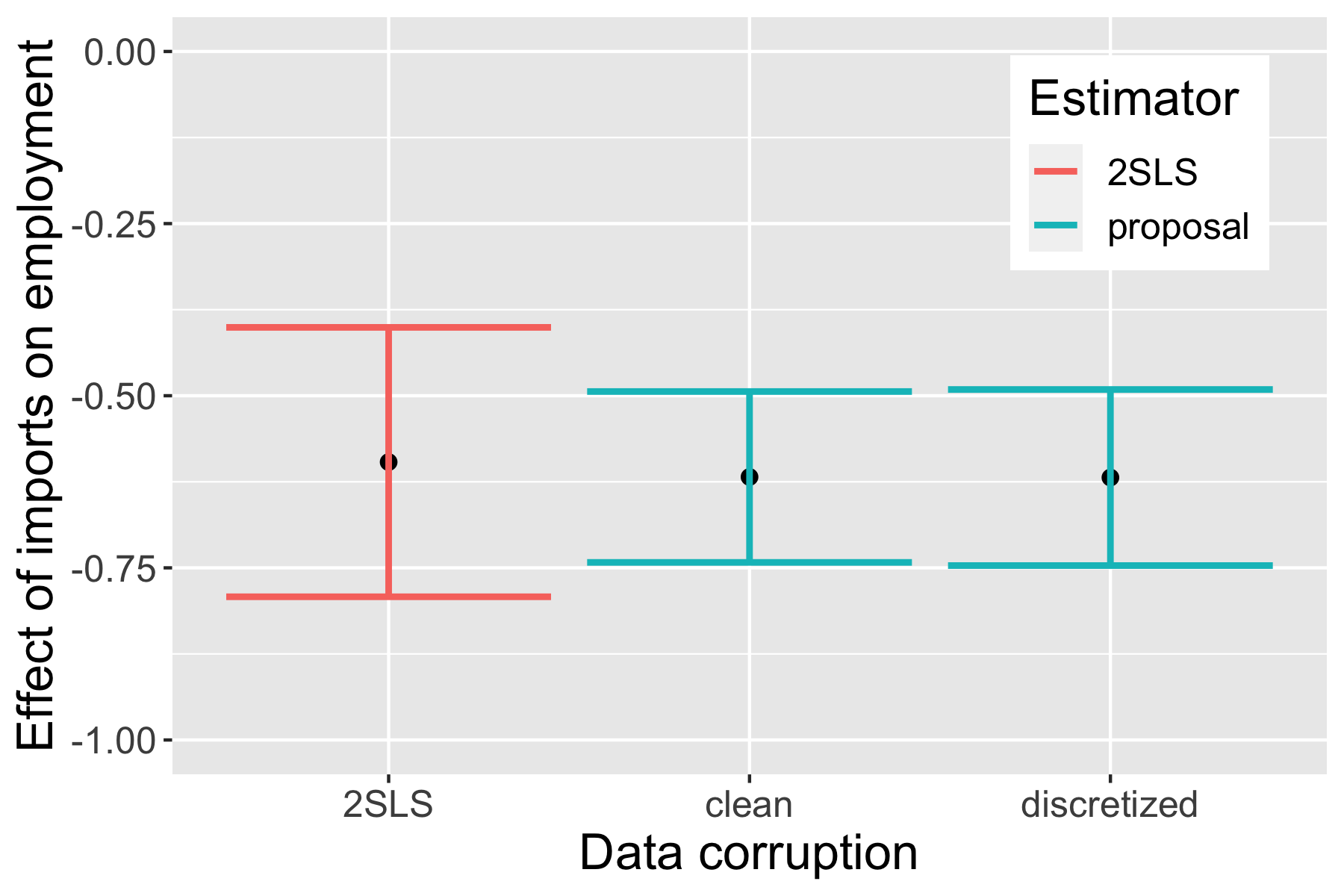}
        \vspace{-3pt}
        \caption{\label{fig:semi_discrete} Discretization}
    \end{subfigure}
    \begin{subfigure}[b]{0.48\textwidth}
        \centering
        \includegraphics[width=0.8\textwidth]{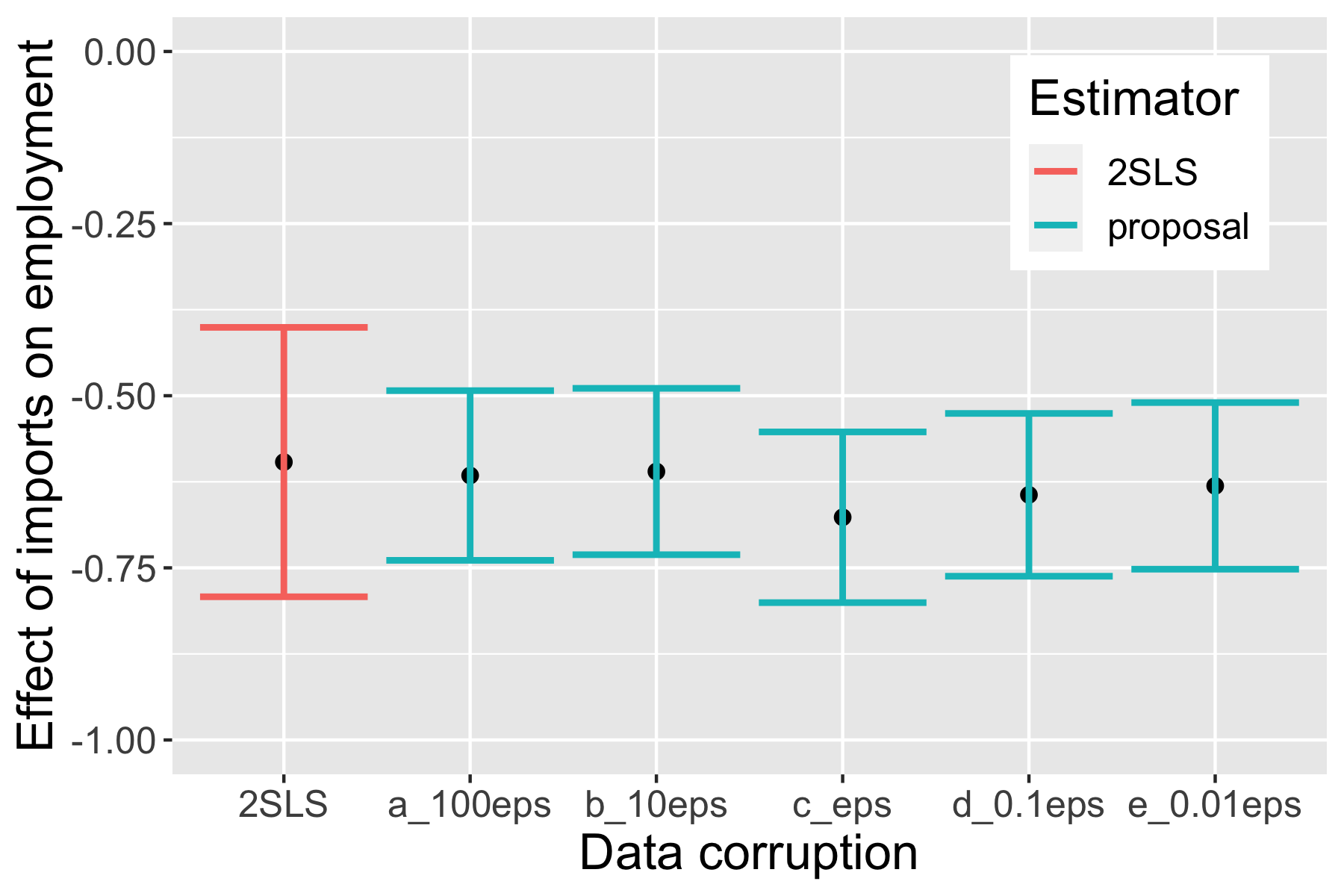}
        \vspace{-3pt}
        \caption{\label{fig:semi_calibrate} Differential privacy (calibrated)}
    \end{subfigure}
\vspace{-10pt}
\caption{Synthetic corruption}\label{fig:semi_all}
\vspace{-15pt}
\end{centering}
\end{figure}

\textbf{Formalizing privacy}. Next, we calibrate our semi-synthetic exercise to privacy levels mandated by the US Census Bureau. To do so, we clarify how our model of causal inference with corrupted data accommodates differential privacy mechanisms. With these formal results, we calibrate the variance of the Laplacian noise appropriately. In what follows, we focus on a one-off data release (formally called the non-interactive setting). 

We maintain the following thought experiment: we are the Census Bureau, and our goal is to publish \cite{autor2013china}'s CZ-level aggregated data set while protecting the privacy of individuals within CZs. In particular, we have access to the individual-level microdata, which we will \textit{not} publicly share; we will only publish the CZ-level summaries for aggregate units. Consider a particular commuting zone $i\in [n]$ with $L_i$ individuals, and denote its individual-level microdata by $\bM^{(i)} \in \R^{L_i\times p}$. We wish to publish $p$ summary statistics $X_{i,\cdot}$ for this CZ, where $X_{ij}=\frac{1}{L_i}\sum_{\ell=1}^{L_i} M^{(i)}_{\ell j}$, however we wish to maintain plausible deniability that each individual $\ell\in [L_i]$ contributed their data. The simulated attack on the 2010 Census found that Census block summary tables did not maintain this plausible deniability. 

\begin{definition}[Differential privacy for summary tables]\label{def:privacy}
A randomized mechanism $\mathcal{M}$ confers differential privacy with privacy loss $\epsilon$ if and only if for any two possible individual-level data sets $\bM$ and $\bM'$ differing in a single row, and for all events $E$ in the range of $\mathcal{M}$,
$
\frac{\p(\mathcal{M}(\bM)\in E )}{\p(\mathcal{M}(\bM')\in E)}\leq e^{\epsilon}
$
where the randomness is with respect to $\mathcal{M}$.
\end{definition}

The canonical mechanism that achieves differential privacy is to publish $\mathcal{M}(\bM^{(i)})=X_{i,\cdot}+H_{i,\cdot}$ instead of $X_{i,\cdot}$, where $H_{i,\cdot}$ is Laplacian noise with an appropriately calibrated variance.\footnote{More precisely, $\mathcal{M}_i:\R^{L_i\times p}\rightarrow \R^p $.}
In addition to the Laplace mechanism, the
discrete Gaussian, piece wise uniform, and bounded mechanisms 
induce measurement error that is subexponential and conditionally mean zero, which fits within our framework. 
For simplicity, we focus on the Laplace mechanism when relating privacy to our theoretical results.\footnote{Bureau policies mandate ``zero concentrated'' differential privacy, which is a closely related privacy concept for which our implementation suffices. See Appendix~\ref{sec:privacy} for details.}

\begin{proposition}[Strong protections for aggregate data]\label{prop:macro}
Suppose (i) each entry of microdata is bounded, i.e. $|M^{(i)}_{\ell j}|\leq\bar{A}_i$;
 (ii) no individual appears in two commuting zones.
Then the mechanism $Z_{ij}=X_{ij}+H_{ij}$ where $H_{ij}\overset{i.i.d.}{\sim}\text{Laplace}\left(\frac{2\bar{A}_i}{\epsilon} \frac{p}{L_i}\right)$ confers $\epsilon$  differential privacy and the measurement error parameters satisfy
$
K_a,\kappa \leq \max_{i\in [n]}\frac{2^{3/2} \bar{A}_i}{\epsilon} \frac{p}{L_i}.
$ This privacy guarantee is immune to data cleaning.
\end{proposition}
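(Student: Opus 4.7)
The claim splits into two independent assertions: $\epsilon$-differential privacy of the release, and a sub-exponential tail / covariance bound on the induced measurement error. I will treat them in that order.

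For the privacy claim, the plan is to reduce to the standard Laplace mechanism theorem via a sensitivity calculation. For each fixed commuting zone $i$, view the released summaries as the output of a vector-valued function $f_i:\R^{L_i\times p}\to\R^p$ given by $f_i(\bM^{(i)})_j=\frac{1}{L_i}\sum_{\ell=1}^{L_i} M^{(i)}_{\ell j}$. Because $|M^{(i)}_{\ell j}|\le \bar A_i$, changing a single row of $\bM^{(i)}$ alters each coordinate of $f_i$ by at most $2\bar A_i/L_i$, so the $\ell_1$-sensitivity obeys $\Delta_1(f_i)\le 2\bar A_i p/L_i$. The Laplace mechanism with scale $\Delta_1(f_i)/\epsilon=\tfrac{2\bar A_i}{\epsilon}\tfrac{p}{L_i}$ therefore confers $\epsilon$-DP for the $i$-th release. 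Assumption 2 (no individual appears in two commuting zones) means that any single individual's record appears in only one $\bM^{(i)}$, so changing one person's data perturbs at most one of the $n$ outputs. By parallel composition, the joint release $(Z_{1,\cdot},\dots,Z_{n,\cdot})$ retains $\epsilon$-central DP.

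For the measurement-error parameters, the plan is to read off the relevant moments of Laplace$(b)$ with $b=\tfrac{2\bar A_i p}{\epsilon L_i}$. Conditional on $X_{i,\cdot}$, the entries $H_{ij}$ are i.i.d.\ mean-zero Laplace$(b)$ with variance $2b^2$, so $\E[H_{i,\cdot}^T H_{i,\cdot}\mid X_{i,\cdot}]=2b^2 I_p$ and $\|\E[H_{i,\cdot}^T H_{i,\cdot}\mid X_{i,\cdot}]\|_{op}=2b^2$, giving $\kappa\le \sqrt 2\,b=\tfrac{2^{3/2}\bar A_i p}{\epsilon L_i}$. Taking the maximum over $i\in[n]$ yields the stated bound. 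For the $\psi_a$ norm, recall that the coordinates are sub-exponential; the vector norm is defined through one-dimensional marginals $\langle v,H_{i,\cdot}\rangle$ for unit $v$, and by independence across coordinates the Orlicz norm of this sum is controlled by a constant multiple of the per-coordinate scale $b$. The constants line up to give $K_a\le \tfrac{2^{3/2}\bar A_i p}{\epsilon L_i}$ under the paper's convention, and again we take the maximum over $i$.

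The only mildly delicate step is reconciling the constants in the sub-exponential vector-norm bound with the $2^{3/2}$ prefactor in the statement; this is a matter of using the appropriate Orlicz normalization together with independence of coordinates, and follows standard estimates (e.g.\ \cite[Ch.~2]{vershynin2018high}). Everything else is direct: the DP part is a one-line sensitivity computation followed by the Laplace mechanism theorem and parallel composition, and the covariance bound uses only the variance of a Laplace$(b)$ random variable. No nontrivial combinatorics or concentration machinery is required.
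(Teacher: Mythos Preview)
Your proposal is correct and follows essentially the same route as the paper. The only cosmetic difference is that the paper parametrizes the within-CZ argument by assigning per-query budget $\epsilon_j=\epsilon/p$ and bounding each scalar sensitivity $S(f_j)\le 2\bar A_i/L_i$, whereas you use the equivalent vector $\ell_1$-sensitivity formulation; both yield the same Laplace scale, and both invoke the disjoint-individuals assumption for the across-CZ step (your explicit appeal to parallel composition is arguably cleaner than the paper's phrasing). The $K_a$ and $\kappa$ bounds are handled identically, reading off the Orlicz and variance parameters of a Laplace variable and using coordinate independence, with the same caveat you flag about the exact constant in the vector $\psi_a$ norm.
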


\begin{corollary}[Safety in numbers]\label{cor:macro}
Suppose the conditions of Proposition~\ref{prop:macro} hold and $\max_{i\in [n]}\frac{p}{L_i}  \lesssim \ln(np)$. Then the measurement error parameters satisfy $K_a,\kappa\lesssim \ln(np)$, and therefore our rates of data cleaning and error-in-variable estimation translate into data cleaning adjusted confidence intervals. 
\end{corollary}

In summary, the calibrated Laplacian variance depends on the privacy loss $\epsilon$, the most extreme true value $\bar{A}_i$, the number of covariates $p$, and number of individuals $L_i$ per aggregate unit. 
The auxiliary condition $\frac{p}{L_i}  \lesssim \ln(np)$ is a practical diagnostic: roughly speaking, the number of published covariates should not greatly exceed the number of individuals per aggregate unit. It sheds light on limitations because it is plausible for CZs, but implausible for Census blocks. Much empirical economic research studies CZs, which we study in our semi-synthetic exercise. Future research may empirically investigate, through simulated attacks, how vulnerable various data releases may be for different $\frac{p}{L_i}$ regimes.

Figure~\ref{fig:semi_calibrate} implements  differential privacy for \cite{autor2013china}'s CZ-level aggregated data set while protecting the privacy of individuals within CZs. We calibrate the Laplacian variance according to Proposition~\ref{prop:macro}, where $\epsilon$ is based on Bureau memos, $p=30$ in the augmented specification, and $(\bar{A}_i,L_i)$ are calculated from the microdata for each CZ; see Appendix~\ref{sec:privacy} for details.  
To study the robustness of our results to the privacy loss parameter, we consider $(100\epsilon,10\epsilon, \epsilon, 0.1\epsilon, 0.01\epsilon)$, which corresponds to privacy below and above the mandated level. Across levels, our point estimates and confidence intervals remain stable.

\section{Conclusion}\label{sec:conclusion}

Recent developments in how the US Census Bureau publishes economic data motivate us to study a class of corruptions that is rich enough to encompass classical types of corruption, such as measurement error and missingness, as well as modern types, such as discretization and differential privacy mechanisms. Abstractly, our goal is to learn parameters in nonlinear, heterogeneous causal models from corrupted data; concretely, our goal is to characterize some scenarios in which it is possible to achieve both privacy and precision. To do so, we propose new data cleaning-adjusted confidence intervals that are computationally simple, statistically rigorous, and empirically robust in settings calibrated to empirical economic research. We build a framework to use matrix completion as data cleaning for downstream causal inference, bridging two rich literatures. Future work may extend our results to confounded noise and sample selection bias.

\spacingset{1.5}
\bibliographystyle{apalike}
\typeout{}
\DeclareRobustCommand{\VAN}[2]{#2}  
\DeclareRobustCommand{\VAN}[2]{#1}  

\spacingset{1.5}
\newpage

\appendix
\addcontentsline{toc}{section}{Appendix} 
\part{Appendix} 

\spacingset{1}
\parttoc 
\newpage 
\spacingset{1.5}

\section{Data cleaning}\label{sec:cleaning}

In this appendix, we replace the symbol $X_{i,\cdot}$ with the symbol $A_{i,\cdot}$.
We suppress indexing by the folds $(\tr,\te)$ to lighten notation. As in Assumption \ref{assumption:missing}, we identify $\NA$ with $0$ in $\bZ$ hereafter. We slightly abuse notation by letting $n$ be the number of observations in $\tr$, departing from the notation of the main text. The entire section is conditional on $\bA$, which we omit. We write $\|\cdot\|=\|\cdot\|_{op}$ and let $C$ be an absolute constant.

Recall $\bA = \bA^{\lr} + \bE^{\lr}$ and $r = rank\{\bA^{\lr}\}$. We denote the SVDs
$
\bA^{\lr}=\bU\bS\bV^T$, $\bhA=\bhU_k\bhS_k\bhV_k^T$, and $\bZ\bhrho^{-1}=\bhU\bhS\bhV^T.
$
The first $k$ left singular vectors of $\bA^{\lr}$ are $\bU_k$. We denote $s_k=\Sigma_{kk}$ and $\hat{s}_k=\hat{\Sigma}_{kk}$.

Recall that $\bhA^{\tr}$ is constructed by taking $\tr$ covariates then filling and cleaning them using $\tr$ alone. As a theoretical device we also study $\bhA^{\te}$, obtained by taking $\te$ covariates, filling them using $\tr$, and cleaning them using $\te$. The analysis does not depend on whether $\bhrho^{\tr}$ or $\bhrho^{\te}$ is used when filling in missing values. 

Consider a matrix $\bB \in \mathbb{R}^{n \times p}$ with SVD $\bB = \sum_{i=1}^{n \wedge p} \sigma_i u_i v_i^T$. 
We define the linear function $\varphi^{\bB}_{\lambda}: \mathbb{R}^{n} \to \mathbb{R}^{n}$ as
$
	\varphi^{\bB}_{\lambda}(w) = \sum_{i = 1}^{n \wedge p} 1(\sigma_i \ge \lambda) u_i u_i^T w.
$
We use the shorthand $\varphi^{\bB} = \varphi_{0}^{\bB}$.


Define the events: 
$$
\Ec_1=\left\{ \|\bZ -  \bA\brho\| \leq (\sqrt{n} + \sqrt{p})\Delta_{H,op}   \right\},\quad
\Ec_2=\left\{ \max_{j \in [p]} \| Z_{\cdot, j} - \rho_j A_{\cdot, j}  \|_2^2 \leq  n \Delta_H  \right\},
$$
$$
\Ec_3=\left\{ \max_{j \in [p]}  \| \bU_k\bU^T_{k}(Z_{\cdot,j}-\rho_jA_{\cdot,j})  \|_2^2 \leq k\Delta_H  \right\},\quad
\Ec_4=\left\{\forall j\in[p],\; \frac{1}{\delta}\rho_j \le \hat{\rho}_j \le \delta \rho_j \right\},
$$
$$
\Ec_5=\left\{ \max_{j \in [p]}  |\hrho_j-\rho_j| \leq C \sqrt{\frac{\ln(np)}{n}} \right\},
$$
where
$$
\Delta_{H,op} = C \bar{A} (\kappa +  K_a+\bar{K} )\ln^{\frac{3}{2}}(np),\quad
\Delta_{H} =C (K_a + \bar{A} \bar{K})^2 \ln^{2}(np),\quad
\delta = \frac{1}{1 - \sqrt{\frac{22 \ln (np)}{ n \rho_{\min}}}}.
$$
The Online Appendix shows that under Assumptions~\ref{assumption:bounded},~\ref{assumption:measurement}, and~\ref{assumption:missing},
$
\p(\Ec^c)\leq  \frac{10}{n^{10}p^{10}},
$
where $\Ec := \cap_{k=1}^5 \Ec_{k}$.

\begin{lemma}\label{lem:operator_norm_noise_new_bound}
Set $k=r$. Then,
{ \small
\begin{align*}
    \|\bZ\hat{\pmb{\rho}}^{-1}-\bA^{\lr}\|  ~\Big|~ \{\Ec_1,\Ec_4, \Ec_5\}  &\le C \frac{\delta}{\rho_{\min}} \left(  (\sqrt{n}+\sqrt{p})\Delta_{H,op} + \|\bE^{\lr}\| + \sqrt{\frac{\ln(np)}{n}} \| \bA^{\lr} \|\right);\\
    \| \bU\bU^T - \bhU_r\bhU_r^T \| ~\Big|~ \{\Ec_1, \Ec_4, \Ec_5\}
 &\le C \frac{\delta}{\rho_{\min} s_r} \left( (\sqrt{n}+\sqrt{p}) \Delta_{H,op} + \|\bE^{\lr}\| + \sqrt{\frac{\ln(np)}{n}}  \| \bA^{\lr} \|\right); \\
\| \bV\bV^T - \bhV_{r} \bhV_{r}^T \| ~\Big|~ \{\Ec_1, \Ec_4, \Ec_5\} 
&\le C \frac{\delta}{\rho_{\min} s_r} \left\{  (\sqrt{n}+\sqrt{p})\Delta_{H,op} + \|\bE^{\lr}\| + \sqrt{\frac{\ln(np)}{n}}  \| \bA^{\lr} \|\right\}; \\
| s_r - \hat{s}_r | ~\Big|~ \{\Ec_1, \Ec_4, \Ec_5\}
&\le  C \frac{\delta}{\rho_{\min}} \left\{  (\sqrt{n}+\sqrt{p})\Delta_{H,op} + \|\bE^{\lr}\| + \sqrt{\frac{\ln(np)}{n}}  \| \bA^{\lr} \|\right\}.
\end{align*}
}
\end{lemma}

\begin{proof}
To begin, write
$
\|\bZ\hat{\pmb{\rho}}^{-1}-\bA^{\lr}\| 
\le {\| \hat{\pmb{\rho}}^{-1} \| \|\bZ-\bA^{\lr} \hat{\pmb{\rho}} \|}
= \frac{\|\bZ-\bA^{\lr}\hat{\pmb{\rho}}\|}{\min_j\hat{\rho}_j}.
$
By triangle inequality
$
\|\bZ-\bA^{\lr}\hat{\pmb{\rho}}\| 
\le \|\bZ-\bA^{\lr}\pmb{\rho}\| + \|\bA^{\lr}\|\|\pmb{\rho}-\hat{\pmb{\rho}}\|. 
$
Applying $\Ec_1$ to the first term, $\Ec_5$ to the second term, and $\Ec_4$ to the denominator yields the first result. From the first result, Wedin's $\sin \Theta$ Theorem yields the second and third results while Weyl's inequality yields the fourth result.  
\end{proof}

\begin{lemma}[Eq. 43 of \cite{agarwal2021robustness}]\label{lemma:column_representation}
Take $\lambda^*=\hat{s}_k$. Then
$
\hat{A}_{\cdot, j} = \frac{1}{\hrho_j} \varphi^{\bZ \hat{\pmb{\rho}}^{-1}}_{\lambda^*}(Z_{\cdot, j}). 
$
\end{lemma}

\begin{lemma}\label{lem:combined_JASA48}
   Suppose $k=r$, Assumptions~\ref{assumption:bounded} and~\ref{assumption:well_balanced_spectrum} hold, and $\rho_{\min}>\frac{23\ln (np)}{n}$. Then
   \begin{align*}
    \norm{\bhA - \bA}^2_{2, \infty}  ~\Big|~ \Ec
     &\leq 
    C\bar{A}^4(K_a + \bar{K})^2 (\kappa + \bar{K} + K_a)^2
    \cdot \frac{r\ln^5(np)}{\rho_{\min}^4}
    \left(
    1
    +\frac{n}{p}
    +n\Delta_E^2
    \right). 
\end{align*}
\end{lemma}

\begin{proof}
Fix a column index $j \in [p]$. Observe that
$
	\hat{A}_{\cdot, j} - A_{\cdot, j}
		= \Big\{ \hat{A}_{\cdot, j} - \varphi^{\bZ \hat{\pmb{\rho}}^{-1}}_{\lambda^*} \big( A_{\cdot, j} \big) \Big\} + \Big\{ \varphi^{\bZ \hat{\pmb{\rho}}^{-1}}_{\lambda^*} \big( A_{\cdot, j} \big) - A_{\cdot, j} \Big\}.
$
Recall that $\varphi^{\bZ \hat{\pmb{\rho}}^{-1}}_{\lambda^*}$ projects onto the span of the top $r$ left singular vectors $\{\hat{u}_1,...,\hat{u}_r\}$ of $\bZ \bhrho^{-1}$, which are are also the top $r$ left singular vectors of $\bZ$ since $\bhrho^{-1}$ is diagonal.
Hence $
\varphi^{\bZ \hat{\pmb{\rho}}^{-1}}_{\lambda^*} (A_{\cdot, j}) - A_{\cdot, j} \in span\{\hat{u}_1, \ldots, \hat{u}_r \}^{\perp}.
$
By Lemma \ref{lemma:column_representation},
$
\hat{A}_{\cdot, j} -  \varphi^{\bZ \hat{\pmb{\rho}}^{-1}}_{\lambda^*} (A_{\cdot, j}) = \frac{1}{\hrho_j}\varphi^{\bZ \hat{\pmb{\rho}}^{-1}}_{\lambda^*} (Z_{\cdot, j}) - \varphi^{\bZ \hat{\pmb{\rho}}^{-1}}_{\lambda^*} (A_{\cdot, j}) \in span\{\hat{u}_1, \ldots, \hat{u}_r \}.
$
Therefore
$
	\Big\| \hat{A}_{\cdot, j} - A_{\cdot, j} \Big\|_2^2
		= \Big\| \hat{A}_{\cdot, j} - \varphi^{\bZ \hat{\pmb{\rho}}^{-1}}_{\lambda^*} \big( A_{\cdot, j} \big) \Big\|_2^2		
			+ \Big\| \varphi^{\bZ \hat{\pmb{\rho}}^{-1}}_{\lambda^*} \big( A_{\cdot, j} \big) - A_{\cdot, j} \Big\|_2^2.
$
    Again applying Lemma \ref{lemma:column_representation}, 
\begin{align*}
&\norm{\hat{A}_{\cdot, j} - \varphi^{\bZ \hat{\pmb{\rho}}^{-1}}_{\lambda^*}(A_{\cdot, j})}_2^2 
	= \norm{\frac{1}{\hrho_j}  \varphi^{\bZ \hat{\pmb{\rho}}^{-1}}_{\lambda^*} (Z_{\cdot, j} - \rho_j A_{\cdot, j} ) 
		+ \frac{\rho_j - \hrho_j}{\hrho_j} \varphi^{\bZ \hat{\pmb{\rho}}^{-1}}_{\lambda^*}( A_{\cdot, j}) }_2^2		\\
	&\leq 2 \, \norm{\frac{1}{\hrho_j}  \varphi^{\bZ \hat{\pmb{\rho}}^{-1}}_{\lambda^*} (Z_{\cdot, j} - \rho_j A_{\cdot, j} ) }_2^2 
		+ 2 \, \norm{ \frac{\rho_j - \hrho_j}{\hrho_j} \varphi^{\bZ \hat{\pmb{\rho}}^{-1}}_{\lambda^*}( A_{\cdot, j} )}_2^2		\\
	&\leq \frac{2\delta^2}{\rho_j^2} \norm{\varphi^{\bZ \hat{\pmb{\rho}}^{-1}}_{\lambda^*}(Z_{\cdot, j} - \rho_j A_{\cdot, j})}_2^2
		+ C\frac{\delta^2}{\rho_j^2} \frac{\ln(np)}{n} \| A_{\cdot, j} \|_2^2
\end{align*}
where the last line uses $\Ec_4$ and $\Ec_5$. Note that
$
\norm{\varphi^{\bZ \hat{\pmb{\rho}}^{-1}}_{\lambda^*}(Z_{\cdot, j} - \rho_j A_{\cdot, j})}_2^2  \le 2  \,  \Big \| \varphi^{\bZ \hat{\pmb{\rho}}^{-1}}_{\lambda^*}(Z_{\cdot, j} - \rho_j A_{\cdot, j}) 
	 	- \varphi^{\bA^{\lr}}(Z_{\cdot, j} - \rho_j A_{\cdot, j})  \Big \|_2^2 +2 \,  \Big \| \varphi^{\bA^{\lr}}(Z_{\cdot, j} - \rho_j A_{\cdot, j})  \Big \|_2^2.
$
Since $k  = r$, $\varphi^{\bZ \hat{\pmb{\rho}}^{-1}}_{\lambda^*}(w) = \bhU_r\bhU_r^T w$ and $\varphi^{\bA^{\lr}}(w) = \bU\bU^T w$ for $w \in \mathbb{R}^n$. By Lemma~\ref{lem:operator_norm_noise_new_bound},
\begin{align*} 
	 &\Big \| \varphi^{\bZ \hat{\pmb{\rho}}^{-1}}_{\lambda^*}(Z_{\cdot, j} - \rho_j A_{\cdot, j}) 
	 	- \varphi^{\bA^{\lr}}(Z_{\cdot, j} - \rho_j A_{\cdot, j})  \Big \|_2
	 \le \| \bU\bU^T - \bhU_r\bhU_r^T \|
		\big\| Z_{\cdot, j} - \rho_j A_{\cdot, j} \big\|_2	\nonumber
	\\&\le C \frac{\delta}{\rho_{\min} s_r} \left(  (\sqrt{n}+\sqrt{p})\Delta_{H,op} + \|\bE^{\lr}\| + \sqrt{\frac{\ln(np)}{n}}  \| \bA^{\lr} \|\right)	\big\| Z_{\cdot, j} - \rho_j A_{\cdot, j} \big\|_2.	
\end{align*}
Combining the inequalities together, we have $\Big\| \hat{A}_{\cdot, j} - \varphi^{\bZ \hat{\pmb{\rho}}^{-1}}_{\lambda^*} \big( A_{\cdot, j} \big) \Big\|_2^2$ is bounded by
\begin{align*}
	&\frac{C\delta^4}{\rho_{\min}^2} 
		\left(  \frac{(\sqrt{n}+\sqrt{p})\Delta_{H,op}}{ \rho_{\min} s_r} +  \frac{\|  \bE^{\lr} \|}{\rho_{\min} s_r} +  \frac{  \sqrt{\ln(np) / n}\|\bA^{\lr} \|}{\rho_{\min} s_r} \right)^2
				\big\| Z_{\cdot, j} - \rho_j A_{\cdot, j} \big\|_2^2	\nonumber\\
		&\quad	+ \frac{4\delta^2}{\rho_{\min}^2}  \Big \| \varphi^{\bA^{\lr}}(Z_{\cdot, j} - \rho_j A_{\cdot, j})  \Big \|_2^2
				+ C\frac{\delta^2}{\rho_{\min}^2} \frac{\ln(np)}{n} \| A_{\cdot, j} \|_2^2.		\label{eq:term_step2}
\end{align*}
Since
$\bA = \bA^{\lr} + \bE^{\lr}$,
\begin{align*}
	&\norm{\varphi^{\bZ \hat{\pmb{\rho}}^{-1}}_{\lambda^*} \big( A_{\cdot, j} \big) - A_{\cdot, j} }_2^2 \leq 2 \, \norm{\varphi^{\bZ \hat{\pmb{\rho}}^{-1}}_{\lambda^*} \big( A^{\lr}_{\cdot, j} \big) - A^{\lr}_{\cdot, j}  }_2^2 +  2 \,\norm{ \varphi^{\bZ \hat{\pmb{\rho}}^{-1}}_{\lambda^*} \big( E^{\lr}_{\cdot, j} \big) - E^{\lr}_{\cdot, j}}_2^2		\nonumber\\
		&\leq 2 \, \| \bU\bU^T - \bhU_r\bhU_r^T \|^2 \norm{ A^{\lr}_{\cdot, j} }_2^2 + 2 \,\norm{ E^{\lr}_{\cdot, j} }_2^2 \nonumber
		\\ &\le   	C \delta^2 \left(  \frac{(\sqrt{n}+\sqrt{p}) \Delta_{H,op}}{ \rho_{\min} s_r} +  \frac{\|  \bE^{\lr} \|}{\rho_{\min} s_r} +  \frac{  \sqrt{\ln(np) / n}\|\bA^{\lr} \|}{\rho_{\min} s_r} \right)^2
		 \norm{ A^{\lr}_{\cdot, j} }_2^2 + 2\, \norm{ E^{\lr}_{\cdot, j} }_2^2,
\end{align*}
where the final inequality appeals to Lemma \ref{lem:operator_norm_noise_new_bound}.
To conclude, substitute the bounds on each term, appeal to $\Ec_2$ and $\Ec_3$, then simplify using Assumptions~\ref{assumption:bounded} and~\ref{assumption:well_balanced_spectrum}.
\end{proof}

\begin{lemma}\label{lem:adverse_2_infty}
Suppose Assumptions~\ref{assumption:bounded},~\ref{assumption:measurement}, and~\ref{assumption:missing} hold. Then
$
\E\left[\|\bhA-\bA\|^2_{2,\infty} \1\{\Ec^c\}\right]\leq \Delta_{adv}\frac{1}{n^{2}p^{5}}$, where $\Delta_{adv}:=C\Big\{\bar{A}^2+ K_a^2 \ln^{2}(np)\Big\}.
$
\end{lemma}

\begin{proof}

By Cauchy-Schwarz, $\E\left[\|\bhA-\bA\|^2_{2,\infty} \1\{\Ec^c\} \right]
    \leq \sqrt{\E\left[\|\bhA-\bA\|^4_{2,\infty}\right]} \sqrt{\E\left[\1^2\{\Ec^c\}\right]}.$ \\
Within the first factor,
$
    \max_{j \in [p]} \| \hat{A}_{\cdot, j} \|_2
    \leq \max_{j \in [p]} \frac{1}{\hrho_j} \| Z_{\cdot, j} \|_2  
    \leq n \cdot \sqrt{n}(\bar{A}+\max_{i,j}|H_{ij}|),
$
so
$$
    \E\left[\|\bhA-\bA\|^4_{2,\infty}\right]
    \leq \E[\{n^{\frac{3}{2}}(\bar{A}+\max_{i,j}|H_{ij}|)+\sqrt{n}\bar{A} \}^4 ] \leq C n^6 \{\bar{A}^4+ K_a^4 \ln^{4}(np)\}.
$$
The final inequality holds since
$
\E[\max_{i,j}|H_{ij}|^4]\leq C K_a^4 \ln^{\frac{4}{a}}(np).
$ Since $\p(\Ec^c)\leq \frac{C}{n^{10}p^{10}}$, we conclude that
$$
    \E\left[\|\bhA-\bA\|^2_{2,\infty} \1\{\Ec^c\} \right]
    \leq C\sqrt{n^6 (\bar{A}^4+ K_a^4 \ln^{4}(np))} \sqrt{\frac{1}{n^{10}p^{10}}}.
$$
\end{proof}

\begin{proof}[Proof of Theorem~\ref{theorem:cov}]
The result follows from Lemmas \ref{lem:combined_JASA48} and \ref{lem:adverse_2_infty}. By the law of iterated expectations, results conditional on $\bA$ imply the same unconditional on $\bA$.
\end{proof}

\section{Error-in-variable regression}\label{sec:regression}

We write the proofs without nonlinear dictionaries for clarity. 

Recall that the $\fil$\, operator rescales using $\bhrho$ calculated from \tr. Denote the SVDs
$
\bA^{\lr,\tr}=\bU\bS\bV^T$, $\fil(\bZ^{\tr})=\bZ^{\tr} \bhrho^{-1}=\bhU\bhS\bhV^T$, $\bhA^{\tr}=\bhU_k\bhS_k\bhV_k^T.
$
In this notation, $\bV$ is an orthonormal basis for $\row\{\bA^{\lr,\tr}\}$. 
Let $\bV_{\perp}$ be an orthonormal basis for its orthogonal complement. Likewise we define $\bhV_{k,\perp}$.
Define $s_k$ and $\hat{s}_k$ as the $k$-th singular values of $\bA^{\lr,\tr}$ and $\bhA^{\tr}$, respectively.
Next, denote the SVDs
$
\bA^{\lr,\te}=\bU'\bS'(\bV')^T$, $\fil(\bZ^{\te})=\bZ^{\te} \bhrho^{-1} =\bhU'\bhS'(\bhV')^T
$, $\bhA^{\te}=\bhU'_k\bhS'_k(\bhV'_k)^T.
$
We define $\bV'_{\perp}$ and $\bhV'_{\perp}$ analogously to $\bV_{\perp}$. Define $s'_k$ and $\hat{s}'_k$ as the $k$-th singular values of $\bA^{\lr,\te}$ and $\bhA^{\te}$, respectively.
Finally, denote the SVD of the row-wise concatenation of $\bA^{\lr,\tr}$ and $\bA^{\lr,\te}$ as
$
\tbU\tbS\tbV^T.
$
We define $\tbV_{\perp}$ analogously to $\bV_{\perp}$ but with respect to the row-wise concatenation of $\bA^{\lr,\tr}$ and $\bA^{\lr,\te}$.

We define $\beta^*\in\R^p$ as the unique solution to the following optimization problem across $\tr$ and $\te$:
$
\min_{\beta\in \R^p} \|\beta\|_2 $ such that $\beta\in \argmin \left\|\begin{bmatrix}\gamma_0(\bA^{\tr}) \\ \gamma_0(\bA^{\te})\end{bmatrix}-\begin{bmatrix}\bA^{\lr,\tr} \\ \bA^{\lr,\te}\end{bmatrix}
\beta\right\|_2^2.
$
$\beta^*$ is not the quantity of interest, but rather a theoretical device. It defines the unique, minimal-norm, low-rank, linear approximation to the regression $\gamma_0$.

Recall $Y_i=A^{\lr}_{i,\cdot}\beta^*+\phi_i^{\lr}+\varepsilon_i$.
Denote by $Y^{\tr}\in \R^n$ the concatenation of $(Y_i)_{i\in\tr}$. Likewise for $\varepsilon^{\tr}$ and $\phi^{\lr,\tr}$. In the argument for \tr\err, all objects correspond to \tr. For this reason, we suppress superscipt \tr\, when possible. For $i\in \te$, let
    $\hgamma_i=Z_{i,\cdot}\bhrho^{-1} \hbeta $ and $
    \gamma_{i}= \gamma_0(A_{i,\cdot})$
which form the vectors $\hat{\bgamma},\bgamma_0\in\R^n$.

Define the events:
$$
\tilde{\Ec}_1\coloneqq \left\{\|\bhA^{\te} - \bA^{\te}\|^2_{2, \infty}, \|\bhA^{\tr} - \bA^{\lr,\tr}\|^2_{2, \infty},
\le \tilde{\Delta}_1\right\},
$$
$$
\tilde{\Delta}_1 \coloneqq  
C_1
    \cdot \frac{r\ln^5(np)}{\rho_{\min}^4}
    \left(
    1
    +\frac{n}{p}
    +n\Delta^{2}_E
    \right)
    \quad\textrm{and}\quad
    C_1=C\bar{A}^4(K_a + \bar{K})^2 (\kappa + \bar{K} + K_a)^2;
$$
$$
\tilde{\Ec}_2 \coloneqq \left\{\|\bZ^{\te} \bhrho^{-1} -  \bA^{\lr,\te}\|^2
\le \tilde{\Delta}_2\right\},
\quad
\tilde{\Delta}_2 \coloneqq  
C\bar{A}^2 (\kappa + \bar{K} + K_a)^2 \frac{\ln^3(np)}{\rho^2_{\min}} \left(n+p+np\Delta_{E}^2\right);
$$
$$
\tilde{\Ec}_3 \coloneqq \left\{\| \bV \bV^T -\bhV_r \bhV_r^T\|^2, \| \bV' (\bV')^T -\bhV'_r (\bhV'_r)^T\|^2
\le \tilde{\Delta}_3\right\},
\quad
\tilde{\Delta}_3 \coloneqq \frac{r}{np} \tilde{\Delta}_2;
$$
$$
\tilde{\Ec}_4 \coloneqq \left\{\hat{s}_r \gtrsim  s_r\right\};
$$
$$
\tilde{\Ec}_5 \coloneqq \left\{\langle \bhA (\hbeta - \beta^*), \varepsilon\rangle \leq  \tilde{\Delta}_5
\right\},
\quad
\tilde{\Delta}_5 \coloneqq C \bar{\sigma}^2 \ln(np) \left\{r+\|\phi^{\lr}\|_2+\|\beta^*\|_1(\sqrt{n}\bar{A}   + \tilde{\Delta}_1^{1/2}) \right\}.
$$
The Online Appendix uses the results in Appendix~\ref{sec:cleaning} to show that if the conditions of Theorem~\ref{theorem:cov} hold and 
$
    \rho_{\min} \gg \tilde{C}\sqrt{r}\ln^{\frac{3}{2}}(np)\left(\frac{1}{\sqrt{p}}\vee\frac{1}{\sqrt{n}} \vee \Delta_E \right)$, where $\tilde{C}:= C  \bar{A} \Big(\kappa + \bar{K} + K_a \Big),
$ 
then $\p(\tilde{\mathcal{E}}^c)\leq \frac{C}{n^{10}p^{10}}$ where $\tilde{\Ec} \coloneqq \cap^5_{k=1} \tilde{\Ec}_k$.

\begin{lemma}\label{lem:perp}
If Assumption~\ref{assumption:inclusion_row} holds then
$
 \bhV_{k,\perp}^T \hbeta = 0$ and $
\bV_{\perp}^T \beta^* = (\bV_{\perp}')^T \beta^* = 0.
$
\end{lemma}

\begin{proof}
    We generalize \cite[Property 4.1]{agarwal2020principal}, using our new definition of $\beta^*$ and noting that $
\row(\bV^T)=\row\{(\bV')^T\}=\row(\tbV^T)
$.
\end{proof}

\begin{lemma}\label{lem:training_error_intermediate_bound}
Deterministically,
$
\| \bhA \hbeta -\bA^{\lr} \beta^* \|^2_2 
\le C \big\{\|\bhA - \bA^{\lr}\|^2_{2, \infty} \|\beta^*\|^2_1 \ \vee \ \|\phi^{\lr}\|^2_2 \ \vee \ \langle \bhA (\hbeta - \beta^*), \varepsilon\rangle \big\}.
$
\end{lemma}

\begin{proof}
    To begin, write
$
\| \bhA \hbeta - Y \|_2^2 
 = \| \bhA \hbeta - \bA^{\lr} \beta^*-\phi^{\lr}\|_2^2 + \|\varepsilon\|_2^2 - 2 \langle \bhA \hbeta - \bA^{\lr} \beta^*, \varepsilon \rangle+2\langle \phi^{\lr},\varepsilon \rangle.
$
By optimality of $\hbeta$, we have
$
\| \bhA \hbeta - Y \|_2^2 \leq \| \bhA \beta^* - Y \|_2^2 = \| (\bhA - \bA^{\lr}) \beta^*-\phi^{\lr}\|_2^2 + \| \varepsilon\|_2^2 - 2 \langle (\bhA - \bA^{\lr}) \beta^*, \varepsilon\rangle+2\langle \phi^{\lr},\varepsilon \rangle.
$
Combining these results, 
$
\| \bhA \hbeta - \bA^{\lr} \beta^*-\phi^{\lr}\|_2^2 \leq \| (\bhA - \bA^{\lr}) \beta^*-\phi^{\lr}\|_2^2 + 2 \langle \bhA (\hbeta - \beta^*), \varepsilon\rangle$. Moreover, since 
$
\| \bhA \hbeta - \bA^{\lr} \beta^* -\phi^{\lr}\|_2^2 = \| \bhA \hbeta - \bA^{\lr} \beta^*\|_2^2 + \| \phi^{\lr}\|_2^2 - 2\langle \bhA \hbeta - \bA^{\lr} \beta^*, \phi^{\lr} \rangle$ and $
\| (\bhA - \bA^{\lr}) \beta^*-\phi^{\lr}\|_2^2 = \| (\bhA - \bA^{\lr}) \beta^*\|_2^2 + \| \phi^{\lr}\|_2^2 - 2\langle (\bhA - \bA^{\lr}) \beta^*, \phi^{\lr} \rangle$
we conclude that
$
\| \bhA \hbeta - \bA^{\lr} \beta^*\|_2^2 
\leq \| (\bhA - \bA^{\lr}) \beta^*\|_2^2+2\langle\bhA(\hbeta-\beta^*),\phi^{\lr} \rangle+ 2 \langle \bhA (\hbeta - \beta^*), \varepsilon\rangle.$ %
By Cauchy-Schwarz and triangle inequalities, 
$
\langle\bhA(\hbeta-\beta^*),\phi^{\lr} \rangle
\leq (\|\bhA\hbeta-\bA^{\lr}\beta^*\|_2+\|\bhA\beta^*-\bA^{\lr}\beta^*\|_2)\cdot\|\phi^{\lr}\|_2.
$

In summary, for $a=\| \bhA \hbeta -\bA^{\lr} \beta^* \|^2_2$, $b=\|\bhA\beta^*-\bA^{\lr}\beta^*\|^2_2$, and $c=b+2\sqrt{b}\|\phi^{\lr}\|_2+2\langle \bhA (\hbeta - \beta^*), \varepsilon\rangle$, we have shown
$
a
\leq 2\sqrt{a}\|\phi^{\lr}\|_2+c$, which we now analyze. Since $a\geq 0$, there are three possible cases: (i) $c\geq 0$, $2\sqrt{a} \|\phi^{\lr}\|_2\geq c$, so $
    a \leq 4\sqrt{a} \|\phi^{\lr}\|_2$ implies $a\leq 16 \|\phi^{\lr}\|_2^2
    $; (ii) $c\geq 0$, $2\sqrt{a} \|\phi^{\lr}\|_2<c$, so $
    a\leq 2c
    $; (iii) $c < 0$, so $
    a < 2\sqrt{a} \|\phi^{\lr}\|_2$ implies $a<  4\|\phi^{\lr}\|_2^2.
    $
The three cases imply 
$
a \leq 2c \vee 16 \|\phi^{\lr}\|_2^2.
$

Finally, let $d \coloneqq 2b + 4\sqrt{b}\|\phi^{\lr} \|_2 + 2 \|\phi^{\lr}\|_2^2$.
Then
$
d
= 2\{\sqrt{b}+\|\phi^{\lr}\|_2\}^2 \le 4\{b+\|\phi^{\lr}\|^2_2\}.
$
Note $2c\leq  d + 4\langle \bhA (\hbeta - \beta^*), \varepsilon\rangle$.
Together with the earlier results, this implies
$
a \le C \left\{b \ \vee \ \|\phi^{\lr}\|^2_2 \ \vee \ \langle \bhA (\hbeta - \beta^*), \varepsilon\rangle \right\}. 
$
Finally note $b \le \|\bhA- \bA^{\lr}\|^2_{2, \infty} \|\beta^*\|^2_1$.
\end{proof}

\begin{proposition}[Projected \tr\err]\label{prop:param_est_mod}
Suppose conditions of Theorem \ref{theorem:cov} hold.
Further suppose Assumptions~\ref{assumption:noise} and~\ref{assumption:inclusion_row} hold. Let $k=r$ and
$
\rho_{\min} \gg \tilde{C}\sqrt{r}\ln^{\frac{3}{2}}(np)\left(\frac{1}{\sqrt{p}}\vee\frac{1}{\sqrt{n}} \vee  \Delta_E \right)$.
Then with probability at least $1-O\{(np)^{-10}\}$, $\| \bhV_r\bhV_r^T(\hbeta - \beta^*) \|^2_2$ is bounded by
{ \small
\begin{align*}
&
C \bar{A}^4 (K_a+\bar{K})^2(\kappa + \bar{K} + K_a)^2 \frac{\bar{\sigma}^2}{\rho^4_{\min}}\cdot 
r\ln^6(np) \cdot 
\left\{ 
 \ \frac{1}{np}\|\phi^{\lr}\|^2_2 + \|\beta^*\|^2_1 \left( \frac{\sqrt{n}}{\|\beta^*\|_1 np} +
        \frac{r}{np} + 
        \frac{r}{p^2}+ \frac{r}{p} \Delta_E^2\right)
\right\}.
\end{align*}
}
\end{proposition}

\begin{proof}
    We show that for any $k$, $\|\bhV_{k} \bhV_{k}^T(\hbeta-\beta^*)\|_2^2 
   \leq 
    \frac{C}{\hat{s}_k^2} 
\Big\{
\|\bhA - \bA^{\lr}\|^2_{2, \infty} \|\beta^*\|^2_1 \ \vee \ \|\phi^{\lr}\|^2_2 \ \vee \ \langle \bhA (\hbeta - \beta^*), \varepsilon\rangle
\Big\}$. Appealing to $\p(\tilde{\mathcal{E}}^c)\leq \frac{C}{n^{10}p^{10}}$ yields the result. Since $\bhV_k$ is an isometry, $\| \bhV_k \bhV_k^T (\hbeta - \beta^*)  \|_2^2  
= \|\bhV_k^T (\hbeta - \beta^*)  \|_2^2$. Therefore $\| \bhA(\hbeta - \beta^*)\|_2^2 
=  (\hbeta - \beta^*)^T \bhV_k \bhS_k^2 \bhV_k^T (\hbeta - \beta^*)
 \geq \hat{s}_k^2 \| \bhV_k^T (\hbeta - \beta^*) \|_2^2.
$ Next, consider $\| \bhA(\hbeta - \beta^*)\|_2^2 
\leq 2\| \bhA\hbeta -\bA^{\lr} \beta^* \|_2^2 + 2 \| \bA^{\lr} - \bhA \|_{2,\infty}^2 \|\beta^*\|_1^2.$ Combining, 
$
\| \bhV_k \bhV_k^T (\hbeta - \beta^*)  \|_2^2  \leq \frac{2}{\hat{s}_k^2} \Big\{\| \bhA \hbeta -\bA^{\lr} \beta^* \|_2^2 +  \| \bA^{\lr} - \bhA \|_{2,\infty}^2 \|\beta^*\|_1^2\Big\}.
$ Bound $
\| \bhA \hbeta -\bA^{\lr} \beta^* \|^2_2 
$ by Lemma~\ref{lem:training_error_intermediate_bound}.
\end{proof}

\begin{proposition}[\tr\err]\label{prop:param_est}
Suppose conditions of Proposition~\ref{prop:param_est_mod} hold. Then with probability at least $1-O\{(np)^{-10}\}$, $\| \hbeta - \beta^* \|^2_2$ is bounded by
{ \small
\begin{align*}
C \bar{A}^4 (K_a+\bar{K})^2(\kappa + \bar{K} + K_a)^2 \frac{\bar{\sigma}^2}{\rho^4_{\min}}\cdot 
r\ln^6(np) \cdot 
\Big\{ 
 \ \frac{1}{np}\|\phi^{\lr}\|^2_2 + \|\beta^*\|^2_2 \left(
        \frac{r}{n} + 
        \frac{r}{p}+ r \Delta_E^2\right)
\Big\}.
\end{align*}
}
\end{proposition}

\begin{proof}
  We show
$\| \hbeta - \beta^* \|^2_2 
\le 
C\Big[
\| \bV\bV^T - \bhV_{k} \bhV_{k}^T \|^2 \| \beta^* \|^2_2 
+ \frac{1}{\hat{s}_k^2} 
\Big\{
\|\bhA - \bA^{\lr}\|^2_{2, \infty} \|\beta^*\|^2_1 \ \vee \ \|\phi^{\lr}\|^2_2 \ \vee \ \langle \bhA (\hbeta - \beta^*), \varepsilon\rangle
\Big\}
\Big].$ Appealing to $\p(\tilde{\mathcal{E}}^c)\leq \frac{C}{n^{10}p^{10}}$ yields the result. 
Write $\| \hbeta - \beta^*\|_2^2 =\| \bhV_k \bhV_k^T (\hbeta - \beta^*)  \|_2^2 + \| \bhV_{k, \perp}\bhV_{k, \perp}^T (\hbeta - \beta^*)\|_2^2$. Proposition~\ref{prop:param_est_mod} bounds the former. By Lemma~\ref{lem:perp}, the latter equals $ \| \bhV_{k, \perp}\bhV_{k, \perp}^T\beta^*\|_2^2=\| (\bhV_{k, \perp}\bhV_{k, \perp}^T \beta^* - \bV_{\perp}\bV_{\perp}^T) \beta^*\|_2^2$, which we bound by $\| \bhV_{k, \perp}\bhV_{k, \perp}^T  - \bV_{\perp}\bV_{\perp}^T\|^2 \|\beta^*\|_2^2=\| \bV\bV^T  - \bhV_{k}\bhV_{k}^T\|^2 \|\beta^*\|_2^2$.
\end{proof}

\begin{proposition}[\te\err]\label{prop:gen_test}
Let the conditions of Theorem \ref{theorem:fast_rate} hold. Then $\E[\| \bhA^{\te} \hbeta - \bA^{\te} \beta^*\|_2^2 \1\{\tilde{\mathcal{E}}\}]$ is bounded by
{ \small
\begin{align*}
 C_1C_2
    \cdot \bar{\sigma}^2 \cdot \frac{r^3\ln^{8}(np)}{\rho_{\min}^6} \| \beta^* \|^2_1
    \left\{
    1+\frac{p}{n}+\frac{n}{p}+(n+p)\Delta_E^2+np\Delta_E^4\right\}
   +C_2
    \cdot \frac{r^2\ln^3(np)}{\rho_{\min}^2}
    \left(
   1
    +\Delta^{2}_E
    \right) \|\phi^{\tr}\|_2^2.
\end{align*}
}
\end{proposition}

\begin{proof}
     We show $
\| \bhA^{\te} \hbeta - \bA^{\te} \beta^*\|_2^2 \leq C\sum_{m=1}^3 \Delta_m
$ where $\Delta_1:=\big\{\|\bZ^{\te} \bhrho^{-1} -  \bA^{\lr,\te}\|^2+\|\bA^{\lr,\te}\|^2 \| \bV \bV^T -\bhV_r \bhV_r^T\|^2\big\} \| \hbeta - \beta^* \|_2^2 $, $\Delta_2:=\frac{\|\bA^{\lr,\te}\|^2}{\hat{s}_r^2} 
\Big\{
\|\bhA^{\tr} - \bA^{\lr,\tr}\|^2_{2, \infty} \|\beta^*\|^2_1 \ \vee \ \|\phi^{\lr,\tr}\|^2_2 \ \vee \ \langle \bhA^{\tr} (\hbeta - \beta^*), \varepsilon\rangle
\Big\}$, and $\Delta_3:=\|\bhA^{\te} - \bA^{\te}\|_{2,\infty}^2 \|\beta^*\|_1^2$. Appealing to $\tilde{\mathcal{E}}$ yields the result. To begin, write $\| \bhA^{\te} \hbeta - \bA^{\te} \beta^*\|_2^2 \leq 2 \| \bhA^{\te} \big( \hbeta - \beta^* \big) \|_2^2 + 2 \| (\bhA^{\te} - \bA^{\te}) \beta^*\|_2^2.$ We bound the latter term by matrix H\"older:
$
\| (\bhA^{\te} - \bA^{\te}) \beta^*\|_2^2 \leq \|\bhA^{\te} - \bA^{\te}\|_{2,\infty}^2 \|\beta^*\|_1^2
$. In what remains, we analyze the former term using $\| \bhA^{\te} \big( \hbeta - \beta^* \big)  \|_2^2 
 \leq 2 \|\big\{\bhA^{\te} -  \bA^{\lr,\te}\} \big( \hbeta - \beta^* \big)  \|_2^2 + 2 \|\bA^{\lr,\te} \big( \hbeta - \beta^* \big) \|_2^2. $
 
 By Weyl's inequality, we have
$
\| \bhA^{\te} - \bZ^{\te} \bhrho^{-1} \|  =\hat{s}'_{r+1} =\hat{s}'_{r+1}-s'_{r+1} \leq \|\bZ^{\te} \bhrho^{-1} -  \bA^{\lr,\te}\|.
$
In turn, this gives $\| \bhA^{\te} -  \bA^{\lr,\te}\| \leq 2 \|\bZ^{\te} \bhrho^{-1} -  \bA^{\lr,\te}\|$ and hence 
$\|\big\{\bhA^{\te} -  \bA^{\lr,\te}\} \big( \hbeta - \beta^* \big)  \|_2^2  \le 4\|\bZ^{\te} \bhrho^{-1} -  \bA^{\lr,\te}\|^2\cdot \| \hbeta - \beta^* \|_2^2.
$

Assumption \ref{assumption:inclusion_row} implies
$(\bVp)^T \bV_{\perp} = 0$ and hence $\bA^{\lr,\te} \bV_{\perp}\bV_{\perp}^T = 0$. As a result, 
\begin{align*}
&\|\bA^{\lr,\te} \big( \hbeta - \beta^* \big) \|_2^2 = \| \bA^{\lr,\te} (\bV \bV^T + \bV_{\perp}\bV_{\perp}^T) \big( \hbeta - \beta^* \big) \|_2^2  \\
& = \| \bA^{\lr,\te} \bV \bV^T \big( \hbeta - \beta^* \big) \|_2^2 \leq \| \bA^{\lr,\te}\|^2 ~ \| \bV \bV^T \big( \hbeta - \beta^* \big) \|_2^2
\end{align*}
where $\| \bV \bV^T \big( \hbeta - \beta^* \big) \|_2^2 \leq 2 \| \bV \bV^T -\bhV_r \bhV_r^T\|^2 \|\hbeta - \beta^* \|_2^2 + 2 \| \bhV_r \bhV_r^T\big( \hbeta - \beta^* \big) \|_2^2.$ Finally appeal to the proof of Proposition~\ref{prop:param_est_mod}.
\end{proof}

\begin{proposition}[Implicit cleaning]\label{prop:gen_geom}
Let the conditions of Theorem \ref{theorem:fast_rate} hold. Then $\E[\| \bZ^{\te} \bhrho^{-1}\hbeta-\bhA^{\te} \hbeta\|_2^2  \1\{\tilde{\mathcal{E}}\}]$ has the same bound as Proposition~\ref{prop:gen_test}.
\end{proposition}

\begin{proof}
   We show $\| \bZ^{\te} \bhrho^{-1}\hbeta-\bhA^{\te} \hbeta\|_2^2 \leq C\|\bZ^{\te} \bhrho^{-1} -  \bA^{\lr,\te}\|^2\cdot \big\{\|\hbeta-\beta^*\|_2^2
    + 
    \|\bhV'_{r}(\bhV'_{r})^T -\bV'(\bV')^T\|^2 \|\beta^*\|^2_2
    \big\}.$ Appealing to $\tilde{\mathcal{E}}$ yields the result.  Using the definitions
$
\bZ^{\te} \bhrho^{-1} =\bhU'\bhS'(\bhV')^T
$, $\bhA^{\te}=\bhU'_r\bhS'_r(\bhV'_r)^T$, and $
\bhA^{\te}_{\perp}=\bhU'_{r,\perp}\bhS'_{r,\perp}(\bhV'_{r,\perp})^T 
$, write $
\bZ^{\te} \bhrho^{-1}=\quad \bhA^{\te}+\bhA^{\te}_{\perp}
$. Therefore $
 \| \bZ^{\te} \bhrho^{-1}\hbeta-\bhA^{\te} \hbeta\|_2 \leq \|\bhU'_{r,\perp}\|\cdot  \|\bhS'_{r,\perp}\|\cdot \|(\bhV'_{r,\perp})^T \hbeta\|_2=  \|\bhS'_{r,\perp}\|\cdot \|(\bhV'_{r,\perp})^T \hbeta\|_2.
$ By Weyl's inequality, $
\|\bhS'_{r,\perp}\| =\hat{s}'_{r+1} =\hat{s}'_{r+1}-s'_{r+1}\leq \|\bZ^{\te} \bhrho^{-1} -  \bA^{\lr,\te}\|.
$ Moreover, $ \|(\bhV'_{r,\perp})^T \hbeta\|_2
        =  \|\bhV'_{r,\perp}(\bhV'_{r,\perp})^T \hbeta\|_2 \leq \|\bhV'_{r,\perp}(\bhV'_{r,\perp})^T (\hbeta-\beta^*)\|_2+ \|\bhV'_{r,\perp}(\bhV'_{r,\perp})^T \beta^*\|_2.$  Focusing on the former term,
    $
    \|\bhV'_{r,\perp}(\bhV'_{r,\perp})^T (\hbeta-\beta^*)\|_2\leq \|\hbeta-\beta^*\|_2.
    $ By Lemma~\ref{lem:perp}, the latter term equals
    { \small 
    \begin{align*}
        &\|\bhV'_{r,\perp}(\bhV'_{r,\perp})^T \bV'(\bV')^T \beta^*\|_2 
        \leq \|\{\bhV'_{r,\perp}(\bhV'_{r,\perp})^T -\bV'_{\perp}(\bV'_{\perp})^T\}\bV'(\bV')^T \beta^*\|_2
        + \|\bV'_{\perp}(\bV'_{\perp})^T \bV'(\bV')^T \beta^*\|_2 \\
        &=\|\{\bhV'_{r,\perp}(\bhV'_{r,\perp})^T -\bV'_{\perp}(\bV'_{\perp})^T\}\bV'(\bV')^T \beta^*\|_2 =\|\{\bhV'_{r,\perp}(\bhV'_{r,\perp})^T -\bV'_{\perp}(\bV'_{\perp})^T\}\beta^*\|_2 \\
        &= \|\{\bhV'_{r}(\bhV'_{r})^T -\bV'(\bV')^T\}\beta^*\|_2 \leq \|\bhV'_{r}(\bhV'_{r})^T -\bV'(\bV')^T\| \|\beta^*\|_2,
    \end{align*}
    }
    which is dominated by the former term by the proof of Proposition~\ref{prop:param_est}.
\end{proof}

\begin{proof}[Proof of Theorem~\ref{theorem:fast_rate}]
To begin, write
$
    \E\| \hat{\bgamma} - \bgamma_0\|_2^2 
    \leq 2 \E\left[\| \bZ^{\te} \bhrho^{-1}\hbeta-\bA^{\te} \beta^*\|_2^2\1\{\tilde{\mathcal{E}}\}\right]
     +2 \E\left[\| \bZ^{\te} \bhrho^{-1}\hbeta-\bA^{\te} \beta^*\|_2^2\1\{\tilde{\mathcal{E}}^c\}\right]
   +2\|\phi^{\te}\|_2^2.
$ Consider the first term. Using the bound
\begin{multline*}
\E\left[\| \bZ^{\te} \bhrho^{-1}\hbeta-\bA^{\te} \beta^*\|_2^2\1\{\tilde{\mathcal{E}}\}\right]
        \leq 2\E\left[\| \bZ^{\te} \bhrho^{-1}\hbeta-\bhA^{\te} \hbeta\|_2^2\1\{\tilde{\mathcal{E}}\}\right] \\
+ 2 \E\left[\| \bhA^{\te} \hbeta-\bA^{\te} \beta^*\|_2^2\1\{\tilde{\mathcal{E}}\}\right],
\end{multline*}
we may appeal to Propositions~\ref{prop:gen_test} and~\ref{prop:gen_geom}. Consider the second term. Write \\
$
\E\left[\| \bZ^{\te} \bhrho^{-1}\hbeta-\bA^{\te} \beta^*\|_2^2\1\{\tilde{\mathcal{E}}^c\}\right]
       \leq 2 \E\left[\| \bZ^{\te} \bhrho^{-1}\hbeta\|_2^2\1\{\tilde{\mathcal{E}}^c\}\right]+2 \E\left[\| \bA^{\te} \beta^*\|_2^2\1\{\tilde{\mathcal{E}}^c\}\right].
$
Since $
\| \bA^{\te} \beta^*\|_2^2 \leq  \| \bA^{\te}\|^2_{2,\infty} \|\beta^*\|^2_1\leq n\bar{A}^2\|\beta^*\|^2_1
$, we have that $\E\left[\| \bA^{\te} \beta^*\|_2^2\1\{\tilde{\mathcal{E}}^c\}\right] \leq n\bar{A}^2\|\beta^*\|^2_1\p(\tilde{\mathcal{E}}^c)$. By Cauchy-Schwarz inequality,  
$$
\E\big[\| \bZ^{\te} \bhrho^{-1}\hbeta\|_2^2\1\{\tilde{\mathcal{E}}^c\}\big] \leq \sqrt{\E\big[\| \bZ^{\te} \bhrho^{-1}\hbeta\|_2^4\big]}\sqrt{\p(\tilde{\mathcal{E}}^c)}.
$$
These expressions are dominated by the bound from Proposition~\ref{prop:gen_test} under Assumption~\ref{assumption:bounded_response_mod}.
\end{proof}
\section{Error-in-variable balancing weight}\label{sec:riesz}

As before, we write the proofs without nonlinear dictionaries for clarity.  

We define $\eta^*\in\R^p$ as the unique solution to the following optimization problem across $\tr$ and $\te$:
$
\min_{\eta\in \R^p} \|\eta\|_2$ such that $\eta\in \argmin \left\|\begin{bmatrix}\alpha_0(\bW^{\tr}) \\ \alpha_0(\bW^{\te})\end{bmatrix}-\begin{bmatrix}\bA^{\lr,\tr} \\ \bA^{\lr,\te}\end{bmatrix}
\eta\right\|_2^2.
$
$\eta^*$ is not the quantity of interest, but rather a theoretical device. It defines the unique, minimal-norm, low-rank, linear approximation to the balancing weight $\alpha_0$. 

$\hat{M}$ is the counterfactual moment. Write $\bhG=\frac{1}{n}(\bhA^{\tr})^T\bhA^{\tr}$ as the covariance matrix after data cleaning. In this notation, $\bhG \heta=\hat{M}^T$, and these feasible objects are computed from \tr. We analogously define $M^*$ and $\bG^*$ using the low rank approximation to the signal. In this notation, $\bG^*\eta^*=(M^*)^T$, and these infeasible objects are defined from \tr\, and \te. See the Online Appendix for a more formal statement. Finally, let $\Delta_{RR}:=n\cdot \left\{\|\hat{M}^T-(M^*)^T\|_{\max}+\|\bG^*-\bhG\|_{\max}\|\eta^*\|_1\right\}.$

Define the event: $
    \tilde{\Ec}_5 \coloneqq \left\{\Delta_{RR} \leq \tilde{\Delta}_5 \right\}$ where  $\tilde{\Delta}_5 \coloneqq  C\bar{A}^{5}(\sqrt{C_m'}+C_m''+\bar{\alpha}+\bar{A}) \frac{(K_a + \bar{K})^2(\kappa + \bar{K} + K_a)^2}{\rho^4_{\min}} r\cdot \ln^{5}(np)
    \cdot  n \|\eta^*\|_1\left(  \frac{1}{n}+\frac{1}{p}+\frac{n}{p^2}+\Delta^2_E+n\Delta_E^4
\right)^{\frac{1}{2}}.$ Set $\tilde{\Ec} \coloneqq \cap^5_{k=1} \tilde{\Ec}_k$ where the remaining events are defined in Appendix~\ref{sec:regression}. The Online Appendix shows that if the conditions of Theorem~\ref{theorem:fast_rate_RR} hold, then $
 \p(\tilde{\Ec}^c)\leq  \frac{C}{n^{10}p^{10}}
 $.

\begin{lemma}\label{lem:perp_RR}
If Assumptions~\ref{assumption:inclusion_row} and~\ref{assumption:inclusion_M} hold, 
$
 \bhV_{k,\perp}^T \heta = 0$ and $ \bV_{\perp}^T \eta^* = (\bV_{\perp}')^T \eta^* = 0
$.
\end{lemma}

\begin{proof}
    For the former result, $\heta$ is the unique solution to the program
    $
   \min_{\eta\in\R^p} \|\eta\|_2$ such that $\eta \in \argmin -2\hat{M} \eta+\eta^T \bhG \eta
    $
    where $\hat{M} \in \row(\bhA^{\tr})$ by Assumption~\ref{assumption:inclusion_M} and $\row(\bhG)=\row\{(\bhA^{\tr})^T\bhA^{\tr}\}=\row(\bhA^{\tr})$. Therefore $\heta\in \row(\bhA^{\tr})$, so we can appeal to the same reasoning as Lemma~\ref{lem:perp}. The latter result is similar.
\end{proof}

\begin{lemma}\label{lem:training_error_intermediate_bound_RR}
 $
\|\bhA \heta-\bA^{\lr}\eta^*\|_2^2\leq C\big\{\|\bhV_k\bhV_k^T (\heta-\eta^*)\|_1 \cdot \Delta_{RR}  \vee \|\bhA-\bA^{\lr}\|^2_{2,\infty} \|\eta^*\|_1^2 \big\}$.
\end{lemma}

\begin{proof}
To begin, write $
    \|\bhA\heta-\bA^{\lr}\eta^*\|_2^2 \leq 
    2\|\bhA(\heta-\eta^*)\|_2^2+ 2\|(\bhA-\bA^{\lr})\eta^*\|_2^2.
    $ We bound the latter term by $\|\bhA-\bA^{\lr}\|^2_{2,\infty} \|\eta^*\|_1^2.
    $ Hereafter, we focus on the former term:
    \begin{align*}
         &\frac{1}{n}\|\bhA(\heta-\eta^*)\|_2^2
         =\frac{1}{n}(\heta-\eta^*)^T\bhA^T \bhA (\heta-\eta^*)
         = (\heta-\eta^*)^T\bhV_k\bhV_k^T\bhG (\heta-\eta^*)
    \end{align*}
    which is bounded by $\|\bhV_k\bhV_k^T(\heta-\eta^*)\|_1\cdot \|\bhG (\heta-\eta^*)\|_{\max}.$
    Using the first order conditions, 
    \begin{align*}
        &\|\bhG (\heta-\eta^*)\|_{\max} 
        \leq \|\bhG \heta-\hat{M}^T\|_{\max}+\|\hat{M}^T-(M^*)^T\|_{\max} +\|(M^*)^T-\bG^*\eta^*\|_{\max}+\|\bG^*\eta^*-\bhG\eta^*\|_{\max} \\
        &= 0+\|\hat{M}^T-(M^*)^T\|_{\max}+0+ \|\bG^*-\bhG\|_{\max}\|\eta^*\|_1.\qedhere
    \end{align*}
\end{proof}

\begin{proposition}[Projected \tr\err]\label{prop:param_est_RR2_mod}
Suppose the conditions of Theorem \ref{theorem:cov} hold.
Further suppose Assumptions~\ref{assumption:inclusion_row},~\ref{assumption:inclusion_M},~\ref{assumption:invariance_mod}, and~\ref{assumption:mean_square_cont_mod} hold, and $\|\alpha_0\|_{\infty}\leq \bar{\alpha}$. Let $k=r$ and
$
\rho_{\min} \gg \tilde{C}\sqrt{r}\ln^{\frac{3}{2}}(np)\left(\frac{1}{\sqrt{p}}\vee\frac{1}{\sqrt{n}} \vee  \Delta_E \right)$, where $\tilde{C}:= C  \bar{A} \Big(\kappa + \bar{K} + K_a \Big).
$
Then with probability at least $1-O\{(np)^{-10}\}$, $\| \bhV_r\bhV_r^T(\heta - \eta^*) \|^2_2$ is bounded by
{ \small
\begin{align*}
&C \bar{A}^{10} (\sqrt{C_m'}+C_m''+\bar{\alpha}+\bar{A})^2(K_a + \bar{K})^4(\kappa + \bar{K} + K_a)^4 \cdot r^4\cdot \ln^{10}(np)\cdot  \frac{\|\eta^*\|^2_1}{\rho^8_{\min}} 
    \left( \frac{1}{np}+\frac{1}{p^2}+\frac{n}{p^3}+\frac{1}{p}\Delta^2_E+\frac{n}{p}\Delta_E^4
\right).
\end{align*}
}
\end{proposition}

\begin{proof}
  We show that for any $k$, $ \|\bhV_{k} \bhV_{k}^T(\heta-\eta^*)\|_2^2 
   \leq 
    C\left\{\frac{1}{\hat{s}_k^2} \|\bhA - \bA^{\lr}\|^2_{2, \infty} \|\eta^*\|^2_1+
\frac{1}{\hat{s}_k^4} p\cdot \Delta^2_{RR}
\right\}.$ Appealing to $\p(\tilde{\mathcal{E}}^c)\leq \frac{C}{n^{10}p^{10}}$ yields the result. As in the proof of Proposition~\ref{prop:param_est_mod}, $
   \| \bhV_k \bhV_k^T (\heta - \eta^*)  \|_2^2\leq  \frac{2}{\hat{s}_k^2} \Big\{\| \bhA \heta -\bA^{\lr} \eta^* \|_2^2 +  \| \bA^{\lr} - \bhA \|_{2,\infty}^2 \|\eta^*\|_1^2\Big\}.
   $
Using Lemma~\ref{lem:training_error_intermediate_bound_RR}, we conclude that
$
\| \bhV_k \bhV_k^T (\heta - \eta^*)  \|_2^2 
\leq 
\frac{C}{\hat{s}_k^2} 
\Big\{
\|\bhV_k\bhV_k^T (\heta-\eta^*)\|_1\cdot \Delta_{RR} \vee \| \bA^{\lr} - \bhA \|_{2,\infty}^2 \|\eta^*\|_1^2
\Big\}.
$
There are two cases. In the first case, $ \|\bhV_k \bhV_k^T(\heta-\eta^*)\|_2^2
   \leq C \frac{1}{\hat{s}_k^2} 
\|\bhA - \bA^{\lr}\|^2_{2, \infty} \|\eta^*\|^2_1$, giving the first term. In the second case, $ \|\bhV_k \bhV_k^T(\heta-\eta^*)\|_2^2
   \leq C \frac{1}{\hat{s}_k^2} \|\bhV_k\bhV_k^T (\heta-\eta^*)\|_1\cdot \Delta_{RR}$. Bounding $\|\bhV_k\bhV_k^T (\heta-\eta^*)\|_1 \leq \sqrt{p}\|\bhV_k\bhV_k^T (\heta-\eta^*)\|_2$, dividing both sides by $\|\bhV_k \bhV_k^T(\heta-\eta^*)\|_2$, and squaring gives the second term.
\end{proof}

\begin{proposition}[\tr\err]\label{prop:param_est_RR2}
Suppose the conditions of Proposition~\ref{prop:param_est_RR2_mod} hold.
Then with probability at least $1-O\{(np)^{-10}\}$, $\| \heta - \eta^* \|^2_2$ is bounded by
{ \small
\begin{align*}
    &
C \bar{A}^{10} (\sqrt{C_m'}+C_m''+\bar{\alpha}+\bar{A})^2(K_a + \bar{K})^4(\kappa + \bar{K} + K_a)^4 \cdot r^4\cdot \ln^{10}(np)\cdot  \frac{\|\eta^*\|^2_2}{\rho^8_{\min}} 
    \left( \frac{1}{n}+\frac{1}{p}+\frac{n}{p^2}+\Delta^2_E+n\Delta_E^4
\right).
\end{align*}
}
\end{proposition}

\begin{proof}
     We show $\|\bhV_{k} \bhV_{k}^T(\heta-\eta^*)\|_2^2 
   \leq 
    C\left\{\frac{1}{\hat{s}_k^2} \|\bhA - \bA^{\lr}\|^2_{2, \infty} \|\eta^*\|^2_1+
\frac{1}{\hat{s}_k^4} p\cdot \Delta^2_{RR}
\right\}.$ Appealing to $\p(\tilde{\mathcal{E}}^c)\leq \frac{C}{n^{10}p^{10}}$ yields the result. The argument is similar to Proposition~\ref{prop:param_est}, replacing Lemma~\ref{lem:perp} with Lemma~\ref{lem:perp_RR} and Proposition~\ref{prop:param_est_mod} with Proposition~\ref{prop:param_est_RR2_mod}.
\end{proof}

\begin{proposition}[\te\err]\label{prop:gen_test_RR}
Let the conditions of Theorem \ref{theorem:fast_rate_RR} hold. Then $\E[\| \bhA^{\te} \heta - \bA^{\te} \eta^*\|_2^2 \1\{\tilde{\Ec}\}]$ is bounded by
{ \small
\begin{align*}
C_3 \cdot \frac{r^5\ln^{13}(np)}{\rho^{10}_{\min}}  \cdot  \|\eta^*\|^2_1
    \left\{  1+\frac{p}{n}+\frac{n}{p}+\frac{n^2}{p^2}+\left(n+p+\frac{n^2}{p}\right)\Delta^{2}_E+(np+n^2)\Delta_E^4+n^2p \Delta^{6}_E
\right\}.
\end{align*}
}
\end{proposition}

\begin{proof}
    The proof is similar to Proposition~\ref{prop:gen_test}, updating $\Delta_2=\frac{\|\bA^{\lr,\te}\|^2}{\hat{s}_r^2} 
\Big\{ 
\|\bhV_k\bhV_k^T (\heta-\eta^*)\|_1\cdot \Delta_{RR} \vee \| \bA^{\lr,\tr} - \bhA^{\tr} \|_{2,\infty}^2 \|\eta^*\|_1^2
\Big\}$. In particular, we bound $\| \bhV_r \bhV_r^T\big( \heta - \eta^* \big) \|_2^2$ using Proposition~\ref{prop:param_est_RR2_mod} instead of Proposition~\ref{prop:param_est_mod}.
\end{proof}

\begin{proposition}[Implicit cleaning]\label{prop:gen_geom_RR}
Let the conditions of Theorem \ref{theorem:fast_rate_RR} hold. Then $\E[\| \bZ^{\te} \bhrho^{-1}\heta-\bhA^{\te} \heta \|_2^2 \1\{\tilde{\Ec}\}]$ has the same bound as Proposition~\ref{prop:gen_test_RR}.
\end{proposition}

\begin{proof}
    The proof is analogous to Proposition~\ref{prop:gen_geom}.
\end{proof}

\begin{proof}[Proof of Theorem~\ref{theorem:fast_rate_RR}]
The proof is analogous to Theorem~\ref{theorem:fast_rate}, instead appealing to Propositions~\ref{prop:gen_test_RR} and~\ref{prop:gen_geom_RR}.
\end{proof}
\section{Data cleaning-adjusted confidence intervals}\label{sec:target}

Let
$
\psi_i=\psi(W_{i,\cdot},\theta_i,\gamma_0,\alpha_0) $ where $ \psi(w,\theta,\gamma,\alpha)=m(w,\gamma)+\alpha(w)\{y-\gamma(w)\}-\theta
$
and $\gamma\mapsto m(w,\gamma)$ is linear.
We take as given that $(\gamma_0,\alpha_0)$ exist, though the latter is implied by Assumption~\ref{assumption:mean_square_cont_mod}. The Gateaux derivative of $\psi(w,\theta,\gamma,\alpha)$ with respect to its argument $\gamma$ in the direction $u$ is
$
\{\partial_{\gamma} \psi(w,\theta,\gamma,\alpha)\}(u)=\frac{\partial}{\partial \tau} \psi(w,\theta,\gamma+\tau u,\alpha) \bigg|_{\tau=0}.
$

Let $L$ be the number of folds, with fold $\ell$ indexed by $I_{\ell}$. Train $(\hgamma_{\ell},\halpha_{\ell})$ on observations in $I_{\ell}^c$, which serves as \tr. Let $m=|I_{\ell}|=n/L$ be the number of observations in $I_{\ell}$, which serves as \te. Denote by $\E_{\ell}[\cdot]$ the average over observations in $I_{\ell}$. This generalized notation allows us to reverse the roles of $\tr$\, and $\te$, and to allow for more than two folds. The target and oracle are
$
 \hat{\theta}=L^{-1}\sum_{\ell=1}^L\E_{\ell} [m(W_{i,\cdot},\hgamma_{\ell})+\halpha_{\ell}(W_{i,\cdot})\{Y_i-\hgamma_{\ell}(W_{i,\cdot})\}]$ and $
    \bar{\theta}=L^{-1}\sum_{\ell=1}^L\E_{\ell} [m(W_{i,\cdot},\gamma_0)+\alpha_0(W_{i,\cdot})\{Y_i-\gamma_0(W_{i,\cdot})\}]
$. For $i \in I_{\ell}$, let $\hat{\psi}_i=\psi(W_{i,\cdot},\hat{\theta},\hgamma_{\ell},\halpha_{\ell})$.

\begin{lemma}\label{lem:Delta_inid}
Suppose Assumptions~\ref{assumption:invariance_mod} and~\ref{assumption:mean_square_cont_mod} hold,
$
\E[\varepsilon_i^2|W_{i,\cdot}]\leq \bar{\sigma}^2$, $\|\alpha_0\|_{\infty}\leq\bar{\alpha}
$, and that for $(i,j)\in I_{\ell}$,
$
\hgamma_{\ell}(W_{i,\cdot})\indep \hgamma_{\ell}(W_{j,\cdot})|I^c_{\ell}$ and $\halpha_{\ell}(W_{i,\cdot})\indep \halpha_{\ell}(W_{j,\cdot})|I^c_{\ell}.
$
Then with probability $1-\epsilon$, 
$
\frac{n^{1/2}}{\sigma}|\hat{\theta}-\bar{\theta}|\leq \Delta= \frac{3 L}{\epsilon  \sigma}\left[(\bar{Q}^{1/2}+\bar{\alpha})\{\mathcal{R}(\hgamma_{\ell})\}^{\bar{q}/2}+\bar{\sigma}\{\mathcal{R}(\halpha_{\ell})\}^{1/2}+ \{n\mathcal{R}(\hgamma_{\ell})\mathcal{R}(\halpha_{\ell})\}^{1/2}\right].
$
\end{lemma}

\begin{proof}
We generalize \cite[Proposition S6]{chernozhukov2021simple} to the new norm, noting that Jensen's inequality and $\bar{q}\in(0,1]$ imply that
$
        \E_{\ell}[\E[m(W_{i,\cdot},u)^2|I^c_{\ell}]]
        \leq \bar{Q} \E_{\ell}[\{\E[u(W_i)^2|I^c_{\ell}]\}^{\bar{q}}] 
        \leq \bar{Q} \{\E_{\ell}[\E[u(W_i)^2|I^c_{\ell}]]\}^{\bar{q}} 
   $. 
   Moreover, using the shorthand $u_i=u(W_{i,\cdot})$ and $v_i=v(W_{i,\cdot})$,
    $
        \E[\E_{\ell}\{|u(W_{i,\cdot})v(W_{i,\cdot})|\}]
    =\frac{1}{m}\E[u^Tv]
    \leq \frac{1}{m}(\E[u^Tu])^{1/2}(\E[v^Tv])^{1/2} 
    =\left(\frac{1}{m}\E[\E_{\ell}[u_i^2]]\right)^{1/2}\left(\frac{1}{m}\E[\E_{\ell}[v_i^2]]\right)^{1/2}=\sqrt{\mathcal{R}(\hgamma_{\ell})}\sqrt{\mathcal{R}(\halpha_{\ell})}.
   $
\end{proof}

\begin{proof}[Proof of Theorem~\ref{thm:dml_inid}]
We generalize \cite[Theorem 1]{chernozhukov2021simple} using Lemma~\ref{lem:Delta_inid} and an i.n.i.d. Berry Esseen lemma for $
   \bar{\theta}-\theta_0=
   \E_n\psi_i$ \cite{shevtsova2010improvement}.
\end{proof}

\begin{lemma}\label{lem:Delta2_inid}
Suppose Assumptions~\ref{assumption:invariance_mod} and~\ref{assumption:mean_square_cont_mod} hold, 
$
\E[\varepsilon_i^2|W_{i,\cdot}]\leq \bar{\sigma}^2$, and $\|\halpha_{\ell}\|_{\infty}\leq\bar{\alpha}'
$. Then with probability $1-\epsilon'/2$, 
$
\E_n\{(\hat{\psi}_i-\psi_i+\theta_0-\theta_i)^2\}\leq \Delta'=4(\hat{\theta}-\theta_0)^2+\frac{24 L}{\epsilon'}\left[\{\bar{Q}+(\bar{\alpha}')^2\}\mathcal{R}(\hgamma_{\ell})^{\bar{q}}+\bar{\sigma}^2\mathcal{R}(\halpha_{\ell})\right].
$
\end{lemma}

\begin{proof}
We generalize \cite[Proposition S10]{chernozhukov2021simple} to the new norm, appealing to Jensen's inequality and $\bar{q}\in(0,1]$ as in the proof of Lemma~\ref{lem:Delta_inid}. 
\end{proof}

\begin{lemma}\label{lem:other_half_inid}
With probability $1-\epsilon'/2$,
$
|\E_n(\psi_i^2)-\sigma^2|\leq \Delta''=\left(\frac{2}{\epsilon'}\right)^{1/2}\frac{\chi^2}{n^{1/2}}.
$
\end{lemma}
\begin{proof}
Let
$
B_i=\psi_i^2 $ and $\bar{B}=\E_n [B_i]
$
so that
$
\E[\bar{B}]=
\frac{1}{n}\sum_{i=1}^n \E[B_i] = \frac{1}{n}\sum_{i=1}^n \sigma_i^2=
\sigma^2$ and $
    \V[\bar{B}]=\frac{\sum_{i=1}^n \V(B_i)}{n^2}\leq
\frac{\sum_{i=1}^n \E[B_i^2]}{n^2}=\frac{\sum_{i=1}^n \chi^4_i}{n^2}= \frac{\chi^4}{n}.
$
By Markov inequality
$
\p(|\bar{B}-\E[\bar{B}]|>t)\leq \frac{\V[\bar{B}]}{t^2}=\frac{\epsilon'}{2}.
$
Solving gives
$
t=\Delta''.
$
\end{proof}

\begin{proof}[Proof of Theorem~\ref{thm:var_inid}]
    To begin, write
   $
        \hat{\sigma}^2-(\sigma^2+\bias)
        =\{\hat{\sigma}^2-\E_n(\psi_i^2)-\bias\}+\{\E_n(\psi_i^2)-\sigma^2\} 
        \leq \{\hat{\sigma}^2-\E_n(\psi_i^2)-\bias\}+\Delta''
  $
where the inequality holds with probability $1-\epsilon'/2$ by Lemma~\ref{lem:other_half_inid}. In what follows, we focus on the former term. In particular, we write
$
        \hat{\sigma}^2=\E_n\{(\hat{\psi}_i-\psi_i)^2\}+2\E_n\{(\hat{\psi}_i-\psi_i)\psi_i\}+\E_n(\psi_i^2),
  $ then solve for $\bias=\bias_1+\bias_2$ as a function of $\Delta_{\out}$ in the decomposition
$
\hat{\sigma}^2-\E_n(\psi_i^2)-\bias=\E_n\{(\hat{\psi}_i-\psi_i)^2\}-\bias_1+2\E_n\{(\hat{\psi}_i-\psi_i)\psi_i\}-\bias_2.
$
To derive $\bias_1$, open the square and write
{ \small
\begin{align*}
    &\E_n\{(\hat{\psi}_i-\psi_i)^2\} 
    =\E_n\{(\hat{\psi}_i-\psi_i+\theta_0-\theta_i)^2\}+\E_n\{(\theta_i-\theta_0)^2\}+2\E_n\{(\hat{\psi}_i-\psi_i+\theta_0-\theta_i)(\theta_i-\theta_0)\} \\
    &\leq \E_n\{(\hat{\psi}_i-\psi_i+\theta_0-\theta_i)^2\}+\E_n\{(\theta_i-\theta_0)^2\}+2[\E_n\{(\hat{\psi}_i-\psi_i+\theta_0-\theta_i)^2\}]^{1/2}
    [\E_n\{(\theta_i-\theta_0)^2\}]^{1/2}\\
    &\leq \Delta'+\Delta_{\out}+2(\Delta')^{1/2}\Delta_{\out}^{1/2}
\end{align*}
}
where the last line holds with probability $1-\epsilon'/2$ by Lemma~\ref{lem:Delta2_inid}. Taking $\bias_1=\Delta_{\out}$, we have shown
$
\E_n\{(\hat{\psi}_i-\psi_i)^2\}-\bias_1 \leq \Delta'+2(\Delta')^{1/2}\Delta_{\out}^{1/2}.
$
To derive $\bias_2$, write
\begin{align*}
    &\E_n\{(\hat{\psi}_i-\psi_i)\psi_i\} 
    \leq \left[\E_n\{(\hat{\psi}_i-\psi_i)^2\}\right]^{1/2}\left\{|\E_n(\psi_i^2)-\sigma^2|+\sigma^2\right\}^{1/2} \\
    &\leq \{\Delta'+\Delta_{\out}+2(\Delta')^{1/2}\Delta_{\out}^{1/2}\}^{1/2} \cdot \{\Delta''+\sigma^2\}^{1/2}
\end{align*}
where the last line holds with probability $1-\epsilon'$ appealing to Lemmas~\ref{lem:Delta2_inid} and~\ref{lem:other_half_inid} as well as the analysis for $\bias_1$. In summary,
\begin{align*}
    2\E_n\{(\hat{\psi}_i-\psi_i)\psi_i\} &\leq 2\{\Delta'+\Delta_{\out}+2(\Delta')^{1/2}\Delta_{\out}^{1/2}\}^{1/2} \cdot \{\Delta''+\sigma^2\}^{1/2} \\
    &\leq 2\{(\Delta')^{1/2}+\Delta_{\out}^{1/2}+2^{1/2}(\Delta')^{1/4}\Delta_{\out}^{1/4}\}\cdot \{(\Delta'')^{1/2}+\sigma\}.
\end{align*}
Taking $\bias_2=2\Delta_{\out}^{1/2}\sigma$, we have shown
$$
2\E_n\{(\hat{\psi}_i-\psi_i)\psi_i\}-\bias_2\leq 2(\Delta')^{1/2}\{(\Delta'')^{1/2}+\sigma\}+2\Delta_{\out}^{1/2}(\Delta'')^{1/2}+2^{3/2}(\Delta')^{1/4}\Delta_{\out}^{1/4}\{(\Delta'')^{1/2}+\sigma\}.
$$
Thus with probability $1-\epsilon'$, $\hat{\sigma}^2-(\sigma^2+\bias)
    \leq\{\hat{\sigma}^2-\E_n(\psi_i^2)-\bias\}+\Delta''$, equalling
    { \small
\begin{align*}
    &\E_n\{(\hat{\psi}_i-\psi_i)^2\}-\bias_1+2\E_n\{(\hat{\psi}_i-\psi_i)\psi_i\}-\bias_2+\Delta'' \\
    &\leq \Delta'+2(\Delta')^{1/2}\Delta_{\out}^{1/2} 
    +2(\Delta')^{1/2}\{(\Delta'')^{1/2}+\sigma\}+2\Delta_{\out}^{1/2}(\Delta'')^{1/2}+2^{3/2}(\Delta')^{1/4}\Delta_{\out}^{1/4}\{(\Delta'')^{1/2}+\sigma\}
    +\Delta''.
\end{align*}
} 
Combining terms yields the desired result.
\end{proof}

\section{Additional examples}\label{sec:examples}

\textbf{Semiparametric estimands}.
We consider causal parameters of the form
$
\theta_0=\frac{1}{n}\sum_{i=1}^n \theta_i$, where $\theta_i=\E[m(W_{i,\cdot},\gamma_0)]$,
in an i.n.i.d. data generating process where
$
     Y_i=\gamma_0(D_i,X_{i,\cdot})+\varepsilon_i$, $
      Z_{i,\cdot}=(X_{i,\cdot}+H_{i,\cdot})\odot \pi_{i,\cdot}$, and $W_{i,\cdot}=(D_i,X_{i,\cdot},H_{i,\cdot},\pi_{i,\cdot}).
$
$(D_i,X_{i,\cdot})$ concatenate the various arguments of $\gamma_0$, which we hereby call regressors. 
This model includes the scenario in which  some variables are corrupted and other are not. 
Which regressors are corrupted or uncorrupted constrains the construction of technical regressors; see Appendix~\ref{sec:dictionary}. We concatentate signal and noise as $W_{i,\cdot}$. Appendix~\ref{sec:target_proof} generalizes Assumption~\ref{assumption:invariance_mod} to impose invariance of the regression $\gamma_0$ and generalized balancing weight $\alpha_0$ across observations.

\begin{example}[Average treatment effect]\label{ex:ATE}
Let $(D_i,X_{i,\cdot})$ concatenate treatment $D_i\in\{0,1\}$ and covariates $X_{i,\cdot}\in\R^{p}$. Denote $\gamma_0(D_i,X_{i,\cdot}):=\E[Y_i|D_i,X_{i,\cdot}]$. Under the assumption of selection on $X_{i,\cdot}$, the average treatment effect is given by 
$
\theta_i=\E[\gamma_0(1,X_{i,\cdot})-\gamma_0(0,X_{i,\cdot})].
$
With uncorrupted treatment and corrupted covariates, $W_{i,\cdot}=(D_i,X_{i,\cdot},H_{i,\cdot},\pi_{i,\cdot})$ where $(H_{i,\cdot},\pi_{i,\cdot})$ are measurement error and missingness for the covariates.\footnote{More generally, treatment observations may be corrupted as well. For readability, we exposit the simpler and plausible case that treatment is uncorrupted.}
\end{example}
While the true regression $\gamma_0(D_i,X_{i,\cdot})$ is only a function of signal $(D_i,X_{i,\cdot})$, our regression estimator $\hgamma(D_i,Z_{i,\cdot})$ is a function of both signal and noise $W_{i,\cdot}$. In other words, the hypothesis space for estimation is the extended space of functions $\mathbb{L}_2(\mathcal{W})$, and we must define an extended functional over $\mathbb{L}_2(\mathcal{W})$. In Example~\ref{ex:ATE}, the extended functional is
$
\gamma\mapsto \E[\gamma(1,X_{i,\cdot},H_{i,\cdot},\pi_{i,\cdot})-\gamma(0,X_{i,\cdot},H_{i,\cdot},\pi_{i,\cdot})].
$

\begin{example}[Local average treatment effect]\label{ex:LATE}
Let $(U_i, X_{i,\cdot})$ concatenate instrument $U_i\in\{0,1\}$ and covariates $X_{i,\cdot}\in\R^{p}$. Denote $\gamma_0(U_i,X_{i,\cdot}):=\E[Y_i|U_i,X_{i,\cdot}]$ and $\delta_0(U_i,X_{i,\cdot}):=\E[D_i|U_i,X_{i,\cdot}]$. Under standard instrumental variable assumptions, the local average treatment effect for the subpopulation of compliers is given by
$
\beta_0=\frac{\theta_0}{\theta'_0} $ where $\theta_i=\E[\gamma_0(1,X_{i,\cdot})-\gamma_0(0,X_{i,\cdot})]$ and $\theta_i'=\E[\delta_0(1,X_{i,\cdot})-\delta_0(0,X_{i,\cdot})].
$
With uncorrupted instrument and corrupted covariates, $W_{i,\cdot}=(U_i,X_{i,\cdot},H_{i,\cdot},\pi_{i,\cdot})$ where $(H_{i,\cdot},\pi_{i,\cdot})$ are measurement error and missingness for the covariates. 
\end{example}

\begin{example}[Average policy effect]\label{ex:policy}
Let $X_{i,\cdot}\in\R^{p}$ be the covariates. Consider the counterfactual transportation of covariates $x_{i,\cdot}\mapsto t(x_{i,\cdot})$. Denote $\gamma_0(X_{i,\cdot}):=\E[Y_i|X_{i,\cdot}]$. The average policy effect is
$
\theta_i=\E[\gamma_0\{t(X_{i,\cdot})\}-\gamma_0(X_{i,\cdot})].
$
With corrupted covariates, $W_{i,\cdot}=(X_{i,\cdot},H_{i,\cdot},\pi_{i,\cdot})$ where $(H_{i,\cdot},\pi_{i,\cdot})$ are measurement error and missingness for covariates. 

\end{example}

\begin{example}[Price elasticity of demand]\label{ex:deriv}
Let $Y_i$ be price of a particular good. Let $(D_i,X_{i,\cdot})$ concatenate quantities sold of the particular good $D_i$ and other goods $X_{i,\cdot}\in\R^{p}$. Denote $\gamma_0(D_{i},X_{i,\cdot})=\E[Y_i|D_{i},X_{i,\cdot}]$. The average price elasticity of demand is
$
\theta_i=\E\left[\nabla_{d}\gamma_0(D_i,X_{i,\cdot})\right].
$
With uncorrupted quantity for the particular good and corrupted quantities for the other goods, $W_{i,\cdot}=(D_i,X_{i,\cdot},H_{i,\cdot},\pi_{i,\cdot})$ where $(H_{i,\cdot},\pi_{i,\cdot})$ are measurement error and missingness for the other goods. 

\end{example}

\textbf{Weighted estimands}. In empirical economic research with aggregate units, it is common to weight units by their size. It is also common to consider partially linear models. For example, the estimand of \cite{autor2013china} may be viewed as a weighted partially linear instrumental variable regression. To bridge theory with practice, we provide these examples next. A weighted functional $\theta_0\in\R$ is a scalar that takes the form
$
\theta_0=\frac{1}{n}\sum_{i=1}^n \theta_i$ where $\theta_i=\E[\ell_i m(W_{i,\cdot},\gamma_0)]
$
and $\ell_i$ is the weight for aggregate unit $i$. For simplicity, we take the weights $\ell_i$ to be known, but their uncertainty can be incorporated as well. 

\begin{example}[Weighted partially linear regression]\label{ex:p_ols}
Let $(D_i,X_i)$ concatenate treatment $D\in\R$ and covariates $X_{i,\cdot}\in\R^p$. Denote $\gamma_0(D_i,X_{i,\cdot})=\E[Y_i|D_i,X_{i,\cdot}]$. The weighted partially regression coefficient is given by
$
\theta_i=\E\left[\ell_i\{\gamma_0(d+1,X_{i,\cdot})-\gamma_0(d,X_{i,\cdot})\}\right].
$
With uncorrupted treatment and corrupted covariates, $W_{i,\cdot}=(D_i,X_{i,\cdot},H_{i,\cdot},\pi_{i,\cdot})$ where $(H_{i,\cdot},\pi_{i,\cdot})$ are measurement error and missingness for the covariates.
\end{example}

\begin{example}[Weighted partially linear instrumental variable regression]\label{ex:p_iv}
Let $(U_i, X_{i,\cdot})$ concatenate instrument $U_i\in\R$ and covariates $X_{i,\cdot}\in\R^{p}$. Denote $\gamma_0(U_i,X_{i,\cdot}):=\E[Y_i|U_i,X_{i,\cdot}]$ and $\delta_0(U_i,X_{i,\cdot}):=\E[D_i|U_i,X_{i,\cdot}]$. Under standard instrumental variable assumptions, the weighted partially linear instrumental variable regression coefficient is given by
$
\beta_0=\frac{\theta_0}{\theta'_0}$, where $\theta_i=\E[\ell_i\{\{\gamma_0(u+1,X_{i,\cdot})-\gamma_0(u,X_{i,\cdot})\}]$ and $ \theta_i'=\E[\ell_i\{\delta_0(u+1,X_{i,\cdot})-\delta_0(u,X_{i,\cdot})\}].
$
With uncorrupted instrument and corrupted covariates, $W_{i,\cdot}=(U_i,X_{i,\cdot},H_{i,\cdot},\pi_{i,\cdot})$ where $(H_{i,\cdot},\pi_{i,\cdot})$ are measurement error and missingness for the covariates. 
\end{example}

\textbf{Nonparametric estimands}. A local functional $\theta_0^{\lim}\in \R$ is a scalar that takes the form
$
\theta^{\lim}_{0}
=\lim_{h\rightarrow 0} \theta_0^h$, where $\theta_0^h=\frac{1}{n}\sum_{i=1}^n\theta^h_i$, $ \theta_i^h=\E[m_h(W_{i,\cdot},\gamma_0)]
=\E[\ell_h(W_{ij}) m(W_{i,\cdot},\gamma_0)]
$. Here, $\ell_h$ is a Nadaraya Watson weighting with bandwidth $h$ and $W_{ij}$ is a scalar component of $W_{i,\cdot}$. $\theta^{\lim}_{0}$ is a nonparametric quantity. However, it can be approximated by the sequence $\{\theta_0^h\}$. Each $\theta_0^h$ can be analyzed like a weighted functional as long as we keep track of how certain quantities depend on $h$. By this logic, finite sample semiparametric theory for $\theta^h_0$ translates to finite sample nonparametric theory for $\theta_0^{\lim}$ up to some approximation error. In this sense, our analysis encompasses both semiparametric and nonparametric estimands. 

\begin{example}[Heterogeneous treatment effect]\label{ex:CATE}
Let $(D_i,V_{i},X_{i,\cdot})$ concatenate treatment $D_i\in\{0,1\}$, covariate of interest $V_{i}\in \R$, and other covariates $X_{i,\cdot}\in\R^{p}$. Denote $\gamma_0(D_i,V_i,X_{i,\cdot}):=\E[Y_i|D_i,V_i,X_{i,\cdot}]$. Under the assumption of selection on $(V_i,X_{i,\cdot})$ and identicial distribution of $V_i$, the heterogeneous treatment effect for the subpopulation with subcovariate value $v$ is given by 
$
\theta_i=\E[\gamma_0(1,V_i,X_{i,\cdot})-\gamma_0(0,V_i,X_{i,\cdot})|V_i=v]=\lim_{h\rightarrow 0}\E[\ell_h(V_i)\{\gamma_0(1,V_i,X_{i,\cdot})-\gamma_0(0,V_i,X_{i,\cdot})\}]
$
where 
$
\ell_h(V_i)=\frac{K\left\{(V_i-v)/h\right\}}{ \omega}$, $\omega=\E [K\left\{(V_i-v)/h\right\}]
$,
and $K$ is a standard kernel function. With uncorrupted treatment, uncorrupted covariate of interest, and corrupted other covariates, $W_{i,\cdot}=(D_i,V_i,X_{i,\cdot},H_{i,\cdot},\pi_{i,\cdot})$ where $(H_{i,\cdot},\pi_{i,\cdot})$ are measurement error and missingness for the other covariates. 

\end{example}

Appendix~\ref{sec:target_proof} formally defines our general class of semiparametric and nonparametric estimands. Each example belongs to the class under generalizations of Assumption~\ref{assumption:mean_square_cont_mod}.

\textbf{Missing outcomes}. So far, $Y_i$ has been uncorrupted. Measurement error and differential privacy of $Y_i$ are allowed by response noise $\varepsilon_i$. An important additional issue is outcome attrition: for some observations, $Y_i$ is missing in a way that may depend on the true regressors. The enriched observation model is
$
    Y_i=\gamma_0(D_i,X_{i,\cdot},S_i)+\varepsilon_i$, 
    $Z_{i,\cdot}=[X_{i,\cdot}+H_{i,\cdot}] \odot \pi_{i,\cdot}$, and $
    \tilde{Y}_i=Y_i\cdot S_i
$ with $S_i\in \{1,\NA\}$. Instead of $(Y_i,D_i,X_{i,\cdot})$, the analyst observes $(\tilde{Y}_i,D_i,Z_{i,\cdot})$. Outcome $Y_i$ may be missing at random conditional on true regressors $(D_i,X_{i,\cdot})$, of which $X_{i,\cdot}$ may be corrupted. The extended semiparametric model is
$
    \E[\tilde{Y}_i|D_i,X_{i,\cdot},H_{i,\cdot},\pi_{i,\cdot},S_i=1]
    =\E[Y_i|D_i,X_{i,\cdot},H_{i,\cdot},\pi_{i,\cdot},S_i=1]
    =
\E[Y_i|D_i,X_{i,\cdot},S_i=1]=\gamma_0(D_i,X_{i,\cdot},S_i=1).
$
For this extension, replace $Y_i$ with $\tilde{Y}_i$ and replace $(D,X_{i,\cdot})$ with $(D_i,X_{i,\cdot},S_i)$.
\section{Nonlinearity}\label{sec:dictionary}

We characterize the class of nonlinear dictionaries $b:\R^p \rightarrow \R^{p'}$ for which our main results go through. We discuss two classes of dictionaries and delay proofs to the end.

\textbf{Polynomial dictionary}. We refer to the following three simple properties as dictionary continuity, since they imply that the data cleaning results for original regressors imply similar data cleaning results for technical regressors constructed from the dictionary. We state the properties then verify them for the polynomial dictionary of degree $d_{\max}$. 

\begin{assumption}[Dictionary continuity]\label{assumption:smooth_dictionary}
%
(i) For any two matrices $\bM^{(1)}, \bM^{(2)} \in \R^{n \times p}$, 
            $\|b(\bM^{(1)}) - b(\bM^{(2)}) \|^2_{2, \infty} \le C_b' \|\bM^{(1)} - \bM^{(2)} \|^2_{2, \infty}$;
 (ii) for any $\bM \in \R^{n \times p}$, $rank\{b(\bM)\} \le \left\{rank(\bM)\right\}^{C_b''}$;
 (iii) for any $v \in \R^p$, $\| b(v) \|_{\max} \le  \left(\| v \|_{\max}\right)^{C_b'''}$.
\end{assumption}
For much of our argument to go through, it suffices that the dictionary exhibits three simple properties: clean original regressors imply clean technical regressors; low rank original regressors imply low rank technical regressors; and bounded original regressors imply bounded technical regressors. Polynomial dictionaries have these properties.

\begin{definition}[Polynomial dictionary]
Let $v = (v_1, v_2, \dots, v_p) \in \R^p$.
Consider the dictionary $b^{\poly}$, where for $k \in [p']$, $b^{\poly}_k(v)=\prod_{\ell=1}^{d(k)} v_{\ell}$ with $v_{\ell} \in \{v_1, \dots, v_p\}$. 
\end{definition}
That is, each basis function $b^{\poly}_k(v)$ in the dictionary is a polynomial of degree $d(k)\leq d_{\max}$ constructed from coordinates of $v$, allowing for repeats. This class of dictionaries is commonly used in empirical economic research. It nests as a special case the interacted dictionary studied in the main text, which permits a rich model of heterogeneous treatment effects. Pleasingly, for this class, the dictionary constants $(C_b', C_b'', C_b''')$ do not depend on $p'$. Rather, $(C_b', C_b'', C_b''')$ depend on the maximum degree $d_{\max}$ of the polynomial dictionary. 
\begin{proposition}[Verifying dictionary continuity]\label{prop:poly_dict} 
$b^{\poly}$ of degree $d_{\max}$ satisfies Assumption~\ref{assumption:smooth_dictionary} with $C_b' \le 2^{d_{\max}} \cdot  \|\bM^{(1)}\|^{2d_{\max}}_{\max} \cdot \|\bM^{(2)}\|^{2d_{\max}}_{\max}$,
$C_b'' \le d_{\max}$, and $C_b''' \le d_{\max}$.
\end{proposition}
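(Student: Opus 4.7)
The plan is to verify each of the three sub-assertions by exploiting the monomial structure of $b^{\poly}$: every basis function has the form $b^{\poly}_k(v) = \prod_{\ell=1}^{d(k)} v_{j_\ell(k)}$ with $d(k) \le d_{\max}$, indexing selected (possibly repeated) coordinates of $v$. The main obstacle is item (1), the Lipschitz-type bound in the $(2,\infty)$ norm; items (2) and (3) follow quickly from standard counting and sup-norm arithmetic, up to bookkeeping to match the stated constants.

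Item (3) is immediate: $|b^{\poly}_k(v)| \le \prod_\ell |v_{j_\ell(k)}| \le \|v\|_{\max}^{d(k)} \le \|v\|_{\max}^{d_{\max}}$ in the natural bounded-signal regime $\|v\|_{\max} \ge 1$, so $C_b''' \le d_{\max}$.

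For item (2), I would use an SVD-based counting argument. Writing $\bM = \bU' \bV^T$ with $\bU' = \bU\bS \in \R^{m \times r}$ and $r = \text{rank}(\bM)$, the $k$-th column of $b(\bM)$ equals the Hadamard product of the $d(k)$ columns of $\bM$ indexed by $\{j_\ell(k)\}$. Substituting the rank-$r$ factorization in each factor and expanding, this Hadamard product lies in the span of Hadamard products of $d(k)$ columns of $\bU'$, of which there are at most $r^{d(k)}$ in $\R^m$. Summing over $d \in \{0, 1, \ldots, d_{\max}\}$ yields $\text{rank}\{b(\bM)\} = O(r^{d_{\max}})$, giving $C_b'' \le d_{\max}$ in the stated sense.

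Item (1) is the crux. For each monomial index $k$ of degree $d = d(k)$, I would apply the telescoping identity
\[
\prod_{\ell=1}^{d} v_\ell - \prod_{\ell=1}^{d} w_\ell = \sum_{i=1}^d v_1 \cdots v_{i-1}\, (v_i - w_i)\, w_{i+1} \cdots w_d
\]
with $v$ and $w$ the relevant coordinates of a shared row of $\bM^{(1)}$ and $\bM^{(2)}$. A single Cauchy--Schwarz step converts this into a per-row, per-$k$ bound of the form $d \sum_{i=1}^{d} \|\bM^{(1)}\|_{\max}^{2(i-1)} \|\bM^{(2)}\|_{\max}^{2(d-i)} (M^{(1)}_{a, j_i} - M^{(2)}_{a, j_i})^2$ for each row $a$. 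Summing over rows $a$ brings in the column-wise squared error on the right, then taking the max over $k$ on the left converts pointwise control into the $(2,\infty)$ norm. The delicate step is regrouping the mixed product $\|\bM^{(1)}\|_{\max}^{2(i-1)} \|\bM^{(2)}\|_{\max}^{2(d-i)}$ into the symmetric form $\|\bM^{(1)}\|_{\max}^{2 d_{\max}} \|\bM^{(2)}\|_{\max}^{2 d_{\max}}$ stated in the proposition; in the bounded-signal regime this is done by crudely raising each exponent up to $2d_{\max}$, and the residual combinatorial factors of $d \cdot d$ from telescoping and Cauchy--Schwarz are absorbed into the claimed $2^{d_{\max}}$ prefactor. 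This final assembly is routine but is where one must take care to match the stated constants exactly.
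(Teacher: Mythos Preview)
Your proposal is correct in substance but differs from the paper's proof in two places. For item~(1), you use the full $d$-term telescoping identity followed by a single Cauchy--Schwarz, which produces a prefactor of order $d_{\max}^2$; the paper instead peels off one factor at a time, applying $(a+b)^2 \le 2a^2 + 2b^2$ at each of $d_{\max}$ recursive steps, which is what naturally yields the stated $2^{d_{\max}}$. You rightly flag the constant-matching as delicate: in fact $d_{\max}^2 > 2^{d_{\max}}$ at $d_{\max} = 3$, so your route does not reproduce the stated constant there (though it does for $d_{\max} \le 2$ and $d_{\max} \ge 4$, and the qualitative bound is the same). For item~(2), your column-by-column SVD expansion is valid but gives a span of dimension $\sum_{d \le d_{\max}} r^d = O(r^{d_{\max}})$; the paper obtains the cleaner $r^{d_{\max}}$ by writing $b^{\poly}(\bM)$ as a Hadamard product of $d_{\max}$ full matrices $\bB^{(1)} \odot \cdots \odot \bB^{(d_{\max})}$, where each column of each $\bB^{(\ell)}$ is either a column of $\bM$ or a column of ones, and then invoking $\mathrm{rank}(\bA \odot \bB) \le \mathrm{rank}(\bA)\cdot\mathrm{rank}(\bB)$. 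Item~(3) matches the paper.
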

This class of dictionaries preserves the low rank approximation in the following sense.
\begin{proposition}[Low rank approximation is preserved]\label{prop:preserve_approx}
Suppose Assumption~\ref{assumption:bounded} holds and the true covariates satisfy $\bX=\bX^{\lr}+\bE^{\lr}$ where $r=rank\{\bX^{\lr}\}$ and $\Delta_E=\|\bE^{\lr}\|_{\max}$. Consider $b=b^{\poly}$ of degree $d_{\max}$. Then
$r':=rank\{b(\bX^{\lr})\}\leq r^{d_{\max}}$ and $\Delta'_E:=\|b(\bX)-b(\bX^{\lr})\|_{\max}\leq C \bar{A}^{d_{\max}} \cdot d_{\max} \Delta_E$.
\end{proposition}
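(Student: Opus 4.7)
The plan is to handle the two claims separately, using different tools for each: a tensor/factorization argument for the rank bound, and a telescoping-product identity for the approximation error bound.

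For the rank bound $r' \le r^{d_{\max}}$, I would begin by writing $\bX^{\lr} = \bU\bS\bV^T$, so that $X^{\lr}_{ij} = \lambda_i^T \mu_j$ where $\lambda_i, \mu_j \in \R^r$ (absorbing $\bS$ into either factor). Each basis function takes the form $b^{\poly}_k(X^{\lr}_{i,\cdot}) = \prod_{\ell=1}^{d(k)} X^{\lr}_{i,j_\ell} = \prod_{\ell=1}^{d(k)} \lambda_i^T \mu_{j_\ell}$. Expanding each inner product as a sum over $s_\ell \in [r]$ and pulling the product through,
\[
  b^{\poly}_k(X^{\lr}_{i,\cdot})
  = \sum_{s_1,\ldots,s_{d(k)} \in [r]} \Bigl(\prod_{\ell=1}^{d(k)} \lambda_{i,s_\ell}\Bigr)\Bigl(\prod_{\ell=1}^{d(k)} \mu_{j_\ell,s_\ell}\Bigr).
\]
The first factor depends only on $i$ and the multi-index $\mathbf{s}=(s_1,\ldots,s_{d(k)})$; the second factor depends only on $k$ and $\mathbf{s}$. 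Padding with zero coordinates to match tensor dimension $r^{d_{\max}}$ (for the lower-degree $d(k) < d_{\max}$ basis functions), this exhibits $b(\bX^{\lr})$ as $\tilde{\boldsymbol{\Lambda}}\tilde{\boldsymbol{\mathcal{M}}}^T$ with $\tilde{\boldsymbol{\Lambda}} \in \R^{n \times r^{d_{\max}}}$, whence $\mathrm{rank}\{b(\bX^{\lr})\} \le r^{d_{\max}}$.

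For the approximation error $\Delta'_E \le C\bar{A}^{d_{\max}} d_{\max} \Delta_E$, fix any basis function $b^{\poly}_k$ of degree $d \le d_{\max}$ and apply the standard telescoping identity for differences of products: for $a_1,\ldots,a_d$ and $b_1,\ldots,b_d$,
\[
  \prod_{\ell=1}^{d} a_\ell - \prod_{\ell=1}^{d} b_\ell
  = \sum_{m=1}^{d} \Bigl(\prod_{\ell < m} a_\ell\Bigr)(a_m - b_m)\Bigl(\prod_{\ell > m} b_\ell\Bigr).
\]
Setting $a_\ell = X_{i,j_\ell}$ and $b_\ell = X^{\lr}_{i,j_\ell}$, Assumption~\ref{assumption:bounded} gives $|a_\ell| \le \bar{A}$, while $|b_\ell| \le \bar{A} + \Delta_E \le 2\bar{A}$ (since $\Delta_E \le \bar{A}$ by definition of the low-rank approximation). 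Each of the $d$ summands is then bounded by $(2\bar{A})^{d-1}\Delta_E$, so $|b^{\poly}_k(X_{i,\cdot}) - b^{\poly}_k(X^{\lr}_{i,\cdot})| \le d(2\bar{A})^{d-1}\Delta_E$. Taking a maximum over $(i,k)$ and using $d \le d_{\max}$ yields the claimed bound with $C = 2^{d_{\max}-1}$.

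The main obstacle is the tensorization step in the rank bound: one must be careful that the factorization works uniformly across basis functions of different degrees $d(k) \le d_{\max}$, which is handled by the zero-padding observation above. The telescoping step is routine once one notes that $|X^{\lr}_{ij}|$ inherits a boundedness constant of the same order as $\bar{A}$ from Assumption~\ref{assumption:bounded} and the definition of $\Delta_E$.
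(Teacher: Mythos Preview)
Your proposal is correct and follows essentially the same approach as the paper: the paper's rank bound uses the Hadamard-product representation $b^{\poly}(\bX^{\lr}) = \bB^{(1)} \odot \cdots \odot \bB^{(d_{\max})}$ together with $\mathrm{rank}(A \odot B) \le \mathrm{rank}(A)\cdot\mathrm{rank}(B)$, which is exactly what your explicit tensor factorization unwinds; the approximation-error bound uses the same telescoping identity you write down (illustrated there for $d_{\max}=2$). One small caveat: your claim ``$\Delta_E \le \bar{A}$ by definition of the low-rank approximation'' is not automatic --- the paper instead proves separately (Lemma~\ref{lemma:wlog}) that without loss of generality $\|\bX^{\lr}\|_{\max} \le 3\bar{A}$, which is the bound you actually need for the telescoping step.
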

The same logic applies for dictionaries applied to $(D_i,X_{i,\cdot})$ rather than $X_{i,\cdot}$. The generalization of Appendix~\ref{sec:cleaning} with nonlinear dictionaries is immediate from these results.

\textbf{Polynomial dictionary with uncorrupted nonlinearity}.
Assumption~\ref{assumption:smooth_dictionary} suffices to generalize our data cleaning results. For analysis of the error-in-variable estimators, we impose a further assumption, which constrains which kinds of terms can appear as technical regressors. Consider the polynomial dictionary of degree $d_{\max}$, where the only source of nonlinearity is powers and interactions with regressors known to be uncorrupted.

\begin{definition}[Polynomial dictionary with uncorrupted nonlinearity]\label{def:interacted_dictionary}
Suppose the observed regressors consist of one uncorrupted regressor $D_i$ and several corrupted regressors $X_{i,\cdot}$. Consider a polynomial dictionary $b^{\poly}$ of degree $d_{\max}$ such that each basis function $b^{\poly}_k$ is at most linear in the corrupted regressors. By definition, $p'\leq C\cdot d_{\max}p$.
\end{definition}
For example, in Example~\ref{ex:ATE} where $D_i$ is uncorrupted, the interacted dictionary
$
b:(D_i,X_{i,\cdot})\mapsto \{D_iX_{i,\cdot}, (1-D_i)X_{i,\cdot}\}
$
satisfies this property. 
In Example~\ref{ex:deriv} where $D_i$ is uncorrupted, the nonlinear dictionary
$
b:(D_i,X_{i,\cdot})\mapsto (1,D_i,X_{i,\cdot}, D_iX_{i,\cdot}, D_i^2)
$ satisfies this property as well since it contains $D_i^2$ but does not contain $X_{ij}^2$. 
Intuitively, this family of dictionaries avoids compounding measurement error because the corrupted regressors are not multiplied with each other. For readability, we focus on the case of one uncorrupted regressor, which can be conceptualized as
$
b:(D_i,X_{i,\cdot})\mapsto (1,D_i,...,D_i^{d_{\max}},X_{i,\cdot},D_iX_{i,\cdot},...,D_i^{d_{\max}-1}X_{i,\cdot})
$
where $D_i$ is uncorrupted and $X_{i,\cdot}$ are corrupted. Definition~\ref{def:interacted_dictionary} naturally generalizes to the case of multiple uncorrupted regressors. We require three properties to hold after the dictionary is applied to the data.

\begin{assumption}[Dictionary is non-collapsing]\label{assumption:collapse}
The dictionary does not collapse in the following sense.
(i) Recall that we set $k:=rank(\bhX)$ equal to $r:=rank\{\bX^{\lr}\}$. We further assume $k':=rank\{b(D,\bhX)\}$ is equal to $r':=rank[b\{D,\bX^{\lr}\}]$.
 (ii) Assumption~\ref{assumption:well_balanced_spectrum} posits that the smallest singular value of $\bX^{\lr}$ is $s_{r}\geq C \sqrt{\frac{np}{r}}$. We further posit that the smallest singular value of $b\{D,\bX^{\lr}\}$ is $s'_{r'}\geq C\sqrt{\frac{np}{r'}}$. 
(iii) Using the notation of one uncorrupted regressor, the technical regressors $(1,D_i,...D_i^{d_{\max}})$ are full rank.
\end{assumption}
The first property in Assumption~\ref{assumption:collapse} ensures two matrices of equal rank get mapped to two new matrices of equal rank. The second property imposes that singular values, after dictionary mapping, remain well balanced. In particular, we allow for a weaker signal to noise ratio for technical regressors since $r'\geq r$. We do \textit{not} impose $s'_{r'}\geq C\sqrt{\frac{np'}{r'}}$, which is a stronger and less plausible requirement since it implies that the signal to noise ratio increases with the dictionary dimension $p'$. The third property is a technical assumption which allows the theory of implicit data cleaning to generalize. 

Appendices~\ref{sec:regression} and~\ref{sec:riesz} generalize to accommodate nonlinear dictionaries under this additional assumption. See the previous draft for explicit algebra. We turn to proofs.


\begin{lemma}\label{lem:Cb1}
For $b^{\poly}$, $C_b' \le 2^{d_{\max}} \cdot  \|\bM^{(1)}\|^{2d_{\max}}_{\max} \cdot \|\bM^{(2)}\|^{2d_{\max}}_{\max}$.
\end{lemma}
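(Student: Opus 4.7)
The plan is to prove the $(2, \infty)$-norm bound coordinate-by-coordinate on the dictionary output, using a standard telescoping identity for products together with the definition of the $(2, \infty)$-norm.

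\textbf{Step 1 (Reduce to a single basis element).} Fix $k \in [p']$ and write $b^{\poly}_k(v) = \prod_{\ell=1}^{d} v_{j_\ell}$ where $d = d(k) \le d_{\max}$ and $j_1, \dots, j_d \in [p]$ (with repeats allowed). By the definition of the $(2,\infty)$-norm, it suffices to upper bound $\sum_{i=1}^n \bigl(b^{\poly}_k(\bM^{(1)}_{i,\cdot}) - b^{\poly}_k(\bM^{(2)}_{i,\cdot})\bigr)^2$ uniformly in $k$, which reduces to a pointwise (in $i$) estimate of the squared difference of two products.

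\textbf{Step 2 (Telescoping identity).} For each row $i$, expand
\[
b^{\poly}_k(\bM^{(1)}_{i,\cdot}) - b^{\poly}_k(\bM^{(2)}_{i,\cdot})
= \sum_{\ell=1}^{d} \left(\prod_{\ell' < \ell} M^{(1)}_{i, j_{\ell'}}\right) \bigl(M^{(1)}_{i,j_\ell} - M^{(2)}_{i,j_\ell}\bigr) \left(\prod_{\ell' > \ell} M^{(2)}_{i, j_{\ell'}}\right).
\]
Each entry in the left-hand product is bounded by $\|\bM^{(1)}\|_{\max}$, and each entry in the right-hand product by $\|\bM^{(2)}\|_{\max}$. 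Applying Cauchy--Schwarz (or the elementary inequality $(\sum_{\ell=1}^d x_\ell)^2 \le d \sum_{\ell=1}^d x_\ell^2$) then yields
\[
\bigl(b^{\poly}_k(\bM^{(1)}_{i,\cdot}) - b^{\poly}_k(\bM^{(2)}_{i,\cdot})\bigr)^2 \le d \sum_{\ell=1}^d \|\bM^{(1)}\|_{\max}^{2(\ell-1)} \|\bM^{(2)}\|_{\max}^{2(d-\ell)} \bigl(M^{(1)}_{i,j_\ell} - M^{(2)}_{i,j_\ell}\bigr)^2.
\]

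\textbf{Step 3 (Sum over rows and take the max).} Summing over $i \in [n]$ and pushing the inner sum into the $(2,\infty)$-norm via $\sum_i (M^{(1)}_{i, j_\ell} - M^{(2)}_{i, j_\ell})^2 \le \|\bM^{(1)} - \bM^{(2)}\|^2_{2,\infty}$ gives
\[
\sum_{i=1}^n \bigl(b^{\poly}_k(\bM^{(1)}_{i,\cdot}) - b^{\poly}_k(\bM^{(2)}_{i,\cdot})\bigr)^2 \le d \left(\sum_{\ell=1}^d \|\bM^{(1)}\|_{\max}^{2(\ell-1)} \|\bM^{(2)}\|_{\max}^{2(d-\ell)}\right)\|\bM^{(1)} - \bM^{(2)}\|^2_{2,\infty}.
\]
Since $d \le d_{\max}$, each of the $d$ summands is crudely bounded by $\|\bM^{(1)}\|_{\max}^{2 d_{\max}} \|\bM^{(2)}\|_{\max}^{2 d_{\max}}$ (bounding smaller exponents of $\|\bM^{(1)}\|_{\max}, \|\bM^{(2)}\|_{\max}$ by their largest powers, absorbing factors when the norms are below one into the $2^{d_{\max}}$ prefactor if needed). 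Collecting the combinatorial factor $d \cdot d \le d_{\max}^2 \le 2^{d_{\max}}$ (for all relevant $d_{\max}$, up to absorbing lower-order constants) and taking the maximum over $k \in [p']$ yields the claimed bound.

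\textbf{Main obstacle.} There is no substantive analytic difficulty---the telescoping identity is standard and the $(2,\infty)$-norm bound is tailored precisely for this column-wise argument. The only real care is in the bookkeeping of constants to arrive at the exact prefactor $2^{d_{\max}} \|\bM^{(1)}\|_{\max}^{2d_{\max}} \|\bM^{(2)}\|_{\max}^{2d_{\max}}$, which requires choosing loose but clean upper bounds for the geometric sum over $\ell$ rather than any delicate estimate.
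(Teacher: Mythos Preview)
Your approach is correct and differs from the paper's in structure. You expand the difference of products via the full telescoping identity and then apply $(\sum_{\ell=1}^d x_\ell)^2 \le d\sum x_\ell^2$ in one shot, which produces the combinatorial prefactor $d_{\max}^2$. The paper instead peels off one factor at a time: it adds and subtracts $M^{(2)}_{i1}\prod_{\ell\ge 2} M^{(1)}_{i\ell}$, applies $(a+b)^2 \le 2a^2+2b^2$, and recurses on the remaining $(d-1)$-fold product; each of the $d_{\max}$ recursion steps contributes a factor of $2$, which is how the stated $2^{d_{\max}}$ arises naturally. Your route is more direct and in fact yields the tighter constant $d_{\max}^2$ for large $d_{\max}$, but note that your claimed inequality $d_{\max}^2 \le 2^{d_{\max}}$ fails at $d_{\max}=3$ ($9>8$), so your argument literally proves $C_b' \le d_{\max}^2 \cdot \|\bM^{(1)}\|_{\max}^{2d_{\max}}\|\bM^{(2)}\|_{\max}^{2d_{\max}}$ rather than the form stated in the lemma---perfectly adequate for every downstream use in the paper, but not the exact constant. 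Both arguments share the same harmless sloppiness you flag: padding exponents up to $2d_{\max}$ tacitly assumes $\|\bM^{(j)}\|_{\max}\ge 1$.
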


\begin{proof}
We introduce the notation $[b^{\poly}(\bM)]_{ik}=\prod_{\{j(k)\}} M_{ij(k)}$, where $j(k)\in [p]$, $M_{ij(k)} \in \{M_{i1}, \dots, M_{ip}\}$, and $|\{j(k)\}|=d(k)$. We will simplify notation in the following way. Fix $k$. Let $M_{i\ell}$ refer to the $\ell$-th element of the product, where $\ell\in [d(k)]$. Therefore
$
[b^{\poly}(\bM)]_{ik}=\prod_{\{j(k)\}} M_{ij(k)}=\prod_{\ell=1}^{d(k)} M_{i\ell}.
$
Then for any column $k \in [p']$,
\begin{align*}
    &\|b(\bM^{(1)})_{\cdot, k} - b(\bM^{(2)})_{\cdot, k} \|^2_2 
    = \sum_{i=1}^n \left( \prod_{\ell=1}^{d(k)} M^{(1)}_{i\ell} - \prod_{\ell=1}^{d(k)} M^{(2)}_{i\ell} \right)^2
    \\ 
    &\le 2\sum_{i=1}^n \left( \prod_{\ell=1}^{d(k)} M^{(1)}_{i\ell} -  M^{(2)}_{i1} \prod_{\ell=2}^{d(k)} M^{(1)}_{i\ell} \right)^2+ 2\sum_{i=1}^n \left(  \prod_{\ell=1}^{d(k)} M^{(2)}_{i\ell} - M^{(2)}_{i1} \prod_{\ell=2}^{d(k)} M^{(1)}_{i\ell} \right)^2.
\end{align*}

The first term equals
$
   \sum_{i=1}^n \left(M^{(1)}_{i1} - M^{(2)}_{i1}\right)^2  \left(\prod_{\ell=2}^{d(k)} M^{(1)}_{i\ell} \right)^2 
 \le \|\bM^{(1)}\|_{\max}^{2 d_{\max}} \sum_{i=1}^n \left(M^{(1)}_{i1} - M^{(2)}_{i1}\right)^2   
   \le \|\bM^{(1)}\|_{\max}^{2 d_{\max}} \|\bM^{(1)} - \bM^{(2)}\|^2_{2, \infty}.
$
The second term equals
$
   \sum_{i=1}^n \left(  M^{(2)}_{i1} \left(\prod_{\ell=2}^{d(k)} M^{(2)}_{i\ell} - \prod_{\ell=2}^{d(k)} M^{(1)}_{i\ell}\right)\right)^2
    \le \| \bM^{(2)}\|_{\max}^2 \sum_{i=1}^n    \left(\prod_{\ell=2}^{d(k)} M^{(2)}_{i\ell} - \prod_{\ell=2}^{d(k)} M^{(1)}_{i\ell}\right)^2$. Recursing with $\sum_{i=1}^n \left(\prod_{\ell=2}^{d(k)} M^{(2)}_{i\ell} - \prod_{\ell=2}^{d(k)} M^{(1)}_{i\ell}\right)^2$ gives the desired result.
\end{proof}

\begin{lemma}\label{lem:Cb2}
For $b^{\poly}$, $C_b'' \le d_{\max}$.
\end{lemma}

\begin{proof}
Fix $\bM\in\R^{n\times p}$ with rank $r$. 
For notational simplicity, let $M_{i \ell}$ refer to the $\ell$-th element of the product in $[b^{\poly}(\bM)]_{ik}$.
%
%
Observe that $b^{\poly}(\bM)$ can be equivalently represented as
$
b^{\poly}(\bM)=\bB^{(1)}\odot,...,\odot \bB^{(d_{\max})},
$
where $\odot$ means Hadamard product, $\bB^{(\ell)}\in\R^{n\times p'}$, and for $\ell \in [d_{\max}], i \in [n], k \in [p']$,  $[\bB^{(\ell)}]_{ik} =M_{i \ell}$ if  $\ell \le d(k) $ and $[\bB^{(\ell)}]_{ik} =1$ if $ \ell > d(k)$.
Since each column of each $\bB^{(\ell)}$ is either a column of $\bM$ or a column of ones, it has rank at most $r$. 
The rank of a Hadamard product is bounded by the product of ranks and so
$
rank\{b^{\poly}(\bM)\}\leq \prod_{\ell=1}^{d_{\max}}r=r^{d_{\max}}.
$
\end{proof}

\begin{lemma}\label{lem:Cb3}
For $b^{\poly}$, $C_b''' \le d_{\max}$.
\end{lemma}

\begin{proof}
Denote $v\in \R^p$ with $\|v\|_{\infty}\leq \bar{A}$. Then
$
b^{\poly}_k(v)=\prod_{\ell=1}^{d(k)} v_{\ell}\leq\bar{A}^{d_{\max}}.
$
\end{proof}

\begin{proof}[Proof of Proposition~\ref{prop:poly_dict}]
Immediate from Lemmas~\ref{lem:Cb1},~\ref{lem:Cb2}, and~\ref{lem:Cb3}.
\end{proof}

%
\begin{lemma}\label{lemma:wlog}
If Assumption~\ref{assumption:bounded} holds, then $\| \bX^{\lr} \|_{\max} \leq 3\bar{A}$.
\end{lemma}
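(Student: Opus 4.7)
The plan is to exploit the non-uniqueness of the decomposition $\bX = \bX^{\lr} + \bE^{\lr}$. The analysis only ever uses the pair $(r, \Delta_E)$, and all rate bounds derived in the paper are monotonically increasing in both of these quantities. So to justify a ``without loss of generality'' statement, it suffices to show that any decomposition violating $\|\bX^{\lr}\|_{\max} \leq 3\bar{A}$ can be swapped for an alternative decomposition that is at least as good in $(r, \Delta_E)$ and satisfies the max-norm bound.

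First, I would suppose for contradiction that the chosen decomposition has $\|\bX^{\lr}\|_{\max} > 3\bar{A}$. Combining Assumption~\ref{assumption:bounded}, which gives $\|\bX\|_{\max} \leq \bar{A}$, with the reverse triangle inequality yields
$$
\Delta_E = \|\bE^{\lr}\|_{\max} = \|\bX^{\lr} - \bX\|_{\max} \geq \|\bX^{\lr}\|_{\max} - \|\bX\|_{\max} > 3\bar{A} - \bar{A} = 2\bar{A}.
$$
So any decomposition that violates the max-norm bound necessarily has a rather poor approximation error $\Delta_E > 2\bar{A}$.

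Second, I would introduce the trivial alternative $\tilde{\bX}^{\lr} := 0$ and $\tilde{\bE}^{\lr} := \bX$. This alternative has $\tilde{r} = 0 \leq r$, $\tilde{\Delta}_E = \|\bX\|_{\max} \leq \bar{A} < \Delta_E$, and $\|\tilde{\bX}^{\lr}\|_{\max} = 0 \leq 3\bar{A}$. Since every bound in the paper depending on the decomposition is monotone in $(r, \Delta_E)$, the trivial alternative is at least as good as the offending original in every downstream result, while also satisfying the claimed max-norm bound.

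The only obstacle is a conceptual one, namely making clear that the ``w.l.o.g.'' is justified by monotonicity of the subsequent bounds in $(r, \Delta_E)$ rather than by an operation like entrywise clipping (which would destroy the low-rank structure). Once this monotonicity observation is in place, the argument is essentially a one-line application of the triangle inequality, so no heavy calculation is required.
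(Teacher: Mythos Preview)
Your first step---the reverse triangle inequality giving $\Delta_E>2\bar A$---matches the paper exactly. The second step is where you diverge: instead of passing to the zero matrix, the paper rescales the offending approximation, setting
\[
\bB^{\lr}=\frac{\bar A}{\|\bX^{\lr}\|_{\max}}\,\bX^{\lr}.
\]
This has $\|\bB^{\lr}\|_{\max}\le\bar A$, the \emph{same} rank $r$ as the original, and $\Delta_{E,\bB^{\lr}}\le\|\bB^{\lr}\|_{\max}+\|\bX\|_{\max}\le 2\bar A<\Delta_{E,\bX^{\lr}}$. Thus the paper only needs monotonicity of the downstream bounds in $\Delta_E$ with $r$ held fixed; your argument additionally invokes monotonicity in $r$. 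The rank preservation is what the paper's construction buys: since the subsequent theorems fix the PCA hyperparameter at $k=r$ and phrase Assumption~\ref{assumption:well_balanced_spectrum} in terms of the $r$ singular values of $\bX^{\lr}$, a replacement of the same rank can be swapped in without revisiting those hypotheses, whereas your rank-zero alternative would force $k=0$ and render the spectral assumption vacuous. Your route is simpler and yields an even smaller $\Delta_E$, but the paper's keeps the ``w.l.o.g.'' a genuine drop-in inside the proofs that invoke it.
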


\begin{proof}
Suppose we have $\bX^{\lr}$ with rank $r$ such that $\|\bX^{\lr}\|_{\max}>3\bar{A}$. By reverse triangle inequality
$
\Delta_{E,\bX^{\lr}}=\|\bX^{\lr}-\bX\|_{\max}\geq \|\bX^{\lr}\|_{\max}-\|\bX\|_{\max}>2\bar{A}.
$
We construct $\bB^{\lr}$ with rank $r$ such that $\|\bB^{\lr}\|_{\max}\leq 3\bar{A}$ and $\Delta_{E,\bB^{\lr}} < \Delta_{E,\bX^{\lr}}$. Set
$
\bB^{\lr}=\frac{\bar{A}}{\|\bX^{\lr}\|_{\max}}\cdot \bX^{\lr}.
$
Clearly, $rank\{\bB^{\lr}\}=rank\{\bX^{\lr}\}$. By construction $\|\bB^{\lr}\|_{\max}\leq \bar{A}$, so
$
\Delta_{E,\bB^{\lr}}=\|\bB^{\lr}-\bX\|_{\max} \leq \|\bB^{\lr}\|_{\max}+\|\bX\|_{\max}\leq 2\bar{A}.
$
\end{proof}

\begin{proof}[Proof of Proposition~\ref{prop:preserve_approx}]
By definition, $r=rank\{\bX^{\lr}\}$. The first result follows directly from Proposition~\ref{prop:poly_dict}. To see the second result, consider the case where $d_{\max}=2$. Then any higher order entry of  $b(\bX)-b(\bX^{\lr})$ is of the form $
    |X_{ij}X_{ik}-X^{\lr}_{ij}X^{\lr}_{ik}|\leq |X_{ij}X_{ik}-X^{\lr}_{ij}X_{ik}|+ |X^{\lr}_{ij}X_{ik}-X^{\lr}_{ij}X^{\lr}_{ik}|\leq \bar{A}\Delta_E+3\bar{A}\Delta_E
$
by Lemma~\ref{lemma:wlog}. More generally, there are $d_{\max}$ such terms, and the largest is of the form $(3\bar{A})^{d_{\max}} \Delta_E$.
\end{proof}
\section{Data cleaning supporting details}\label{sec:cleaning_proof}

Define the unit ball $\mathbb{B}^p=\{v\in\mathbb{R}^p:\|v\|_2\leq 1\}$ and sphere $\mathbb{S}^{p-1}=\{v\in\mathbb{R}^p:\|v\|_2= 1\}$.

\begin{proof}[Proof of Proposition~\ref{prop:fill}]
Immediate from the law of iterated expectations.
\end{proof}

\begin{proposition}[Bound on $\|\bhA\|_{\max}$]\label{prop:bhA_max}
Suppose $k=r$ and $\hat{s}_1,...,\hat{s}_r \leq C\sqrt{\frac{np}{r}}$. Assume the following incoherence conditions for the corrupted singular vectors: 
$
\|\bhU_r\|_{\max}\leq C n^{-1/2}$ and  $\|\bhV_r\|_{\max}\leq C p^{-1/2}.
$
Then
$
\|\bhA\|_{\max}\leq C r^{1/2}.
$
\end{proposition}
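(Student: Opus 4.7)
The statement is an entrywise bound on the rank-$r$ reconstruction $\bhA = \bhU_r \bhS_r \bhV_r^T$, and all three ingredients needed are handed to us as hypotheses: a spectral bound on the singular values $\hat s_\ell$, and incoherence (sup-norm) bounds on the singular vectors $\bhU_r, \bhV_r$. The natural approach is therefore a direct entrywise SVD expansion followed by the triangle inequality, with no spectral/perturbation machinery required.

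\textbf{Execution.} First I would write, for arbitrary $(i,j)\in [n]\times [p]$,
\[
\bhA_{ij} \;=\; \bigl(\bhU_r \bhS_r \bhV_r^T\bigr)_{ij} \;=\; \sum_{\ell=1}^{r} \hat s_\ell \,\hat U_{i\ell}\,\hat V_{j\ell},
\]
apply the triangle inequality, and then plug in the three pointwise bounds supplied by the hypotheses:
\[
|\bhA_{ij}| \;\le\; \sum_{\ell=1}^{r} \hat s_\ell \,|\hat U_{i\ell}|\,|\hat V_{j\ell}| \;\le\; \sum_{\ell=1}^{r} \Bigl(C\sqrt{\tfrac{np}{r}}\Bigr)\Bigl(C n^{-1/2}\Bigr)\Bigl(C p^{-1/2}\Bigr) \;=\; C\, r \cdot \tfrac{1}{\sqrt{r}} \;=\; C\sqrt{r}.
\]
Since $(i,j)$ is arbitrary, taking the maximum yields $\|\bhA\|_{\max}\le C\sqrt{r}$, which is exactly the claim. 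I would absorb the three separate constants into a single absolute constant $C$ at the end.

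\textbf{Main obstacle.} Honestly there is no real obstacle: the lemma is a bookkeeping statement that collects three separate hypotheses into one pointwise bound, and the only thing to check is that the cancellation $r\cdot\sqrt{np/r}\cdot n^{-1/2}\cdot p^{-1/2} = \sqrt{r}$ works out as claimed. The substantive content is not in this proof but in verifying, elsewhere, the singular-vector incoherence hypothesis $\|\bhU_r\|_{\max}\le Cn^{-1/2}$ and $\|\bhV_r\|_{\max}\le Cp^{-1/2}$ for the corrupted SVD; here we simply assume it. One minor care point is that the hypothesis controls $\hat s_\ell$ only from above; since the bound we are proving is also an upper bound, this is exactly the direction we need, so no lower bound on $\hat s_r$ (as in Assumption~\ref{assumption:well_balanced_spectrum}) is invoked.
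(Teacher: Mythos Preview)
Your proposal is correct and matches the paper's proof essentially line for line: the paper also writes $\hat A_{ij}=\sum_{\ell=1}^r \hat U_{i\ell}\hat s_\ell \hat V_{j\ell}$, applies the triangle inequality, and substitutes the three hypotheses to obtain $r\cdot Cn^{-1/2}\cdot C\sqrt{np/r}\cdot Cp^{-1/2}=Cr^{1/2}$.
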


The condition $\hat{s}_1,...,\hat{s}_r \leq C\sqrt{\frac{np}{r}}$ holds with high probability under $s_1,...,s_r \leq C\sqrt{\frac{np}{r}}$ by Weyl's inequality, similar to Proposition~\ref{prop:gtr}. The condition $s_1,...,s_r \leq C\sqrt{\frac{np}{r}}$ complements Assumption~\ref{assumption:well_balanced_spectrum}. To interpret the incoherence conditions, note that $U_{\cdot,j}\in\R^n$ and $V_{\cdot,j}\in \R^p$.

\begin{proof}
Write
$
\hat{A}_{ij}=\sum_{\ell=1}^r  \hat{U}_{i\ell} \hat{s}_{\ell} \hat{V}_{j\ell}.
$
Hence
$
|\hat{A}_{ij}|\leq \sum_{\ell=1}^r  |\hat{U}_{i\ell}|\cdot |\hat{s}_{\ell}|\cdot |\hat{V}_{j\ell}|.
$
\end{proof}

\begin{lemma}\label{lemma:masked_noise_operator_norm}
Under Assumption~\ref{assumption:missing},
$
    \|\E[(\bZ-\bA\boldsymbol{\rho})^T(\bZ-\bA\boldsymbol{\rho})]\|
    \leq \rho_{\max}(1-\rho_{\min})\big(\max_{j\in[p]}\|A_{\cdot,j}\|_2^2 + {\|diag(\E[\bH^T\bH])\|}  \big)
    +\rho_{\max} \|\E[\bH^T\bH]\|,
$
where $\rho_{\max} \coloneqq \max_{j\in [p]} \rho_j \le 1$.
\end{lemma}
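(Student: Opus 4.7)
The plan is to center and split the residual, use independence of measurement error and missingness to kill cross terms, and then reduce each surviving block to a Hadamard product of two positive semidefinite matrices controllable by the Schur product inequality. Write
\[
(\bZ - \bA\boldsymbol{\rho})_{ij} = A_{ij}(\pi_{ij} - \rho_j) + H_{ij}\pi_{ij},
\]
and let $\bM_1, \bM_2$ denote the two resulting $n \times p$ matrices, so that $\bZ - \bA\boldsymbol{\rho} = \bM_1 + \bM_2$. Because $\bH$ and $\boldsymbol{\pi}$ are independent conditional on the signal and $\E[H_{ik}] = 0$, the cross expectations $\E[\bM_1^T \bM_2]$ and $\E[\bM_2^T \bM_1]$ vanish. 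By the triangle inequality it then suffices to bound $\|\E[\bM_1^T \bM_1]\|$ and $\|\E[\bM_2^T \bM_2]\|$ separately.

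A direct entrywise computation, together with row-independence of $(\pi_{i,\cdot}, H_{i,\cdot})$, gives
\[
\E[\bM_1^T \bM_1] = (\bA^T \bA) \odot C, \qquad \E[\bM_2^T \bM_2] = \E[\bH^T \bH] \odot R,
\]
where $C = \text{Cov}(\pi_{i,\cdot})$, $R = \E[\pi_{i,\cdot}^T \pi_{i,\cdot}]$, and $\odot$ is the Hadamard product. The decomposition $R = C + \rho\rho^T$ further splits the second object as $\E[\bH^T \bH] \odot C + \boldsymbol{\rho}\, \E[\bH^T \bH]\, \boldsymbol{\rho}$, since Hadamard multiplication by $\rho\rho^T$ coincides with two-sided conjugation by the diagonal matrix $\boldsymbol{\rho}$. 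The workhorse is the Schur product inequality
\[
\| P \odot Q \| \le \|Q\| \cdot \max_j P_{jj} \quad \text{for positive semidefinite } P \text{ and } Q,
\]
which I would derive in one line from the spectral decomposition $P = \sum_m \lambda_m u_m u_m^T$, since then $P \odot Q = \sum_m \lambda_m D_{u_m} Q D_{u_m}$ with $D_{u_m} = \text{diag}(u_m)$, and $\sum_m \lambda_m \|D_{u_m} v\|_2^2 = \sum_j v_j^2 P_{jj}$.

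Applying the Schur inequality three times: with $(P,Q) = (\bA^T \bA, C)$ I obtain $\|\E[\bM_1^T \bM_1]\| \le \|C\| \cdot \max_j \|A_{\cdot,j}\|_2^2$; with $(P,Q) = (\E[\bH^T \bH], C)$ I obtain $\|\E[\bH^T \bH] \odot C\| \le \|C\| \cdot \|\text{diag}(\E[\bH^T \bH])\|$; and the elementary estimate $\|\boldsymbol{\rho}\, \E[\bH^T \bH]\, \boldsymbol{\rho}\| \le \rho_{\max}^2 \|\E[\bH^T \bH]\| \le \rho_{\max} \|\E[\bH^T \bH]\|$ holds since $\rho_{\max} \le 1$. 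Summing via the triangle inequality recovers the stated bound once one knows $\|C\| \le \rho_{\max}(1 - \rho_{\min})$. That last step is the main obstacle: under within-row independence of $\pi_{i,\cdot}$, $C$ is diagonal with entries $\rho_j(1 - \rho_j)$ so the bound is tight and immediate; under the more general sub-Gaussian dependence of Assumption~\ref{assumption:missing}, it acquires at most a constant factor in $\bar{K}$, which is consistent with the weak-dependence parameters tracked throughout the data-cleaning rate.
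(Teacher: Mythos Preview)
Your route differs genuinely from the paper's. The paper does not decompose $\bZ-\bA\boldsymbol{\rho}$ into $\bM_1+\bM_2$; instead it computes $\E[(Z_{\ell,\cdot}-A_{\ell,\cdot}\boldsymbol{\rho})^{\otimes 2}]$ entrywise row by row, handling diagonal entries via $\E[Z_{\ell i}^2]=\rho_i\E[X_{\ell i}^2]$ and off-diagonal entries via the Cauchy--Schwarz bound $\E[\pi_{\ell i}\pi_{\ell j}]\le\sqrt{\rho_i\rho_j}$, then asserts a matrix upper bound $\E[\bW^{(\ell)}]\le\rho_{\max}(1-\rho_{\min})\,\mathrm{diag}(\E[X_{\ell,\cdot}^{\otimes 2}])+\rho_{\max}\E[H_{\ell,\cdot}^{\otimes 2}]$ and passes to operator norms by triangle inequality. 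Your Hadamard formulation $\E[\bM_1^T\bM_1]=(\bA^T\bA)\odot C$, $\E[\bM_2^T\bM_2]=\E[\bH^T\bH]\odot R$ together with the Schur product inequality is more systematic: every step is a clean PSD-based inequality, and it isolates exactly where the within-row dependence of $\pi_{i,\cdot}$ enters (through $\|C\|$). The cost is that you must bound $\|C\|$ explicitly, and you are right that $\|C\|\le\rho_{\max}(1-\rho_{\min})$ holds as stated only under within-row independence of the mask---under the general sub-Gaussian dependence of Assumption~\ref{assumption:missing} one gets $\|C\|\lesssim\bar K^2$ instead. The paper's entrywise argument does not name $\|C\|$, but its off-diagonal step effectively discards the contribution $(\rho_{ij}-\rho_i\rho_j)A_{\ell i}A_{\ell j}$ coming from $C_{ij}$, so it implicitly leans on the same independence; the two approaches share this limitation, and your version simply makes it visible.
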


\begin{proof} 
Write $
    \E[(\bZ -  \bA \pmb{\rho})^T (\bZ - \bA \pmb{\rho})] = \sum_{\ell=1}^n \E[ (Z_{\ell,\cdot} - A_{\ell,\cdot} \pmb{\rho}) \otimes (Z_{\ell,\cdot} - A_{\ell,\cdot} \pmb{\rho}) ]$.
Let $\bX = \bA + \bH$. For any $(\ell, j) \in [n]\times [p]$, 
$
    \E[Z_{\ell j}] = \rho_j A_{\ell j}$ and $
    \E[Z_{\ell j}^2] = \rho_j \E[X_{\ell j}^2]. 
$
Fix a row $\ell \in [n]$ and denote $\bW^{(\ell)} = (Z_{\ell,\cdot} - A_{\ell,\cdot} \pmb{\rho}) \otimes (Z_{\ell,\cdot} - A_{\ell,\cdot} \pmb{\rho})$.  
By linearity of expectations, 
$
    \E[W_{ij}^{(\ell)}] = \E[Z_{\ell i} Z_{\ell j}] - \rho_j \E[ Z_{\ell i} ]A_{\ell j} - \rho_i \E[ Z_{\ell j} ]A_{\ell i} + \rho_i\rho_j  A_{\ell i} A_{\ell j}.
$
Suppose $i = j$, then 
$
    \E[W_{ii}^{(\ell)}] = \rho_i \E[X_{\ell i}^2] - \rho_i^2 A_{\ell i}^2
    = \rho_i (1-\rho_i) \E[X_{\ell i}^2] + \rho_i^2 \E[(X_{\ell i} - A_{\ell i})^2]. 
$
On the other hand, if $i \neq j$,  
$
    \E[W_{ij}^{(\ell)}] \leq \sqrt{\rho_i\rho_j} \E[ (X_{\ell i} - A_{\ell i}) (X_{\ell j} - A_{\ell j})]
$
since $\E[Z_{\ell i} Z_{\ell j}]=\E[\pi_{i\ell} \pi_{\ell j}] \E[X_{\ell i} X_{\ell j}]\leq \sqrt{\E[\pi^2_{\ell i}]}\sqrt{\E[\pi^2_{\ell j}]}\E[X_{\ell i} X_{\ell j}]=\sqrt{\rho_i\rho_j}\E[X_{\ell i} X_{\ell j}].$
Therefore, we can bound $\bW^{(\ell)}$ as the sum of two matrices and hence
$
    \E[\bW^{(\ell)}] \leq  \rho_{\max}(1-\rho_{\min}) \E[diag(X_{\ell,\cdot} \otimes X_{\ell,\cdot})] + \rho_{\max} \E[H_{\ell,\cdot} \otimes H_{\ell,\cdot}]. 
$
Summing over all rows $\ell \in [n]$ yields
$
    \E[ (\bZ - \bA \pmb{\rho})^T (\bZ - \bA \pmb{\rho})] \leq \rho_{\max} (1-\rho_{\min}) diag(\E[\bX^T \bX]) + \rho_{\max} \E[ \bH^T \bH]. 
$
To complete the proof, we apply triangle inequality: 
$
    \norm{ \E[(\bZ - \bA \pmb{\rho})^T (\bZ - \bA \pmb{\rho})]} \le 
    \rho_{\max} (1- \rho_{\min}) \norm{ diag(\E[\bX^T \bX])} + \rho_{\max} \norm{ \E[\bH^T \bH]}
$
and since $\bH$ is zero mean,
$
 \norm{ diag(\E[\bX^T \bX])} \le \norm{ diag(\bA^T \bA)} + \norm{ diag(\E[\bH^T \bH])}.
$
\end{proof}

\begin{lemma}[Lemma H.2 of \cite{agarwal2021robustness}]\label{lem:technical_Ka}
	Suppose that $X \in \mathbb{R}^n$ and $P \in \{0, 1\}^n$ are random vectors. Then for any $a \geq 1$,
	$\norm{ X \odot P }_{\psi_a} \leq \norm{ X }_{\psi_a}.$
\end{lemma}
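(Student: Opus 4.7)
The plan is to isolate the binary structure of the mask $P$ and convert it into a pointwise contraction that passes through the Orlicz norm. Working with the vector sub-exponential norm in its Euclidean-norm form, $\|Y\|_{\psi_a} = \inf\{t > 0 : \E \exp(\|Y\|_2^a / t^a) \leq 2\}$, the argument reduces to three short steps.

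First, I would establish the deterministic contraction. Because $P_i \in \{0,1\}$, we have $P_i^2 = P_i \in [0,1]$, and hence
\[
\|X \odot P\|_2^2 \;=\; \sum_{i=1}^n X_i^2 P_i \;\leq\; \sum_{i=1}^n X_i^2 \;=\; \|X\|_2^2
\]
pointwise (i.e. on every sample path of $(X,P)$, with no independence assumption). In particular, the binary-valued mask does not require any distributional control; it is pure algebra.

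Second, I would push this inequality through the $\psi_a$ definition. For any fixed $t > 0$ the function $u \mapsto \exp(u^a / t^a)$ is monotone non-decreasing on $[0,\infty)$, so $\exp(\|X \odot P\|_2^a / t^a) \leq \exp(\|X\|_2^a / t^a)$ almost surely, and monotonicity of expectation yields $\E \exp(\|X \odot P\|_2^a / t^a) \leq \E \exp(\|X\|_2^a / t^a)$. If $t$ achieves the bound $\leq 2$ on the right, it also achieves it on the left; taking the infimum over such $t$ gives $\|X \odot P\|_{\psi_a} \leq \|X\|_{\psi_a}$.

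The only real obstacle is verifying that this is the convention the cited proof uses. If the paper instead adopts the Vershynin sup-over-directions definition $\|Y\|_{\psi_a} = \sup_{v \in \mathbb{S}^{n-1}} \|\langle Y, v\rangle\|_{\psi_a}$, I would still exploit the same contraction by writing $\langle X \odot P, v\rangle = \langle X, v \odot P\rangle$ and noting that $\|v \odot P\|_2 \leq \|v\|_2 \leq 1$ almost surely, then conditioning on $P$ and invoking the conditional $\psi_a$ assumption on $X$ maintained throughout the paper (cf.\ Assumption~\ref{assumption:measurement}). Either route relies on the same underlying observation that multiplication by a $\{0,1\}$-vector is a contraction, so the structure of the proof does not change.
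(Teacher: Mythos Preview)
The paper does not supply its own proof of this lemma; it is quoted verbatim from \cite{agarwal2021robustness} and used as a black box. Your pointwise-contraction argument is correct and is the standard proof: $\|X\odot P\|_2\le\|X\|_2$ almost surely pushes through the Orlicz definition by monotonicity, exactly as you wrote.

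One remark on your convention discussion. The surrounding appendix does use the sup-over-directions convention (see the proof of Lemma~\ref{lemma:masked_noise_row_norm}, where $\|\,\cdot\,\|_{\psi_a}$ is unpacked as $\sup_{u\in\mathbb{B}^p}\|\langle\,\cdot\,,u\rangle\|_{\psi_a}$), so your second route is the operative one here. Your instinct to condition on $P$ is right, but note that the lemma as stated carries no independence hypothesis between $X$ and $P$; without it, $P\odot v$ is a \emph{random} direction possibly correlated with $X$, and the step $\|\langle X,P\odot v\rangle\|_{\psi_a}\le\sup_{\|w\|\le 1}\|\langle X,w\rangle\|_{\psi_a}$ is not automatic. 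In the paper's actual applications (e.g.\ $X=H_{i,\cdot}$, $P=\pi_{i,\cdot}$) the needed conditional structure is present in the model, but it is not written into the lemma statement itself---a wrinkle inherited from the cited reference rather than a gap in your argument.
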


\begin{lemma}\label{lemma:masked_noise_row_norm}
Under Assumptions~\ref{assumption:bounded},~\ref{assumption:measurement}, and~\ref{assumption:missing}, 
$
\|Z_{i,\cdot}-A_{i,\cdot}\boldsymbol{\rho}\|_{\psi_a}\leq K_a+\bar{A}\bar{K}.
$
\end{lemma}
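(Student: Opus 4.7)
The plan is a two-term decomposition followed by the triangle inequality for the Orlicz norm, after which each piece is handled by a separate assumption. Writing $\rho := (\rho_1,\ldots,\rho_p)$ for the diagonal of $\boldsymbol{\rho}$, and using that for any vectors $u,v,w$ we have $(u+v)\odot w = u\odot w + v\odot w$, algebra gives the decomposition
\[
Z_{i,\cdot} - A_{i,\cdot}\boldsymbol{\rho}
\;=\; (A_{i,\cdot}+H_{i,\cdot})\odot \pi_{i,\cdot} - A_{i,\cdot}\odot \rho
\;=\; A_{i,\cdot}\odot(\pi_{i,\cdot}-\rho) \;+\; H_{i,\cdot}\odot \pi_{i,\cdot}.
\]
The triangle inequality then reduces the claim to separately bounding $\|A_{i,\cdot}\odot(\pi_{i,\cdot}-\rho)\|_{\psi_a}$ and $\|H_{i,\cdot}\odot \pi_{i,\cdot}\|_{\psi_a}$. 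Throughout I work conditionally on $X_{i,\cdot}=A_{i,\cdot}$, which is the conditioning under which both Assumption~\ref{assumption:measurement} and Assumption~\ref{assumption:missing} are stated.

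For the $H$-term, the plan is to quote Lemma~\ref{lem:technical_Ka} (masking by a $\{0,1\}$-valued vector does not inflate the $\psi_a$ norm), so $\|H_{i,\cdot}\odot \pi_{i,\cdot}\|_{\psi_a}\leq \|H_{i,\cdot}\|_{\psi_a}\leq K_a$ by Assumption~\ref{assumption:measurement}. This step is essentially bookkeeping.

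For the $A$-term I use boundedness of the signal (Assumption~\ref{assumption:bounded}) to re-route the Hadamard product onto the test direction. Using the vector Orlicz norm $\|\cdot\|_{\psi_a}=\sup_{v\in\mathbb{S}^{p-1}}\|\langle\cdot,v\rangle\|_{\psi_a}$, for any fixed unit $v$ one has the identity
\[
\langle A_{i,\cdot}\odot(\pi_{i,\cdot}-\rho),\, v\rangle \;=\; \langle \pi_{i,\cdot}-\rho,\; A_{i,\cdot}\odot v\rangle,
\]
and Assumption~\ref{assumption:bounded} yields $\|A_{i,\cdot}\odot v\|_2\leq \bar A\|v\|_2=\bar A$. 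By positive homogeneity of the scalar Orlicz norm, the right-hand side is bounded in $\psi_a$-norm by $\bar A\,\|\pi_{i,\cdot}-\rho\|_{\psi_a}$. The sub-Gaussian assumption on $\pi_{i,\cdot}-\rho$ from Assumption~\ref{assumption:missing} gives $\|\pi_{i,\cdot}-\rho\|_{\psi_2}\leq \bar K$, which upper-bounds $\|\pi_{i,\cdot}-\rho\|_{\psi_a}$ for the relevant $a\geq 1$ because $\pi_{i,\cdot}-\rho$ takes values in $[-1,1]^p$ and $\psi_2$ is the strongest of the relevant Orlicz norms for bounded variables. Taking the supremum over $v\in\mathbb{S}^{p-1}$ then gives $\|A_{i,\cdot}\odot(\pi_{i,\cdot}-\rho)\|_{\psi_a}\leq \bar A\bar K$, and summing the two pieces delivers the stated bound $K_a+\bar A\bar K$.

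The main (small) obstacle I anticipate is the Orlicz-norm manipulation for the $A$-term: one must be explicit about using the vector $\psi_a$-norm via its dual characterization through unit vectors, and must justify that the sub-Gaussian constant $\bar K$ for $\pi_{i,\cdot}-\rho$ dominates the sub-exponential-type $\psi_a$ constant, which is immediate from boundedness but should be stated. The $H$-term is handled cleanly by the cited masking lemma.
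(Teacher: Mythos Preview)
Your proposal is correct and follows essentially the same route as the paper: the same two-term decomposition $Z_{i,\cdot}-A_{i,\cdot}\boldsymbol{\rho}=H_{i,\cdot}\odot\pi_{i,\cdot}+A_{i,\cdot}\odot(\pi_{i,\cdot}-\rho)$, the same appeal to Lemma~\ref{lem:technical_Ka} for the $H$-term, and the same re-routing of the Hadamard product onto the test direction (the paper writes $v_j=u_jA_{ij}/\bar A$ and checks $v\in\mathbb{B}^p$, which is exactly your $\|A_{i,\cdot}\odot v\|_2\le\bar A$). If anything you are slightly more explicit than the paper in flagging the $\psi_2\to\psi_a$ step for the bounded vector $\pi_{i,\cdot}-\rho$, which the paper invokes without comment.
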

\begin{proof}
To begin, write
$
\|Z_{i,\cdot}-A_{i,\cdot}\boldsymbol{\rho}\|_{\psi_a}\leq \|(X_{i,\cdot}-A_{i,\cdot})\odot \pi_{i,\cdot}\|_{\psi_a}+\| A_{i,\cdot}\odot \pi_{i,\cdot}-A_{i,\cdot}\boldsymbol{\rho}\|_{\psi_a}.
$
By Lemma \ref{lem:technical_Ka} and Assumption~\ref{assumption:measurement},
$
\|(X_{i,\cdot}-A_{i,\cdot})\odot \pi_{i,\cdot}\|_{\psi_a}\leq \|(X_{i,\cdot}-A_{i,\cdot})\|_{\psi_a}=\|H_{i,\cdot}\|_{\psi_a}\leq K_a.
$
By the definition of $\|\cdot\|_{\psi_a}$ and Assumption~\ref{assumption:bounded},
$
\| A_{i,\cdot}\odot \pi_{i,\cdot}-A_{i,\cdot}\boldsymbol{\rho}\|_{\psi_a} 
=\sup_{u\in\mathbb{B}^{p}}\left\|\sum_{j=1}^p u_j A_{ij}(\pi_{ij}-\rho_j)\right\|_{\psi_a}
$
$
=\bar{A}\sup_{u\in\mathbb{B}^{p}}\left\|\sum_{j=1}^p u_j \frac{A_{ij}}{\bar{A}}(\pi_{ij}-\rho_j)\right\|_{\psi_a}.
$
Let $v_j = u_j \frac{A_{ij}}{\bar{A}}$. Since $v\in \mathbb{B}^{p}$, we have
$$
   \sup_{u\in\mathbb{B}^{p}}\left\|\sum_{j=1}^p u_j \frac{A_{ij}}{\bar{A}}(\pi_{ij}-\rho_j)\right\|_{\psi_a}
    \le \sup_{v\in\mathbb{B}^{p}}\left\|\sum_{j=1}^p v_j(\pi_{ij}-\rho_j)\right\|_{\psi_a} 
    =\left\|\pi_{i,\cdot}-(\rho_1,...,\rho_p)\right\|_{\psi_a} 
    \leq \bar{K},
$$
using Assumption~\ref{assumption:missing}.
\end{proof}

\begin{lemma}[Proposition H.1 of \cite{agarwal2021robustness}]\label{lem:spectral_upper_bound}
	Let $\bW \in \mathbb{R}^{n \times p}$ be a random matrix whose rows $\bW_{i, \cdot}$ are independent 
	$\psi_a$-random vectors for some $a \geq 1$. Then for any $\tau > 0$, with probability at least $ 1 - \frac{2}{n^{1 + \tau}p^{\tau}}$, 
	$
		\norm{\bW} \leq \norm{ \E \bW^T \bW }^{1/2}
		+ \sqrt{(1+\tau) p } \max_{i \in [n]}\norm{ \bW_{i, \cdot} }_{\psi_a}
		\Big\{ 1 + \big(2 + \tau \big) \ln(np) \Big\}^{\frac{1}{a}} \sqrt{ \ln(np) } 
	$. 
\end{lemma}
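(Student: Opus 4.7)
The inequality is a random-matrix spectral concentration bound in the spirit of the sub-Gaussian result of Vershynin, extended to the $\psi_a$-Orlicz setting. The natural route is an $\epsilon$-net argument over the sphere combined with a Bernstein-type deviation inequality for sums of independent $\psi_{a/2}$ variables. Begin with the deterministic reduction
\begin{align*}
\|\bW\|
\;=\; \|\bW^T\bW\|^{1/2}
\;\leq\; \bigl(\|\E \bW^T\bW\| + \|\bW^T\bW-\E \bW^T\bW\|\bigr)^{1/2}
\;\leq\; \|\E \bW^T\bW\|^{1/2} + \|\bW^T\bW-\E \bW^T\bW\|^{1/2},
\end{align*}
which isolates the first term in the claim; the task is to control the second term by the second term in the claim (squared).

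First I would discretize the sphere. Fix a $\tfrac{1}{4}$-net $\mathcal{N}\subset \mathbb{S}^{p-1}$ of cardinality $|\mathcal{N}|\leq 9^p$, and use the standard fact that for any symmetric $M\in\R^{p\times p}$, $\|M\|\leq 2\sup_{v\in\mathcal N}|v^T M v|$. Applied to $M=\bW^T\bW-\E \bW^T\bW$, this reduces the problem to a uniform bound over $\mathcal{N}$ of the scalar random variables
\begin{align*}
S(v) \;=\; v^T(\bW^T\bW-\E \bW^T\bW)v \;=\; \sum_{i=1}^n \xi_i(v),\quad
\xi_i(v) \;=\; \langle \bW_{i,\cdot},v\rangle^2 - \E\langle \bW_{i,\cdot},v\rangle^2,
\end{align*}
which are independent, mean-zero, and $\psi_{a/2}$ with $\|\xi_i(v)\|_{\psi_{a/2}}\lesssim \|\bW_{i,\cdot}\|_{\psi_a}^2\leq K^2$, where $K:=\max_i\|\bW_{i,\cdot}\|_{\psi_a}$.

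Next I would apply a $\psi_{a/2}$ Bernstein / sub-Weibull tail inequality to each $S(v)$. Such inequalities (see, e.g., Theorem~3.4 of Kuchibhotla--Chakrabortty, or the moment-method proof via Rosenthal-type bounds) yield
\begin{align*}
\mathbb{P}\bigl(|S(v)|\geq t\bigr)
\;\leq\; 2\exp\!\Bigl(-c\min\{t^2/(nK^4),\, (t/K^2)^{a/2}\}\Bigr).
\end{align*}
Then I would take a union bound over $\mathcal{N}$: paying $\ln|\mathcal{N}|\leq p\ln 9$ in the exponent, and choosing the deviation level so the failure probability is at most $2n^{-(1+\tau)}p^{-\tau}$, forces $t\asymp (1+\tau)p\cdot K^2\cdot\{1+(2+\tau)\ln(np)\}^{2/a}\ln(np)$. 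Taking square roots and combining with the deterministic reduction recovers exactly the stated bound.

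The main obstacle is matching the precise functional form $\{1+(2+\tau)\ln(np)\}^{1/a}\sqrt{\ln(np)}$: this requires carefully tracking how $q$-th moments of $\psi_a$ variables grow like $q^{1/a}$, then optimizing the Orlicz-Bernstein parameter so that the heavy-tailed component $(t/K^2)^{a/2}$ and the Gaussian component $t^2/(nK^4)$ are simultaneously of the right order. A secondary subtlety is ensuring that the centering contributes only to the $\|\E\bW^T\bW\|^{1/2}$ term, which is why the argument routes through $\bW^T\bW$ rather than $\bW$ directly; the price $\sqrt{a+b}\leq\sqrt a+\sqrt b$ is benign. Once the $\psi_a$-Bernstein bound is in hand, the remaining steps (net covering, union bound, square root) are essentially bookkeeping.
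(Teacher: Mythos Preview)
The paper does not prove this lemma; it is quoted verbatim from \cite{agarwal2021robustness} and used as a black box, so there is no in-paper proof to compare against. Your outline---reduce to $\bW^T\bW-\E\bW^T\bW$, discretize the sphere, apply a sub-Weibull Bernstein bound to $S(v)=\sum_i(\langle \bW_{i,\cdot},v\rangle^2-\E\langle \bW_{i,\cdot},v\rangle^2)$, and union bound---is the standard template for such results and is almost certainly what the cited reference does.

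That said, one step in your sketch does not go through as written. You call the inequality $\sqrt{a+b}\le\sqrt a+\sqrt b$ ``benign,'' but it is not: with the Bernstein bound you state, the union bound over a net of size $9^p$ forces the deviation level $t$ to satisfy $t^2/(nK^4)\gtrsim p$ in the Gaussian-tail branch, i.e.\ $t\gtrsim K^2\sqrt{np}$, and hence $\sqrt{t}\gtrsim K(np)^{1/4}$. When $n\gg p$ this is strictly larger than the claimed fluctuation $K\sqrt{p}\,\mathrm{polylog}$. Concretely, take $p=1$ and i.i.d.\ standard Gaussian entries: your argument yields $\|\bW\|\le\sqrt n+n^{1/4}$, whereas the lemma (correctly) asserts $\|\bW\|\le\sqrt n+\mathrm{polylog}(n)$. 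Your asserted value $t\asymp(1+\tau)pK^2\{1+(2+\tau)\ln(np)\}^{2/a}\ln(np)$ simply does not satisfy $t^2/(nK^4)\gtrsim p$ in this regime, so the inversion step is incorrect as stated.

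To repair this one typically avoids the crude square-root splitting and instead works with the event $\{\|\bW v\|_2>A+s\}$ directly (where $A=\|\E\bW^T\bW\|^{1/2}$), so that the Bernstein threshold becomes $2As+s^2$ rather than $s^2$; the cross term $2As$ is what absorbs the extra $\sqrt n$. Equivalently, one uses a variance proxy of the form $V_v\lesssim K^2\mu_v$ (with $\mu_v=v^T\E\bW^T\bW v$) in place of the worst-case $nK^4$. Either route requires some additional care beyond what your sketch provides; this is the real content behind the ``main obstacle'' you flagged, not just constant-tracking.
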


\begin{proposition}\label{prop:E1_whp}
Under Assumptions~\ref{assumption:bounded},~\ref{assumption:measurement}, and~\ref{assumption:missing}
$\p(\Ec_1^c )\leq \frac{2}{n^{11}p^{10}}< \frac{2}{n^{10}p^{10}}.$
\end{proposition}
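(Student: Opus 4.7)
The plan is to invoke Lemma~\ref{prop:thm_G1} with a specific choice of $\tau$ and then absorb the resulting constants into the definition of $\Delta_{H,op}$. Concretely, I would set $\tau = 10$, which gives with probability at least $1 - \frac{2}{n^{11}p^{10}}$ the bound
\begin{align*}
\|\bZ-\bA\brho\| &\leq C\sqrt{n}\left(\bar{A} + \kappa + K_a \right) \\
&\quad + \sqrt{11}\sqrt{p}(K_a+\bar{A}\bar{K})\left\{1+12\ln(np)\right\}^{1/a}\sqrt{\ln(np)}.
\end{align*}
The target bound $(\sqrt{n}+\sqrt{p})\Delta_{H,op}$ with $\Delta_{H,op}=C\bar{A}(\kappa+K_a+\bar{K})\ln^{3/2}(np)$ must dominate both summands.

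For the first summand, note that $\bar{A}+\kappa+K_a \leq \bar{A}(1+\kappa+K_a+\bar{K})$ (using $\bar{A}\geq 1$ without loss of generality, or folding the constant into $C$), and $1 \leq \ln^{3/2}(np)$ for $np$ large, so the first summand is dominated by $\sqrt{n}\cdot C\bar{A}(\kappa+K_a+\bar{K})\ln^{3/2}(np)$. For the second summand, since $a \geq 1$ by Assumption~\ref{assumption:measurement}, we have $\{1+12\ln(np)\}^{1/a} \leq 1+12\ln(np) \leq C\ln(np)$, so that the term is bounded by $\sqrt{p}\cdot C(K_a+\bar{A}\bar{K})\ln^{3/2}(np) \leq \sqrt{p}\cdot C\bar{A}(\kappa+K_a+\bar{K})\ln^{3/2}(np)$. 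Summing the two contributions gives a bound of the form $(\sqrt{n}+\sqrt{p})\cdot C\bar{A}(\kappa+K_a+\bar{K})\ln^{3/2}(np) = (\sqrt{n}+\sqrt{p})\Delta_{H,op}$, where $C$ is an updated absolute constant absorbed into the definition of $\Delta_{H,op}$.

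The main obstacle, which is in fact minor, is verifying the $(1+12\ln(np))^{1/a}$ factor does not contribute an extra $\ln(np)^{1/a}$ blow-up that would prevent the bound from fitting inside $\ln^{3/2}(np)$; this is resolved by the crude bound $(1+12\ln(np))^{1/a}\leq C\ln(np)$ (valid since $a\geq 1$), which is wasteful but sufficient for our purposes. The final probability estimate $\frac{2}{n^{11}p^{10}} < \frac{2}{n^{10}p^{10}}$ is immediate.
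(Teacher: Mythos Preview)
Your proposal is correct and follows exactly the paper's approach: invoke Lemma~\ref{prop:thm_G1} with $\tau=10$ and simplify. The paper's proof is a one-liner (``Immediate by Lemma~\ref{prop:thm_G1}, setting $\tau=10$ and simplifying the bound''), so your detailed justification of how each summand is absorbed into $(\sqrt{n}+\sqrt{p})\Delta_{H,op}$ is in fact more explicit than what the paper provides.
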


\begin{proof}
We show that for all $\tau>0$, with probability $1-\frac{2}{n^{1+\tau}p^{\tau}}$, 
$
    \|\bZ-\bA\boldsymbol{\rho}\|  \leq C\sqrt{n}\left(\bar{A} + \kappa + K_a \right) 
    +\sqrt{1+\tau}\sqrt{p}(K_a+\bar{A} \bar{K})\left\{1+(2+\tau)\ln(np)\right\}^{\frac{1}{a}}
\sqrt{\ln(np)}.
$
Setting $\tau=10$ and simplifying the bound yields the result. By Lemma~\ref{lemma:masked_noise_operator_norm}, $\|\E[(\bZ-\bA\boldsymbol{\rho})^T(\bZ-\bA\boldsymbol{\rho})]\| \leq \max_{j\in[p]}\|A_{\cdot,j}\|_2^2 + {\|diag(\E[\bH^T\bH])\|}  +\|\E[\bH^T\bH]\|.$ We bound these terms as $n\bar{A}^2$, $n C K_a$, and $n\kappa^2$, respectively, then plug them and Lemma~\ref{lemma:masked_noise_row_norm} into  
Lemma~\ref{lem:spectral_upper_bound}. 
\end{proof}

\begin{lemma}[Lemma H.4 of \cite{agarwal2021robustness}]\label{lem:ind_sum}
	Let $X_1, \ldots, X_n$ be independent random variables with mean zero.
	For $a \geq 1$, 
$\norm{\sum_{i=1}^n X_i}_{\psi_a} \leq C \left( \sum_{i=1}^n \norm{X_i}_{\psi_a}^2 \right)^{1/2}$.
\end{lemma}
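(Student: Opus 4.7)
The plan is to reduce the statement to a moment inequality and then rely on the standard moment characterization of the $\psi_a$-norm. For $a\ge 1$ it is classical (e.g. Vershynin 2018, Prop.~2.7.1 and its $\psi_a$ generalization) that
\[
\|X\|_{\psi_a}\;\asymp\;\sup_{q\ge 1}\,q^{-1/a}\,(\E|X|^q)^{1/q},
\]
so it suffices to control $\|\sum_i X_i\|_{L^q}$ for every $q\ge 1$ and then reconvert. Set $K_i:=\|X_i\|_{\psi_a}$; by the same equivalence, $\|X_i\|_{L^q}\le C\,q^{1/a} K_i$ for all $q\ge 1$, and in particular $\|X_i\|_{L^2}\le C K_i$.

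I would first dispose of the cleanest subcase $a=2$ via moment generating functions: mean-zero $\psi_2$ variables satisfy $\E\exp(\lambda X_i)\le \exp(C\lambda^2 K_i^2)$, so by independence $\E\exp(\lambda\sum_i X_i)\le \exp(C\lambda^2\sum_i K_i^2)$, and the standard MGF-to-$\psi_2$ conversion gives the claim. This case also serves as a benchmark that fixes the shape of the target bound.

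For general $a\in[1,2]$ the MGF route does not close as neatly, so I would go through moments using a Rosenthal / Marcinkiewicz--Zygmund inequality: for independent mean-zero $X_i$ and $q\ge 2$,
\[
\bigl\|\textstyle\sum_{i=1}^n X_i\bigr\|_{L^q}\;\le\;C\Bigl(\sqrt{q}\,\bigl(\textstyle\sum_i \|X_i\|_{L^2}^2\bigr)^{1/2}+q\,\max_i \|X_i\|_{L^q}\Bigr).
\]
Plugging in $\|X_i\|_{L^2}\le CK_i$ and $\max_i\|X_i\|_{L^q}\le C q^{1/a}\max_i K_i\le C q^{1/a}(\sum_i K_i^2)^{1/2}$, and noting $\sqrt{q}\le q\le q^{1/a}\cdot q^{1-1/a}$ with $a\ge 1$, both terms collapse into $C\,q^{1/a}(\sum_i K_i^2)^{1/2}$. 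Dividing by $q^{1/a}$ and taking the supremum over $q\ge 1$ gives, via the moment characterization in reverse,
\[
\bigl\|\textstyle\sum_i X_i\bigr\|_{\psi_a}\;\le\;C\,\bigl(\textstyle\sum_i K_i^2\bigr)^{1/2},
\]
as desired. For the small-$q$ range $1\le q<2$ the bound follows from $\|\cdot\|_{L^q}\le \|\cdot\|_{L^2}$.

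The main obstacle is to pick the Rosenthal-type inequality with prefactors that interact correctly with the $q^{1/a}$ growth demanded by the $\psi_a$-norm; once that is in place the rest is routine. A self-contained alternative, avoiding Rosenthal, is to symmetrize, condition on signs, and apply the $\psi_2\Rightarrow\psi_a$ comparison to the resulting Gaussian-dominated chaos, at the cost of slightly worse absolute constants. Either route yields the stated bound with an absolute constant $C$ that is independent of $n$ and of the individual $\|X_i\|_{\psi_a}$.
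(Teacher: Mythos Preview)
The paper does not prove this lemma; it is quoted as Lemma~H.4 of Agarwal, Shah, Shen, Song (2021), so there is no in-paper argument to compare against. Your $a=2$ case via moment generating functions is standard and correct. The gap is in the Rosenthal step for $a\in[1,2)$: after inserting $\max_i\|X_i\|_{L^q}\le Cq^{1/a}(\sum_i K_i^2)^{1/2}$ into the inequality you quote, the second term becomes $q\cdot Cq^{1/a}(\sum_i K_i^2)^{1/2}=Cq^{1+1/a}(\sum_i K_i^2)^{1/2}$, not $Cq^{1/a}(\sum_i K_i^2)^{1/2}$. The identity $q=q^{1/a}\cdot q^{1-1/a}$ you invoke is a tautology and does not absorb the extra factor of $q$; dividing by $q^{1/a}$ and taking $\sup_{q\ge 1}$ therefore yields $+\infty$, and the argument does not close. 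The Rosenthal form you wrote is simply too weak here because it feeds in only the $q$-th moment of each summand, whereas the target $\psi_a$ conclusion needs the full tail of each $X_i$.

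A route that does close (for $1\le a\le 2$, which is the range actually used downstream in the paper) is a Bernstein-type inequality for sums of independent centered $\psi_a$ variables: $\p(|S|>t)\le 2\exp\bigl(-c\min\{t^2/\sigma^2,\,(t/M)^a\}\bigr)$ with $\sigma^2=\sum_i K_i^2$ and $M=\max_i K_i$. Integrating this tail gives $\|S\|_{L^q}\le C(\sigma\sqrt{q}+Mq^{1/a})$; since $M\le\sigma$ and $\sqrt{q}\le q^{1/a}$ for $a\le 2$, one obtains $\|S\|_{L^q}\le C\sigma q^{1/a}$ and hence $\|S\|_{\psi_a}\le C\sigma$ via the moment characterization you already set up.
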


\begin{lemma}\label{prop:H9}
Under Assumptions~\ref{assumption:bounded},~\ref{assumption:measurement}, and~\ref{assumption:missing}, 
$
\|Z_{\cdot,j}-\rho_jA_{\cdot,j}\|_{\psi_a}\leq C(K_a+\bar{A}\bar{K}). 
$
\end{lemma}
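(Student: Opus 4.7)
The plan is to mirror the structure of the proof of Lemma~\ref{lemma:masked_noise_row_norm}, but applied column-wise instead of row-wise. The only wrinkle is that Assumptions~\ref{assumption:measurement} and~\ref{assumption:missing} state $\psi_a$-bounds on \emph{rows} of $\bH$ and $\pi_{i,\cdot}-\rho$, whereas now we need a bound on a \emph{column}; the extra ingredient is independence across rows, which lets us bootstrap from entrywise bounds to a column $\psi_a$-bound via Lemma~\ref{lem:ind_sum}.

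First I would use the algebraic decomposition
$$
Z_{\cdot,j}-\rho_j A_{\cdot,j}
= H_{\cdot,j}\odot \pi_{\cdot,j} + A_{\cdot,j}\odot (\pi_{\cdot,j}-\rho_j\mathbbm{1}_n),
$$
obtained by adding and subtracting $A_{\cdot,j}\odot \pi_{\cdot,j}$. Triangle inequality on the vector $\psi_a$-norm reduces the claim to bounding each of the two summands by $CK_a$ and $C\bar{A}\bar{K}$ respectively.

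For the first summand, Lemma~\ref{lem:technical_Ka} gives $\|H_{\cdot,j}\odot\pi_{\cdot,j}\|_{\psi_a}\leq \|H_{\cdot,j}\|_{\psi_a}$. Since Assumption~\ref{assumption:measurement} yields $\|H_{ij}\|_{\psi_a}\leq \|H_{i,\cdot}\|_{\psi_a}\leq K_a$ entrywise, and the $H_{ij}$ are independent and mean zero across $i$, Lemma~\ref{lem:ind_sum} applied to $\langle u, H_{\cdot,j}\rangle = \sum_i u_i H_{ij}$ gives
$$
\|\langle u, H_{\cdot,j}\rangle\|_{\psi_a}
\leq C\Big(\sum_{i=1}^n u_i^2 K_a^2\Big)^{1/2} = C K_a\|u\|_2,
$$
so taking $\sup$ over $u\in\mathbb{S}^{n-1}$ yields $\|H_{\cdot,j}\|_{\psi_a}\leq CK_a$.

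For the second summand, I would use the same rescaling trick that appears in the proof of Lemma~\ref{lemma:masked_noise_row_norm}: for $u\in\mathbb{S}^{n-1}$, set $v_i=u_i A_{ij}/\bar{A}$, so that Assumption~\ref{assumption:bounded} implies $v\in\mathbb{B}^n$. Then
$$
\sup_{u\in\mathbb{S}^{n-1}}\Big\|\sum_{i=1}^n u_i A_{ij}(\pi_{ij}-\rho_j)\Big\|_{\psi_a}
\leq \bar{A}\sup_{v\in\mathbb{B}^n}\Big\|\sum_{i=1}^n v_i(\pi_{ij}-\rho_j)\Big\|_{\psi_a}.
$$
The summands are mean-zero, independent across $i$ (Assumption~\ref{assumption:missing}), with entrywise $\psi_a$-norm controlled by a constant multiple of $\bar{K}$ — using the row $\psi_2$-bound in Assumption~\ref{assumption:missing} together with $\|\cdot\|_{\psi_a}\lesssim\|\cdot\|_{\psi_2}$ for $a\leq 2$, and boundedness $|\pi_{ij}-\rho_j|\leq 1$ otherwise. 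A second application of Lemma~\ref{lem:ind_sum} gives the $\psi_a$ bound $C\bar{K}\|v\|_2\leq C\bar{K}$, for a total contribution of $C\bar{A}\bar{K}$. Combining the two summands via triangle inequality delivers $\|Z_{\cdot,j}-\rho_j A_{\cdot,j}\|_{\psi_a}\leq C(K_a+\bar{A}\bar{K})$. The main conceptual step — such as it is — is the observation that the proof of Lemma~\ref{lemma:masked_noise_row_norm} goes through column-wise once row independence is invoked through Lemma~\ref{lem:ind_sum} to pass from entrywise to vector $\psi_a$-norms; no genuinely new estimate is needed.
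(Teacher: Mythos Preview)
Your proposal is correct and uses essentially the same ingredients as the paper: independence across rows via Lemma~\ref{lem:ind_sum}, the decomposition into noise and missingness pieces, and the $A_{ij}/\bar{A}$ rescaling trick from Lemma~\ref{lemma:masked_noise_row_norm}. The paper's proof is slightly more economical in that it applies Lemma~\ref{lem:ind_sum} once to the undecomposed column $Z_{\cdot,j}-\rho_j A_{\cdot,j}$ and then bounds the resulting entrywise quantity $\|(Z_{i,\cdot}-A_{i,\cdot}\brho)e_j\|_{\psi_a}$ by directly citing the row bound Lemma~\ref{lemma:masked_noise_row_norm}, rather than redoing the decomposition inside the column argument; but this is just a reordering, not a different idea.
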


\begin{proof}
Write
$
\norm{ Z_{\cdot, j} - \rho_j A_{\cdot, j}}_{\psi_a}
=  \sup_{u \in \mathbb{S}^{n-1}} \norm{ u^T \big( \bZ -  \bA \pmb{\rho} \big) e_j }_{\psi_a}
=  \sup_{u \in \mathbb{S}^{n-1}} \big\| \sum_{i=1}^n u_i \big( Z_{i, \cdot} -  A_{i, \cdot} \pmb{\rho} \big) e_j \big\|_{\psi_a}
$. By Lemma \ref{lem:ind_sum}, its bound is $C \sup_{u \in \mathbb{S}^{n-1}} \big( \sum_{i=1}^n  u_i^2  \norm{ \big( Z_{i, \cdot} -  A_{i, \cdot} \pmb{\rho} \big)  e_j }_{\psi_a}^2 \big)^{1/2} \leq  C \max_{i \in [n]} \norm{ (Z_{i, \cdot} -  A_{i, \cdot}\pmb{\rho})e_j }_{\psi_a}$. The conclusion follows from Lemmas~\ref{lem:technical_Ka} and~\ref{lemma:masked_noise_row_norm}.
\end{proof}

\begin{lemma}[Lemma I.7 of \cite{agarwal2021robustness}]\label{lem:norm_psi_alpha}
Let $W_1, \ldots, W_n$ be a sequence of $\psi_a$-random variables for some $a \geq 1$. For any $t \geq 0$,
 $
\p\left( \sum_{i=1}^n W_i^2 > t \right) \leq 2 \sum_{i=1}^n \exp \left\{ - \left( \frac{t}{n \| W_i \|_{\psi_a}^2 }\right)^{a/2}  \right\}.
$
\end{lemma}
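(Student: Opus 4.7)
The statement is a straightforward tail bound for a sum of squares of $\psi_a$-random variables, and I would prove it with a pigeonhole step followed by a union bound and the standard $\psi_a$ tail inequality. The main conceptual move is the observation that if $\sum_{i=1}^n W_i^2 > t$, then at least one summand must exceed the average, i.e.\ there is some $i \in [n]$ with $W_i^2 > t/n$. Everything else is bookkeeping with constants.

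\textbf{Step 1: Pigeonhole.} I would begin by noting the deterministic implication
\[
\left\{ \sum_{i=1}^n W_i^2 > t \right\} \;\subseteq\; \bigcup_{i=1}^n \left\{ W_i^2 > t/n \right\} \;=\; \bigcup_{i=1}^n \left\{ |W_i| > \sqrt{t/n} \right\}.
\]
Applying a union bound then gives
\[
\mathbb{P}\left( \sum_{i=1}^n W_i^2 > t \right) \;\leq\; \sum_{i=1}^n \mathbb{P}\!\left( |W_i| > \sqrt{t/n} \right).
\]

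\textbf{Step 2: Apply the $\psi_a$ tail bound.} By the defining property of the $\psi_a$-Orlicz norm (e.g.\ Vershynin, Proposition 2.7.1 / Proposition 2.5.2 for the subgaussian and subexponential cases, generalized to $\psi_a$), for any $s \geq 0$,
\[
\mathbb{P}(|W_i| > s) \;\leq\; 2 \exp\!\left\{ - \left( \frac{s}{\|W_i\|_{\psi_a}} \right)^{a} \right\}.
\]
This follows from Markov's inequality applied to $\exp\{(|W_i|/\|W_i\|_{\psi_a})^a\}$, whose expectation is at most $2$ by definition of the $\psi_a$ norm.

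\textbf{Step 3: Plug in and simplify.} Substituting $s = \sqrt{t/n}$ yields
\[
\mathbb{P}\!\left( |W_i| > \sqrt{t/n} \right) \;\leq\; 2\exp\!\left\{ - \left( \frac{\sqrt{t/n}}{\|W_i\|_{\psi_a}} \right)^{a} \right\} \;=\; 2\exp\!\left\{ - \left( \frac{t}{n\,\|W_i\|_{\psi_a}^{2}} \right)^{a/2} \right\}.
\]
Summing over $i \in [n]$ gives exactly the claimed bound.

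\textbf{Expected obstacles.} There is essentially nothing hard here: the only subtlety is being careful about the constant $2$ in the $\psi_a$ tail bound, which requires the standard convention for the Orlicz norm ($\mathbb{E}\exp\{(|W|/\|W\|_{\psi_a})^a\} \leq 2$). If one uses a different normalization (e.g.\ $\leq e$), the prefactor would change, but the structural form of the bound is unchanged. Since this lemma is cited verbatim as Lemma I.7 of \cite{agarwal2021robustness}, the three-line proof sketched above is presumably all that is needed.
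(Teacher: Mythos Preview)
Your proof is correct and is exactly the standard argument for this kind of bound: pigeonhole, union bound, then the $\psi_a$ tail inequality. The paper itself does not prove this lemma but cites it directly from \cite{agarwal2021robustness}, and your three-step argument is precisely the proof one finds there.
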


\begin{proposition}\label{prop:H4}
Under Assumptions~\ref{assumption:bounded},~\ref{assumption:measurement}, and~\ref{assumption:missing},
$\p(\Ec_2^c )\leq \frac{2}{n^{10}p^{10}}$
\end{proposition}

\begin{proof}
Fix $j$. 
Write
$
\|Z_{\cdot,j}-\rho_jA_{\cdot,j}\|_2^2=\sum_{i=1}^n W_i^2$, where $W_i=e_i^T(Z_{\cdot,j}-\rho_j A_{\cdot,j}).
$
By Lemmas~\ref{lem:technical_Ka} and~\ref{prop:H9},
$
\|W_i\|_{\psi_a}\leq \|Z_{\cdot,j}-\rho_j A_{\cdot,j}\|_{\psi_a} \leq C(K_a+ \bar{K}\bar{A})
$.
By Lemma \ref{lem:norm_psi_alpha} and the union bound, we arrive at the conclusion.
\end{proof}

\begin{proposition}\label{prop:E3_whp_new}
Under Assumptions~\ref{assumption:bounded},~\ref{assumption:measurement}, and~\ref{assumption:missing}, 
$\p(\Ec_3^c )\leq \frac{2}{n^{10}p^{10}}.$
\end{proposition}

\begin{proof}
The key equality is
$
\|\bU_k \bU_k^T (Z_{\cdot,j}-\rho_j A_{\cdot,j})\|_2^2=\sum_{i=1}^k W_i^2$, where  $W_i=u_{i}^T(Z_{\cdot,j}-\rho_j A_{\cdot,j}).
$
To see that it holds, set $v=Z_{\cdot,j}-\rho_j A_{\cdot,j}$. Then
$
\|\bU_k \bU_k^T v\|_2^2=v^T\bU_k \bU_k^T\bU_k \bU_k^T v=v^T\bU_k \bU_k^T v=W^TW.
$
The rest is analogous to Proposition~\ref{prop:H4}.
\end{proof}

\begin{proposition}\label{prop:E2_whp}
Under Assumption~\ref{assumption:missing},
$\p(\Ec_4^c)\leq \frac{2}{n^{10}p^{10}}.$
\end{proposition}

\begin{proof}
Fix $\delta>1$. Define the event
$
\Ec_{(j)}=\left\{\frac{1}{\delta}\rho_j\leq \hat{\rho}_j\leq \delta \rho_j\right\}.
$
By the Chernoff bound for binary variables,
$
\p(\Ec^c_{(j)}) \leq 2\exp\left(-\frac{(\delta-1)^2}{2\delta^2}n\rho_j\right)\leq 2\exp\left(-\frac{(\delta-1)^2}{2\delta^2}n\rho_{\min}\right).
$
Hence by De Morgan's law and the union bound 
$
\p(\Ec_4^c)=\p\left(\left\{\bigcap_{j\in[p]}\Ec_{(j)}\right\}^c\right)= \p\left(\bigcup_{j\in[p]}\Ec^c_{(j)}\right) \leq  2p\exp\left(-\frac{(\delta-1)^2}{2\delta^2}n\rho_{\min}\right).
$
Solve
$
\frac{2}{n^{10}p^{10}}\geq \frac{2}{n^{11}p^{10}}= 2p\exp\left(-\frac{(\delta-1)^2}{2\delta^2}n\rho_{\min}\right)
$
for $\delta$.
\end{proof}

\begin{proposition}\label{prop:E5_whp}
Under Assumption~\ref{assumption:missing},
$\p(\Ec_5^c)\leq \frac{2}{n^{10}p^{10}}.$
\end{proposition}

\begin{proof}
Define the event
$
\mathcal{E}_{(j)}=\{|\hat{\rho}_j-\rho_j|\leq t\}.
$
By Hoeffding's inequality for bounded variables, 
$
\p(\Ec^c_{(j)})\leq 2 \exp\left(-2nt^2\right).
$
By De Morgan's law and the union bound,  
$
\p(\Ec_5^c)=\p\left(\left\{\bigcap_{j\in[p]}\Ec_{(j)}\right\}^c\right)= \p\left(\bigcup_{j\in[p]}\Ec^c_{(j)}\right) \leq  2p\exp\left(-2nt^2\right).
$
To arrive at the desired result, solve
$
\frac{2}{n^{10}p^{10}}\geq \frac{2}{n^{11}p^{10}}= 2p\exp\left(-2nt^2\right)
$
for $t$.
\end{proof}
\section{Error-in-variable regression supporting details}\label{sec:regression_proof}

We propose a new norm for error-in-variable regression analysis $\mathcal{R}(\hat{\gamma})$ (Theorem~\ref{theorem:fast_rate}). Our norm is essentially an on-average generalization error, and we demonstrate its compatibility with semiparametric theory (Theorem~\ref{thm:dml_inid}). To prove the guarantee, we combine an analysis of \te\err\, (Proposition~\ref{prop:gen_test}) with a new theory of implicit data cleaning (Proposition~\ref{prop:gen_geom}). The former builds on an analysis of \tr\err\, (Proposition~\ref{prop:param_est}). 

In the PCR literature, the norm called \tr\err\, is standard. The norm called  \te\err\, is similar in spirit to \cite{agarwal2020principal}. Our norm $\mathcal{R}(\hat{\gamma})$ extends these ideas.

We also propose a new norm for error-in-variable balancing weight analysis $\mathcal{R}(\hat{\alpha})$ (Theorem~\ref{theorem:fast_rate_RR}) that is compatible with semiparametric theory (Theorem~\ref{thm:dml_inid}). Our definitions of balancing weight \tr\err\,(Proposition~\ref{prop:param_est_RR2}),  \te\err\, (Proposition~\ref{prop:gen_test_RR}), and implicit data cleaning (Proposition~\ref{prop:gen_geom_RR}) norms all appear to be conceptual innovations.

\begin{proof}[Proof of Proposition~\ref{prop:indep}]
For $i\in \te$,
$
\hgamma(D_i,Z_{i,\cdot})=b(D_i,Z_{i,\cdot}\bhrho^{-1})\hbeta,
$
which equals
$$
\begin{bmatrix}D_iZ_{i,\cdot}\bhrho^{-1} &  (1-D_i)Z_{i,\cdot}\bhrho^{-1} \end{bmatrix} \begin{bmatrix}\hbeta^{\treat} \\ \hbeta^{\untreat} \end{bmatrix} 
=
\begin{bmatrix}D_iZ_{i,\cdot}&  (1-D_i)Z_{i,\cdot} \end{bmatrix} \begin{bmatrix}\bhrho^{-1}\hbeta^{\treat} \\ \bhrho^{-1}\hbeta^{\untreat} \end{bmatrix}.
$$
Both $(\bhrho,\hbeta)$ are calculated from $\tr$, while $(D_i,Z_{i,\cdot})$ and $(D_j,Z_{j,\cdot})$ are i.n.i.d.
\end{proof}


\begin{lemma}[Theorem 4.6.1 of \cite{vershynin2018high}]\label{lemma:subG_matrix}
Let $\bU\in\R^{m\times r}$ whose rows are independent, mean zero, subGaussian, and isotropic with $\|U_{i,\cdot}\|_{\psi_2}\leq K_u$.  Then for any $t\geq 0$, with probability $1-2e^{-t^2}$,
$
\sqrt{m}-CK_u^2(\sqrt{r}+t)\leq s_r(\bU)\leq s_1(\bU)\leq \sqrt{m}+CK_u^2(\sqrt{r}+t).
$
\end{lemma}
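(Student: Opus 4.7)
The plan is to follow the standard proof of Vershynin's Theorem 4.6.1 (Lemma~\ref{lemma:subG_matrix}): reduce the two-sided singular value bound to an operator-norm bound on the normalized sample covariance $\frac{1}{m}\bU^T\bU - I_r$, then control that operator norm via an $\epsilon$-net argument combined with Bernstein's inequality for sums of sub-exponential random variables.

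\textbf{Step 1: reduction to sample covariance.} I would first record the deterministic fact that for any matrix $\bU$, the singular values $s_i(\bU)$ are the square roots of the eigenvalues of $\bU^T\bU$. If I can show
\[
\left\| \frac{1}{m}\bU^T\bU - I_r \right\|_{op} \le \max(\delta, \delta^2),
\qquad \delta := CK_u^2\left(\sqrt{r/m} + t/\sqrt{m}\right),
\]
then for every unit vector $x \in S^{r-1}$ one has $|\|\bU x\|_2^2/m - 1| \le \max(\delta,\delta^2)$, which via the elementary implication $|z^2-1|\le\max(\delta,\delta^2)\Rightarrow |z-1|\le\delta$ gives $|s_i(\bU)/\sqrt{m} - 1| \le \delta$ simultaneously for $i = 1$ and $i = r$. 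Multiplying by $\sqrt{m}$ yields the stated two-sided bound.

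\textbf{Step 2: net reduction.} The operator-norm bound reduces via a standard $\epsilon$-net argument (with $\epsilon = 1/4$) to showing, uniformly over a net $\mathcal{N} \subset S^{r-1}$ of cardinality $|\mathcal{N}| \le 9^r$, that
\[
\max_{x \in \mathcal{N}} \left| \frac{1}{m}\|\bU x\|_2^2 - 1 \right| \le \tfrac{1}{2}\max(\delta,\delta^2).
\]
For each fixed $x \in S^{r-1}$, isotropy gives $\E[\langle U_{i,\cdot}, x\rangle^2] = 1$, so $\|\bU x\|_2^2 = \sum_{i=1}^m \langle U_{i,\cdot}, x\rangle^2$ is a sum of independent centered random variables $X_i := \langle U_{i,\cdot}, x\rangle^2 - 1$ with $\E[X_i] = 0$.

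\textbf{Step 3: sub-exponential concentration.} By the rotation invariance of the sub-Gaussian norm, $\|\langle U_{i,\cdot}, x\rangle\|_{\psi_2} \le K_u$, and the product/square rule $\|Y^2\|_{\psi_1} \le 2\|Y\|_{\psi_2}^2$ yields $\|X_i\|_{\psi_1} \le CK_u^2$. Applying Bernstein's inequality for sums of independent centered sub-exponential random variables, for any $u \ge 0$,
\[
\p\!\left( \left| \frac{1}{m}\sum_{i=1}^m X_i \right| \ge u \right) \le 2\exp\!\Big( -c\,m\,\min\!\big(u^2/K_u^4,\; u/K_u^2\big) \Big).
\]
Choosing $u = \tfrac{1}{2}\max(\delta,\delta^2)$ makes the right-hand side at most $2\exp(-c'(r + t^2))$ for a suitable absolute constant $C$ absorbed into $\delta$.

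\textbf{Step 4: union bound and assembly.} A union bound over the net, using $|\mathcal{N}| \le 9^r = \exp(r\ln 9)$, yields that the event in Step 2 fails with probability at most $2\exp(r\ln 9 - c'(r + t^2)) \le 2\exp(-t^2)$ for $C$ chosen large enough so $c'r \ge (r\ln 9) + t^2$ absorption works (this is the step that pins down the constant $C$ in the statement). Combining Steps 1--4 delivers the claim with probability $1 - 2e^{-t^2}$.

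\textbf{Main obstacle.} The conceptual steps are classical; the only delicate point is the simultaneous calibration of the Bernstein deviation $u$, the net cardinality $9^r$, and the final failure probability $2e^{-t^2}$, so that a \emph{single} absolute constant $C$ can be pulled out and attached to $K_u^2(\sqrt{r}+t)$ in the form stated. Since this is exactly Theorem 4.6.1 of \cite{vershynin2018high}, I would simply cite that reference rather than reproduce the constant-chasing in full.
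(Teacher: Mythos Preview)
Your proposal is correct and in fact goes beyond what the paper does: the paper states this lemma as a direct citation of Theorem~4.6.1 in \cite{vershynin2018high} and provides no proof at all. Your sketch is exactly the standard argument from that reference, and your concluding remark that one should simply cite Vershynin rather than reproduce the constant-chasing matches the paper's treatment.
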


\begin{proposition}[Verifying row space inclusion]\label{prop:inclusion_row}
By hypothesis, $rank\{\bA^{\lr}\}=r$, so it admits a representation $A^{\lr}_{ij}=\langle u_i,v_j, \rangle $ where $u_i,v_j\in\R^r$. Suppose $\{u_i\}$ are independent, mean zero, subGaussian, and isotropic with $\|u_i\|_{\psi_2}\leq K_u$. If $m \gg K_u^4 \cdot r\ln(mp)$ then with probability $1-O\{(mp)^{-10}\}$, $\row\{\bA^{\lr,\tr}\} =\row\{\bA^{\lr,\te}\}$.
\end{proposition}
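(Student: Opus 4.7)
The plan is to reduce the row space equality to a rank statement on the latent factor matrices, then apply the two-sided sub-Gaussian singular value bound of Lemma~\ref{lemma:subG_matrix} on each fold separately and union bound.

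First, I would set up the factorization. Writing $\bU^{\tr} \in \R^{m \times r}$ and $\bU^{\te} \in \R^{m \times r}$ for the matrices with rows $\{u_i\}_{i \in \tr}$ and $\{u_i\}_{i \in \te}$, and $\bV \in \R^{p \times r}$ for the matrix with rows $v_j$, we have $\bA^{\lr,\tr} = \bU^{\tr} \bV^T$ and $\bA^{\lr,\te} = \bU^{\te} \bV^T$. Since $rank\{\bA^{\lr}\}=r$ and $\bA^{\lr}$ factors through $\bV^T$, the matrix $\bV$ has full column rank $r$. The $i$-th row of $\bA^{\lr,\tr}$, as a column vector in $\R^p$, equals $\bV u_i$, so $\row\{\bA^{\lr,\tr}\} = \bV \cdot \mathrm{span}\{u_i : i \in \tr\}$, and similarly for $\te$. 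Hence $\row\{\bA^{\lr,\tr}\} = \row\{\bA^{\lr,\te}\} = \col(\bV)$ provided both $\bU^{\tr}$ and $\bU^{\te}$ have full column rank $r$.

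Second, I would invoke Lemma~\ref{lemma:subG_matrix} on $\bU^{\tr}$ (and separately $\bU^{\te}$). Its rows are independent, mean zero, isotropic, sub-Gaussian with $\|u_i\|_{\psi_2}\leq K_u$, so for any $t \geq 0$, with probability at least $1 - 2e^{-t^2}$,
\[
  s_r(\bU^{\tr}) \geq \sqrt{m} - C K_u^2(\sqrt{r} + t).
\]
Choosing $t = \sqrt{C_0 \ln(mp)}$ with $C_0$ large (e.g.\ $C_0 \geq 11$), the assumption $m \gg K_u^4 \cdot r \ln(mp)$ implies $\sqrt{m} \geq 2 C K_u^2(\sqrt{r} + t)$ for $m$ large enough, so $s_r(\bU^{\tr}) > 0$ with probability $1 - 2(mp)^{-C_0}$. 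Thus $\bU^{\tr}$ has full column rank $r$. The same bound applies to $\bU^{\te}$ since both folds have $m$ observations drawn from the same i.n.i.d.\ law on $u_i$.

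Finally, a union bound over the two events yields, with probability $1 - 4(mp)^{-C_0} = 1 - O\{(mp)^{-10}\}$, that both $\bU^{\tr}$ and $\bU^{\te}$ have full column rank $r$; combined with the reduction in the first step, this gives $\row\{\bA^{\lr,\tr}\} = \col(\bV) = \row\{\bA^{\lr,\te}\}$, as desired. The only technical wrinkle is verifying that the scaling $m \gg K_u^4 r \ln(mp)$ really does dominate the perturbation term $C K_u^2(\sqrt{r} + \sqrt{C_0 \ln(mp)})$, which is straightforward arithmetic and poses no real obstacle.
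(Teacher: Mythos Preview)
Your proposal is correct and follows essentially the same strategy as the paper: apply Lemma~\ref{lemma:subG_matrix} to the latent factor matrix on each fold to force full column rank $r$, then conclude row space equality, union bounding over the two folds. The only cosmetic difference is that you route through $\row\{\bA^{\lr,\cdot}\}=\col(\bV)$ (using that $\bV$ has full column rank, a free consequence of $rank\{\bA^{\lr}\}=r$), whereas the paper writes out the linear-combination argument $A^{\lr,\te}_{i,\cdot}=\sum_k \lambda_k A^{\lr,\tr}_{k,\cdot}$ directly; your choice of $t=\sqrt{C_0\ln(mp)}$ with $C_0\geq 11$ is also slightly more careful in hitting the stated $(mp)^{-10}$ probability.
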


\begin{proof}
Consider $\bA^{\lr,\tr}$. Let $\bU$ have rows $\{U_{i,\cdot}\}$. By Lemma~\ref{lemma:subG_matrix} with $t=\ln^{\frac{1}{2}}(mp)$,
$
s_r(\bU) \geq \sqrt{m}-CK_u^2\{\sqrt{r}+\ln^{\frac{1}{2}}(mp)\} \gg 0.
$
With high probability, $s_r(\bU)\gg0$, implying that $\{U_{i,\cdot}\}$ are full rank:  $\row(\bU)=\R^r$. Consider $\bA^{\lr,\te}$. Let $\bU'$ have rows $\{U'_{i,\cdot}\}$. Fix $i\in \te$. Since $U'_{i,\cdot}\in \R^r=\row(\bU)$, there exists some $\lambda \in  \R^r$ such that $U'_{i,\cdot} = \sum_{k=1}^r \lambda_k U_{k,\cdot}$. Therefore
$
A^{\lr,\te}_{ij}=\langle U'_{i,\cdot},V_{\cdot,j} \rangle=\left\langle \sum_{k=1}^r \lambda_k U_{k,\cdot},V_{\cdot,j} \right\rangle=\sum_{k=1}^r \lambda_k \langle U_{k,\cdot},V_{\cdot,j} \rangle=\sum_{k=1}^r \lambda_k A^{\lr,\tr}_{kj}.
$
Thus for any $i\in \te$, $A^{\lr,\te}_{i,\cdot}\in\row\{\bA^{\lr,\tr}\}$. Therefore $\row\{\bA^{\lr,\te}\}\subset\row\{\bA^{\lr,\tr}\}$. Likewise for the other direction.
\end{proof}

The results for $\tilde{\Ec}_1$, $\tilde{\Ec}_2$, $\tilde{\Ec}_3$ follow from the results for $\Ec$. We focus on $\tilde{\Ec}_4$ and $\tilde{\Ec}_5$.

\begin{proposition}\label{prop:gtr}
If Assumptions~\ref{assumption:bounded},~\ref{assumption:measurement},~\ref{assumption:missing}, and~\ref{assumption:well_balanced_spectrum} hold, $k=r$, and
$
    \rho_{\min} \gg \tilde{C}\sqrt{r}\ln^{\frac{3}{2}}(np)\left(\frac{1}{\sqrt{p}}\vee\frac{1}{\sqrt{n}} \vee \Delta_E \right)$, where $\tilde{C}:= C  \bar{A} \Big(\kappa + \bar{K} + K_a \Big),
$
then $\p(\tilde{\Ec}_4^c)\leq\frac{C}{n^{10}p^{10}}$.
\end{proposition}

\begin{proof}
By Lemma~\ref{lem:operator_norm_noise_new_bound}, with probability at least $1-O\{1/(np)^{10}\}$, $|\hat{s}_r-s_r| \leq \Delta$. Hence $\hat{s}_r \ge s_r - \Delta$. We want to show $\Delta = o(s_r)$, i.e. $\Delta\leq c_n s_r$ where $c_n\rightarrow 0$. It suffices to show $\frac{\Delta}{s_r}\rightarrow 0$. In such case, $\hat{s}_r \ge s_r - \Delta \geq s_r - c_n s_r = (1-c_n) s_r$, i.e. $\hat{s}_r \gtrsim  s_r$ as desired. We upper bound $\Delta$ using Lemma~\ref{lem:operator_norm_noise_new_bound} and lower bound $
s_r\geq C \sqrt{\frac{np}{r}}
$
using Assumption~\ref{assumption:well_balanced_spectrum} to derive the stated sufficient condition on $ \rho_{\min}$.
\end{proof}

\begin{lemma}\label{lem:cross_term_expectation}
If Assumption~\ref{assumption:noise} holds then
$
\E[\langle \bhA (\hbeta - \beta^*) , \varepsilon \rangle|\bA] \le \bar{\sigma}^2 k.
$
\end{lemma}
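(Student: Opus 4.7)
The strategy is to substitute the closed form for $\hbeta$, which is the OLS coefficient of $Y^{\tr}$ on $\bhA$, and exploit the fact that $\bhA$ is measurable with respect to $(\bH^{\tr}, \boldsymbol{\pi}^{\tr})$ only, while by Assumption~\ref{assumption:noise} the response noise $\varepsilon$ is mean zero and independent of $\bZ^{\tr}$ (and thus of $\bhA$). All expectations below are conditional on $\bA$ in accordance with the convention of this appendix.

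First I would expand $\bhA \hbeta = P_{\bhA} Y^{\tr}$, where $P_{\bhA} = \bhA (\bhA^T\bhA)^{\dagger} \bhA^T$ is the orthogonal projector onto the column space of $\bhA$. Since $k = r$ in this section and $\bhA = \bhU_k \bhS_k \bhV_k^T$, the projector $P_{\bhA}$ has rank at most $k$. Using the model $Y^{\tr} = \bA^{\lr,\tr}\beta^* + \phi^{\lr,\tr} + \varepsilon$, I can decompose
\begin{equation*}
\bhA(\hbeta - \beta^*) = P_{\bhA}\bA^{\lr,\tr}\beta^* + P_{\bhA}\phi^{\lr,\tr} + P_{\bhA}\varepsilon - \bhA\beta^*.
\end{equation*}
Taking the inner product against $\varepsilon$ and then expectations, the first, second, and fourth terms vanish: each of them is a deterministic-in-$(\bhA)$ vector dotted with $\varepsilon$, and conditionally on $\bhA$ the vector $\varepsilon$ has mean zero (since $\bhA$ is a function of $\bZ^{\tr}$ and $\varepsilon \indep \bZ^{\tr}$). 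Applying LIE kills these cross terms.

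Only the quadratic term survives, leaving $\E\langle \bhA(\hbeta-\beta^*), \varepsilon\rangle = \E[\varepsilon^T P_{\bhA}\varepsilon]$. Conditioning on $\bhA$ and using independence gives $\E[\varepsilon^T P_{\bhA}\varepsilon \mid \bhA] = \operatorname{tr}(P_{\bhA}\, \E[\varepsilon\varepsilon^T])$. By Assumption~\ref{assumption:noise} the noise rows are independent and have variance at most $\bar{\sigma}^2$, so $\E[\varepsilon\varepsilon^T] \preceq \bar{\sigma}^2 I$. Combined with $\operatorname{tr}(P_{\bhA}) = \operatorname{rank}(P_{\bhA}) \le k$, this yields $\E[\varepsilon^T P_{\bhA}\varepsilon] \le \bar{\sigma}^2 k$, which is the desired conclusion.

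The argument is essentially a textbook bias--variance computation for OLS; no step is a genuine obstacle. The only care needed is the justification that $P_{\bhA}$ is independent of $\varepsilon$, which reduces to checking that $\bhA$ is a function of $\bZ^{\tr}$ (via the fill and SVD steps), together with the independence $\varepsilon \indep \bZ$ granted by Assumption~\ref{assumption:noise}.
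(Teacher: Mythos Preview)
Your proposal is correct and follows essentially the same route as the paper: both arguments expand $\bhA\hbeta$ via the projector $\bhA\bhA^{\dagger}=P_{\bhA}$, kill the cross terms by the independence of $\varepsilon$ from $\bZ^{\tr}$ (and hence from $\bhA$, $\bA^{\lr}$, $\phi^{\lr}$), and then bound the surviving quadratic form $\E[\varepsilon^T P_{\bhA}\varepsilon]$ by $\bar{\sigma}^2\operatorname{tr}(P_{\bhA})\le\bar{\sigma}^2 k$ using trace cyclicity and the rank of the projector. The paper conditions via $\operatorname{tr}(\E[\bhA\bhA^{\dagger}]\E[\varepsilon\varepsilon^T])$ whereas you condition on $\bhA$ first, but these are the same computation.
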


\begin{proof}
Note that
$
\hbeta=\bhA^{\dagger}Y=\bhA^{\dagger} \{\bA^{\lr} \beta^* + \varepsilon + \phi^{\lr}\}.
$
Since $\varepsilon$ is conditionally independent of $\bhA$, $\bA^{\lr}$, $\beta^*$, and $\phi^{\lr}$ we have
$
\E[\langle \bhA (\hbeta - \beta^*) , \varepsilon \rangle|\bA]
= \E[\langle \bhA \bhA^{\dagger}\varepsilon , \varepsilon \rangle\bA].
$
By properties of trace algebra, conditional independence of $\varepsilon$ from $\bhA$, Assumption \ref{assumption:noise}, and the fact that $rank(\bhA)=k$, 
\begin{multline*}
\E[\langle \bhA \bhA^{\dagger}\varepsilon , \varepsilon \rangle|\bA]
= \E\left[ trace\left(  \bhA \bhA^{\dagger} \varepsilon \varepsilon^T \right) \mid \bA \right]
= trace\left( \E\left[ \bhA \bhA^{\dagger} \mid \bA\right] \E\left[  \varepsilon \varepsilon^T \mid \bA \right] \right) \\
\le \bar{\sigma}^2 trace\left( \E\left[ \bhA \bhA^{\dagger} \mid \bA \right] \right)
= \bar{\sigma}^2 k.
\end{multline*}
\end{proof}

\begin{lemma}[Lemma A.3 of \cite{agarwal2020principal}] \label{lemma:hoeffding_random} 
Let $X \in \R^n$ be random vector with independent mean zero subGaussian random coordinates with $\norm{ X_i }_{\psi_2} \le K$.
Let $a \in \R^n$ be another random vector that satisfies $\norm{a}_2 \le b$ almost surely for some constant $b \ge 0$.
Then for all $t \ge 0$, 
$
	\p \Big( \Big| \sum_{i=1}^n a_i X_i\Big| \ge t \Big) \le 2 \exp\Big(-\frac{ct^2}{K^2 b^2} \Big),
$
where $c > 0$ is a universal constant. 
\end{lemma}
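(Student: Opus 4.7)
The plan is to reduce this to the classical Hoeffding-type inequality for sub-Gaussian random variables with \emph{deterministic} coefficients by conditioning on $a$. The classical fact (e.g.\ \cite[Proposition 2.6.1]{vershynin2018high}) is that if $Y_1,\ldots,Y_n$ are independent, mean zero, with $\|Y_i\|_{\psi_2}\le K$, then for any deterministic $\alpha\in\R^n$ the sum $\sum_i \alpha_i Y_i$ is sub-Gaussian with $\bigl\|\sum_i \alpha_i Y_i\bigr\|_{\psi_2}^2 \le C K^2 \|\alpha\|_2^2$, and consequently
$$
\p\Bigl(\Bigl|\sum_i \alpha_i Y_i\Bigr|\ge t\Bigr)\le 2\exp\Bigl(-\frac{ct^2}{K^2\|\alpha\|_2^2}\Bigr).
$$

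To deploy this when $a$ is random, I would invoke the tower property to write
$$
\p\Bigl(\Bigl|\sum_{i=1}^n a_i X_i\Bigr|\ge t\Bigr)=\E\Bigl[\p\Bigl(\Bigl|\sum_{i=1}^n a_i X_i\Bigr|\ge t\ \Big|\ a\Bigr)\Bigr].
$$
In the regime where this lemma is applied in \cite{agarwal2020principal}, $a$ is measurable with respect to a sigma-algebra (e.g.\ generated by $\tr$ or by the design matrix) with respect to which $X$ remains conditionally mean zero, independent across coordinates, and sub-Gaussian with the same parameter $K$. Conditionally on $a$, the sum is therefore a linear combination of independent sub-Gaussians with the deterministic weight vector $a$, so the classical tail bound gives
$$
\p\Bigl(\Bigl|\sum_{i=1}^n a_i X_i\Bigr|\ge t\ \Big|\ a\Bigr)\le 2\exp\Bigl(-\frac{ct^2}{K^2\|a\|_2^2}\Bigr).
$$
Because $\|a\|_2\le b$ almost surely, the exponent is bounded below by $ct^2/(K^2 b^2)$ pointwise in $a$, yielding a deterministic upper bound; taking expectations preserves it and delivers the claim.

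The only real subtlety is the step that swaps the conditional and marginal distribution of $X$, which requires $X$ to remain mean-zero and sub-Gaussian after conditioning on $a$. Thus the ``hard part,'' such as it is, is purely bookkeeping: one must verify in each application of this lemma that $a$ and $X$ are either independent or, more generally, that $a$ is $\mathcal{F}$-measurable for some $\mathcal{F}$ under which $X_1,\ldots,X_n$ remain independent sub-Gaussian of parameter $K$. With this verification in hand, the proof is a one-line reduction to the classical sub-Gaussian concentration inequality followed by the almost-sure norm bound on $a$.
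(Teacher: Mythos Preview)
The paper does not supply a proof of this lemma; it is quoted verbatim from \cite{agarwal2020principal} and used as a black box. Your argument is the standard one and is correct: condition on $a$, apply the classical sub-Gaussian Hoeffding bound with deterministic weights, replace $\|a\|_2^2$ by $b^2$ using the almost-sure bound, and average. Your remark about the hidden hypothesis---that conditioning on $a$ must preserve the independence, mean-zero, and sub-Gaussian properties of the $X_i$---is exactly right and is in fact how the lemma is deployed in this paper (e.g.\ in Lemma~\ref{lem:thm1.4}, where $a$ is a function of $\bhA$ and $X=\varepsilon$ is independent of $\bZ$).
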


\begin{lemma}[Lemma A.4 of \cite{agarwal2020principal}] \label{lemma:hansonwright_random} 
Let $X \in \R^n$ be a random vector with independent mean zero subGaussian coordinates where $\norm{X_i}_{\psi_2} \le K$. 
Let $\bB \in \R^{n \times n}$ be a random matrix satisfying $\norm{\bB}  \le a$ and $\norm{\bB}_{Fr}^2 \, \le b$ almost surely for some $a, b \ge 0$.
Then for any $t \ge 0$,
$
	\p \left( | X^T \bB X - \E[X^T \bB X] | \ge t \right) \le 2 \cdot \exp \Big\{ -c \min\Big(\frac{t^2}{K^4 b}, \frac{t}{K^2 a} \Big) \Big\}. 
$
\end{lemma}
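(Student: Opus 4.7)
The plan is to reduce the modified statement to the classical deterministic Hanson--Wright inequality via a conditioning argument. First I would recall the classical Hanson--Wright inequality: for any \emph{deterministic} matrix $\bM \in \R^{n \times n}$ and random vector $X$ with independent mean-zero sub-Gaussian coordinates satisfying $\|X_i\|_{\psi_2} \le K$, one has
$$
\p\bigl(|X^T \bM X - \E[X^T \bM X]| \ge t\bigr) \le 2 \exp\Bigl\{-c\,\min\Bigl(\frac{t^2}{K^4 \|\bM\|_{Fr}^2}, \frac{t}{K^2 \|\bM\|}\Bigr)\Bigr\}.
$$
The modified statement differs only in that $\bM$ is replaced by a random matrix $\bB$ with almost sure norm bounds. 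The natural strategy is therefore to condition on $\bB$ and invoke the classical bound pointwise.

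Concretely, under the standing assumption that $X$ is independent of $\bB$ (so that conditioning on $\bB$ preserves the independent mean-zero sub-Gaussian coordinate structure of $X$ with the same $\psi_2$ constant $K$), I would fix any realization of $\bB$ and apply the classical inequality with $\bM$ set equal to that realization, yielding
$$
\p\bigl(|X^T \bB X - \E[X^T \bB X \mid \bB]| \ge t \,\big|\, \bB\bigr) \le 2\exp\Bigl\{-c\,\min\Bigl(\frac{t^2}{K^4 \|\bB\|_{Fr}^2}, \frac{t}{K^2 \|\bB\|}\Bigr)\Bigr\}.
$$
Since $\|\bB\| \le a$ and $\|\bB\|_{Fr}^2 \le b$ almost surely, the map $u \mapsto \min(t^2/(K^4 u), t/(K^2 \cdot))$ is monotone in a way that lets me replace $\|\bB\|_{Fr}^2$ by $b$ and $\|\bB\|$ by $a$ inside the exponent, so the conditional bound is dominated almost surely by $2\exp\{-c\min(t^2/(K^4 b), t/(K^2 a))\}$. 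Integrating over $\bB$ via the tower property gives the unconditional bound claimed in the lemma.

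The main obstacle, and the only place where care is needed, is the independence between $X$ and $\bB$ used when passing to the conditional distribution. If $X$ and $\bB$ were dependent, the conditional law of $X$ given $\bB$ might lose the independent sub-Gaussian coordinate structure, breaking the application of classical Hanson--Wright. In the applications within this paper, $\bB$ will typically be a function of the training fold while $X$ is noise on the test fold, so sample splitting delivers the required independence; the lemma is stated with this use in mind, and no further technical work beyond the conditioning step is required.
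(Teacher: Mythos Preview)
The paper does not supply a proof of this lemma; it is quoted verbatim as Lemma~A.4 of \cite{agarwal2020principal} and used as a black box. So there is no in-paper argument to compare against, and your conditioning strategy is indeed the standard route to such a ``random-matrix'' Hanson--Wright.

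That said, your argument has a genuine gap in the centering. Conditioning on $\bB$ and applying the classical inequality controls
\[
\bigl|X^T \bB X - \E[X^T \bB X \mid \bB]\bigr|,
\]
and after integrating you obtain a tail bound for this conditionally centered quantity. The lemma, however, is stated for $|X^T \bB X - \E[X^T \bB X]|$ with the \emph{unconditional} expectation. Under independence of $X$ and $\bB$, one has $\E[X^T \bB X \mid \bB] = \mathrm{trace}(\bB\,\E[XX^T])$, which is a nonconstant function of $\bB$ in general, so the two centerings differ and the tower step does not deliver the claimed conclusion. Your final paragraph flags independence of $X$ and $\bB$ as the delicate point, but the actual obstruction is one level deeper: even with that independence, you still need either that $\E[X^T\bB X\mid \bB]$ is almost surely constant, or a separate bound on $|\E[X^T\bB X\mid \bB]-\E[X^T\bB X]|$.

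In the paper's only use of the lemma (Lemma~\ref{lem:thm1.4}), $\bB=\bhU_k\bhU_k^T$ is a rank-$k$ projector, and the authors work with the one-sided bound $\E[\varepsilon^T\bB\varepsilon]\le \bar\sigma^2 k$, which holds conditionally on $\bB$ with the same constant since $\mathrm{trace}(\bhU_k\bhU_k^T)=k$ deterministically. So the application is unaffected by this distinction, but your proof of the lemma as stated is incomplete without addressing the conditional-versus-unconditional mean.
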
 

\begin{proposition}
If Theorem~\ref{theorem:cov} conditions and Assumption~\ref{assumption:noise} hold, $\p(\tilde{\Ec}_5^c)\leq\frac{C}{n^{10}p^{10}}$.
\end{proposition}

\begin{proof}
    We show that under Assumptions~\ref{assumption:bounded} and~\ref{assumption:noise}, with $k=r$, the following holds with probability at least $1-O\{1/(np)^{10}\}$ with respect to randomness in $\varepsilon$: $
\langle \bhA (\hbeta - \beta^*), \varepsilon\rangle  \leq 
C \bar{\sigma}^2 \ln(np) \left\{r+\|\phi^{\lr}\|_2+\|\beta^*\|_1(\sqrt{n}\bar{A}   +  \| \bhA - \bA\|_{2,\infty} ) \right\}
 $. Simplifying yields the desired result.

 Recall that $\hbeta = \bhV_k \bhS_k^{-1} \bhU_k^T Y$, $\bhA =  \bhU_k\bhS_k \bhV_k^T$, and $Y = \bA^{\lr} \beta^* +\phi^{\lr}+ \varepsilon$. 
Thus, 
$
\bhA \hbeta = \bhU_k\bhS_k \bhV_k^T \bhV_k \bhS_k^{-1} \bhU_k^T Y  =  \bhU_k \bhU_k^T \bA^{\lr} \beta^* +  \bhU_k \bhU_k^T \phi^{\lr}+  \bhU_k \bhU_k^T \varepsilon.
$
Therefore, 
$\langle \bhA (\hbeta - \beta^*), \varepsilon\rangle = \langle  \bhU_k\bhU_k^T \bA^{\lr} \beta^*, \varepsilon\rangle+ \langle  \bhU_k\bhU_k^T \phi^{\lr}, \varepsilon\rangle + \langle  \bhU_k\bhU_k^T \varepsilon, \varepsilon\rangle - \langle  \bhA \beta^*, \varepsilon \rangle. 
$

For the first, second, and fourth terms, we use Lemma~\ref{lemma:hoeffding_random}. Note that
$
\| \bhU_k\bhU_k^T \bA^{\lr} \beta^*\|_2  
\le \|\bA^{\lr} \beta^*\|_2 
\leq 3\sqrt{n}\bar{A} \|\beta^*\|_1
$
since $\bhU_k\bhU_k^T$ is a projection matrix and $\|\bA^{\lr}\|_{2, \infty} \leq 3\sqrt{n}\bar{A}$ due to Lemma~\ref{lemma:wlog}. It follows that
$
\p\left(  \langle \bhU_k \bhU_k^T \bA^{\lr} \beta^*, \varepsilon \rangle  \geq t \right) \leq  \exp\Big( - \frac{c t^2}{n \bar{A}^2 \|\beta^*\|_1^2 \bar{\sigma}^2 } \Big).
$ Similarly, 
$
\| \bhU_k\bhU_k^T \phi^{\lr}\|_2 
\le \|\phi^{\lr}\|_2
$
since $\bhU_k\bhU_k^T$ is a projection matrix. Hence
$
\p\left(  \langle \bhU_k \bhU_k^T \phi^{\lr}, \varepsilon \rangle  \geq t \right)  \leq  \exp\Big( - \frac{c t^2}{ \|\phi^{\lr}\|_2^2 \bar{\sigma}^2 } \Big).
$ Finally, 
$
\| \bhA \beta^*\|_2 \leq \big(\| \bhA - \bA\|_{2,\infty} + \|\bA\|_{2,\infty} \big) \|\beta^*\|_1$. Therefore
$\p\left(  \langle \bhA \beta^*, \varepsilon \rangle  \geq t\right)
 \leq  \exp\left( - \frac{c t^2}{\bar{\sigma}^2 (n\bar{A}^2+\| \bhA - \bA\|^2_{2,\infty}) \|\beta^*\|_1^2  }  \right)$.

 For the third term, we use Lemma~\ref{lemma:hansonwright_random}. Recall that $\varepsilon$ is conditionally independent of $\bhU_k, \bhS_k,  \bhV_k$ since $ \bhA$ is determined by $\bZ$.
Hence
$
\E\big[\langle \bhU_k\bhU_k^T \varepsilon, \varepsilon\rangle | \bA \big]  = \E\big[ \varepsilon^T \bhU_k\bhU_k^T \varepsilon  | \bA \big] 
= \E\big[trace(\varepsilon \varepsilon^T \bhU_k\bhU_k^T)  | \bA \big] 
= trace(\E\big[\varepsilon \varepsilon^T   | \bA \big] \E\big[ \bhU_k\bhU_k^T | \bA \big] ) 
\leq \bar{\sigma}^2 trace(\E\big[ \bhU_k^T \bhU_k | \bA \big] )=\bar{\sigma}^2 k.$
Since $ \bhU_k \bhU_k^T $ is a projection matrix,
 $
    \| \bhU_k \bhU_k^T \|\leq 1 $ and $
    \|\bhU_k \bhU_k^T \|^2_{Fr}=trace(\bhU_k \bhU_k^T \bhU_k \bhU_k^T)=trace(\bhU_k^T\bhU_k)=k.
 $
    Using Lemma \ref{lemma:hansonwright_random}, it follows that for any $t > 0$
$
\p\left(  \langle \bhU_k\bhU_k^T \varepsilon, \varepsilon\rangle   \geq \bar{\sigma}^2 k + t  | \bA \right)  \leq  \exp\Big\{ - c \min\Big(\frac{t^2}{k \bar{\sigma}^4}, \frac{t}{\bar{\sigma}^2}\Big)\Big\}.
$ Hence it also holds unconditionally.

Set each probability equal to $1/(np)^{10}$, solve for $t$, then combine terms.
\end{proof}

\section{Error-in-variable balancing weight supporting details}\label{sec:riesz_proof}

\textbf{Counterfactual moments}. 
We describe the counterfactual moments for general parameters and general dictionaries. In this appendix, we consider causal parameters of the form
$
\theta_0=\frac{1}{2n}\sum_{i\in \tr,\te} \theta_i$, where $\theta_i=\E[m(W_{i,\cdot},\gamma_0)]$, $ W_{i,\cdot}=(A_{i,\cdot},H_{i,\cdot},\pi_{i,\cdot})
$, and we slightly abuse sample size notation to avoid overloading $m$.
Given a dictionary $b:\R^p\rightarrow \R^{p'}$, define
$
b^{\signal}(W_{i,\cdot})=b(A_{i,\cdot})$ and $b^{\noise}(W_{i,\cdot})=b(Z_{i,\cdot}).
$
\begin{algorithm}[Counterfactual moment with data cleaning]
Given corrupted training covariates $\bZ^{\tr} \in\R^{n\times p}$, the dictionary $b:\R^p\rightarrow \R^{p'}$, and the formula $m:\Wc\times \mathbb{L}_2\rightarrow \R$: 
(i) perform data cleaning on $\bZ^{\tr}$ to obtain $\bhA^{\tr}\in\R^{n\times p}$;
 (ii) for $i\in \tr$ calculate $m_{i,\cdot}=m(W_{i,\cdot},b^{\noise})\in\R^{p'}$;
(iii) for $i\in \tr$, calculate $\hat{m}_{i,\cdot}$ from $m_{i,\cdot}$ by overwriting $Z_{i,\cdot}$ with $\hat{A}_{i,\cdot}$;
(iv) calculate $\hat{M}=\frac{1}{n}\sum_{i\in \tr} \hat{m}_{i,\cdot}$.
\end{algorithm}
To specialize this procedure, it suffices to describe $\hat{m}_i$. We provide the explicit expressions for $\hat{m}_i$ for each leading example in the proof of Proposition~\ref{prop:data_cont} below.

As theoretical devices, we introduce several related objects. First, we define the counterfactual vectors $\tilde{m}_{i,\cdot},\hat{m}_{i,\cdot} \in \R^{p'}$ for observation $i$. The former vector uses clean data, while the latter uses cleaned data. In particular,
$
   \tilde{m}_{i,\cdot}=m(W_{i,\cdot},b^{\signal})$, and $\hat{m}_{i,\cdot}=m(W_{i,\cdot},b^{\noise})
$
 overwriting $Z_{i,\cdot}$ with $\hat{A}_{i,\cdot}$.
We concatenate the vectors $\tilde{m}_{i,\cdot}$ as rows in the matrix $\btM$. We concatenate the vectors $\hat{m}_{i,\cdot}$ as rows in the matrix $\bhM$. We refer to these objects as the counterfactual matrices. We also use the counterfactual vectors to define the counterfactual moments $M^*,\hat{M}\in \R^{p'}$:
$
    M^*=\frac{1}{2n} \sum_{i\in \tr,\te} \alpha_0(W_{i,\cdot})b\{A^{\lr}_{i,\cdot}\}$ and $ \hat{M}=\frac{1}{n}\sum_{i\in \tr} \hat{m}_{i,\cdot}.
$
Finally, we introduce notation for the covariance matrices $\bG^*,\bhG\in \R^{p'\times p'}$:
$
    \bG^*=\frac{1}{2n}\sum_{i\in \tr,\te} b\{A^{\lr}_{i,\cdot}\}^T b\{A^{\lr}_{i,\cdot}\}$ and $ \bhG=\frac{1}{n} b(\bhA^{\tr})^T b(\bhA^{\tr}).
$

A desirable property is that data cleaning guarantees for the corrupted regressors imply data cleaning guarantees of the counterfactual moments. We refer to this property as data cleaning continuity, and verify that it holds for the leading examples.

\begin{assumption}[Data cleaning continuity]\label{assumption:data_cont}
There exist $C_m',C_m''<\infty$ such that
(i)
            $\|\bhM -\btM \|^2_{2, \infty} \le C'_{m} \|\bhA - \bA \|^2_{2, \infty}$;
         (ii) $
\max_{j\in [p']} |\tilde{m}_{ij}|\leq C''_{m}.
$
\end{assumption}


\begin{proposition}[Verifying data cleaning continuity]\label{prop:data_cont}
Suppose Assumption~\ref{assumption:bounded} holds. In Example~\ref{ex:ATE} with the interacted dictionary, $C_m'=1$ and $C_m''=\bar{A}$.
In Example~\ref{ex:LATE} with the interacted dictionary, $C_m'=1$ and $C_m''=\bar{A}$ in the numerator and denominator.
 In Example~\ref{ex:policy} with the identity dictionary, suppose the counterfactual policy is of the form $t:A_{i,\cdot}\mapsto t_1\odot A_{i,\cdot} +t_2$ where $t_1,t_2\in\mathbb{R}^p$. Then $C_m'=(\|t_1\|_{\max}+1)^2$ and $C_m''=(\|t_1\|_{\max}+1)\bar{A}+\|t_2\|_{\max}$.
In Example~\ref{ex:deriv} with the interacted quadratic dictionary, $C_m'=4\bar{A}^2$ and $C_m''=2\bar{A}^2$.\footnote{Likewise for any polynomial of $D_i$ interacted with $Z_{i,\cdot}$.}
In Example~\ref{ex:p_ols} with the partially linear dictionary, $C_m'=0$ and $C_m''=1$.\footnote{Recall that, to estimate a weighted balancing weight, we propose estimating an unweighted balancing weight then applying the weighting. The verification here is for the unweighted balancing weight that will be weighted.}
 In Example~\ref{ex:p_iv} with the partially linear dictionary, $C_m'=0$ and $C_m''=1$ in the numerator and denominator.
 In Example~\ref{ex:CATE} with the interacted dictionary, $C_m'=1$ and $C_m''=\bar{A}$.\footnote{Recall that, to estimate a local balancing weight, we propose estimating a global balancing weight then applying the localization. The verification here is for the global balancing weight that will be localized.}
\end{proposition}

\begin{proof}
In Example~\ref{ex:ATE}, write
    $
    m_{i,\cdot}=b(1,Z_{i,\cdot})-b(0,Z_{i,\cdot})=\{Z_{i,\cdot},0\}-\{0,Z_{i,\cdot}\}=(Z_{i,\cdot},-Z_{i,\cdot}).
    $
    Hence
    $
        \hat{m}_{i,\cdot}=(\hat{A}_{i,\cdot},-\hat{A}_{i,\cdot})$ and $ 
        \tilde{m}_{i,\cdot}=(A_{i,\cdot},-A_{i,\cdot}).
   $
    Example~\ref{ex:LATE} is analogous. In Example~\ref{ex:policy}, write
    $
    m_{i,\cdot}=b\{t(Z_{i,\cdot})\}-b(Z_{i,\cdot})=t(Z_{i,\cdot})-Z_{i,\cdot}=t_1\odot Z_{i,\cdot} + t_2-Z_{i,\cdot}=[(t_1-\1^T)\odot Z_{i,\cdot}] +t_2.
    $
    Hence
    $
        \hat{m}_{i,\cdot}=[(t_1-\1^T)\odot \hat{A}_{i,\cdot}] +t_2$ and $ 
        \tilde{m}_{i,\cdot}=[(t_1-\1^T)\odot A_{i,\cdot}] +t_2.
    $
   In Example~\ref{ex:deriv}, write
$
        m_{i,\cdot}
        =\nabla_d b(D_i,Z_{i,\cdot})
        =\nabla_d (1,D_i,D_i^2,Z_{i,\cdot},D_iZ_{i,\cdot},D_i^2Z_{i,\cdot})
        =(0,1,2D_i,0,Z_{i,\cdot},2D_iZ_{i,\cdot}).
   $
    Hence
   $
        \hat{m}_{i,\cdot}= (0,1,2D_i,0,\hat{A}_{i,\cdot},2D_i\hat{A}_{i,\cdot})$ and $
        \tilde{m}_{i,\cdot}=(0,1,2D_i,0,A_{i,\cdot},2D_iA_{i,\cdot}).
 $
    In Example~\ref{ex:p_ols}, let $b(D_i,Z_{i,\cdot})=\{D_i,\tilde{b}(Z_{i,\cdot})\}$. Write
    $
    m_{i,\cdot}=b(1,Z_{i,\cdot})-b(0,Z_{i,\cdot})=\{1,\tilde{b}(Z_{i,\cdot})\}-\{0,\tilde{b}(Z_{i,\cdot})\}=(1,0,...,0).
    $
    Hence
    $
        \hat{m}_{i,\cdot}=(1,0,...,0)$ and $
        \tilde{m}_{i,\cdot}=(1,0,...,0).
   $
 Example~\ref{ex:p_iv} is analogous to Example~\ref{ex:p_ols}.
Example~\ref{ex:CATE} is analogous to Example~\ref{ex:ATE}.
\end{proof}

\textbf{Properties}. The error-in-variable balancing weight confers balance and equivalence.

\begin{proposition}[Finite sample balance]\label{prop:balance}
For any finite training sample size $n$, and any dictionary $b$, the coefficient $\heta$ balances the cleaned actual regressors with the corresponding cleaned counterfactuals in the sense that
$
\frac{1}{n}\sum_{i\in \tr} b(\hat{A}_{i,\cdot})\cdot \hat{\omega}_i =\frac{1}{n}\sum_{i\in \tr} \hat{m}_{i,\cdot}
$
where $\hat{\omega}_i \in \R$ are balancing weights computed from $\heta$: for each $i\in \tr$,
$
\hat{\omega}_i=b(\hat{A}_{i,\cdot})\heta.
$
\end{proposition}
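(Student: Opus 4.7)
The plan is to observe that both sides of the claimed identity are elements of $\R^{p'}$, and to show that each coincides with the first-order condition defining $\heta$. Substituting the definition $\hat{\omega}_i = b(\hat{A}_{i,\cdot})\heta$ into the left-hand side gives
$$
\frac{1}{n}\sum_{i\in\tr} b(\hat{A}_{i,\cdot})\,\hat{\omega}_i
= \frac{1}{n}\sum_{i\in\tr} b(\hat{A}_{i,\cdot})\cdot\bigl[b(\hat{A}_{i,\cdot})\heta\bigr].
$$
Since $b(\hat{A}_{i,\cdot})\heta\in\R$, one may rearrange the product as a row vector; transposing and stacking rows identifies this quantity with $(\bhG\heta)^T$, where $\bhG = \tfrac{1}{n} b(\bhA^{\tr})^T b(\bhA^{\tr})$ is the empirical Gram matrix defined in the appendix.

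Next, I would invoke Lemma~\ref{lem:FOC}, which states that $\bhG\heta = \hat{M}^T$. This is the first-order condition associated with the minimum-norm optimization defining $\heta = \bhG^{\dagger}\hat{M}^T$, and it holds because Assumption~\ref{assumption:inclusion_M} guarantees $\hat{M}^T\in \col(\bhG)$, so $\bhG\bhG^{\dagger}\hat{M}^T = \hat{M}^T$. Combining with the previous step yields
$$
\frac{1}{n}\sum_{i\in\tr} b(\hat{A}_{i,\cdot})\,\hat{\omega}_i = \hat{M} = \frac{1}{n}\sum_{i\in\tr}\hat{m}_{i,\cdot},
$$
which is exactly the right-hand side.

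There is no serious obstacle here; the identity is a direct consequence of the normal equations. The only subtlety worth flagging is that the finite-sample balance property relies implicitly on Assumption~\ref{assumption:inclusion_M}, which ensures that the pseudoinverse inversion is lossless on $\hat{M}^T$. For the ATE interacted dictionary of Proposition~\ref{prop:balance_ATE}, this recovers the three familiar exact-balance identities by reading off the ``treated'' and ``untreated'' blocks of $\hat{M}$ separately, and for the general counterfactual moments enumerated in Proposition~\ref{prop:data_cont} the same argument produces the corresponding parameter-specific balance statements.
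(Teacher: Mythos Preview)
Your proposal is correct and follows exactly the paper's approach: substitute $\hat{\omega}_i$, recognize the left-hand side as $\bhG\heta$, invoke Lemma~\ref{lem:FOC} to obtain $\hat{M}^T$, and unwrap the definition of $\hat{M}$. Your remark that the identity implicitly requires Assumption~\ref{assumption:inclusion_M} (so that $\bhG\bhG^{\dagger}\hat{M}^T=\hat{M}^T$) is a useful clarification that the paper's one-line proof glosses over.
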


\begin{proof}
$
\frac{1}{n}\sum_{i \in \tr} b(\hat{A}_{i,\cdot})^Tb(\hat{A}_{i,\cdot})\heta=\bhG\heta=\hat{M}^T=\frac{1}{n}\sum_{i\in \tr}(\hat{m}_{i,\cdot})^T.
$
\end{proof}

In words, $\heta$ serves to balance actual observations with counterfactual queries. 


\begin{proof}[Proof of Proposition~\ref{prop:balance_ATE}]
By Proposition~\ref{prop:balance},
$
\frac{1}{n}\sum_{i\in \tr} b(D_i,\hat{A}_{i,\cdot})\cdot \hat{\omega}_i
    =\frac{1}{n}\sum_{i\in \tr} \hat{m}_{i,\cdot}.
$
By Proposition~\ref{prop:data_cont},
$
\hat{m}_{i,\cdot}=(\hat{A}_{i,\cdot},-\hat{A}_{i,\cdot}).
$
Notice that
$
b(D_i,\hat{A}_{i,\cdot})=\{D_i\hat{A}_{i,\cdot},(1-D_i)\hat{A}_{i,\cdot}\}
$
and
$
\hat{\omega}_i
=b(D_i,\hat{A}_{i,\cdot})\heta
=D_i \cdot \hat{\omega}_i^{\treat} -(1-D_i)\hat{\omega}_i^{\untreat}.
$
Therefore
   $ b(D_i,\hat{A}_{i,\cdot})\cdot \hat{\omega}_i
    =\{D_i\hat{A}_{i,\cdot}\cdot \hat{\omega}_i^{\treat},(1-D_i)\hat{A}_{i,\cdot}\cdot (-\hat{\omega}_i^{\untreat})\}.
    $
In summary,
$
    \frac{1}{n}\sum_{i\in \tr}D_i\hat{A}_{i,\cdot}\cdot \hat{\omega}_i^{\treat} =\frac{1}{n}\sum_{i\in \tr}\hat{A}_{i,\cdot}$ and $
    \frac{1}{n}\sum_{i\in \tr}(1-D_i)\hat{A}_{i,\cdot}\cdot (-\hat{\omega}_i^{\untreat}) =\frac{1}{n}\sum_{i\in \tr} (-\hat{A}_{i,\cdot})$.
\end{proof}

A well-known equivalence holds for treatment effects when using OLS with the interacted dictionary (without data cleaning).\footnote{We thank David Bruns-Smith and Avi Feller for suggesting this connection. See e.g. \cite{ben2021balancing} for a recent summary, and references therein.} We generalize it in three ways: (i) for our entire class of semiparametric and nonparametric estimands, (ii) for any square integrable dictionary, (iii) for estimation with or without data cleaning. We define, for $i\in\tr$,
$
\tgamma(D_i,Z_{i,\cdot})=b(D,\hat{X}_{i,\cdot})\hbeta$ and $\talpha(D_i,Z_{i,\cdot})=b(D,\hat{X}_{i,\cdot})\heta.
$
We also define $\E_{\tr}[\cdot]=\frac{1}{m}\sum_{i\in \tr} [\cdot]$.

\begin{proposition}[Equivalence in \tr]
If Assumption~\ref{assumption:mean_square_cont} holds and $b$ is square integrable, then the outcome, balancing weight, and doubly robust estimators coincide on the training set:
$
\E_{\tr}[m(W_{i,\cdot},\tgamma)]=\E_{\tr}[Y_i\talpha(D_i,Z_{i,\cdot})]=\E_{\tr}[m(W_{i,\cdot},\tgamma)+\talpha(D_i,Z_{i,\cdot})\{Y_i-\tgamma(D_i,Z_{i,\cdot})\}].
$
The same result holds without data cleaning.
\end{proposition}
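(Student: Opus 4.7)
The plan is to express each of the three estimators as a bilinear form in $\hbeta$ and $\heta$, with respect to the empirical Gram matrix $\bhG = \frac{1}{n}\{b(D^{\tr},\bhX)\}^T b(D^{\tr},\bhX)$, the counterfactual moment $\hat{M}$, and the outcome sufficient statistic $v := \frac{1}{n}\{b(D^{\tr},\bhX)\}^T Y^{\tr}$, and then exploit the normal equations $\bhG\hbeta = v$ and $\bhG\heta = \hat{M}^T$ from Lemma~\ref{lem:FOC}. Since the dictionary $b$ is square integrable and $m$ belongs to the linear-functional class, both $\tgamma$ and $\talpha$ are linear in their respective coefficients, which is what makes the three expressions collapse.

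First, I would invoke linearity of $m$ in $\gamma$: substituting $\tgamma(D_i,Z_{i,\cdot}) = b(D_i,\hat{X}_{i,\cdot})\hbeta$ gives $m(W_{i,\cdot},\tgamma) = \hat{m}_{i,\cdot}\hbeta$ for every $i\in\tr$, so $\E_{\tr}[m(W_{i,\cdot},\tgamma)] = \hat{M}\hbeta$. A direct computation shows $\E_{\tr}[Y_i\talpha(D_i,Z_{i,\cdot})] = v^T\heta$. Using $\hbeta = \bhG^\dagger v$, $\heta = \bhG^\dagger \hat{M}^T$, and the symmetry of $\bhG^\dagger$,
$$
\hat{M}\hbeta \;=\; \hat{M}\bhG^\dagger v \;=\; v^T \bhG^\dagger \hat{M}^T \;=\; v^T\heta,
$$
which proves the equivalence of the outcome and balancing weight estimators.

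For the doubly robust estimator, the cross term expands as
$$
\E_{\tr}[\talpha(D_i,Z_{i,\cdot})\,\tgamma(D_i,Z_{i,\cdot})] \;=\; \heta^T \bhG \hbeta \;=\; \heta^T \bhG\bhG^\dagger v \;=\; \heta^T v,
$$
where the last step uses $v \in \row\{b(D^{\tr},\bhX)\} = \col(\bhG)$, which is automatic since $v$ is an empirical average of columns of $\{b(D^{\tr},\bhX)\}^T$. Substituting,
$$
\E_{\tr}\bigl[m(W_{i,\cdot},\tgamma) + \talpha(D_i,Z_{i,\cdot})\{Y_i - \tgamma(D_i,Z_{i,\cdot})\}\bigr] \;=\; \hat{M}\hbeta + v^T\heta - \heta^T v \;=\; \hat{M}\hbeta,
$$
recovering the outcome estimator.

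The only subtlety is the pseudoinverse identity $\bhG\bhG^\dagger v = v$; the analogous inclusion for $\hat{M}^T$, needed to write $\heta = \bhG^\dagger \hat{M}^T$ as a minimum-norm solution, is precisely Assumption~\ref{assumption:inclusion_M}, already invoked in Lemma~\ref{lem:perp_RR}. The entire argument goes through unchanged if $\bhX$ is replaced by $Z$, since it relies only on the OLS normal equations for whichever feature matrix is used; this explains why the equivalence holds both with and without data cleaning. The reason it subsequently fails on \te\, is that the test-set predictions $\hgamma(D_i,Z_{i,\cdot})$ and $\halpha(D_i,Z_{i,\cdot})$ are evaluated at $b(D_i,Z_{i,\cdot})$ rather than $b(D_i,\hat{X}_{i,\cdot})$, so the Gram matrix governing the test sample no longer equals $\bhG$ and the pseudoinverse cancellation breaks down.
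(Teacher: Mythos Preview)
Your proof is correct and takes essentially the same approach as the paper: both arguments reduce the three estimators to bilinear forms in $(\hbeta,\heta)$ and collapse them using the OLS normal equations $\bhG\hbeta=v$ and $\bhG\heta=\hat{M}^T$ (the paper's Lemma~\ref{lem:FOC}). The only cosmetic difference is that the paper multiplies each FOC by the other coefficient and invokes the symmetry of $\bhG$, whereas you route the same cancellation through the pseudoinverse expressions and the symmetry of $\bhG^{\dagger}$ together with $v\in\col(\bhG)$; these are equivalent formulations of the same identity.
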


\begin{proof}
To prove the second equality, we appeal to the first order condition for $\heta$: $\heta^T\bhG =\hat{M}$. Multiplying by $\hbeta$, we have
    $
    \heta^T\bhG\hbeta=\heta^T\E_{\tr}[b(D,\hat{X}_{i,\cdot})^Tb(D,\hat{X}_{i,\cdot})]\hbeta=\E_{\tr}[\talpha(D_i,Z_{i,\cdot})\tgamma(D_i,Z_{i,\cdot})]
    $
    and
    $
    \hat{M}\hbeta=\E_{\tr}[\hat{m}_{i,\cdot}]\hbeta=\E_{\tr}[m(W_{i,\cdot},\tgamma)].
    $
    In summary,
    $
    \E_{\tr}[\talpha(D_i,Z_{i,\cdot})\tgamma(D_i,Z_{i,\cdot})]=\E_{\tr}[m(W_{i,\cdot},\tgamma)]
    $
    which implies the result.
    To prove the first equality, we appeal to the first order condition for $\hbeta$: $\hbeta^T\bhG =\E_{\tr}[Y_ib(D_i,\hat{X}_{i,\cdot})]$. Multiplying by $\heta$, we have
    $
    \E_{\tr}[Y_ib(D_i,\hat{X}_{i,\cdot})]\heta=\E_{\tr}[Y_i \talpha(D_i,Z_{i,\cdot})]
    $
    and, appealing to the previous result,
    $
    \hbeta^T\bhG\heta=\E_{\tr}[\talpha(D_i,Z_{i,\cdot})\tgamma(D_i,Z_{i,\cdot})]=\E_{\tr}[m(W_{i,\cdot},\tgamma)].
    $
\end{proof}

However, our estimator involves sample splitting and implicit data cleaning to break dependence, motivated by our goal of inference after data cleaning. 

\begin{proposition}[Non-equivalence in \te]
If Assumption~\ref{assumption:mean_square_cont} holds and $b$ is square integrable, then the outcome, balancing weight, and doubly robust estimators generically do not coincide on the test set:
$
\E_{\te}[m(Z_{i,\cdot},\hgamma)]\neq \E_{\te}[Y_i\halpha(D_i,Z_{i,\cdot})] \neq \E_{\te}[m(Z_{i,\cdot},\hgamma)+\halpha(D_i,Z_{i,\cdot})\{Y_i-\hgamma(D_i,Z_{i,\cdot})\}].
$
\end{proposition}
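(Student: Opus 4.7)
The plan is to isolate exactly which algebraic identity powered the training-set equivalence, show that its analog on the test set cannot hold for our sample-split, implicit-cleaning estimator, and then exhibit a minimal example establishing genericity.

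First, I would diagnose the previous proof. The \tr equivalence rested on two first-order conditions holding simultaneously on the same empirical inner product: $\bhG\heta=\hat{M}^T$ and $\bhG\hbeta=\E_{\tr}[Y_ib(D_i,\hat{X}_{i,\cdot})]^T$, where $\bhG=\E_{\tr}[b(D_i,\hat{X}_{i,\cdot})^Tb(D_i,\hat{X}_{i,\cdot})]$ is the empirical Gram matrix of cleaned \tr regressors. Forming the scalar $\heta^T\bhG\hbeta$ two different ways collapsed outcome, balancing, and doubly robust estimators into a common quantity. The crucial feature is that the exact same Gram matrix $\bhG$ appears in both FOCs.

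Second, I would write down the three \te estimators explicitly using Proposition~\ref{prop:indep}: $\hgamma(D_i,Z_{i,\cdot})=b(D_i,Z_{i,\cdot})\tbeta$ and $\halpha(D_i,Z_{i,\cdot})=b(D_i,Z_{i,\cdot})\teta$, where $(\tbeta,\teta)$ absorb $\bhrho^{-1}$. For the three test-set estimators to coincide, one would need $\teta^T\bG^{\te}\tbeta=\hat{M}^{\te}\tbeta=\E_{\te}[Y_ib(D_i,Z_{i,\cdot})]\teta$ with $\bG^{\te}:=\E_{\te}[b(D_i,Z_{i,\cdot})^Tb(D_i,Z_{i,\cdot})]$. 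But $(\hbeta,\heta)$ are determined by \tr alone, and by sample splitting they are independent of $(\bG^{\te},\E_{\te}[Y_ib],\hat{M}^{\te})$ conditional on $\tr$. In particular, $\bG^{\te}\neq\bhG$ because (i) the test regressors $b(D_i,Z_{i,\cdot})$ are corrupted rather than cleaned, and (ii) the averaging is over a disjoint sample. Hence the two FOCs that forced the \tr equivalence both fail on \te simultaneously, and the three linear functionals of \te data are generically distinct.

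Third, to make ``generically'' precise, I would give a concrete counterexample in the simplest setting: ATE with the interacted dictionary, exactly low-rank signal, trivial cleaning ($\bhA=\bA$, no missingness), and a small toy sample. Fix \tr so that $(\hbeta,\heta)$ are pinned down; then show that the map from $\{(Y_i,D_i,Z_{i,\cdot})\}_{i\in\te}$ to the triple $(\E_{\te}[m(W_{i,\cdot},\hgamma)],\E_{\te}[Y_i\halpha],\E_{\te}[m(W_{i,\cdot},\hgamma)+\halpha\{Y_i-\hgamma\}])$ consists of three linear functionals with distinct coefficient vectors (one depending only on $\hbeta$, one only on $\heta$, one on both plus an additional residual term). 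Two distinct linear functionals of \te data agree only on a measure-zero subset of data realizations, yielding the genericity claim. A single arithmetic example (e.g., one treated and one untreated \te observation with generic $Z$'s and $Y$'s) suffices to realize strict inequality.

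The main obstacle will be phrasing ``generically'' in a way that is both rigorous and parsimonious. The cleanest route is the counterexample plus the linear-functional argument above, which avoids any measure-theoretic machinery beyond noting that two distinct affine functionals of continuous data agree on a Lebesgue-null set. A secondary subtlety is that the doubly robust estimator contains the product $\halpha\cdot\hgamma$, so one should verify that the additional term $\E_{\te}[\halpha(Y_i-\hgamma)]$ does not vanish identically on \te, which again follows from the failure of the \tr FOCs to extend.
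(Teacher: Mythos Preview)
Your proposal is correct and rests on the same core observation as the paper: the training-set equivalence is driven by the FOCs $\bhG\heta=\hat{M}^T$ and $\bhG\hbeta=\E_{\tr}[Y_ib(D_i,\hat{X}_{i,\cdot})]^T$, and these FOCs do not transfer to $\te$ because $(\hbeta,\heta)$ are computed on $\tr$ and the test regressors are not cleaned. The paper's entire proof is a two-sentence remark to exactly this effect, so your first two steps already match and exceed what the paper provides; your third step (the counterexample and linear-functional argument to pin down ``generically'') is more than the paper attempts, since the proposition is stated informally and the paper does not formalize genericity at all.
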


\begin{proof}
The first order conditions for $(\hbeta,\heta)$ hold for $\tr$ after data cleaning. They do not hold for $\te$, especially since we do not clean the test covariates.
\end{proof}

\textbf{High probability events}. Define the events:
$$
\Ec_6=\left\{\max_{j\in[p]}\left|\frac{1}{n}\sum_{i\in \tr} \left\{\tilde{m}_{ij} - \E[\tilde{m}_{ij}]\right\}\right| \leq C \cdot C_m''  \sqrt{\frac{\log(np)}{n}} \right\};
$$
$$
\Ec_7=\left\{\max_{j\in[p]}\left|\frac{1}{2n}\sum_{i \in \tr,\te}\left\{\alpha_0(W_{i,\cdot})A_{ij} - \E[\alpha_0(W_{i,\cdot})A_{ij}]\right\} \right| \leq C \cdot \bar{\alpha}\bar{A}  \sqrt{\frac{\log(np)}{n}} \right\};
$$
$$
\Ec_8=\left\{\max_{j,k\in[p]}\left|\frac{1}{n}\sum_{i\in \tr} \left\{A_{ij} A_{ik}-\E[A_{ij} A_{ik}]\right\}\right| \leq C\cdot \bar{A}^2 \sqrt{\frac{\log(np)}{n}}  \right\};
$$
$$
\Ec_9=\left\{\max_{j,k\in[p]}\left|\frac{1}{2n} \sum_{i\in \tr,\te} \left\{A_{ij} A_{ik}-\E[A_{ij} A_{ik}]\right\}\right| \leq C\cdot \bar{A}^2 \sqrt{\frac{\log(np)}{n}} \right\}.
$$

\begin{lemma}\label{lem:R2}
Under Assumption~\ref{assumption:data_cont}, 
$
\p(\Ec_6^c) \leq \frac{2}{n^{10}p^{10}}.
$ Under Assumption~\ref{assumption:bounded} and $\|\alpha_0\|_{\infty}\leq \bar{\alpha}$,
$
\p(\Ec_7^c) \leq \frac{2}{n^{10}p^{10}}.
$ Under Assumption~\ref{assumption:bounded}, 
$
\p(\Ec_8^c) \leq \frac{2}{n^{10}p^{10}}
$ and
$
\p(\Ec_9^c) \leq \frac{2}{n^{10}p^{10}}.
$
\end{lemma}

\begin{proof}
By Assumption~\ref{assumption:data_cont}, $\tilde{m}_{ij} \leq C_m''$. By Assumption~\ref{assumption:bounded}, $|\alpha_0(W_{i,\cdot})A_{ij}|\leq \bar{\alpha}\bar{A}$ and $|A_{ij} A_{ik}|\leq \bar{A}^2$. For $\Ec_6,\Ec_7$ we appeal to Hoeffding for any $j\in[p]$, then take the union bound. For $\Ec_8,\Ec_9$ we appeal to Hoeffding for any $j,k\in[p]$, then take the union bound.
\end{proof}

\begin{lemma}\label{lem:Delta_M}
Suppose Assumptions~\ref{assumption:bounded},~\ref{assumption:data_cont},~\ref{assumption:mean_square_cont}, and~\ref{assumption:invariance} hold, and $\|\alpha_0\|_{\infty}\leq \bar{\alpha}$. Then
$
\|\hat{M}-M^*\|_{\max} | \{\Ec_6,\Ec_7\}\leq \Delta_M=\sqrt{C_m'} \frac{1}{\sqrt{n}}\|\bhA-\bA\|_{2,\infty} +C\cdot (C_m''+\bar{\alpha}\bar{A}) \sqrt{\frac{\ln(np)}{n}}+\bar{\alpha}\cdot \Delta_E.
$
\end{lemma}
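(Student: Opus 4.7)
The plan is to establish the bound via a triangle-inequality decomposition of $\hat M - M^*$ through two natural intermediaries: the un-cleaned counterfactual average $\bar M := \frac{1}{n}\sum_{i\in\tr}\tilde m_{i,\cdot}$ and the ``population-side'' object $\bar M^* := \frac{1}{2n}\sum_{i\in\tr,\te}\alpha_0(W_{i,\cdot})b(A_{i,\cdot})$ (with the true signal in place of its low-rank approximation). Triangle inequality then gives
$$
\|\hat M - M^*\|_{\max} \leq \|\hat M - \bar M\|_{\max} + \|\bar M - \E[\bar M]\|_{\max} + \|\E[\bar M] - \E[\bar M^*]\|_{\max} + \|\bar M^* - \E[\bar M^*]\|_{\max} + \|\bar M^* - M^*\|_{\max},
$$
and each piece will be matched to one of the three summands in $\Delta_M$.

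The first term is the data-cleaning error: for each column $j$, Cauchy--Schwarz gives $|\hat M_j - \bar M_j| \leq n^{-1/2}\|\hat m_{\cdot,j} - \tilde m_{\cdot,j}\|_2$, and maximizing over $j$ and applying Assumption~\ref{assumption:data_cont}.1 yields the leading $\sqrt{C_m'}\,n^{-1/2}\|\bhA - \bA\|_{2,\infty}$ contribution. The second and fourth terms are statistical deviations of i.n.i.d.\ bounded averages from their expectations: Assumption~\ref{assumption:data_cont}.2 gives $|\tilde m_{ij}|\leq C_m''$, so $\Ec_6$ controls the second by $CC_m''\sqrt{\ln(np)/n}$, and $\|\alpha_0\|_\infty\leq\bar\alpha$ combined with Assumption~\ref{assumption:bounded} gives $|\alpha_0(W_{i,\cdot})A_{ij}|\leq \bar\alpha\bar A$, so $\Ec_7$ controls the fourth by $C\bar\alpha\bar A\sqrt{\ln(np)/n}$. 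The fifth term is a low-rank approximation error: pulling $\alpha_0$ out via $\|\alpha_0\|_\infty\leq\bar\alpha$ and using $\|b(A) - b(A^{\lr})\|_{\max}\leq \Delta_E$ gives $\bar\alpha\,\Delta_E$.

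The main obstacle is the third, population-level term, which is where the Riesz structure enters. The plan is to invoke the Riesz representation of the fixed linear moment $m$: Assumption~\ref{assumption:invariance_mod} makes $\alpha_0$ the common representer, so observation-wise $\E[\tilde m_{ij}] = \E[m(W_{i,\cdot},b_j)] = \E[\alpha_0(W_{i,\cdot})b_j(A_{i,\cdot})]$. Consequently $\E[\bar M]$ and $\E[\bar M^*]$ are averages of the same observation-wise quantity, differing only in whether the index set is $\tr$ or $\tr\cup\te$; under the symmetric sample split this discrepancy is absorbed into the statistical deviation already controlled by $\Ec_7$. Summing the five contributions and absorbing constants then delivers $\Delta_M$ exactly as stated.
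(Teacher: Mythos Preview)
Your decomposition is exactly the paper's: the same five-term telescope through $\bar M$ and $\bar M^*$, and your bounds for the first, second, fourth, and fifth pieces match the paper's $R^{(1)}$--$R^{(2)}$, $R^{(4)}$--$R^{(5)}$ verbatim.

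The one imprecision is your handling of the third, population-level term. The paper's argument is that this term is \emph{exactly zero}: once the Riesz identity $\E[\tilde m_{ij}]=\E[m(W_{i,\cdot},b_j)]=\E[\alpha_0(W_{i,\cdot})b_j(A_{i,\cdot})]$ is applied observation-wise, the two sides differ only in averaging over $\tr$ versus over $\tr\cup\te$, and the paper invokes the ex-ante identical distribution of the folds under the random partition to conclude equality. Your claim that the residual is ``absorbed into the statistical deviation already controlled by $\Ec_7$'' is not right: $\Ec_7$ bounds the empirical fluctuation of the random summands $\alpha_0(W_{i,\cdot})A_{ij}$ about their expectations over the \emph{combined} sample; it says nothing about the purely deterministic discrepancy $\frac{1}{n}\sum_{i\in\tr}\E[\alpha_0(W_{i,\cdot})A_{ij}]-\frac{1}{2n}\sum_{i\in\tr,\te}\E[\alpha_0(W_{i,\cdot})A_{ij}]$. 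The fix is simpler than what you wrote---just set the term to zero via the random-split symmetry---and with that correction your argument coincides with the paper's.
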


\begin{proof}
    Write $  \hat{M}-M^*=\sum_{k=1}^5 R^{(k)}$, where $\{R^{(k)}\}$ are below. By triangle inequality, it suffices to bound $R_j^{(k)}$. Write
    $$
         \{R^{(1)}_j\}^2
         =\left\{\frac{1}{n}\sum_{i\in \tr} (\hat{m}_{ij}-\tilde{m}_{ij})\right\}^2 
         \leq \frac{1}{n}\sum_{i\in \tr} (\hat{m}_{ij}-\tilde{m}_{ij})^2 
         \leq \frac{1}{n} \| \bhM-\btM \|^2_{2,\infty} 
         \leq \frac{1}{n} C_m' \|\bhA-\bA\|^2_{2,\infty},
    $$
    appealing to Assumption~\ref{assumption:data_cont}. Write
    $
    R_j^{(2)}=\frac{1}{n}\sum_{i\in \tr} \{\tilde{m}_{ij} - \E[\tilde{m}_{ij}]\},
    $
    then appeal to $\Ec_6$. Write
    $
    R_j^{(3)}=\frac{1}{n}\sum_{i\in \tr} \E[\tilde{m}_{ij}]-\frac{1}{2n} \sum_{i\in \tr,\te}\E[\alpha_0(W_{i,\cdot})A_{ij}]=0
    $
    by Riesz representation and ex ante identical distribution of folds in the random partition $(\tr,\te)$. In particular, since $b_j^{\signal}\in\mathbb{L}_2(\Wc)$,
    $
    \E[\tilde{m}_{ij}]=\E[m(W_{i,\cdot},b_j^{\signal})]=\E[\alpha_0(W_{i,\cdot})b_j^{\signal}(W_{i,\cdot})]=\E[\alpha_0(W_{i,\cdot})b_j(A_{i,\cdot})].
    $ 
    Write  $
     -R_j^{(4)}=\frac{1}{2n} \sum_{i\in \tr,\te} \{\alpha_0(W_{i,\cdot})A_{ij}-\E[\alpha_0(W_{i,\cdot})A_{ij}]\}
    $
    then appeal to $\Ec_7$. Write
    $
    |R_j^{(5)}|=\left|\frac{1}{2n} \sum_{i\in \tr,\te} \alpha_0(W_{i,\cdot})E^{\lr}_{ij}\right|\leq \bar{\alpha}\cdot  \Delta_E
    $
    where $\alpha_0(W_{i,\cdot}) \leq \bar{\alpha}$.
\end{proof}

\begin{lemma}\label{lem:Delta_G2}
Suppose Assumptions~\ref{assumption:bounded} holds. Then
$
\|\bhG-\bG^*\|_{\max}| \{\Ec_8,\Ec_9\}\leq \Delta_G
$
where
$
\Delta_G=(\bar{A}+\|\bhA-\bA\|_{2,\infty}) \frac{1}{\sqrt{n}}\|\bhA-\bA\|_{2,\infty}+C \cdot \bar{A}^2 \sqrt{\frac{\ln(np)}{n}}+C\cdot \bar{A} \Delta_E.
$
\end{lemma}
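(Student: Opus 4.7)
The plan is to mimic the telescoping decomposition used in Lemma~\ref{lem:Delta_M} but applied to the second-moment matrices. First I would add and subtract appropriate empirical and population quantities to write
\begin{align*}
\bhG-\bG^*
&=\left[\bhG-\tfrac{1}{n}\sum_{i\in\tr} A_{i,\cdot}^T A_{i,\cdot}\right]
+\left[\tfrac{1}{n}\sum_{i\in\tr}\{A_{i,\cdot}^T A_{i,\cdot}-\E[A_{i,\cdot}^T A_{i,\cdot}]\}\right] \\
&\quad+\left[\tfrac{1}{n}\sum_{i\in\tr}\E[A_{i,\cdot}^T A_{i,\cdot}] - \tfrac{1}{2n}\sum_{i\in\tr,\te}\E[A_{i,\cdot}^T A_{i,\cdot}]\right] \\
&\quad+\left[\tfrac{1}{2n}\sum_{i\in\tr,\te}\{\E[A_{i,\cdot}^T A_{i,\cdot}]-A_{i,\cdot}^T A_{i,\cdot}\}\right]
+\left[\tfrac{1}{2n}\sum_{i\in\tr,\te}\{A_{i,\cdot}^T A_{i,\cdot}-A^{\lr}_{i,\cdot}{}^T A^{\lr}_{i,\cdot}\}\right].
\end{align*}
By triangle inequality it suffices to bound the max-norm of each summand $S^{(1)},\dots,S^{(5)}$.

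For $S^{(1)}$, the entrywise identity $\hat{A}_{ij}\hat{A}_{ik}-A_{ij}A_{ik}=(\hat{A}_{ij}-A_{ij})\hat{A}_{ik}+A_{ij}(\hat{A}_{ik}-A_{ik})$ and two applications of Cauchy–Schwarz over $i\in\tr$ give
$$
|S^{(1)}_{jk}|\le \tfrac{1}{\sqrt n}\|\bhA-\bA\|_{2,\infty}\cdot\tfrac{1}{\sqrt n}\|\bhA\|_{2,\infty}+\bar A\cdot\tfrac{1}{\sqrt n}\|\bhA-\bA\|_{2,\infty},
$$
and the bound $\|\bhA\|_{2,\infty}\le\|\bhA-\bA\|_{2,\infty}+\sqrt{n}\bar A$ (from Assumption~\ref{assumption:bounded}) yields the first term of $\Delta_G$. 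Terms $S^{(2)}$ and $S^{(4)}$ are controlled directly by the events $\Ec_8$ and $\Ec_9$, producing the $C\bar A^2\sqrt{\ln(np)/n}$ contribution. Term $S^{(3)}$ vanishes exactly, as in step~3 of the proof of Lemma~\ref{lem:Delta_M}, thanks to the ex-ante identical distribution of folds in the random partition $(\tr,\te)$. Finally, for $S^{(5)}$ I would use $A_{ij}A_{ik}-A^{\lr}_{ij}A^{\lr}_{ik}=E^{\lr}_{ij}A_{ik}+A^{\lr}_{ij}E^{\lr}_{ik}$ together with Assumption~\ref{assumption:bounded} and the without-loss-of-generality bound $\|\bA^{\lr}\|_{\max}\le 3\bar A$ from Lemma~\ref{lemma:wlog} to obtain $|S^{(5)}_{jk}|\le C\bar A\Delta_E$.

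Combining the five bounds by triangle inequality produces $\Delta_G$ exactly. The main obstacle is purely bookkeeping in the $S^{(1)}$ term, where one must carry the $\|\bhA\|_{2,\infty}\le \|\bhA-\bA\|_{2,\infty}+\sqrt n\bar A$ splitting through the Cauchy–Schwarz step without introducing spurious factors that would break the clean form $(\bar A+\|\bhA-\bA\|_{2,\infty})\cdot n^{-1/2}\|\bhA-\bA\|_{2,\infty}$; all the remaining terms are immediate applications of the high probability events already established.
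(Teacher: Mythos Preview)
Your proposal is correct and follows essentially the same approach as the paper: a telescoping decomposition into terms handled respectively by a Cauchy--Schwarz argument on $\bhA-\bA$, the events $\Ec_8,\Ec_9$, fold symmetry, and the elementary $\bar A\Delta_E$ bound via Lemma~\ref{lemma:wlog}. The only cosmetic difference is that the paper splits your $S^{(1)}$ and $S^{(5)}$ each into two separate pieces (giving seven terms rather than five) and bounds the analogue of your $(\hat A_{ij}-A_{ij})\hat A_{ik}$ part by pulling out $\|\bhA\|_{\max}$ rather than using Cauchy--Schwarz on both factors; your route in fact gives a slightly sharper $\tfrac{1}{n}\|\bhA-\bA\|_{2,\infty}^2$ in place of the paper's $\tfrac{1}{\sqrt n}\|\bhA-\bA\|_{2,\infty}^2$, which is still dominated by the stated $\Delta_G$ up to the absolute constant.
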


\begin{proof}
     Write $  \bhG-\bG^*=\sum_{\ell=1}^7 S^{(\ell)}$, where $\{S^{(\ell)}\}$ are below. By triangle inequality, it suffices to bound $S_{jk}^{(\ell)}$. Write
    $
    S^{(1)}_{jk}=\frac{1}{n}\sum_{i\in \tr} \hat{A}_{ij} (\hat{A}_{ik}-A_{ik})\leq \|\bhA\|_{\max} \cdot \frac{1}{n}\sum_{i\in \tr} (\hat{A}_{ik}-A_{ik})
    $
    hence
   $
    \{ S^{(1)}_{jk}\}^2
    \leq \|\bhA\|^2_{\max} \cdot \left\{\frac{1}{n}\sum_{i\in \tr} (\hat{A}_{ik}-A_{ik})\right\}^2 
    \leq \|\bhA\|^2_{\max} \cdot \frac{1}{n}\sum_{i\in \tr} (\hat{A}_{ik}-A_{ik})^2 
    \leq \|\bhA\|^2_{\max} \frac{1}{n} \|\bhA-\bA\|^2_{2,\infty}.
   $ Then use $
     \|\bhA\|_{\max}\leq \|\bhA-\bA\|_{\max}+\|\bA\|_{\max}.
    $ Write
    $
    S^{(2)}_{jk}=\frac{1}{n}\sum_{i\in \tr} (\hat{A}_{ij}-A_{ij})A_{ik} \leq \bar{A} \cdot \frac{1}{n}\sum_{i\in \tr} (\hat{A}_{ij}-A_{ij}).
    $
    By a similar argument,
   $
    \{ S^{(2)}_{jk}\}^2
    \leq \bar{A}^2 \frac{1}{n} \|\bhA-\bA\|^2_{2,\infty}.
    $ 
    Write $
     S^{(3)}_{jk}=\frac{1}{n}\sum_{i\in \tr} \{A_{ij} A_{ik}-\E[A_{ij} A_{ik}]\}
    $
    then appeal to $\Ec_8$.  
    Write
$
 S^{(4)}_{jk}=\frac{1}{n}\sum_{i\in \tr} \E[A_{ij} A_{ik}] -\frac{1}{2n} \sum_{i\in \tr,\te} \E[A_{ij}A_{ik}]=0
$
by ex ante identical distribution of folds in the random partition $(\tr,\te)$. 
  Write
   $
     -S^{(5)}_{jk}=\frac{1}{2n} \sum_{i\in \tr,\te} \{A_{ij} A_{ik}-\E[A_{ij} A_{ik}]\}
    $
    then appeal to $\Ec_9$.
     By Assumption~\ref{assumption:bounded},
    $
    S^{(6)}_{jk}=\frac{1}{2n} \sum_{i\in \tr,\te} A_{ij} E^{\lr}_{ik} \leq \bar{A} \Delta_E.
    $
     By Assumption~\ref{assumption:bounded} and Lemma~\ref{lemma:wlog}.
    $
    S^{(7)}_{jk}=\frac{1}{2n} \sum_{i\in \tr,\te} E^{\lr} A^{\lr}_{ik}\leq \|\bE^{\lr}\|_{\max} \|\bA^{\lr}\|_{\max}\leq 3\bar{A}\Delta_E.
    $
\end{proof}

\begin{proposition}\label{prop:whp_tilde_e5}
    If the conditions of Proposition~\ref{prop:param_est_RR2} hold, then $\p(\tilde{\Ec}_5^c)\leq\frac{C}{n^{10}p^{10}}$.
\end{proposition}

\begin{proof}
    The result follows from Lemmas~\ref{lem:Delta_M} and~\ref{lem:Delta_G2}.
\end{proof}

\section{Data cleaning-adjusted confidence intervals details}\label{sec:target_proof}

\textbf{Riesz representation}. We generalize Assumptions~\ref{assumption:invariance_mod}  and~\ref{assumption:mean_square_cont_mod} from the ATE example to the general case. In doing so, we also generalize the balancing weight to a Riesz representer.

\begin{assumption}[Linearity and mean square continuity]\label{assumption:mean_square_cont}
(i) The functional $\gamma\mapsto \E[m(W_{i,\cdot},\gamma)]$ is linear. 
   (ii) There exists $\bar{Q}<\infty$ and $\bar{q}\in(0,1]$ such that for all $\gamma\in\Gamma,$
    $
    \E[m(W_{i,\cdot},\gamma)^2]\leq \bar{Q}\cdot \{\E[\gamma(W_{i,\cdot})^2]\}^{\bar{q}}.
    $
\end{assumption}

These restrictions generalize the usual propensity score assumptions; Assumption~\ref{assumption:mean_square_cont_mod} is a special case of Assumption~\ref{assumption:mean_square_cont}. Assumption~\ref{assumption:mean_square_cont} implies that the balancing weight exists.

\begin{lemma}[\cite{chernozhukov2018global}]\label{lem:RR}
Suppose Assumption~\ref{assumption:mean_square_cont} holds and $\gamma_0\in \Gamma$, which may be imposed in estimation. Then there exists a Riesz representer $\alpha_0\in \mathbb{L}_2(\Wc)$ such that for all $\gamma\in \Gamma$, 
$
\E[m(W_{i,\cdot},\gamma)]=\E[\alpha_0(W_{i,\cdot})\gamma(W_{i,\cdot})].
$
There exists a unique minimal Riesz representer $\alpha_0^{\min}\in \Gamma$ satisfying this equation. Moreover, denoting by $\bar{M}$ the operator norm of $\gamma\mapsto \E[m(W_{i,\cdot},\gamma)]$, 
$
\{\E[\alpha_0^{\min}(W_{i,\cdot})^2]\}^{\frac{1}{2}}=\bar{M} \leq \bar{Q}^{\frac{1}{2}}<\infty.
$
\end{lemma}

The balancing weight is a special case of a Riesz representer. Hereafter, we refer to the Riesz representer as a balancing weight nonetheless, since our estimator $\halpha$ achieves balance across examples; see Proposition~\ref{prop:balance}, which generalizes Proposition~\ref{prop:balance_ATE}. To lighten notation, we will typically consider the case where $\Gamma=\mathbb{L}_2(\Wc)$ and $\alpha_0^{\min}=\alpha_0$. When we consider the more general case, as in Example~\ref{ex:deriv}, we will use the richer notation.

We impose that $(\gamma_0,\alpha_0)$ do not vary across observations, generalizing familiar distribution shift assumptions. Assumption~\ref{assumption:invariance_mod} is a special case of Assumption~\ref{assumption:invariance}, which we now state.

\begin{assumption}[Marginal distribution shift]\label{assumption:invariance}
For all observations $i\in [n]$, (i) the regression $\gamma_0$ does not vary: $\E[\gamma_0(W_{i,\cdot}) v(W_{i,\cdot})]=\E[Y_iv(W_{i,\cdot})]$ for all $v \in\mathbb{L}_2(\Wc)$; 
(ii) the Riesz representer $\alpha_0$ does not vary: $\E[\alpha_0(W_{i,\cdot}) u(W_{i,\cdot})]=\E[ m(W_{i,\cdot},u)]$ for all $u \in\mathbb{L}_2(\Wc)$.
\end{assumption}

\begin{proposition}[Verifying Assumptions~\ref{assumption:mean_square_cont} and~\ref{assumption:invariance}]\label{prop:mean_square_cont}
Assumptions~\ref{assumption:mean_square_cont} and~\ref{assumption:invariance} hold under simple and interpretable conditions for the leading examples. Recall that $\|\alpha_0\|_{\infty} \leq \bar{\alpha}$, while $(\bar{Q},\bar{q})$ are defined in Assumption~\ref{assumption:mean_square_cont}.
In Example~\ref{ex:ATE}, $\alpha_0(W_{i,\cdot})=\frac{D_i}{\phi_0(X_{i,\cdot},H_{i,\cdot},\pi_{i,\cdot})}-\frac{1-D_i}{1-\phi_0(X_{i,\cdot},H_{i,\cdot},\pi_{i,\cdot})}$, where $ \phi_0(X_{i,\cdot},H_{i,\cdot},\pi_{i,\cdot}):=\E[D_i|X_{i,\cdot},H_{i,\cdot},\pi_{i,\cdot}].$ 
    Suppose
    $
    0<\underline \phi \leq \phi_0(X_{i,\cdot},H_{i,\cdot},\phi_{i,\cdot})\leq \bar{\phi}< 1
    $. Then 
    $
    \bar{\alpha}=\frac{1}{\underline \phi}+\frac{1}{1-\bar{\phi}}$, $\bar{Q}=\frac{2}{\underline \phi}+\frac{2}{1-\bar{\phi}}$, and $\bar{q}=1
    $
   . We impose that the outcome regression and treatment propensity score do not vary.
In Example~\ref{ex:LATE}, $\alpha_0(W_{i,\cdot})=\frac{U_i}{\phi_0(X_{i,\cdot},H_{i,\cdot},\pi_{i,\cdot})}-\frac{1-U_i}{1-\phi_0(X_{i,\cdot},H_{i,\cdot},\pi_{i,\cdot})}$, where $\phi_0(X_{i,\cdot},H_{i,\cdot},\pi_{i,\cdot}):=\E[U_i|X_{i,\cdot},H_{i,\cdot},\pi_{i,\cdot}]$. Suppose 
    $
    0<\underline \phi \leq \phi_0(X_{i,\cdot},H_{i,\cdot},\pi_{i,\cdot})\leq \bar{\phi}< 1
    $. Then 
    $
    \bar{\alpha}=\frac{1}{\underline \phi}+\frac{1}{1-\bar{\phi}}$, $\bar{Q}=\frac{2}{\underline \phi}+\frac{2}{1-\bar{\phi}}$, and $\bar{q}=1
    $
    . We impose that the outcome regression and instrument propensity score do not vary.
   In Example~\ref{ex:policy},
    $
    \alpha_0(W_{i,\cdot})=\omega(X_{i,\cdot},H_{i,\cdot},\pi_{i,\cdot})-1$, where $\omega(X_{i,\cdot},H_{i,\cdot},\pi_{i,\cdot})=\frac{f\{t(X_{i,\cdot},H_{i,\cdot},\pi_{i,\cdot})\}}{f(X_{i,\cdot},H_{i,\cdot},\pi_{i,\cdot})}.
    $ Suppose
    $
  \omega(X_{i,\cdot},H_{i,\cdot},\pi_{i,\cdot})\leq \bar{\omega}<\infty
    $. Then 
    $
    \bar{\alpha}=\bar{\omega}+1$, $\bar{Q}=2\bar{\omega}+2$, and $\bar{q}=1
    $
    . We impose that the outcome regression and covariate density ratio do not vary.
 In Example~\ref{ex:deriv},
    $
    \alpha_0(W_{i,\cdot})=-\nabla_d \ln f(D_i \mid X_{i,\cdot},H_{i,\cdot},\pi_{i,\cdot}).
    $ 
    Suppose $-\nabla_d \ln f(D_i \mid X_{i,\cdot},H_{i,\cdot},\pi_{i,\cdot}) \leq \bar{f}<\infty$. Then 
    $
    \bar{\alpha}=\bar{f}$, $\bar{Q}=\bar{f}(\bar{\gamma}+\bar{\gamma}')$, and $\bar{q}=1/2
    $
    for $\Gamma$ that satisfies a Sobolev condition: $\E[ \{\nabla_d \gamma(D_i,X_{i,\cdot},H_{i,\cdot},\pi_{i,\cdot})\}^2]\leq \bar{\gamma}^2<\infty$ and $\E[ \{\partial^2_d \gamma(D_i,X_{i,\cdot},H_{i,\cdot},\pi_{i,\cdot})\}^2]\leq (\bar{\gamma}')^2<\infty$. We impose that the outcome regression and conditional density of goods do not vary.
   In Example~\ref{ex:p_ols}, 
    $
    \alpha_0(W_{i,\cdot})=\ell_i \frac{D_i-\phi_0(X_{i,\cdot},H_{i,\cdot},\pi_{i,\cdot})}{\E[\{D_i-\phi_0(X_{i,\cdot},H_{i,\cdot},\pi_{i,\cdot})\}^2]}$, where $\phi_0(X_{i,\cdot},H_{i,\cdot},\pi_{i,\cdot}):=\E[D_i|X_{i,\cdot},H_{i,\cdot},\pi_{i,\cdot}].
    $
    Suppose $\E[\{D_i-\phi_0(X_{i,\cdot},H_{i,\cdot},\pi_{i,\cdot})\}^2]>\underline \phi$ and $|\ell_i|\leq\bar{\ell}$. Then
        $
    \bar{\alpha}=\frac{2\bar{\ell}\bar{A}}{\underline \phi}$, $\bar{Q}=\frac{4\bar{\ell}^2\bar{A}^2}{\underline \phi^2}$, and $\bar{q}=1
    $
    . We impose that the outcome regression and treatment regression do not vary.
  In Example~\ref{ex:p_iv}, 
    $
    \alpha_0(W_{i,\cdot})=\ell_i \frac{U_i-\phi_0(X_{i,\cdot},H_{i,\cdot},\pi_{i,\cdot})}{\E[\{U_i-\phi_0(X_{i,\cdot},H_{i,\cdot},\pi_{i,\cdot})\}^2]}$, where $ \phi_0(X_{i,\cdot},H_{i,\cdot},\pi_{i,\cdot}):=\E[U_i|X_{i,\cdot},H_{i,\cdot},\pi_{i,\cdot}]
    $
    . Suppose $\E[\{D_i-\phi_0(X_{i,\cdot},H_{i,\cdot},\pi_{i,\cdot})\}^2]>\underline \phi$ and $|\ell_i|\leq\bar{\ell}$. Then
        $
    \bar{\alpha}=\frac{2\bar{\ell}\bar{A}}{\underline \phi}$, $\bar{Q}=\frac{4\bar{\ell}^2\bar{A}^2}{\underline \phi^2}$, and $\bar{q}=1
    $
    . We impose that the outcome regression and instrument regression do not vary.
    In Example~\ref{ex:CATE},
     $
     \alpha_0(W_{i,\cdot})=\ell_h(V_i)\left\{\frac{D_i}{\phi_0(V_i,X_{i,\cdot},H_{i,\cdot},\pi_{i,\cdot})}-\frac{1-D_i}{1-\phi_0(V_i,X_{i,\cdot},H_{i,\cdot},\pi_{i,\cdot})}\right\}.
     $
     Suppose
    $
    0<\underline \phi \leq \phi_0(V_i,X_{i,\cdot},H_{i,\cdot},\pi_{i,\cdot})\leq \bar{\phi}< 1
    $ and other regularity conditions hold given in Lemma~\ref{lemma:local} below. Then 
    $
    \bar{\alpha}_h\leq C \cdot \frac{1}{h}\left( \frac{1}{\underline \phi}+\frac{1}{1-\bar{\phi}}\right)$, $\bar{Q}_h\leq C \cdot \frac{1}{h^2}\left(\frac{2}{\underline \phi}+\frac{2}{1-\bar{\phi}}\right)$, $\bar{q}=1
    $
   .  We impose that the outcome regression and treatment propensity score do not vary.
\end{proposition}

\begin{proof}[Proof of Proposition~\ref{prop:mean_square_cont}]
The results follow from the law of iterated expectations and integration by parts. See \cite[Lemmas S3 and S4]{chernozhukov2021simple} for mean square continuity of Example~\ref{ex:deriv} and \cite[Theorem 2]{chernozhukov2021simple} for the characterization of $(\bar{\alpha}_h,\bar{Q}_h)$ with localization.
\end{proof}

\textbf{Nonparametrics}. A local functional $\theta_0^{\lim}\in \R$ is a scalar that takes the form
$
\theta^{\lim}_{0}
=\lim_{h\rightarrow 0} \theta_0^h$, where $\theta_0^h=\frac{1}{n}\sum_{i=1}^n\theta^h_i$ and $ \theta_i^h=\E[m_h(W_{i,\cdot},\gamma_0)]
=\E[\ell_h(W_{ij}) m(W_{i,\cdot},\gamma_0)]
$.
Here, $\ell_h$ is a Nadaraya Watson weighting with bandwidth $h$ and $W_{ij}$ is a scalar component of $W_{i,\cdot}$. $\theta^{\lim}_{0}$ is a nonparametric quantity that we approximate by the sequence $\{\theta_0^h\}$, incurring the nonparametric approximation error
$\Delta_h=
n^{1/2} \sigma^{-1}|\theta_0^h-\theta_0^{\lim}|$.
Each $\theta_0^h$ can be analyzed like $\theta_0$ above as long as we keep track of how certain quantities depend on $h$

\begin{lemma}[Theorem 2 of \cite{chernozhukov2021simple}]\label{lemma:local}
If response noise has finite variance then $\bar{\sigma}^2<\infty$. Suppose bounded moment and heteroscedasticity conditions hold. Then for global functionals
$
\xi/\sigma \lesssim \sigma \asymp \bar{M} < \infty$, $\xi, \chi\lesssim \bar{M}^2\leq   \bar{Q}<\infty$, and $\bar{\alpha}<\infty.$
Suppose bounded moment, heteroscedasticity, density, and derivative conditions hold. Then for local functionals
$
\xi_h/\sigma_h \lesssim h^{-1/6}$, $\sigma_h \asymp \bar{M}_h \asymp h^{-1/2}$, $\xi_h\lesssim h^{-2/3}$, $\chi_h\lesssim h^{-3/4}$, $\bar{\alpha}_h\lesssim h^{-1}$, $\bar{Q}_h\lesssim h^{-2}
$,
and $\Delta_h \lesssim n^{1/2} h^{\mathsf{v}+1/2}$ where $\mathsf{v}$ is the order of differentiability of the kernel $K$.
\end{lemma}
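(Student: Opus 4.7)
The plan is to reduce the claim to Theorem~2 of \cite{chernozhukov2021simple}, adapting its arguments to our i.n.i.d.\ setting in which the marginal distributions of covariates may shift across observations while the outcome and treatment mechanisms are common (Assumption~\ref{assumption:invariance}). The proof proceeds in two parts: global functionals first, then local functionals, followed by the approximation error for the nonparametric case.

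For the global case, the key inputs are Proposition~\ref{prop:RR} and the influence function $\psi_i = m(W_{i,\cdot},\gamma_0) + \alpha_0(W_{i,\cdot})\{Y_i - \gamma_0(W_{i,\cdot})\} - \theta_i$. Proposition~\ref{prop:RR} already gives $\bar{M} = \{\E[\alpha_0(W_{i,\cdot})^2]\}^{1/2} \leq \bar{Q}^{1/2}$, which delivers $\bar{M}^2 \le \bar{Q} < \infty$. To show $\sigma \asymp \bar{M}$, I would bound $\sigma^2 = n^{-1}\sum_i \E[\psi_i^2]$ from above using Cauchy--Schwarz, bounded heteroscedasticity $\V[\varepsilon_i \mid W_{i,\cdot}] \leq \bar\sigma^2$, mean square continuity (Assumption~\ref{assumption:mean_square_cont}), and $\|\alpha_0\|_\infty \leq \bar\alpha$; the matching lower bound uses Riesz representation to relate the dominant $\alpha_0 \varepsilon_i$ term back to $\bar{M}$. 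The bounds $\xi,\chi \lesssim \bar{M}^2$ and $\xi/\sigma \lesssim \sigma$ follow by the same kind of decomposition combined with bounded higher moments of $\varepsilon_i$ and $\gamma_0(W_{i,\cdot})$.

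For the local case, write $\ell_h(V_i) = K\{(V_i-v)/h\}/\omega$ with $\omega = \E[K\{(V_i-v)/h\}]$. A change of variables $u = (V_i - v)/h$, together with a bounded density condition on $f_V$ near $v$, yields $\omega \asymp h$ and more generally $\E[\ell_h(V_i)^k] \asymp h^{-(k-1)}$ for $k = 2,3,4$. This gives the stated scalings for $\bar{M}_h$, $\bar\alpha_h$, $\bar{Q}_h$ via the factorization $m_h = \ell_h \cdot m$ and the pointwise bound $|\ell_h| \lesssim h^{-1}$. The rates for $\xi_h$, $\chi_h$, and $\xi_h/\sigma_h$ then come from a Rosenthal-type moment calculation applied to $\psi_i^h$, which inherits the $\ell_h$ scaling; in particular $\sigma_h^2 \asymp h^{-1}$, $\xi_h^3 \asymp h^{-2}$, and $\chi_h^4 \asymp h^{-3}$.

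The last piece is the nonparametric approximation error $\Delta_h$. The bias expansion $|\theta_0^h - \theta_0^{\lim}| \lesssim h^{\mathsf{v}}$ follows from a Taylor expansion of the conditional functional in the smoothing variable: if $K$ is a kernel of order $\mathsf{v}$, the first $\mathsf{v}-1$ Taylor coefficients vanish upon integration against $K$, leaving a remainder of order $h^{\mathsf{v}}$. Combined with $\sigma_h \asymp h^{-1/2}$, this gives $\Delta_h = n^{1/2}\sigma_h^{-1}|\theta_0^h - \theta_0^{\lim}| \lesssim n^{1/2} h^{\mathsf{v}+1/2}$. The main obstacle, as I see it, is verifying that the density/derivative conditions and the kernel bias expansion hold uniformly across the i.n.i.d.\ observations; Assumption~\ref{assumption:invariance} is exactly what makes the relevant conditional objects common across $i$, so that the calculations from \cite{chernozhukov2021simple} transfer without needing a per-unit bias bound.
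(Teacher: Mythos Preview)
The paper does not prove this lemma at all: it is imported wholesale from \cite{chernozhukov2021simple}, as the label ``Theorem 2 of \cite{chernozhukov2021simple}'' indicates, and the appendix simply invokes it in the proof of Corollary~\ref{cor:nonparametric} without further argument. So there is no paper proof to compare against; your proposal is a reconstruction of what that cited theorem presumably contains.

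That said, your sketch is sound and hits the right ideas. The kernel moment calculation via change of variables, the factorization $m_h = \ell_h \cdot m$, and the Taylor bias expansion against an order-$\mathsf{v}$ kernel are exactly the standard ingredients. Your observation that Assumption~\ref{assumption:invariance} is what makes the conditional objects common across $i$, so that the i.i.d.\ calculations port over, is the one genuinely new point relative to the cited reference, and it is correct. One minor quibble: in the local case you also need the lower bound $\sigma_h \gtrsim h^{-1/2}$ (not just the upper bound), which requires a nondegeneracy condition on the conditional variance of $\psi_i$ near $V_i = v$; this is part of what the paper's unstated ``heteroscedasticity conditions'' must encode, and you should flag it explicitly rather than fold it into the Rosenthal step.
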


Equipped with this lemma, we prove validity of the data cleaning-adjusted confidence interval for nonparametric quantities.

\begin{corollary}[Confidence interval coverage]\label{cor:nonparametric}
Suppose the conditions of Corollary~\ref{cor:CI_inid} and Lemma~\ref{lemma:local} hold. Update the rate conditions to be (i) bandwidth regularity: $n^{-1/2}h^{-3/2}\rightarrow 0$ and $\Delta_h\rightarrow0$;
(ii) error-in-variable regression rate: $\left(h^{-1}+\bar{\alpha}'\right)\{\Rc(\hgamma)\}^{\bar{q}/2}\rightarrow 0$;
(iii) error-in-variable balancing weight rate: $\bar{\sigma}h^{-1}\{\Rc(\halpha)\}^{1/2}\rightarrow 0$;
(iv) product of rates is fast: $h^{-1/2}\{n \Rc(\hgamma) \Rc(\halpha)\}^{1/2}\rightarrow 0$.
Then the conclusions of Corollary~\ref{cor:CI_inid} hold, replacing $(\hat{\theta},\theta_0)$ with $(\hat{\theta}^h,\theta_0^{\lim})$.
\end{corollary}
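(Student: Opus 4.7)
The plan is to apply Corollary~\ref{cor:CI_inid} to the semiparametric approximant $\theta_0^h$ at each bandwidth $h$, then close the deterministic gap $\theta_0^h - \theta_0^{\lim}$ using the regularity condition $\Delta_h \to 0$. First I would fix $h$ and treat $\theta_0^h$ as a semiparametric estimand in the framework already developed, with its own Riesz representer $\alpha_0^h(W_{i,\cdot}) = \ell_h(W_{ij})\alpha_0(W_{i,\cdot})$, moment function $m_h$, influence function $\psi^h$, and higher moments $(\sigma_h,\xi_h,\chi_h)$. This is legitimate because $m_h$ is linear in $\gamma$ and mean-square continuous with parameter-specific constants $(\bar{Q}_h,\bar{\alpha}_h)$, so the hypotheses of Theorems~\ref{thm:dml_inid} and~\ref{thm:var_inid} are satisfied at each $h$ and Corollary~\ref{cor:CI_inid} delivers
$$
\frac{n^{1/2}}{\sigma_h}(\htheta^h - \theta_0^h) \Rightarrow N(0,1), \qquad \hsigma^2/\sigma_h^2 \overset{p}{\to} 1,
$$
as soon as its four abstract rate conditions hold.

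The second step is to verify that the bandwidth-indexed conditions (i)--(iv) listed here imply those abstract conditions. Substituting the scalings from Lemma~\ref{lemma:local} --- $\sigma_h \asymp h^{-1/2}$, $\xi_h/\sigma_h \lesssim h^{-1/6}$, $\chi_h \lesssim h^{-3/4}$, $\bar{Q}_h^{1/2} \lesssim h^{-1}$, $\bar{\alpha}_h \lesssim h^{-1}$ --- the four conditions of Corollary~\ref{cor:CI_inid} reduce term-by-term to the four conditions posted here. Specifically, $\chi_h^2 n^{-1/2} \lesssim n^{-1/2}h^{-3/2}$ dominates $(\xi_h/\sigma_h)^3 n^{-1/2} \lesssim n^{-1/2}h^{-1/2}$ and gives (i); $\bar{Q}_h^{1/2} \lesssim h^{-1}$ dominates $\bar{\alpha}_h/\sigma_h \lesssim h^{-1/2}$ and combines with the $\bar{\alpha}'$ term to give (ii); conditions (iii) and (iv) are conservative majorants obtained after absorbing the factor $\sigma_h^{-1} \lesssim h^{1/2}$. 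In each case the conditions here are at least as strong as what the semiparametric corollary requires, so its conclusions transfer without modification.

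The third step is to close the approximation gap. By construction,
$$
\frac{n^{1/2}}{\sigma_h}(\htheta^h - \theta_0^{\lim}) = \frac{n^{1/2}}{\sigma_h}(\htheta^h - \theta_0^h) + \frac{n^{1/2}}{\sigma_h}(\theta_0^h - \theta_0^{\lim}) = \frac{n^{1/2}}{\sigma_h}(\htheta^h - \theta_0^h) + \Delta_h,
$$
so the regularity $\Delta_h \to 0$ upgrades the Gaussian approximation for $\htheta^h - \theta_0^h$ to one for $\htheta^h - \theta_0^{\lim}$ by Slutsky. Combined with $\hsigma/\sigma_h \overset{p}{\to} 1$, inverting the studentized pivot $n^{1/2}(\htheta^h - \theta_0^{\lim})/\hsigma$ yields coverage $\p\{\theta_0^{\lim} \in (\htheta^h \pm 1.96\hsigma n^{-1/2})\} \to 0.95$, with the conservative $+c$ correction and the $\bias$ term carrying over verbatim from Corollary~\ref{cor:CI_inid}.

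The main obstacle is bookkeeping rather than new technique: matching each $h$-scaling from Lemma~\ref{lemma:local} against the right factor in the bounds of Theorems~\ref{thm:dml_inid} and~\ref{thm:var_inid}, and checking that (i)--(iv) are simultaneously strong enough to absorb every divergent factor of $h^{-1}$ yet weak enough to remain compatible with the matrix-completion rates of Theorems~\ref{theorem:fast_rate}--\ref{theorem:fast_rate_RR}. A secondary subtlety is that $\sigma_h$ itself shifts with $h$, so consistency of the variance estimator must be read against a moving target --- but since Theorem~\ref{thm:var_inid} is stated as a finite-sample inequality this is automatic once (i)--(iv) hold.
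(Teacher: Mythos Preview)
Your approach is correct and matches the paper's: apply Corollary~\ref{cor:CI_inid} to the localized functional $\theta_0^h$, verify that its abstract rate conditions reduce to (i)--(iv) via the scalings in Lemma~\ref{lemma:local}, then close the deterministic bias gap with $\Delta_h \to 0$. One step in your bookkeeping needs correction, though: your claim that the $h^{-1}$ in condition (iii) arises from ``absorbing $\sigma_h^{-1} \lesssim h^{1/2}$'' is wrong, since the abstract condition $\bar{\sigma}\{\Rc(\halpha)\}^{1/2} \to 0$ contains no $\sigma_h$ at all. The factor $h^{-1}$ instead comes from the localized balancing weight: the rate condition in Corollary~\ref{cor:CI_inid} is on $\halpha^h = \ell_h \cdot \halpha$, and $\|\ell_h\|_\infty \lesssim h^{-1}$ gives $\{\Rc(\halpha^h)\}^{1/2} \lesssim h^{-1}\{\Rc(\halpha)\}^{1/2}$, a bound the paper imports from \cite[Lemma~S.9]{chernozhukov2021simple}. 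In condition (iv) this same $h^{-1}$ combines with $\sigma_h^{-1} \asymp h^{1/2}$ to produce the net $h^{-1/2}$, so both factors are at work there.
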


\begin{proof}[Proof of Corollary~\ref{cor:nonparametric}]
By Lemma~\ref{lemma:local}, the regularity condition on moments is
$
\left\{\left(\kappa/\sigma\right)^3+\zeta^2\right\}n^{-1/2}\lesssim \left\{\left(h^{-1/6}\right)^3+(h^{-3/4})^2\right\}n^{-1/2}\lesssim h^{-3/2} n^{-1/2}.
$
By Lemma~\ref{lemma:local}, the first learning rate condition is
$
\left(\bar{Q}^{1/2}+\bar{\alpha}/\sigma+\bar{\alpha}'\right)\{\mathcal{R}(\hgamma)\}^{1/2} 
\lesssim \left(h^{-1}+h^{-1}/h^{-1/2}+\bar{\alpha}'\right)\{\mathcal{R}(\hgamma)\}^{1/2} 
\lesssim \left(h^{-1}+\bar{\alpha}'\right)\{\mathcal{R}(\hgamma)\}^{1/2}.
$
By \cite[Lemma S.9]{chernozhukov2021simple}, the second learning rate condition is
$
\bar{\sigma}\{\mathcal{R}(\halpha^h)\}^{1/2} \lesssim \bar{\sigma}h^{-1}\{\mathcal{R}(\halpha)\}^{1/2}.
$
By Lemma~\ref{lemma:local} and \cite[Lemma S.9]{chernozhukov2021simple}, the third learning rate condition is
$
\{n \mathcal{R}(\hgamma) \mathcal{R}(\halpha^h)\}^{1/2}  /\sigma 
\lesssim \{n \mathcal{R}(\hgamma) \mathcal{R}(\halpha)\}^{1/2}  h^{-1}/h^{-1/2}
=h^{-1/2}\{n \mathcal{R}(\hgamma) \mathcal{R}(\halpha)\}^{1/2}.
$
\end{proof}
\section{Nonlinear factor model}\label{sec:factor}

We denote $\mathcal{R}_{\gamma}=\mathcal{R}(\hat{\gamma}_{\ell})$ and $\mathcal{R}_{\alpha}=\mathcal{R}(\hat{\alpha_{\ell}})$ to lighten notation. The distinction between $n$ and $m=\frac{n}{2}$ is irrelevant in the context of $(\mathcal{R}_{\gamma},\mathcal{R}_{\alpha})$ due to the absolute constant $C$.

\begin{lemma}[\cite{agarwal2021robustness}]\label{prop:factor_model_approx}
Suppose Assumption~\ref{assumption:factor_model} holds for some fixed $\mathcal{H}(q,S,C_H)$. Then for any small $\delta>0$, there exists $\bA^{(lr)}$ such that
$
r=rank(\bA^{(lr)})\leq C \cdot \delta^{-q}$ and $\Delta_E=\|\bA-\bA^{(lr)}\|_{\max} \leq C_H \cdot \delta^S
$, 
where $C$ may depend on $(q,S)$.
\end{lemma}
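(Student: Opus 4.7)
The plan is to construct $\bA^{(lr)}$ by a piecewise Taylor expansion of $g(\lambda_i, \cdot)$ in the $\mu_j$ argument on a $\delta$-grid of $[0,1)^q$. The rank bound will then follow from counting grid cells times the (constant) number of partial derivatives of order at most $\lfloor S \rfloor$, and the sup-norm error bound will follow from Taylor's theorem combined with the Holder condition in Definition \ref{def:holder}.

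For the construction, I would partition $[0,1)^q$ into $K \le C\delta^{-q}$ cubes of side length $\delta$, fix a representative $\mu^*_k$ in each cube, and for $j \in [p]$ let $k(j)$ index the cube containing $\mu_j$. Then set
$$
A^{(lr)}_{ij} \;=\; \sum_{|s| \le \lfloor S \rfloor} \frac{(\mu_j - \mu^*_{k(j)})^s}{s!}\, \nabla_s g(\lambda_i, \mu^*_{k(j)}) \;=\; \sum_{k=1}^K \sum_{|s|\le \lfloor S \rfloor} \bigl[\nabla_s g(\lambda_i, \mu^*_k)\bigr]\,\Bigl[\tfrac{(\mu_j-\mu^*_k)^s}{s!}\,\1\{k(j)=k\}\Bigr].
$$
The second form displays $\bA^{(lr)}$ as a sum of $K \cdot |\{s : |s|\le \lfloor S \rfloor\}|$ rank-one outer products, with the $i$-factor and the $j$-factor cleanly separated. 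Since the multi-index cardinality depends only on $(q,S)$, this gives $rank(\bA^{(lr)}) \le C\cdot \delta^{-q}$, which is the first claim.

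For the error bound, Taylor's theorem applied to $g(\lambda_i, \cdot)$ about $\mu^*_{k(j)}$, written with only top-order derivatives in the remainder, gives
$$
A_{ij} - A^{(lr)}_{ij} \;=\; \sum_{|s|=\lfloor S \rfloor} \frac{(\mu_j - \mu^*_{k(j)})^s}{s!}\bigl[\nabla_s g(\lambda_i, \tilde\mu) - \nabla_s g(\lambda_i, \mu^*_{k(j)})\bigr]
$$
for some $\tilde\mu$ on the segment from $\mu^*_{k(j)}$ to $\mu_j$. Pulling the $\frac{1}{s!}$ inside the sum and invoking the Holder condition bounds this by $\|\mu_j - \mu^*_{k(j)}\|_{\max}^{\lfloor S \rfloor} \cdot C_H \|\tilde\mu - \mu^*_{k(j)}\|_{\max}^{S-\lfloor S \rfloor} \le C_H \delta^S$; maximizing over $(i,j)$ yields $\Delta_E \le C_H \delta^S$.

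The delicate part will be the Taylor-with-Holder-remainder identity under the paper's nonstandard convention that $\lfloor S \rfloor$ is the largest integer \emph{strictly} smaller than $S$ (so $S - \lfloor S \rfloor \in (0,1]$, with equality at integer $S$): one has to verify that the combinatorial factors $\sum_{|s|=\lfloor S \rfloor}\frac{1}{s!}|(\mu_j-\mu^*_{k(j)})^s|$ and the $s!$-weighted Holder condition in Definition \ref{def:holder} combine without losing exponent on $\delta$. Once this identity is in hand, both the rank and error claims follow immediately, with the constant $C$ absorbing the $(q,S)$-dependent bound on the number of multi-indices.
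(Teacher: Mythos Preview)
The paper does not prove this lemma; it simply quotes it from \cite{agarwal2021robustness}. Your piecewise Taylor construction on a $\delta$-grid is exactly the standard argument behind that cited result, and your execution is correct: the rank bound comes from counting $K \cdot \binom{q+\lfloor S\rfloor}{\lfloor S\rfloor}$ rank-one pieces, and the error bound follows from the remainder identity you wrote combined with the H\"older condition in Definition~\ref{def:holder}. The ``delicate part'' you flag is handled correctly by bounding $|(\mu_j-\mu^*_{k(j)})^s|\le \|\mu_j-\mu^*_{k(j)}\|_{\max}^{\lfloor S\rfloor}$ uniformly for $|s|=\lfloor S\rfloor$, then applying the H\"older bound to the remaining $\sum_{|s|=\lfloor S\rfloor}\frac{1}{s!}|\cdot|$; the exponents $\lfloor S\rfloor$ and $S-\lfloor S\rfloor$ add to $S$ with no loss. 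One minor point: in the multivariate Lagrange remainder, the single intermediate point $\tilde\mu$ comes from restricting $g(\lambda_i,\cdot)$ to the segment $t\mapsto \mu^*_{k(j)}+t(\mu_j-\mu^*_{k(j)})$ and applying the one-dimensional version, which you should state explicitly rather than leave implicit.
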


\begin{proof}[Proof of Corollary~\ref{cor:GFM}]
From Lemma~\ref{prop:factor_model_approx},
$
r \leq C \cdot \delta^{-q}$ and $\Delta_E\leq C \cdot  \delta^{S}.
$
The conditions of Corollary~\ref{cor:GFM} imply $(\sigma,\bar{\sigma},\bar{\alpha},\bar{\alpha}',\bar{Q})$ are irrelevant. We verify simplified rate conditions from Corollary~\ref{cor:CI_inid}:
$
\mathcal{R}_{\gamma} \rightarrow 0$, $\mathcal{R}_{\alpha} \rightarrow 0$, $\sqrt{n \mathcal{R}_{\gamma} \mathcal{R}_{\alpha}}\rightarrow 0.
$
The relevant terms in $(\mathcal{R}_{\gamma},\mathcal{R}_{\alpha})$ simplify as well. From Theorem~\ref{theorem:fast_rate}, these are
$
    \mathcal{R}_{\gamma}
    \leq 
    C r^3\big\{
    \frac{1}{n}+\frac{p}{n^2}+\frac{1}{p}+\big(1+\frac{p}{n}\big)\Delta_E^2+p\Delta_E^4\big\}.
$
From Theorem~\ref{theorem:fast_rate_RR}, these are 
$
    \mathcal{R}_{\alpha}\leq C r^5
    \left\{  \frac{1}{n}+\frac{1}{p}+\frac{p}{n^2}+\frac{n}{p^2}+\left(1+\frac{p}{n}+\frac{n}{p}\right)\Delta^{2}_E+(n+p)\Delta_E^4+np \Delta^{6}_E
    \right\}.
    $ 

Suppose $n=p^{\upsilon}$ with $\upsilon\geq 1$. Then
$
     \mathcal{R}_{\gamma}
    \leq C r^3\left(
    \frac{1}{p}+\Delta_E^2+p\Delta_E^4\right) 
    \leq C \delta^{-3q}\left(
    \frac{1}{p}+\delta^{2S}+p\delta^{4S}\right).
$ The three terms are equalized with $\delta^{2S}=p^{-1}$. Hence $  \mathcal{R}_{\gamma}
    \leq C \delta^{-3q}
    \frac{1}{p} 
    = C p^{\frac{3q}{2S}} \frac{1}{p} 
    =C p^{\frac{3q}{2S}-1}.$
Similarly
$
    \mathcal{R}_{\alpha}\leq  C r^5
    \left( \frac{n}{p^2}+\frac{n}{p}\Delta^{2}_E+n\Delta_E^4+np \Delta^{6}_E
\right)
\leq  C  \delta^{-5q}  \left(  \frac{n}{p^2}+\frac{n}{p}\delta^{2S}+n\delta^{4S}+np \delta^{6S}
\right).
$
The four terms are equalized with $\delta^{2S}=p^{-1}$. Hence
$
     \mathcal{R}_{\alpha}
    \leq C \delta^{-5q}
    \frac{n}{p^2} 
    = C p^{\frac{5q}{2S}} \frac{n}{p^2} 
    =C p^{\frac{5q}{2S}-2} n.
$ 
To satisfy $\mathcal{R}_{\gamma}\leq \mathcal{R}_{\alpha}\rightarrow 0$, it suffices that
$\frac{q}{S} < \frac{2}{5}(2-\upsilon)$.
To satisfy $\sqrt{n \mathcal{R}_{\gamma}\mathcal{R}_{\alpha}}\rightarrow 0$, it suffices that
$
\frac{q}{S}<\frac{1}{2}\left(\frac{3}{2}-\upsilon \right).
$
In summary, a sufficient generalized factor model is one in which
$
\frac{q}{S}<\frac{2}{5}(2-\upsilon) \wedge \frac{1}{2}\left(\frac{3}{2}-\upsilon\right)$ where $ \upsilon \leq \frac{3}{2}.
$
The latter condition binds for $1\leq \upsilon \leq \frac{3}{2}$. 

If instead $n=p^{\upsilon}$ with $\upsilon\geq 1$, then a similar argument arrives at the same condition.
\end{proof}

When using a polynomial dictionary, the relevant terms in $(\mathcal{R}_{\gamma},\mathcal{R}_{\alpha})$ are as above, instead using $(r',\Delta_E')$. Let $q'=d_{\max}\cdot q$. Then
$
r'\leq C\cdot r^{d_{\max}}\leq C\cdot \delta^{-q d_{\max}}=C\cdot \delta^{-q'}
$
and
$
\Delta_E'\leq C \bar{A}^{d_{\max}} \cdot d_{\max} \Delta_E \leq C\cdot \Delta_E \leq C \cdot  \delta^{S}.
$
Hence the proof of Corollary~\ref{cor:GFM} remains the same.

\section{Simulation and application}\label{sec:sim}

\textbf{Simulation design}. We focus on average treatment effect (ATE) with corrupted covariates (Example~\ref{ex:ATE}). A single observation is a triple $(Y_i,D_i,Z_{i,\cdot})$ for outcome, treatment, and corrupted covariates where $Y\in\R$, $D_i\in\{0,1\}$, and $Z_{i,\cdot}\in\R^{p}$ are generated as follows. 

First, we generate signal from a factor model. Sample $\bU \sim \mathcal{N}(0,\bI_{n\times r})$ and $\bV\sim \mathcal{N}(0,\bI_{p\times r})$. Then set $\bX=\bU \bV^{T}$. By construction, 
$$
\E[X_{ij}]=\E\left[\sum_{s=1}^r U_{is}V_{sj}\right]=\sum_{s=1}^r \E\left[U_{is}\right]\E\left[V_{sj}\right]=0
$$ and $
    \V[X_{ij}]=\V\left[\sum_{s=1}^r U_{is}V_{sj}\right]=\sum_{s=1}^r \V\left[U_{is}\right]\V\left[V_{sj}\right]=r$.

Draw response noise as $\varepsilon_i \overset{i.i.d.}{\sim}\mathcal{N}(0,1)$. Define the vector $\beta\in\R^{p}$ by $\beta_j=j^{-2}$. Then set $D_i\sim \text{Bernoulli}\{ \Lambda (0.25 X^T\beta) \}$ and $ Y_i=2.2D_i+1.2X_{i,\cdot}\beta+D_iX_{i1}+\varepsilon_i$ where $\Lambda(t)=(0.95-0.05)\frac{\exp(t)}{1+\exp(t)}+0.05 $ is the truncated logistic function. The ATE is $\theta_0=2.2$. 

We observe the corrupted covariate
$
Z_{i,\cdot}=[X_{i,\cdot}+H_{i,\cdot}]\odot \pi_{i,\cdot}.
$
$H_{ij}\overset{i.i.d.}{\sim} F_H$ is drawn i.i.d. with mean zero and variance $\sigma_H^2$. $\pi_{ij}$ is $1$ with probability $\rho$ and $\NA$ with probability $1-\rho$. We consider different choices of the measurement error distribution $F_H$ to corresponding to classical measurement error, discretization, and differential privacy. In summary, the three data corruption parameters are $(F_H,\sigma_H,\rho)$. The remaining design parameters are $(n,p,r)$ corresponding to the sample size, dimension of covariates, and rank of the signal. 

For classical measurement error, $F_H=\mathcal{N}(0,\sigma^2_H)$. For discretization, we generate $Z_{ij}=sign(X_{ij})\cdot Poisson(|X_{ij}|)$ and implicitly define $F_H$ by $H_{ij}=Z_{ij}-X_{ij}$. Note that
$
    \E[Z_{ij}|X_{ij}]
    = sign(X_{ij}) \E[Poisson(|X_{ij}|) |X_{ij}]
    =X_{ij}
$
as desired. Below, we show that $\sigma^2_H=\V[H_{ij}]=1.7$ in this construction. For differential privacy, $F_H=Laplace(0,\frac{\sigma_H}{\sqrt{2}})$. 

\begin{proposition}[Discretization variance]
Given some random variable $X$, define $P=\text{Poisson}(|X|)$, $Z=sign(X)\cdot P$, and $H=Z-X$. Then $\E[H]=0$ and $\V[H]=\E[|X|]$.
\end{proposition}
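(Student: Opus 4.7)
The plan is to apply the tower property and the law of total variance, using the standard moment properties of the Poisson distribution conditional on $X$. The key observation is that $P \mid X \sim \text{Poisson}(|X|)$ has $\E[P \mid X] = \V[P \mid X] = |X|$, and $\text{sign}(X)^2 = 1$ almost surely (the event $X = 0$ can be handled as a measure-zero case, since then $P = 0$ deterministically).

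First, I would compute $\E[H]$ by conditioning on $X$. We have $\E[Z \mid X] = \text{sign}(X) \cdot \E[P \mid X] = \text{sign}(X) \cdot |X| = X$, so by iterated expectations $\E[Z] = \E[X]$ and hence $\E[H] = \E[Z - X] = 0$.

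Second, I would compute $\V[H]$ by the law of total variance applied to $H = Z - X$. Conditional on $X$, the quantity $X$ is a constant, so $\V[H \mid X] = \V[Z \mid X] = \text{sign}(X)^2 \V[P \mid X] = |X|$. Also, $\E[H \mid X] = \E[Z \mid X] - X = 0$, so $\V[\E[H \mid X]] = 0$. Combining, $\V[H] = \E[\V[H \mid X]] + \V[\E[H \mid X]] = \E[|X|] + 0 = \E[|X|]$.

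There is no real obstacle here; the argument is a two-line conditional moment calculation. The only mild subtlety is the handling of the sign function at $X = 0$, which is immediate since $\text{Poisson}(0)$ puts all mass at $0$, so $Z = 0 = X$ on that event regardless of the convention for $\text{sign}(0)$.
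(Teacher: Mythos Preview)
Your proposal is correct and follows essentially the same approach as the paper: both compute $\E[Z\mid X]=X$ by iterated expectations and then apply the law of total variance, noting $\E[H\mid X]=0$ and $\V[H\mid X]=\V[Z\mid X]=|X|$. The only cosmetic difference is that the paper reaches $\V[Z\mid X]=|X|$ via $\E[Z^2\mid X]-\{\E[Z\mid X]\}^2$ rather than directly factoring out $\text{sign}(X)^2$, which amounts to the same one-line computation.
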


\begin{proof}
To begin, write
 $
    \E[Z|X]
    = sign(X) \cdot \E[P|X]
    =X.
$
By the law of total variance,
$
\V[H]=\E[\V[H|X]]+\V[\E[H|X]].
$
In the latter term,
 $
    \E[H|X]=\E[Z-X|X]=0.
    $
In the former term,
    $
    \V[H|X]=\V[Z|X]=\E[Z^2|X]-\{\E[Z|X]\}^2.
    $
    Moreover,
    $
    \E[Z^2|X]=\E[P^2|X]=\V[P|X]+\{\E[P|X]\}^2=|X|+X^2.
    $
    In summary,
    $
    \V[H|X]=|X|. 
    $
\end{proof}

\textbf{Robustness to data dimensions}. In the main text, each sample from the simulated data generating process produces a matrix of covariates $\bX \in\R^{100\times 100}$ with rank $r=5$. How robust is our end-to-end procedure across realistic dimensions of economic data? We consider the following variations: $\bX \in \R^{50\times 200}$, $\R^{100 \times 100}$, $\R^{200\times 50}$, $\R^{500\times 20}$, and $\R^{1000\times 10}$. For each choice of $(n,p)$, we set the rank $r=\{\min(n,p)\}^{1/3}$. Across data dimensions, we introduce measurement error with the fixed noise-to-signal ratio of $20\%$. We consider the oracle tuning of the PCA hyperparameter $k=r$. 

\begin{wraptable}{r}{0.5\textwidth}
    \vspace{-13pt}
    \centering
    \resizebox{0.5\textwidth}{!}{%
    \begin{tabular}{ccccccc}
        \hline
        Meas. Err. & $n$ & $p$ & ATE & SE & 80\% CI & 95\% CI\tabularnewline
        \hline
        20\% & 50 & 200 & 2.21 & 0.56 & 0.82 & 0.96\tabularnewline
        20\% & 100 & 100 & 2.21 & 0.35 & 0.82 & 0.95\tabularnewline
        20\% & 200 & 50 & 2.22 & 0.21 & 0.81 & 0.95\tabularnewline
        20\% & 500 & 20 & 2.23 & 0.12 & 0.78 & 0.94\tabularnewline
        20\% & 1000 & 10 & 2.23 & 0.08 & 0.76 & 0.93\tabularnewline
        \hline
        \hline
        20\% & 722 & 30 & 2.26 & 0.12 & 0.78 & 0.92\tabularnewline
        \hline
    \end{tabular}
    }
    \vspace{-10pt}
    \caption{Our approach adapts to data shape}\label{fig:sizes}
    \vspace{-20pt}
\end{wraptable}

Table~\ref{fig:sizes} quantifies coverage performance. Different rows correspond to different data dimensions. We record the average point estimates, which are close to $\theta_0=2.2$. Next, we record the average standard errors, which adaptively decrease in length for larger sample sizes.  These confidence intervals are the correct length, since coverage is close to the nominal level. 

We repeat this exercise for the simulated data generating process with $\bX \in \R^{722 \times 30}$ and rank $r=5$. Table~\ref{fig:sizes} confirms that our procedure attains nearly nominal coverage.

\textbf{Can data corruption flip signs}? In the main text, we show that for the simulated data generating process with $\bX \in\R^{100\times 100}$ and rank $r=5$, OLS performs well with clean data and performs poorly with corrupted data. We investigate two follow-up questions. First, can data corruption flip the sign of OLS estimates, i.e. can it lead to negative point estimates when the average treatment effect is $\theta_0=2.2>0$? Second, can can data corruption flip the sign of OLS and 2SLS estimates in scenarios more similar to our real world example?

We find that data corruption can flip the sign of OLS estimates some of the time. In particular, measurement error with a $20\%$ noise-to-signal ratio is enough to flip the sign roughly one quarter of the time.  We repeat this exercise for the simulated data generating process with $\bX \in \R^{722 \times 30}$ and rank $r=5$. Flipping signs requires not only 20\% measurement error but also 10\% missingness. A similar fraction of OLS estimates have flipped signs.

Finally, we conduct a semi-synthetic sign flipping exercise. We consider the covariates of \cite{autor2013china} at the commuting zone level. Rather than a synthetic ATE, the estimand is the actual effect of import competition on manufacturing employment in a partially linear instrumental variable model. Flipping signs requires not only 20\% measurement error but also 20\% missingness. In this thought experiment, we take the reported effect from \cite{autor2013china} as the ground truth, we take the data set from \cite{autor2013china} as clean data, and we generate synthetic measurement error and missingness. 

\begin{wraptable}{r}{0.5\textwidth}
    \centering
    \resizebox{0.5\textwidth}{!}{%
    \begin{tabular}{cccc}
        \hline
        Data & Meas. Err. & Miss. Val. & Sign Flip\tabularnewline
        \hline
        $100\times100$ & 20\% & 0\% & 27\%\tabularnewline
        $722\times30$ & 20\% & 10\% & 22\%\tabularnewline
        Census & 20\% & 20\% & 9\%\tabularnewline
        \hline
    \end{tabular}
    }
    \vspace{-10pt}
    \caption{Data corruption can flip signs}\label{fig:flips}
    \vspace{-20pt}
\end{wraptable}

We summarize the results of these sign flipping exercises in Table~\ref{fig:flips}. The rows correspond to (i) synthetic data with $\bX \in \R^{100\times 100}$; (ii) synthetic data with $\bX \in \R^{722 \times 30}$; and (iii) semi-synthetic data from \cite{autor2013china}. We interpret the OLS and TSLS results as motivation for data cleaning before data analysis. Our procedure may be viewed as an extension of OLS and TSLS with simple data cleaning that we subsequently account for in our confidence intervals.

\textbf{Empirical application}. The variable definitions follow \cite{autor2013china}. In the authors' original specification \cite[Table 3, column 6]{autor2013china}, $X_{i,\cdot}\in\R^{14}$ consists of: a constant, an indicator for the 2000-2007 period, percentage of employment in manufacturing, percentage of college educated population, percentage of foreign-born population, percentage of employment among women, percentage of employment in routine occupations, average offshorability index of occupations, and Census division dummies.

In our augmented specification, $X_{i,\cdot}\in\R^{30}$ consists of variables from the original specification as well as additional variables in \cite[Appendix Table 2]{autor2013china}. These include percentages of the working age population: employed in manufacturing, employed in non-manufacturing, unemployed, not in the labor force, receiving disability benefits; average log weekly wages: manufacturing, non-manufactuing; average benefits per capita: individual transfers, retirement, disability, medical, federal income assistance, unemployment, TAA; and average household income per working age adult: total, wage and salary.

Figure~\ref{fig:semi_all_alt} provides analogous results to Figure~\ref{fig:semi_all}, where now we center and scale the covariates $X_{i,\cdot}\in\R^{30}$ before conducting the exercise. We arrive at similar conclusions.

\begin{figure}[H]
\begin{centering}
    \begin{subfigure}[b]{0.48\textwidth}
        \centering
        \includegraphics[width=0.8\textwidth]{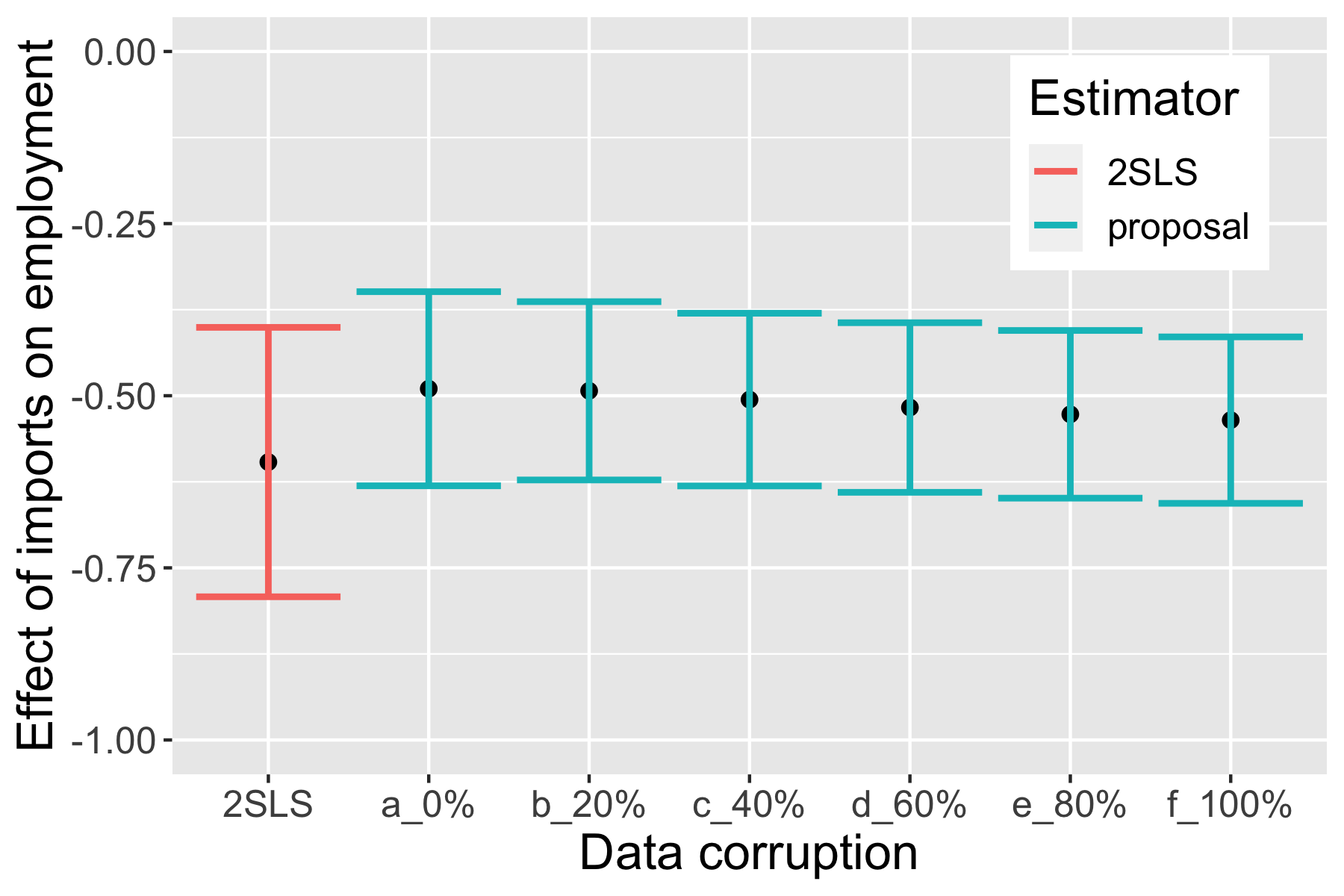}
        \vspace{-3pt}
        \caption{\label{fig:semi_noise_alt} Measurement error}
    \end{subfigure}
    \begin{subfigure}[b]{0.48\textwidth}
        \centering
        \includegraphics[width=0.8\textwidth]{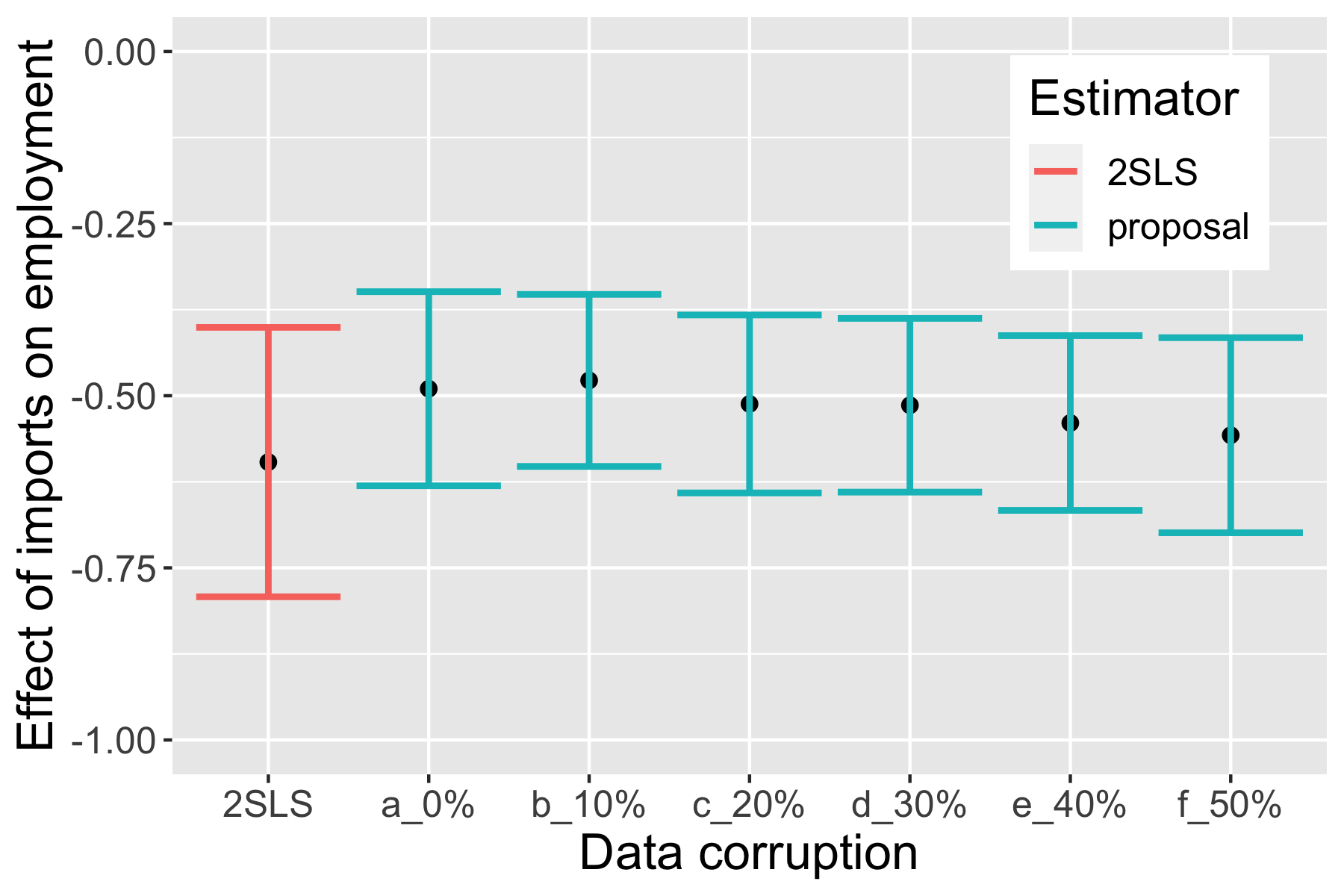}
        \vspace{-3pt}
        \caption{\label{fig:semi_missing_alt} Missing values}
    \end{subfigure}

    \begin{subfigure}[b]{0.48\textwidth}
        \centering
        \includegraphics[width=0.8\textwidth]{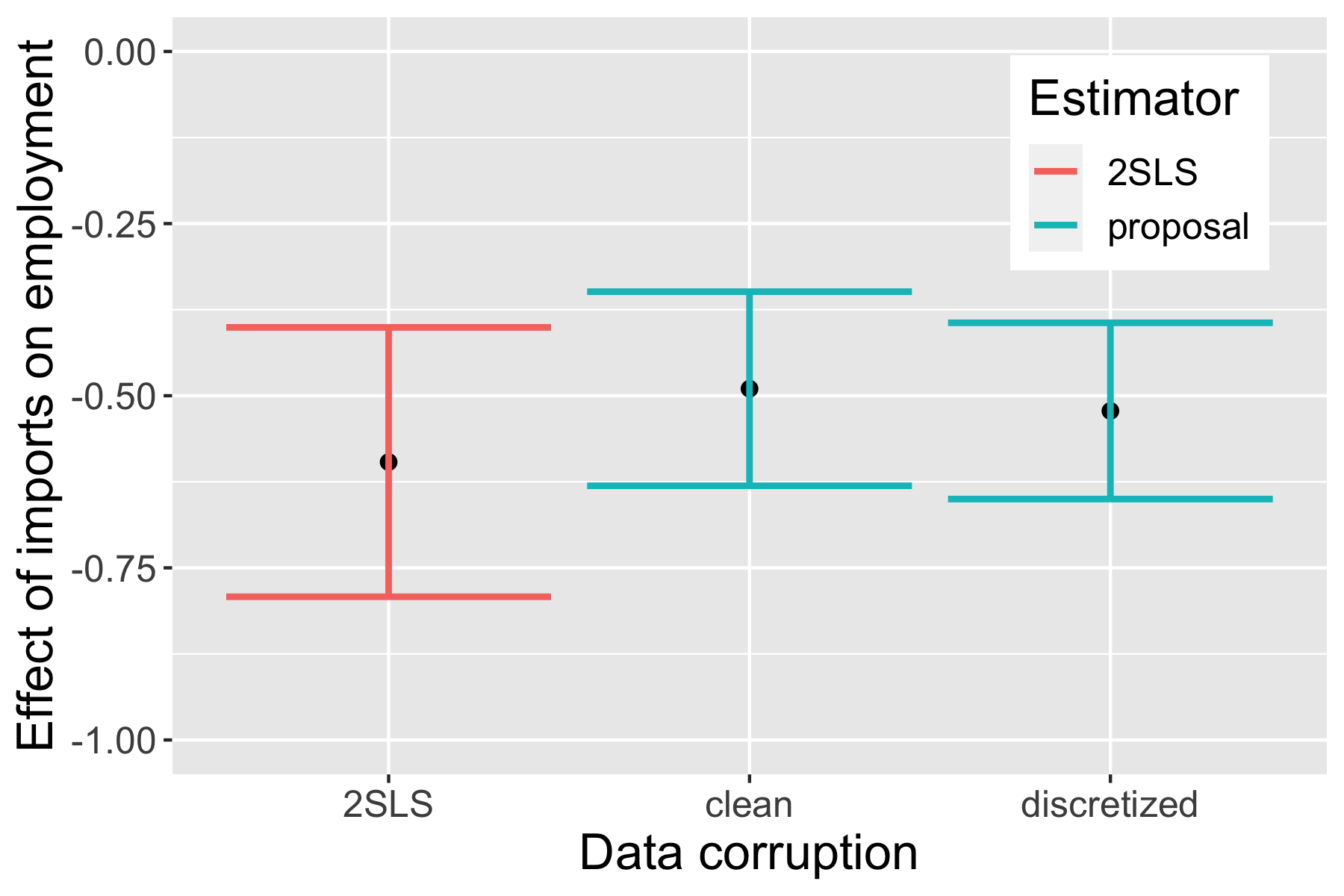}
        \vspace{-3pt}
        \caption{\label{fig:semi_discrete_alt} Discretization}
    \end{subfigure}
    \begin{subfigure}[b]{0.48\textwidth}
        \centering
        \includegraphics[width=0.8\textwidth]{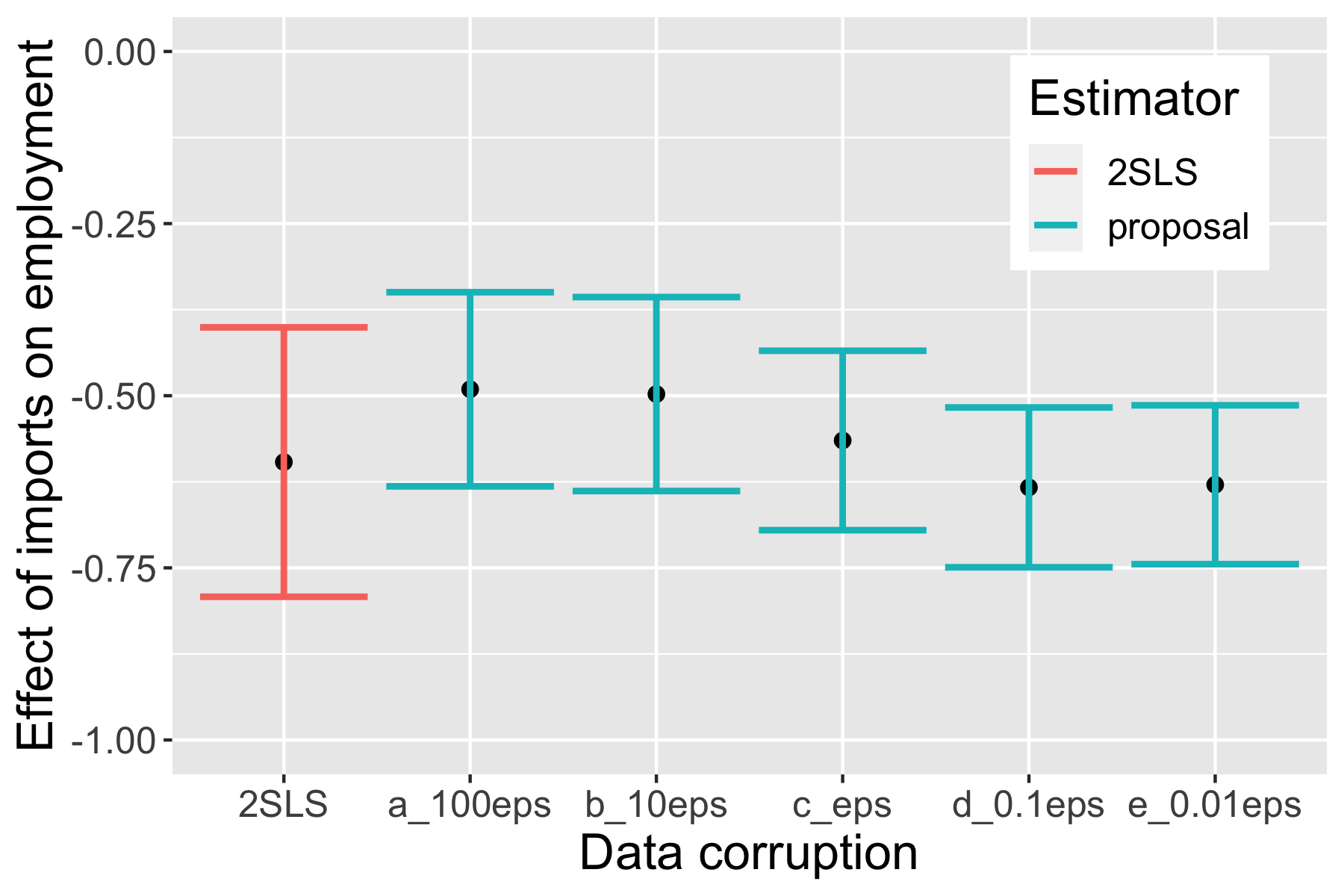}
        \vspace{-3pt}
        \caption{\label{fig:semi_calibrate_alt} Differential privacy (calibrated)}
    \end{subfigure}
\vspace{-10pt}
\caption{Standardizing covariates before synthetic corruption}\label{fig:semi_all_alt}
\vspace{-15pt}
\end{centering}
\end{figure}
\section{Formalizing privacy}\label{sec:privacy}

\begin{proof}[Proof of Proposition~\ref{prop:macro}]
Fix the commuting zone $i\in[n]$. We refer to the construction of the summary statistic
$
X_{ij}=f_j(\bM^{(i)})=\frac{1}{L_i}\sum_{\ell=1}^{L_i} M^{(i)}_{\ell j}
$
as the $j$-th query $f_j$ about $\bM^{(i)}$, where $j\in[p]$. To ensure privacy level $\epsilon_j$ for query $f_j$, a possible mechanism is, according to \cite[Proposition 3.3]{dwork2006calibrating},
$
Z_{ij}=X_{ij}+H_{ij}$, where $X_{ij}=f_j(\bM^{(i)})$ and $H_{ij}\overset{i.i.d.}{\sim} \text{Laplace}(S(f_j)/\epsilon_j).
$
$S(f_j)$ is the sensitivity of the query, to which we return below. If no individual appears in two commuting zones, the Bureau can achieve privacy level $\epsilon$ while publishing all $j\in [p]$ variables for this commuting zone by setting $\epsilon_j=\epsilon/p$.

We wish to characterize the resulting subexponential parameters. They are, by independence of the Laplacians,
$
    K_a=\|H_{i,\cdot}\|_{\psi_a}=\max_{j\in[p]}\|H_{ij}\|_{\psi_a}=\max_j \sqrt{2}\cdot S(f_j)/\epsilon_j= \sqrt{2}/\epsilon \cdot p \max_j S(f_j)$ and $
\kappa^2=\|\E[H^T_{i,\cdot} H_{i,\cdot}]\|_{op}=\max_{ij}\V(H_{ij})=2\max_j S(f_j)^2/\epsilon_j^2= 2/\epsilon^2 \cdot p^2 \max_j S(f_j)^2.
$
What remains is to define and characterize the the sensitivity $S(f_j)$. The sensitivity of the query $f_j$ is the most that the query may vary if one individual in the microdata were replaced. Formally,
$
\max_{\bM^{(i)},\bM^{(i')}} |f_j(\bM^{(i)})-f_j(\bM^{(i')})|\leq S(f_j)
$
where $\bM^{(i)}$ and $\bM^{(i')}$ are two possible data sets of $L_i$ individuals that differ in one individual.

In what follows, we suppress indexing by $i$ to lighten notation. By hypothesis, each entry of microdata is bounded: $|M_{\ell j}|\leq\bar{A}$. This fact, together with the fact that the query $f_j$ is a sample mean, provides a bound on the sensitivity $S(f_j)$. To begin, write
$
      f_j(\bM)=\frac{1}{L}\left\{\sum_{\ell=1}^{L} M_{\ell j}\right\}
      =\frac{1}{L}\left\{\sum_{\ell=1}^{L-1} M_{\ell j}+M_{\ell L} \right\}.
$
   Therefore without loss of generality
$
     f_j(\bM)-f_j(\bM')=\frac{1}{L}(M_{\ell L}-M'_{\ell L})
 $
 and hence
$
   S(f_j)=\max_{\bM,\bM'} |f_j(\bM)-f_j(\bM')| =\max_{\bM,\bM'}\left|\frac{1}{L}(M_{\ell L}-M'_{\ell L})\right| 
   \leq \frac{2\bar{A}}{L}.
$ Lemma~\ref{lemma:post} ensures that privacy is preserved.
\end{proof}

\begin{lemma}\label{lemma:post}
    Suppose the conditions of Proposition~\ref{prop:macro} hold. If $Z_{i,\cdot}=X_{i,\cdot}+H_{i,\cdot}$ confers $\epsilon$ differential privacy, then $\hat{X}_{i,\cdot}$ remains $\epsilon$ differentially private.
\end{lemma}

\begin{proof}
  Extending the notation from the previous proof, let $Z_{i,\cdot}=\mathcal{M}(\bM^{(i)})=X_{i,\cdot}+H_{i,\cdot}$ and $Z_{i',\cdot}=\mathcal{M}(\bM^{(i')})=X_{i',\cdot}+H_{i,\cdot}$. Recall that $\hat{X}_{i,\cdot}$ is a function of $Z_{i,\cdot}$ and $\{Z_{j,\cdot}\}_{j\neq i}$, i.e. $\hat{X}_{i,\cdot}=\textrm{CLEAN}[Z_{i,\cdot};\{Z_{j,\cdot}\}_{j\neq i}]$. Analogously,  $\hat{X}_{i',\cdot}=\textrm{CLEAN}[Z_{i',\cdot};\{Z_{j,\cdot}\}_{j\neq i}]$.
  
  By hypothesis, for any event $E$, $
\frac{\p_{H_{i,\cdot}}(Z_{i,\cdot}\in E )}{\p_{H_{i,\cdot}}(Z_{i',\cdot}\in E)}\leq e^{\epsilon}
$ where the subscript emphasizes the source of randomness. We wish to show that, for any event $F$, $
\frac{\p_{H_{i,\cdot},\{H_{j,\cdot}\}_{j\neq i}}(\hat{X}_{i,\cdot}\in F )}{\p_{H_{i,\cdot},\{H_{j,\cdot}\}_{j\neq i}}(\hat{X}_{i',\cdot}\in F)}\leq e^{\epsilon}
$. Fix $F$ and define $G:=\{z\in\mathbb{R}:\textrm{CLEAN}[z;\{Z_{j,\cdot}\}_{j\neq i}]\in F\}$. Then
\begin{align*}
    &\p_{H_{i,\cdot},\{H_{j,\cdot}\}_{j\neq i}}(\hat{X}_{i,\cdot}\in F )
    =\p_{H_{i,\cdot},\{H_{j,\cdot}\}_{j\neq i}}(\textrm{CLEAN}[Z_{i,\cdot};\{Z_{j,\cdot}\}_{j\neq i}]\in F ) \\
    &=\p_{H_{i,\cdot},\{H_{j,\cdot}\}_{j\neq i}}(Z_{i,\cdot}\in G) 
    =\E_{\{H_{j,\cdot}\}_{j\neq i}}[\E_{H_{i,\cdot}}[\1(Z_{i,\cdot}\in G)|\{H_{j,\cdot}\}_{j\neq i}]].
\end{align*}
Moreover, $\E_{H_{i,\cdot}}[\1(Z_{i,\cdot}\in G)|\{H_{j,\cdot}\}_{j\neq i}]=\p_{H_{i,\cdot}}(Z_{i,\cdot}\in G)\leq e^{\epsilon}\cdot \p_{H_{i,\cdot}}(Z_{i',\cdot}\in G)=e^{\epsilon}\cdot \E_{H_{i,\cdot}}[\1(Z_{i',\cdot}\in G)|\{H_{j,\cdot}\}_{j\neq i}]$. Reversing the steps above yields the conclusion.
\end{proof}

\begin{lemma}[\cite{bun2016concentrated}]\label{lemma:conversion}
    If $\mathcal{M}$ is differentially private with parameter $\epsilon$, then it is zero concentrated differentially private with parameter $\rho=\epsilon^2/2$.
\end{lemma}

The Bureau's global privacy loss budget for people, in terms of zero concentrated differential privacy, is $2.56$ in 2020 Census redistricting data (P.L 94-171) \cite{abowd2022the}. Of this budget, $447/4,099$ is for counties. We use these numbers to calibrate a realistic privacy budget of $\rho=2.56\cdot 447/4,099$ for a hypothetical data release concerning commuting zones. Lemma~\ref{lemma:conversion} demonstrates that a sufficient degree of differential privacy is $\epsilon=(2\rho)^{1/2}$.

\end{document}